 \newcounter{extralabel}[section]
 \newtheorem{ittheorem}{Theorem}
 \newtheorem{itlemma}{Lemma}
 \newtheorem{itproposition}{Proposition}
 \newtheorem{itdefinition}{Definition}
 \newtheorem{itcorollary}{Corollary}
 \newtheorem{itconjecture}{Conjecture}
 \newtheorem{itremark}{Remark}
 \newenvironment{theorem}{\addtocounter{extralabel}{1}
 \begin{ittheorem}}{\end{ittheorem}}
 \newenvironment{lemma}{\addtocounter{extralabel}{1}
 \begin{itlemma}}{\end{itlemma}}
 \newenvironment{proposition}{\addtocounter{extralabel}{1}
 \begin{itproposition}}{\end{itproposition}}
 \newenvironment{definition}{\addtocounter{extralabel}{1}
 \begin{itdefinition}}{\end{itdefinition}}
 \newenvironment{corollary}{\addtocounter{extralabel}{1}
 \begin{itcorollary}}{\end{itcorollary}}
 \newenvironment{conjecture}{\addtocounter{extralabel}{1}
 \begin{itconjecture}}{\end{itconjecture}}
 \newenvironment{remark}{\addtocounter{extralabel}{1}
 \begin{itremark}}{\end{itremark}}
\newcommand{\e}{\mathrm{e}}
\newcommand{\dd}{\mathrm{d}} %integration d
\def\d{\mathrm{d}}
\def\z{\boldsymbol{z}}
\def\x{\boldsymbol{x}}
\def\y{\boldsymbol{y}}
\def\t{\boldsymbol{t}}
\def\r{\boldsymbol{r}}
\def\s{\boldsymbol{s}}
\def\rr{\boldsymbol{\rho}}
\def\tt{\boldsymbol{\theta}}
\def\vt{\boldsymbol{\vartheta}}
\def\vp{\boldsymbol{\varphi}}
\def\ZZ{\boldsymbol{Z}}
\newcommand{\eps}{\varepsilon}
\newcommand{\be}{\begin{equation}}
\newcommand{\ee}{\end{equation}}
\newcommand{\ba}{\begin{equation} \begin{aligned}}
\newcommand{\ea}{\end{aligned}\end{equation}}
\newcommand{\bes}{\begin{equation*}}
\newcommand{\ees}{\end{equation*}}
\definecolor{forestgreen}{cmyk}{0.91,0,0.88,0.12}
\newcommand{\RK}[1]{{\color{forestgreen}#1}} 
\newcommand{\dist}{{\operatorname {dist}}}
\DeclareMathOperator{\reach}{reach}
\DeclarePairedDelimiter\abs{\lvert}{\rvert}
\DeclarePairedDelimiter\norm{\lVert}{\rVert}
\renewcommand{\P}{\mathbb{P}}
\newcommand{\E}{\mathbb{E}}
\def \T {{\mathbb T}}
\def \Z {{\mathbb Z}}
\def \R {{\mathbb R}}
\def \N {{\mathbb N}}
\def \Q {{\mathbb Q}}
\def \cC {{\mathcal C}}
\def \cD {{\mathcal D}}
\def \cI {{\mathcal I}}
\def \cO {{\mathcal O}}  
\def \cH {{\mathcal H}}
\def \cV {{\mathcal V}}
\def \cP {{\mathcal P}}
\def\ka{\kappa}
\def\be{\begin{equation}}
\def\ee{\end{equation}}
\def\e{\mathrm{e}}
\def\Rc{R_{\mathrm{c}}}
\def\UB{\mathrm{UB}}
\def\LB{\mathrm{LB}}
\newcommand{\1}{{\mathds1}}
\begin{document}

%%%%%%%%%%%TITLE PAGE %%%%%%%%%%%%%%%%%%%%%%

\title{The Widom-Rowlinson model:\\ 
Mesoscopic fluctuations for the critical droplet}
\author{\renewcommand{\thefootnote}{\arabic{footnote}}
Frank den Hollander
\footnotemark[1]
\\
\renewcommand{\thefootnote}{\arabic{footnote}}
Sabine Jansen
\footnotemark[2]
\\
\renewcommand{\thefootnote}{\arabic{footnote}}
Roman Koteck\'y
\footnotemark[3]
\\
\renewcommand{\thefootnote}{\arabic{footnote}}
Elena Pulvirenti
\footnotemark[4]
}

\date{13 March 2026}

\maketitle

\begin{abstract} 
We study the critical droplet for a close-to-equilibrium Widom-Rowlinson model of interacting particles, represented by disks of radius $1$, in the two-dimensional plane at low temperature. The critical droplet is the set of macroscopic states that correspond to saddle points for the passage from a low-density supersaturated vapour to a stable high-density liquid. We analyse the mesoscopic fluctuations of the surface of the critical droplet, which turns out to be the set of particle configurations that are close to a disk of a certain deterministic radius. Our results represent the first detailed rigorous analysis of the surface fluctuations of a continuum interacting particle system exhibiting condensation and, as such, constitute a fundamental step in the study of phase separation from the perspective of stochastic geometry. At the same time, our results serve as a basis for the study of a non-equilibrium version of the Widom-Rowlinson model, to be analysed in \cite{dHJKP1}, where they lead to a correction term in the Arrhenius formula for the average vapour-liquid crossover time.
\end{abstract}

\vglue1cm
\noindent
{\it AMS} 2010 {\it subject classifications.} 
60J45, %Probabilistic potential theory
60J60, %Diffusion processes
60K35; %Interacting random processes; statistical mechanics type models; percolation theory
82C21, %Dynamic continuum models (systems of particles, etc.)
82C22, %Interacting particle systems
82C27.\\ %Dynamic critical phenomena
{\it Key words and phrases.} Critical droplet, surface fluctuations, large deviations, moderate 
deviations, isoperimetric inequalities. 

\medskip\noindent
{\it Acknowledgment.} FdH was supported by NWO Gravitation Grant 024.002.003-NETWORKS, FdH and EP by ERC Advanced Grant 267356-VARIS, and RK by Czech Science Foundation Grant 25-16267S. EP was supported by the German Research Foundation in the Collaborative Research Centre 1060 \emph{The Mathematics of Emergent Effects}. The authors acknowledge support by The Leverhulme Trust through International Network Grant \emph{Laplacians, Random Walks, Bose Gas, Quantum Spin Systems}.  SJ thanks Christoph Th\"ale for fruitful discussions.

\medskip\noindent
{\it Data availability.} No datasets were generated or analyzed during the current study.

\footnotetext[1]{Mathematical Institute, Leiden University, Einsteinweg 55,
2333 CC Leiden, The Netherlands,\\
\emph{denholla@math.leidenuniv.nl}}

\footnotetext[2]{
Mathematisches Institut, Ludwig-Maximilians-Universit\"at, Theresienstrasse 39, 
80333 M\"unchen, Germany,\\
\emph{jansen@math.lmu.de}}

\footnotetext[3]{
UTIA, Czech Academy of Sciences  and Center for Theoretical Study, Charles University, Prague, Czech Republic and
Mathematics Institute, University of Warwick, Coventry CV4 7AL, United Kingdom,\\
\emph{r.kotecky@icloud.com}}

\footnotetext[4]{Delft Institute of Applied Mathematics, 
Delft University of Technology, Mekelweg 4, 2628 CD Delft, The Netherlands,\\ 
\emph{e.pulvirenti@tudelft.nl}}

\newpage

\small
\tableofcontents
\normalsize

%%%%%%%% SECTION 1 %%%%%%%%%%%%%%%%%%%%%%%%%%%%

\section{Model, results and background}
\label{S:intro}

%%%

\subsection{Introduction and outline}

In this paper we study the boundary fluctuations of a continuum interacting particle system exhibiting condensation. The microscopic model that we consider is the \emph{Widom-Rowlinson model}, introduced in \cite{WR} to describe a vapour-liquid phase transition, where particles are unit disks in the plane that interact with each other when they overlap. This model is one of the few models in the continuum for which a vapour-liquid phase transition has been treated rigorously (see Ruelle~\cite{Ru2}, Chayes, Chayes and Koteck{\'y}~\cite{CCK}). The phase transition occurs as the chemical potential, controlling the density of the particles, crosses a threshold value.

Our objective is to give a precise description of the so-called \emph{critical droplet}, i.e., macroscopic states constituting the saddle points with minimal free energy connecting the vapour state and the liquid state. In order to do so, we study the finite-volume model close to and above the phase transition line in the limit of \emph{low temperature}. It turns out that the critical droplet is close to a disk of a certain deterministic radius, with a boundary that is random and consists of a large number of unit disks that stick out by a small distance. We provide an analysis of the surface fluctuations on a mesoscopic level. In the physics literature, density fluctuations at the surface of macroscopic droplets in the continuum are studied in terms of \emph{capillary waves} (see Rowlinson and Widom~\cite{RW} and Stillinger and Weeks~\cite{SW*}). Mathematically, we identify the asymptotics for the partition function of the Widom-Rowlinson model on those configurations where the union of the unit disks is close to the critical disk both in volume  and in Hausdorff distance.

Our results not only are of interest in the study of phase separation in continuum interacting particle systems, but also serve as a building block in the study of \emph{metastability} when we consider a dynamical version of the Widom-Rowlinson model where particles are allowed to enter or leave the system, in the regime of supersaturated vapour where the temperature is low and the chemical potential above the coexistence line. A detailed understanding of the shape and the fluctuations of the critical droplet is crucial for the computation of the \emph{metastable crossover time} in the dynamic model.

Section~\ref{WRstatic} introduces the Widom-Rowlinson model. Section~\ref{target} introduces the main object of our investigation, namely, the critical droplet. Section~\ref{LDiso} states a \emph{large deviation principle} for the shape (Theorem~\ref{thm:ldp-halo}) and the volume (Theorem~\ref{thm:ldp-volume}) of the \emph{halo} of the particle configuration in the Widom-Rowlinson model, and show that the corresponding rate functions are linked via an \emph{inequality of isoperimetric type} (Theorem~\ref{thm:isope}). Section~\ref{zoomin} states the main theorem (Theorem~\ref{thm:zoom1}) about \emph{moderate deviations} for the halo volume, and states a conjecture for sharp asymptotics for volumes that are close to the volume of the critical droplet (Conjecture~\ref{thm:zoom2}). Section~\ref{outline} contains an outline of the remainder of the paper. For background on large deviation theory, see e.g.\ Dembo and Zeitouni~\cite{DZ} or den Hollander~\cite{dH}.

%%%%
\subsection{The Widom-Rowlinson model}
\label{WRstatic}

The Widom-Rowlinson model is an interacting particle system in $\R^2$ where the particles are the centres of unit disks with an attractive interaction. It was introduced in Widom and Rowlinson~\cite{WR} to model liquid-vapour phase transitions, and is one of the rare models in the continuum for which a phase transition has been established rigorously. In the present paper we place the particles on a finite torus in $\R^2$.    

%%%%%%%%%%%%%%%%%%%%%%%%%%%%%%%%%%%%%%%%%%%%
\begin{figure}[htbp]
\vspace{0.5cm}
\begin{center}
\includegraphics[width=0.3\linewidth]{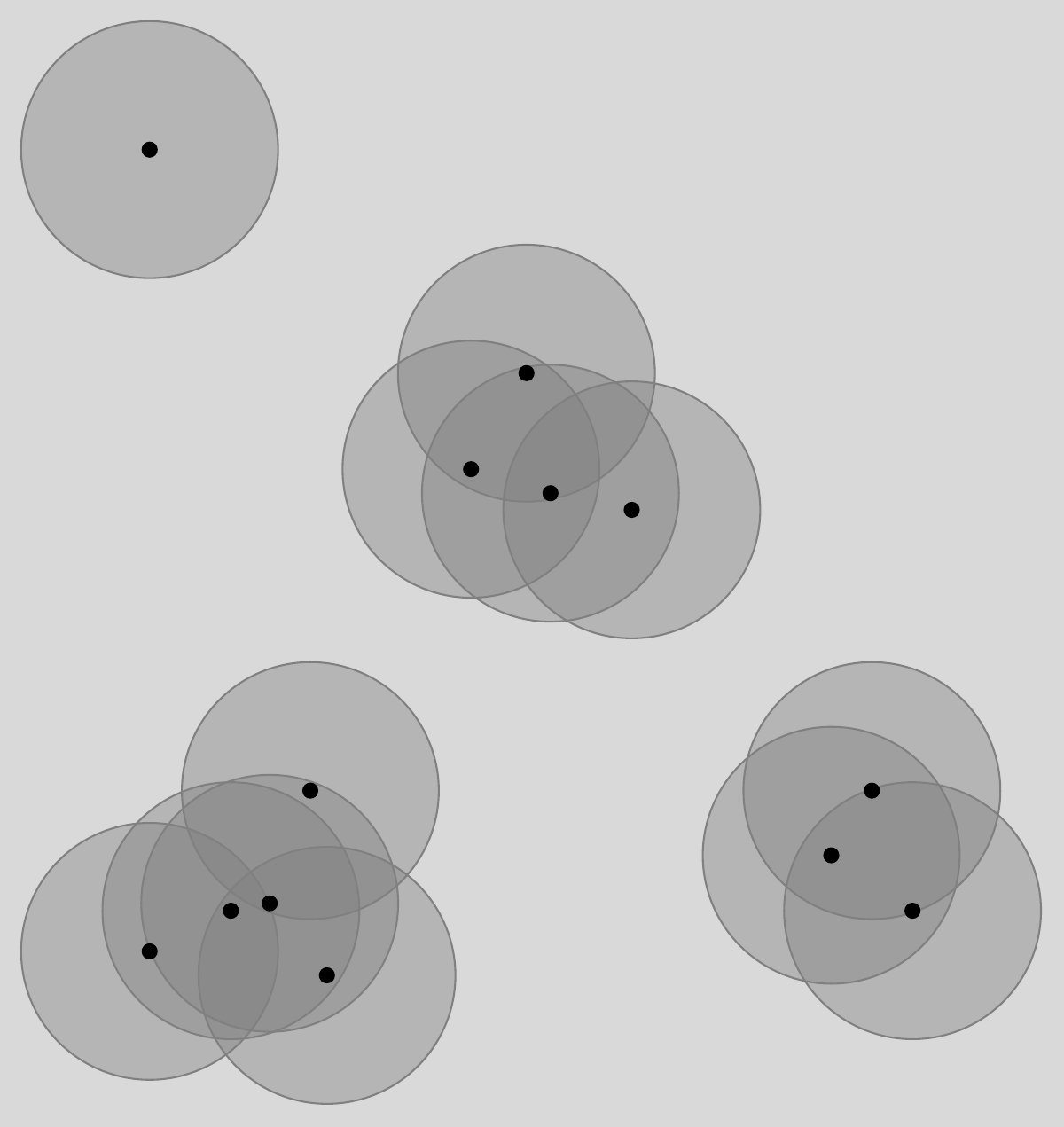}
\hspace{2cm}\includegraphics[width=0.3\linewidth]{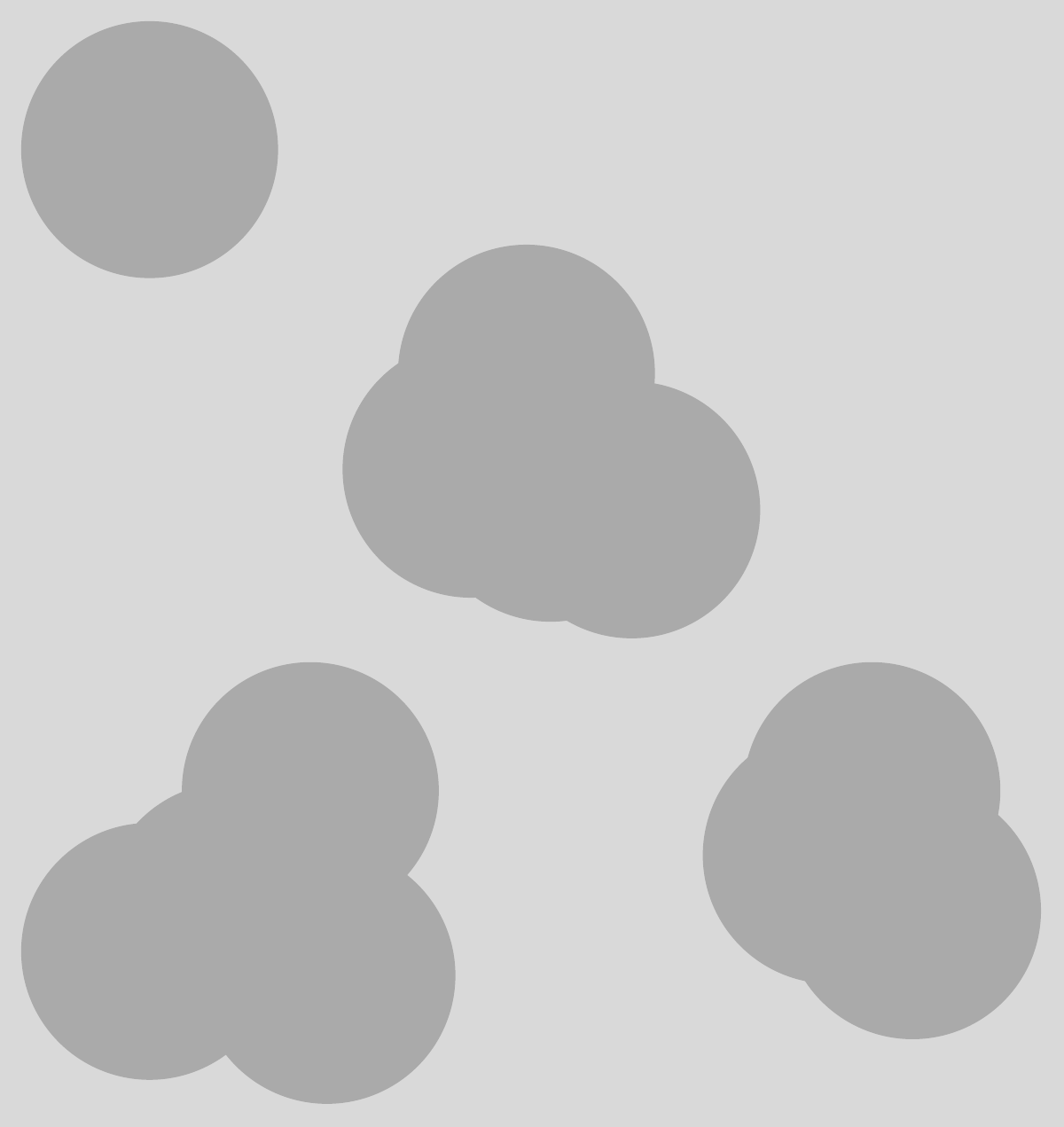}
\end{center}
\caption{\small 
A particle configuration $\gamma\in\Gamma$  (with a penetrable unit disk $B(x)$ around each particle $x\in\gamma$) and its halo $h(\gamma)$ (determining the energy of $\gamma$).}
\label{fig:ninecircles}
\end{figure}
%%%%%%%%%%%%%%%%%%%%%%%%%%%%%%%%%%%%%%%%%%%%%%

Fix $L \in (4,\infty)$ and let $\T=\T_L=\R^2/(L\Z)^2$ be the torus of side-length $L$. We can identify $\T$ with the set $[-\tfrac12 L, \tfrac12 L)^2$ after we redefine the distance by
\be
\dist(x,y) = \inf_{k\in\Z} \abs{x-y+kL}, \qquad x,y\in\R^2.
\ee
The set $\Gamma=\Gamma_\T$ of finite particle configurations in $\T$ is 
\be
\Gamma = \{ \gamma \subset {\T}\colon\,N(\gamma)\in\N_0\},
\ee
where $N(\gamma)$ denotes the cardinality of $\gamma$, i.e., particles are viewed as non-coinciding points that are indistinguishable.  Particles are surrounded by freely penetrable disks of radius $1$, serving only as a tool for calculating the configurational energy. The \emph{halo} of a configuration $\gamma\in\Gamma$ is defined as (see Fig.~\ref{fig:ninecircles})
\be
\label{halodef}
h(\gamma) = \bigcup_{x\in \gamma} B(x),
\ee 
where $B(x)=B_1(x)$ is the closed disk of radius $1$ centred at $x\in {\T}$. The \emph{energy} $E(\gamma)$ of a configuration $\gamma\in\Gamma$ is defined as 
\be
\label{Udef}
E(\gamma) = V(\gamma)-V_0N(\gamma) 
= \Bigl| \bigcup_{x\in\gamma} B(x)\Bigr| - \sum_{x\in\gamma} \abs{B(x)},
\ee
where $V(\gamma)=\abs{h(\gamma)}$ and $V_0=|B(0)|=\pi$. The energy vanishes when the unit disks do not overlap, and reaches its minimal value when all unit disks coincide. Since $0 \geq E(\gamma) \geq -\pi[N(\gamma)-1]$, the interaction is attractive and stable (Ruelle~\cite[Section 3.2]{Ru1}). 

We define the \emph{grand-canonical Gibbs measure} $\mu_\beta=\mu_{{\T},z,\beta}$ as the probability measure on $\Gamma$ defined as
\be
\label{gibbs1}
\mu_{\beta,z}(\d\gamma) = \frac{1}{\Xi_{\beta,z}}\, \e^{-\beta  E(\gamma)}\,z^{N(\gamma)}\,\Q(\d\gamma), \qquad \gamma \in \Gamma,
\ee
where $\beta \in (0,\infty)$ is the inverse temperature, $z\in (0,\infty)$ is the activity,  $\Q$ is the law of the homogeneous Poisson point process on ${\T}$ with intensity $1$, and $\Xi_{\beta,z}=\Xi_{{\T},z,\beta}$ is the normalisation
\be
\label{Xidef}
\Xi_{\beta,z} = \int_\Gamma \Q(\d\gamma)\,z^{N(\gamma)}\,\e^{-\beta E(\gamma)}.
\ee

%%%%%%%%%%%%%%%%%%%
\begin{figure}[htbp]
\vspace{0.4cm}
\begin{center}
\setlength{\unitlength}{0.4cm}
\begin{picture}(10,8)(0,0)
%%%
{\thicklines
\qbezier(0,-1)(0,3)(0,7)
\qbezier(-1,0)(6,0)(12,0)
\qbezier(5,3)(5.5,2.5)(5.8,2)
\qbezier(5.8,2)(7,0.4)(11,0.3)
}
\qbezier[30](0,0)(1.75,3.3)(3.3,3.5)
\qbezier[15](3.3,3.5)(4.5,3.5)(5,3)
\qbezier[20](5,0)(5,2)(5,3)
\put(12.5,-.25){$\beta$}
\put(-.8,7.5){$z_t(\beta)$}
\put(4.7,-1){$\beta_c$}
\put(5.9,0.4){\tiny {\rm vapour}}
\put(6.9,1.4){\tiny {\rm liquid}}
\put(5,3){\circle*{0.35}}
\end{picture}
\end{center}
\vspace{0.5cm}
\caption{\small 
Graph of the coexistence line $\beta \mapsto z_t(\beta)$.}
\label{fig-zc}
\vspace{0.2cm}
\end{figure}
%%%%%%%%%%%%%%%%%%%%%%%%%%%%%%%

In the thermodynamic limit, i.e., when $L \to \infty$,  the liquid-vapour \emph{phase transition} occurs at the coexistence line $z=z_t(\beta)$ with (see Fig.~\ref{fig-zc})
\begin{equation}
\label{phtr}
z_t(\beta) = \beta\,\e^{-\beta \pi}, \qquad \beta>\beta_c \in (0,\infty) 
\end{equation}
(Ruelle~\cite{Ru2}, Chayes, Chayes and Koteck{\'y}~\cite{CCK}). No closed form expression is known for the critical inverse temperature $\beta_c$. We place ourselves in the \emph{regime of the supersaturated vapour} at low temperature,
\be
\label{metaregalt}
z = \ka z_t(\beta), \quad \ka \in (1,\infty), \qquad \beta\to\infty.
\ee
In other words, we set our system to be in the vapour phase and choose the activity $z$ to lie in the region corresponding to the liquid phase, above the phase coexistence line in Fig.~\ref{fig-zc} representing the phase transition in the thermodynamic limit, and we let $\beta\to\infty$ and $z \downarrow 0$ in such a way that we keep close to the phase coexistence line by a fixed factor $\kappa$. For this choice the Gibbs measure in \eqref{gibbs1} with $z=\ka z_t$ becomes
\be
\label{gibbs1alt}
\mu_\beta(\d\gamma) = \frac{1}{\Xi_\beta}\,(\kappa\beta)^{N(\gamma)}\,\e^{-\beta V(\gamma)}\,
\Q(\d\gamma), \qquad \gamma \in \Gamma,
\ee
where $\Xi_{\beta}=\Xi_{{\T},\beta}$ is the normalisation
\be
\label{Xidefalt}
\Xi_{\beta} = \int_\Gamma \Q(\d\gamma)\,(\kappa\beta)^{N(\gamma)}\,\e^{-\beta V(\gamma)}.
\ee
Note that \eqref{gibbs1alt} favours large particle numbers and disfavours large halos. 

Let $\Pi_{\kappa\beta}$ be the homogeneous Poisson point process on $\mathbb T$ with intensity $\kappa\beta$. Then $\mu_\beta$ is absolutely continuous with respect to its law $\Q_{\kappa\beta}$ with Radon-Nikodym derivative 
\be
\frac{\dd \mu_\beta}{\dd \Q_{\kappa \beta}}(\gamma) 
= \frac{\exp( - \beta V(\gamma))}{\int_\Gamma \exp( - \beta V)\,\dd \Q_{\kappa\beta}},
\qquad \gamma \in \Gamma.
\ee

%%%

\subsection{Key target: shape and fluctuations of the critical droplet}
\label{target}

For $\kappa \in (1,\infty)$, we define the following function and its maximum point (see Fig.~\ref{fig-Rc})
\be
\label{USkapdef}
\Phi_\kappa(R) = \pi R^2 - \kappa \pi (R-1)^2, \quad R \in [1,\infty),
\qquad \Rc(\kappa) = \frac{\kappa}{\kappa-1}.
\ee
The role of $\Rc(\kappa)$ is that it is the radius of a \emph{critical disk} representing the critical droplet that the particle configuration closely resembles. Throughout the paper, $L$ and $\kappa \in (1,\infty)$ are fixed so that $1<\Rc(\kappa)< \frac{L}{\pi}  + \frac12$ (recall that  $L$ is the linear size of the torus $\T=\T_L$). Define
\be
\label{GammaK}
\Phi(\kappa) = \Phi_\kappa(\Rc(\kappa)) = \frac{\pi\kappa}{\kappa-1},
\qquad \Psi(\kappa) = \frac{c^*\kappa^{2/3}}{\kappa-1},
\ee
where $c^*$ is a constant that will be identified below and that does not depend on $\kappa$. 

%%%%%%%%%%%%%%%%%%%%%%%%%%%%%%%%%%%%%% 
\begin{figure}[htbp]
\vspace{0.2cm}
\begin{center}
\setlength{\unitlength}{0.4cm}
\begin{picture}(12,5)(8,2)
%%%%%%%%%%%%%
{\thicklines
\qbezier(1,0)(6,0)(11,0)
\qbezier(2,-1)(2,2)(2,6)
\qbezier(3,2)(6,8)(10,-2)
}
\qbezier[20](2,4.2)(3.7,4.2)(5.4,4.2)
\qbezier[30](5.4,0)(5.4,2.5)(5.4,4.2)
\qbezier[15](3,0)(3,1)(3,2)
\put(11.5,-.15){$R$}
\put(1,6.5){$\Phi_\kappa(R)$}
\put(-.2,4){$\Phi(\kappa)$}
\put(2.85,-1){$1$}
\put(4.7,-1){$\Rc(\kappa)$}
\put(5.4,4.25){\circle*{0.35}}
%%%%%%%%%%%%%%
{\thicklines
\qbezier(17.2,5)(18,1)(25,1.2)
\qbezier(16,-1)(16,3)(16,6)
\qbezier(15,0)(20,0)(26,0)
}
\qbezier[50](16,1)(20,1)(25,1)
\qbezier[40](17,0)(17,3)(17,6)
\put(26.5,-.15){$\kappa$}
\put(15,6.5){$\Rc(\kappa)$}
\put(16.8,-1){$1$}
\put(15.3,.8){$1$}
\end{picture}
\end{center}
\vspace{1.5cm}
\caption{{\small Picture of $R \mapsto \Phi_\kappa(R)$ for fixed $\ka\in(1,\infty)$ and $\kappa \mapsto \Rc(\kappa)$.}}
\label{fig-Rc}
\vspace{0.2cm}
\end{figure}
%%%%%%%%%%%%%%%%%%%%%%%%%%%%%%%%%%%%%% 

Fix $C \in (0,\infty)$, abbreviate $\delta(\beta) = \beta^{-2/3}$, and define
\be
\label{Diri6}
\begin{aligned}
I(\kappa,\beta;C) 
&= \int_\Gamma \Q(\d\gamma)\,(\kappa\beta)^{N(\gamma)}\,\e^{-\beta V(\gamma)}\,
\1_{\{|V(\gamma) - \pi \Rc(\kappa)^2| \leq C\delta(\beta)\}}\\
&= \Xi_\beta \,\mu_\beta \Bigl( |V(\gamma) - \pi \Rc(\kappa)^2| \leq C\delta(\beta) \Bigr). 
\end{aligned}
\ee
The main objective of the present paper is to obtain sharp upper and lower bounds on $I(\kappa,\beta;C)$. Our result is a  statement about a \emph{restricted equilibrium}: we provide sharp bounds on the probability that the halo has a volume that lies inside an interval of width $C\beta^{-2/3}$ around the volume of the critical disk $B_{\Rc(\kappa)}(0)$. In addition, we prove that this restricted equilibrium represents the formation of a \emph{critical droplet}. A critical droplet in the context of metastability refers to a cluster of particles that has reached the minimum size required for a phase transition to occur. Therefore the significance of $I(\kappa,\beta;C)$ arises also in the analysis of the metastable behaviour of a dynamical version of the Widom-Rowlinson model where particles are created or annihilated according to a Glauber dynamics. In this context, the special role of the critical disk $B_{\Rc(\kappa)}(0)$ becomes apparent through the fact that the set $\{\gamma \in \Gamma\colon\,|V(\gamma) - \pi \Rc(\kappa)^2| \leq C\delta(\beta)\}$ forms the \emph{gate} for the metastable transition from the vapour phase (`$\T$ empty') to the liquid phase (`$\T$ full') in the metastable regime \eqref{metaregalt}.

%%%%%%%%%%%%%%%%%%%%%%%%%%%%%%%%%%%%%%%%%%%%
\begin{figure}[htbp]
\vspace{0.2cm}
\begin{center}
\includegraphics[width=7cm]{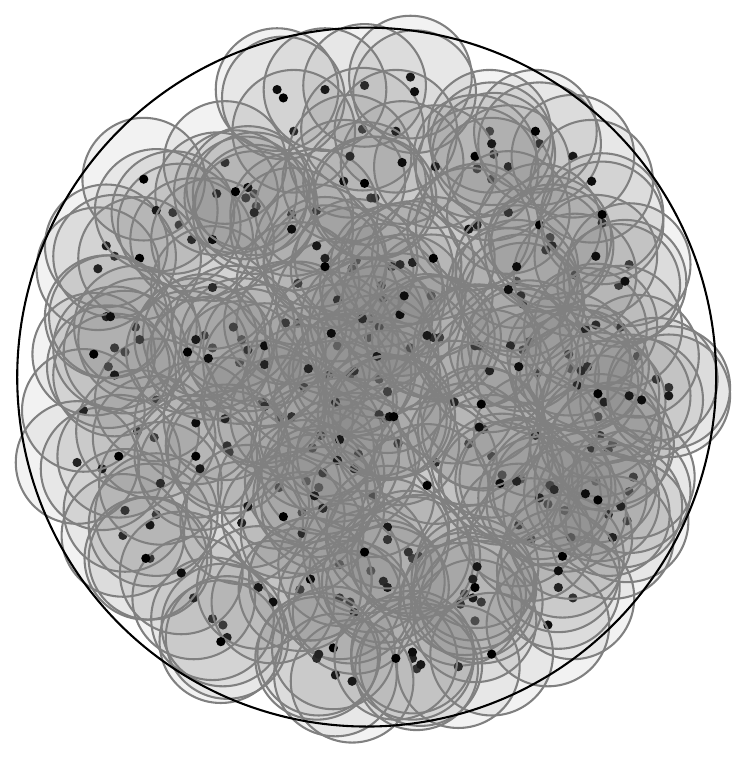}
\end{center}
\caption{\small A picture of the critical droplet in the metastable regime \eqref{metaregalt}. The critical droplet is given by a cluster of particles close to $B_{\Rc(\kappa)}$, a disk of radius $\Rc(\kappa)$. It has a random boundary that fluctuates within a narrow annulus whose width shrinks to zero as $\beta\to\infty$. In the course of the paper it will become clear that the number of unit disks lying fully inside $B_{\Rc(\kappa)}$ is $\kappa\beta$ (modulo a constant), while the number of unit disks that touch the boundary $\partial B_{\Rc(\kappa)}$ is $\frac{\kappa^{2/3}}{\kappa-1}\beta^{1/3}$ (modulo a constant).}
\label{fig:critdrop}
\end{figure}
%%%%%%%%%%%%%%%%%%%%%%%%%%%%%%%%%%%%%%%%%%%%%%

\medskip\noindent
In the approximation of $I(\kappa,\beta;C)$, we will see that the two quantities defined in \eqref{GammaK} play a pivotal role. Namely, $\Phi(\kappa)$ is the minimum of the rate function in the large deviation principle with rate $\beta$ for the \emph{volume of the halo}, and determines the \emph{shape of the critical droplet}, which is close to a disk with radius $\Rc(\kappa)$, while $\Psi(\kappa)$ comes from the \emph{fluctuations of the critical droplet} and is obtained via a \emph{weak} large deviation principle for the random variable $\beta^{2/3} |V-\pi \Rc^2(\kappa)|$ under the Gibbs measure in \eqref{gibbs1alt}. The rate is $\beta^{1/3}$ and the rate function is \emph{degenerate}, being equal to the constant $\Psi(\kappa)$. This degeneracy reflects the fact that the radius of the critical droplet is close to $\Rc(\kappa)$, for which $\Phi'_\kappa(\Rc(\kappa)) = 0$. Intuitively, we may think of $\Phi(\kappa)$ as (a leading order approximation of) the \emph{free energy} of the critical droplet, consisting of the bulk free energy and the surface tension, and of $\Psi(\kappa)$ as (a leading order approximation of) the \emph{entropy} associated with the fluctuations of the surface of the critical droplet, which plays the role of a correction term to get the full free energy. We will see that there are order $\beta$ unit disks inside the critical droplet and order $\beta^{1/3}$ unit disks touching its boundary (see Fig.~\ref{fig:critdrop}). We remark that $\beta$ is to be viewed as a dimensionless quantity, i.e., the inverse temperature divided by a unit of energy. Otherwise, its fractional powers would not make sense.

In Section~\ref{LDP-WR} we prove the following asymptotics for the partition function defined in \eqref{Xidefalt}:
\be
\label{partsum}
\Xi_\beta = \e^{[(\kappa-1)\beta-1]|\mathbb T|}[1+o(1)], \qquad \beta \to \infty.
\ee
This reflects the leading-order approximation of the bulk free energy in the liquid phase.

%%%

\subsection{Shape of the critical droplet: isoperimetric inequalities and large deviations}
\label{LDiso}

%%%

\paragraph{Admissible sets.}

Let $\mathcal{F}_\T$ be the family of non-empty closed (and hence compact) subsets of the torus $\T$. We equip $\mathcal{F}_\T$ with the Hausdorff metric
\be
\label{E:dH}
\begin{aligned}
d_{\text{\rm{H}}}(F_1,F_2) 
&= \max\Bigl\{\max_{x\in F_1}\dist (x,F_2),\max_{x\in F_2}\dist (x,F_1)\Bigr\}\\
&= \min\Bigl\{\eps\geq 0\colon\, F_1\subset F_2+\eps B(0), 
F_2\subset F_1+\eps B(0)\Bigr\}, \qquad F_1,F_2 \neq \emptyset,
\end{aligned}
\ee
where $\dist(x,F)=\min_{y\in F} \dist(x,y)$. This turns $\mathcal{F}_{\T}$ into a compact metric space (Matheron~\cite[Propositions 12.2.1, 1.4.1, 1.4.4]{Ma}, Schneider and Weil~\cite[Theorems 12.2.1, 12.3.3]{SW}). Let $\mathcal{S}\subset\mathcal{F}_{\T}$ be the collection of all sets that are $({\T}$-)\emph{admissible}, i.e., 
\be
\label{adm}
\mathcal{S}_{\T} =\{S \subset {\T} \colon\,\exists\,F \text{ such that } h(F)=S\},
\ee
where $h(F) = \cup_{x \in F} B(x)$ is the halo of $F$. In Section~\ref{SS:admissible} we will see that there is a unique maximal $F$ such that $h(F)=S$, which we denote by $S^-$ and which equals $S^- = \{x \in S\colon\,B(x) \subset S\}$. 

Obviously, not every closed set is admissible. For example, when we form $1$-halos we round off corners, and so a shape with sharp corners cannot be in $\mathcal{S}$. Also note that  $S^-\neq\emptyset$ whenever $S$ is admissible: $S$ necessarily contains at least one unit disk $B(x)$ with $x\in S$. In the following, we typically omit the subscript referring to the torus $\T$. 

%%%

\paragraph{Large deviation principles.} 
Define 
\be \label{jhalo}
J(S) =|S|- \kappa |S^-|, \qquad S\in \mathcal{S},
\ee
and 
\be \label{ihalo} 
I(S) = J(S) - \inf_ {\mathcal S} J. 
\ee
We view the halo $h(\gamma)$ as a random variable with values in the space $\mathcal S$, topologized with the Hausdorff distance. Note that $\inf_ {\mathcal S} J = -(\kappa-1) |\mathbb T|$ because $\kappa \in (1,\infty)$, $S\subset \mathbb T$ for any $S\in \mathcal S$, and $\mathbb T^-=\mathbb T$.

\begin{theorem}[Large deviation principle for the halo shape] 
\label{thm:ldp-halo}
$\mbox{}$\\
The family of probability measures $(\mu_\beta(h(\gamma) \in \cdot\,))_{\beta \geq 1}$ satisfies the LDP on $\mathcal{S}$ with speed $\beta$ and with good rate function $I$.
\end{theorem}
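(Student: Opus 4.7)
The plan is to reduce the LDP to matching local Hausdorff bounds around each $S\in\mathcal S$ and invoke compactness of $\mathcal S$ (as a closed subspace of the compact space $\mathcal F_\T$) to convert these into the LDP on open and closed sets; exponential tightness is then automatic and the rate function is good as soon as it is lower semi-continuous. The computational workhorse is the elementary Poisson identity
\bes
\int_{\{\gamma\subset A\}}(\kappa\beta)^{N(\gamma)}\,\Q(\dd\gamma)
=\e^{-|\T|}\,\e^{\kappa\beta|A|},\qquad A\subset\T,
\ees
which together with the tautology $h(\gamma)\subset S\Leftrightarrow\gamma\subset S^-$ drives both bounds. Applied with $A=\T$ and combined with a coverage estimate exploiting that the intensity $\kappa\beta\to\infty$ (so that $V(\gamma)=|\T|-o(1)$ with probability $1-\e^{-c\beta}$ under $\mathsf P_{\kappa\beta}$), the identity yields $\Xi_\beta=\e^{-\beta\inf_{\mathcal S}J+o(\beta)}$, which is \eqref{partsum}.

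For the local upper bound around $S$, on the event $\{d_{\mathrm H}(h(\gamma),S)<\delta\}$ the inclusion $h(\gamma)\subset S+B_\delta(0)$ forces $\gamma\subset (S+B_\delta(0))^-$, while $S\subset h(\gamma)+B_\delta(0)$ combined with bounded perimeter of admissible sets forces $V(\gamma)\geq|S|-\eta(\delta)$ with $\eta(\delta)\downarrow 0$; substituting these into the Poisson identity and the Gibbs weight and dividing by the partition-function asymptotics produces the local upper bound $-I(S)+\eta'(\delta)+o(1)$ at speed $\beta$. For the local lower bound, restrict to $\gamma\subset S^-$, on which $h(\gamma)\subset S$ automatically and $V(\gamma)\leq|S|$; a standard $\delta$-mesh argument on $S^-$ shows that every mesh cell carries at least one particle with probability tending to $1$, and since $|x-y|\leq\delta$ implies $B_2(y)\subset B_2(x)+B_\delta(0)$ this propagates to $d_{\mathrm H}(h(\gamma),h(S^-))\leq\delta$; using $h(S^-)=S$ and combining with the identity for $A=S^-$ and the bound $\e^{-\beta V}\geq\e^{-\beta|S|}$ gives the matching local lower bound.

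The main obstacle is the geometric step: quantifying the closeness of the functionals $|S|$, $|S^-|$, $|S+B_\delta(0)|$, $|(S+B_\delta(0))^-|$ uniformly in $S\in\mathcal S$ as $\delta\downarrow 0$. Generic closed sets behave poorly here --- Lebesgue measure is only upper semi-continuous under Hausdorff convergence, and inner parallel sets can collapse under arbitrarily small perturbations --- so one must exploit that every $S\in\mathcal S$ equals $h(S^-)$ and is therefore a union of closed discs of radius $2$, a class with positive reach and uniformly bounded perimeter. Proving these stability estimates is exactly the content of the isoperimetric inequalities announced in the outline of Section~\ref{sec:ldp}: they supply the error functions $\eta(\delta)$ above, secure the lower semi-continuity of $J$ on $(\mathcal S,d_{\mathrm H})$, and allow the local bounds to be packaged into the full LDP on open and closed sets.
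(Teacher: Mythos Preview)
Your direct approach via local Hausdorff bounds differs from the paper's. The paper first proves an LDP for the \emph{particle configuration} $\gamma$ on $(\mathcal F,d_{\mathrm H})$ (Proposition~\ref{prop:ldp-centers}): the Poisson point process $\Pi_{\kappa\beta}$ trivially satisfies an LDP with rate function $F\mapsto\kappa|\T\setminus F|$ (from $\PP(\Pi_{\kappa\beta}\subset F)=\e^{-\kappa\beta|\T\setminus F|}$), and since $F\mapsto|h(F)|$ is \emph{continuous} in the Hausdorff metric (Lemma~\ref{lem:admissible}(\ref{contFtoF+})), Varadhan's lemma tilts this into an LDP for $\mu_\beta$ with rate function $I^{\mathrm{WR}}(F)=|h(F)|-\kappa|F|-\inf(\cdot)$. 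Theorem~\ref{thm:ldp-halo} then follows from the contraction principle applied to the continuous map $F\mapsto h(F)$; the identification of the contracted rate function with $J(S)=|S|-\kappa|S^-|$ is the one-line observation that among $F$ with $h(F)=S$, the maximiser of $|F|$ is $F=S^-$. Goodness of $I$ and all stability questions are thus absorbed into two off-the-shelf continuity facts, and the $\eta(\delta)$ analysis you sketch is never needed.

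Your route can be made to work but has a gap and a misattribution. The gap is in the lower bound: restricting to $\gamma\subset S^-$ and meshing $S^-$ yields nothing when $|S^-|=0$, e.g.\ for $S=B_2(0)$ one has $S^-=\{0\}$ and a Poisson process a.s.\ avoids $\{0\}$. You need either to enlarge $S^-$ to a set of positive measure (for instance, place $\gamma$ in a finite set $F$ with $h(F)$ Hausdorff-close to $S$, available by Lemma~\ref{lem:admissible}(\ref{FtoS})) and send the enlargement to zero, or to prove the local lower bound only on a dense class and invoke lower semi-continuity of $I$ --- either way you are re-deriving the contraction principle by hand. The misattribution is that the stability you require --- $|(S+B_\delta)^-|\downarrow|S^-|$, $|S\ominus B_\delta|\uparrow|S|$, and lower semi-continuity of $J$ on $\mathcal S$ --- has nothing to do with the isoperimetric inequalities of Section~\ref{isoper}: Theorem~\ref{thm:isope} concerns the \emph{minimisers} of $I$ under a volume constraint and is used only for Theorem~\ref{thm:ldp-volume} and the moderate-deviation analysis, not for Theorem~\ref{thm:ldp-halo}. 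The inputs you actually need are monotone convergence for nested parallel sets and the continuity/semicontinuity properties collected in Lemma~\ref{lem:admissible}.
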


\noindent
Informally, Theorem \ref{thm:ldp-halo} says that 
\be
\mu_\beta\bigl( h(\gamma) \approx S\bigr) \approx \exp\bigl( - \beta I(S)\bigr),
\qquad \beta \to \infty.
\ee

The contraction principle suggests a large deviation principle for the halo volume. To formulate this, we first state a minimisation problem. The condition $R\in (1, \frac{L}{\pi}  + \frac12 )$ below ensures that the periodic boundary conditions on the torus $\T$ are not felt.

\begin{theorem}[Minimisers of the rate function for the halo shape] 
\label{thm:isope}
$\mbox{}$\\
For every $R\in (1, \frac{L}{\pi}  + \frac12 )$,
\begin{itemize}
\item[{\rm (1)}]  
\be 
\label{isope}
\min\big\{|S|- \kappa |S^-|\colon\, S\in \mathcal{S}, |S|= \pi R^2\big\} = \pi R^2 - \kappa \pi (R-1)^2 
\ee
and the minimisers are the disks of radius $R$. 
\item[{\rm (2)}] 
The minimisers are stable in the following sense: There exists an $\eps_0>0$ such that if $R-1\ge \eps_0$ and $S \in \mathcal S$ satisfies
\be
\label{E:Bonnassumpt}
\bigr(|S|-\kappa |S^-|\bigr) - \bigl( \pi R^2 -  \kappa\pi(R-1)^2\bigr)  \leq \pi\kappa\eps 
\text{ with } |S| = \pi R^2 \text{ and  } \eps\in (0, \eps_0),
\ee 
then $S^-$ is connected with connected complement (simply connected as a subset of $\R^2$), and
\be
\label{Bonnesen}
d_{\text{\rm{H}}}(\partial S,\partial B_R) \leq \sqrt{5R\eps},  
\ee
where $d_{\text{\rm{H}}}$ denotes the Hausdorff distance.
\end{itemize}
\end{theorem}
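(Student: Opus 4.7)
My plan is to reduce both parts to the Brunn--Minkowski inequality applied to the natural representation of the halo as a Minkowski sum, and to obtain the quantitative stability from a Bonnesen-type inequality. First I would set $F := S^-$ and observe that admissibility of $S$, together with the characterisation $S^- = \{x \in S : B_2(x)\subset S\}$, gives $h(F) = S$; in other words $S = F \oplus B_2$ in $\R^2$ (the condition $R < \tfrac12 L$ prevents wrap-around on $\T$, so the computation can be lifted to the plane). Under this rewriting, $J(S)=|S|-\kappa|F|$ with $|S|=\pi R^2$, so minimising $J$ is equivalent to \emph{maximising} $|F|$ subject to $|F\oplus B_2|=\pi R^2$.

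For Part~(1), I would apply the planar Brunn--Minkowski inequality
\[
|F \oplus B_2|^{1/2} \geq |F|^{1/2} + |B_2|^{1/2} = |F|^{1/2} + 2\sqrt{\pi},
\]
which gives $|F| \leq \pi(R-2)^2$, and then use the equality case (in $\R^2$, equality forces $F$ to be homothetic to $B_2$) combined with the area constraint to conclude that the only minimisers are $F = B_{R-2}(c)$, whence $S = h(F) = B_R(c)$.

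For Part~(2), the assumption \eqref{E:Bonnassumpt} forces $|F| \geq \pi((R-2)^2-\varepsilon)$, so the Brunn--Minkowski deficit $\sqrt{\pi}(R-2) - |F|^{1/2}$ is at most of order $\varepsilon/R$. To convert this into a Hausdorff estimate I would invoke a planar Bonnesen-type inequality: for a convex body $K\subset\R^2$ one has $P(K)^2 - 4\pi|K| \geq \pi^2(\rho(K)-r(K))^2$, and together with the Steiner identity $|K\oplus B_2| = |K|+2P(K)+4\pi$ this bounds $\rho(K) - r(K)$ by the deficit. Applying this to the convex hull $F^*$ of $F$ (noting $h(F)\subset h(F^*)$, so the deficit for $F^*$ is controlled by that of $F$ up to a perimeter term) and tracking constants should yield $d_{\mathrm{H}}(F^*, B_{R-2}(c)) \leq \tfrac32\sqrt{R\varepsilon}$ for a suitable centre $c$. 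This transfers to $F$ via $F\subset F^*$ and the smallness of the deficit, then to $S$ by the $1$-Lipschitz property of Minkowski addition, $d_{\mathrm{H}}(F\oplus B_2,\, B_{R-2}(c)\oplus B_2)\leq d_{\mathrm{H}}(F, B_{R-2}(c))$, and finally to the boundaries using that both $S$ and $B_R(c)$ admit an inner rolling ball of radius~$2$ (for $S$ this is immediate from $S = F \oplus B_2$).

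The simple-connectedness conclusion I would prove separately: if $F$ were disconnected (say $F = F_1 \sqcup F_2$ with disjoint halos) or possessed a non-trivial bounded complementary component, then one can strictly improve $|F|$ either by amalgamating components (optimising two-disc configurations shows the maximum falls short of $\pi(R-2)^2$ by a positive quantity $c(R)$) or by filling in the hole (producing $F'\supset F$ with $h(F')=h(F)$ and strictly larger $|F'|$), both of which contradict the smallness of the deficit once $\varepsilon < \varepsilon_0(R)$. The hard part will be the Bonnesen step itself: pinning down the precise constant $\tfrac32$ while rigorously handling the non-convexity of $F$, since the passage to $F^*$ has to be set up so that the deficit of $F$ controls that of $F^*$ up to a controlled additive error, and the Hausdorff optimisation over the centre $c$ together with the gap between $F$ and $F^*$ both consume part of the numerical margin.
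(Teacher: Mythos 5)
Your Part~(1) is correct and matches the paper: both reduce to the planar Brunn--Minkowski inequality $|F\oplus B_2|^{1/2}\geq |F|^{1/2}+|B_2|^{1/2}$ with $F=S^-$ and invoke the equality case.

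For Part~(2), however, there is a genuine gap in the convexification step. You pass from $F=S^-$ to its convex hull $F^*$ and want to (i) control the Brunn--Minkowski (equivalently Bonnesen) deficit of $F^*$ by that of $F$, and (ii) transfer the Hausdorff bound from $F^*$ back to $F$. Neither step works as stated. For (i), both $|F^*|\geq|F|$ and $|h(F^*)|\geq|h(F)|$ increase under convexification and the constraint $|h(F)|=\pi R^2$ is lost, so the deficit of $F^*$ is not monotone in that of $F$ and "controlled up to a perimeter term" is not a proof. For (ii), a small area difference $|F^*\setminus F|$ does not bound $d_{\mathrm H}(F,F^*)$: $F$ could agree with $F^*$ except for a long thin indentation of tiny area, giving Hausdorff distance of order one. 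Your attempted repair via "smallness of the deficit" only re-encounters the same issue.

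The paper sidesteps convexification entirely by exploiting two structural facts that you do not use. First, admissibility of $S$ (together with the crude connectedness/simple-connectedness established beforehand via quantitative Brunn--Minkowski stability, Christ/Figalli--Jerison) forces $\mathrm{reach}(S^-)\geq 2$ (Lemma~\ref{lem:admissible}(\ref{Sreach})). This gives the clean Steiner formula $|S\setminus S^-|=2\,\mathcal{SM}(S^-)+4\pi$ (Lemma~\ref{lem:boundary}) with no convexity needed. Second, the Bonnesen inequality
\[
\pi^2\bigl(r_{\mathrm{out}}(\partial S^-)-r_{\mathrm{in}}(\partial S^-)\bigr)^2 \leq \mathcal H^1(\partial S^-)^2 - 4\pi|S^-|
\]
holds for general Jordan curves, not only boundaries of convex bodies, so it can be applied directly to $\partial S^-$. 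Combining these, the energy defect is rewritten exactly as $2\kappa[\mathcal H^1(\partial S^-)-2\pi(R-2)]$ and the isoperimetric deficit as $(\mathcal H^1(\partial S^-)+4\pi)^2-(2\pi R)^2$, yielding the constant $\tfrac32$ with no slack to be "consumed" by an ancillary optimisation over centres (one uses $r_{\mathrm{out}}(\partial S)-r_{\mathrm{in}}(\partial S)=r_{\mathrm{out}}(\partial S^-)-r_{\mathrm{in}}(\partial S^-)$). Your separate argument for simple-connectedness by amalgamation/filling is plausible in spirit but would need to be carried out quantitatively; the paper instead derives it as a corollary of Christ/Figalli--Jerison stability, which is the cleaner route and you may as well reuse it.

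Bottom line: replace the detour through $F^*$ by the observation $\mathrm{reach}(S^-)\geq 2$ and the Jordan-curve form of Bonnesen; this removes the uncontrolled error terms and produces the stated constant directly.
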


\noindent 
Theorem~\ref{thm:isope} is a powerful tool because it shows that \emph{the near-minimers of the halo rate function are close to a disk and have no holes inside}. In particular, it tells us that
\be
I(B_R) = \Phi_\kappa(R)+(\kappa-1)|\T|,
\ee
and allows us to describe the large deviations of the halo volume. 
 
\begin{theorem}[Large deviation principle for the halo volume] 
\label{thm:ldp-volume} 
$\mbox{}$\\
The family of probability measures $(\mu_\beta(V(\gamma) \in \cdot\,))_{\beta \geq 1}$ satisfies the LDP on $[0,\infty)$ with speed $\beta$ and with good rate function $I^*$ given by
\be \label{eq:ivol} 
I^*(A) = \inf \{I(S)\colon\, |S| = A\}, \qquad A \in [0,\infty).	
\ee
\end{theorem}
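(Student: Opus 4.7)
The plan is to derive the LDP for $V(\gamma)=|h(\gamma)|$ from the LDP for the halo (Theorem~\ref{thm:ldp-halo}) via the contraction principle applied to the volume functional $\Psi\colon\mathcal S\to[0,|\mathbb T|]$, $\Psi(S)=|S|$. Since $V(\gamma)=\Psi(h(\gamma))$, once $\Psi$ is continuous with respect to the Hausdorff metric on $\mathcal S$, the contraction principle immediately yields the LDP on $[0,|\mathbb T|]$ with speed $\beta$ and good rate function
\[
I^*(A)=\inf\{I(S)\colon S\in\mathcal S,\,|S|=A\},
\]
which is exactly \eqref{eq:ivol}. Goodness of $I^*$ follows from the goodness of $I$ together with the fact that $\Psi^{-1}([0,A])$ is closed in the compact space $\mathcal F_\T$. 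Extension to $[0,\infty)$ is automatic with $I^*(A)=+\infty$ for $A>|\mathbb T|$. The value $V=0$ (which occurs only for $\gamma=\emptyset$, with probability $1/\Xi_\beta=\e^{-\beta(\kappa-1)|\mathbb T|+o(1)}$ by \eqref{partsum}) is handled by adjoining the empty set to $\mathcal S$ with $J(\emptyset)=0$, giving $I^*(0)=(\kappa-1)|\mathbb T|$.

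The main step is the continuity of $\Psi$. Suppose $(S_n)\subset\mathcal S$ with $d_{\mathrm H}(S_n,S)\to 0$. Write $S=F\oplus B_2(0)$ with $F=S^-$, and similarly $S_n=F_n\oplus B_2(0)$. Hausdorff convergence yields $\varepsilon_n\to 0$ with $S_n\subset F\oplus B_{2+\varepsilon_n}(0)$ and $S\subset F_n\oplus B_{2+\varepsilon_n}(0)$, hence
\[
|S_n|\le |F\oplus B_{2+\varepsilon_n}(0)|, \qquad |S|\le |F_n\oplus B_{2+\varepsilon_n}(0)|.
\]
Since $r\mapsto |K\oplus B_r(0)|$ is continuous for every compact $K\subset\mathbb T$, the first inequality yields $\limsup_n |S_n|\le|S|$. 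The reverse bound reduces to a \emph{uniform} Steiner-type estimate of the form
\[
|F_n\oplus B_{2+\varepsilon}(0)|-|F_n\oplus B_2(0)|\le C\varepsilon, \qquad \varepsilon\in(0,1),
\]
with $C$ independent of $n$. The structural input needed is that every admissible set is a $2$-parallel set on the fixed torus $\mathbb T$: it has reach at least $2$ on its outer boundary, its boundary is a finite union of rectifiable arcs of curvature at most $1/2$, and the number of connected components is bounded by $|\mathbb T|/(4\pi)$ since each component contains some disc $B_2(x)$. Federer's generalised Steiner formula for sets of positive reach then provides a first-order expansion in $\varepsilon$ whose coefficient is the outer surface area of $S_n$, and the latter is controlled along a sequence converging to $S$ by continuity of the Minkowski tube operation. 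Combining the two inequalities yields $\liminf_n |S_n|\ge|S|$, and hence continuity.

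The hard part will be securing this uniform Steiner estimate: Hausdorff convergence does not in general preserve volume, and it is precisely the structural fact that admissible sets are $2$-parallel on the fixed torus that makes the volume functional continuous. Once this purely deterministic geometric estimate is in hand, the contraction principle (see e.g.\ Theorem~4.2.1 of~\cite{DZ}) closes the argument, with no further probabilistic input beyond Theorem~\ref{thm:ldp-halo}.
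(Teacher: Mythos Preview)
Your strategy---contraction principle applied to a continuous volume map---is the right one, but you contract from the wrong LDP and thereby make the continuity step harder than it needs to be. The paper does \emph{not} contract from Theorem~\ref{thm:ldp-halo} (the LDP for the halo shape on $\mathcal S$). Instead it contracts directly from Proposition~\ref{prop:ldp-centers} (the LDP for $\mu_\beta$ on $\mathcal F$, which already underlies Theorem~\ref{thm:ldp-halo}) via the map $F\mapsto |h(F)|$, whose continuity on all of $\mathcal F$ is recorded in Lemma~\ref{lem:admissible}(\ref{contFtoF+}) with a reference to Kampf~\cite{Ka}. The rate function then reads $I^*(A)=\inf\{I^{\mathrm{WR}}(F):|h(F)|=A\}$, which coincides with $\inf\{I(S):|S|=A\}$ by the identification $I(S)=\inf\{I^{\mathrm{WR}}(F):h(F)=S\}$ established in the proof of Theorem~\ref{thm:ldp-halo}. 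This bypasses entirely the question of whether $S\mapsto|S|$ is continuous on $\mathcal S$.

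Your direct argument for continuity of $\Psi(S)=|S|$ on $\mathcal S$ has a genuine gap at the ``uniform Steiner estimate''. You invoke reach $\geq 2$ ``on the outer boundary'' and Federer's formula, but reach is a property of the set $S^-$, and Lemma~\ref{lem:admissible}(\ref{Sreach}) only guarantees $\reach(S^-)\geq 2$ under the topological condition (C), which fails for general $S\in\mathcal S$. Even granting reach, you still need a uniform bound on $\mathcal H^1(\partial S_n)$ along the sequence, and the sentence ``the latter is controlled \ldots\ by continuity of the Minkowski tube operation'' is circular: that is precisely what is at stake. The claim you want is nevertheless true, and follows in one line from the paper's ingredients: given $S_n\to S$ in $\mathcal S$, set $F_n=S_n^-\in\mathcal F$; by compactness of $\mathcal F$ pass to a subsequence $F_{n_k}\to F'$; by continuity of $h$ (Lemma~\ref{lem:admissible}(\ref{contFtoF+})) we get $h(F')=S$, and by continuity of $|h(\cdot)|$ we get $|S_{n_k}|\to|S|$; since every subsequence has a further subsequence with this limit, $|S_n|\to|S|$. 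So your route can be salvaged, but only by importing the same lemma the paper uses---at which point it is simpler to contract from $\mathcal F$ directly.
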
	

\noindent
Informally, Theorem \ref{thm:ldp-volume} says that	
\be
\mu_\beta\bigl( V(\gamma) \approx A\bigr) \approx \exp\bigl( - \beta I^*(A)\bigr).
\ee	
For every $R\in (1, \frac{L}{\pi}  + \frac12 )$, we have 
\be
I^* (\pi R^2) = I(B_R). 
\ee
 
%%%

\subsection{Fluctuations of the critical droplet: moderate deviations} 
\label{zoomin}

%%%

\paragraph{Fluctuations of the halo volume.}

The function $R \mapsto I(B_R)$ is maximal at $\Rc = \Rc(\kappa)$, defined in \eqref{USkapdef}, that plays the role of the radius of the \emph{critical droplet} as $\beta\to\infty$. We  now  zoom in on a neighbourhood of the critical droplet. The large deviation principle yields the statement 
\be
\mu_\beta \Bigl( |V(\gamma) - \pi \Rc^2| \leq \eps \Bigr) 
= \exp\Bigl( - \beta \min_{ {A \in [0,\infty):} \atop {|A- \pi \Rc^2|\leq \eps} } I^*(A)  
+ o(\beta) \Bigr), \qquad \beta \to \infty,
\ee
for $\eps>0$ fixed. We would like to take $\eps = \eps(\beta) \downarrow 0$, for which we need a stronger property. This is going to be the statement of our main theorem. Let us set
\be
\label{E:Gkappa}
G_\kappa = \frac{\kappa^{2/3}}{\kappa-1},
\ee
and let $\tau^* \in \R$ be the unique solution to the equation 
\be 
\label{tstar1}
\int_0^\infty \sqrt{2\pi u} \exp\Bigl(-\tau^* u - \frac{u^3}{24}\Bigr)\, \dd u =1. 
\ee
It is easy to show that $\tau^*>0$. By numerically approximating the integral we find that $\tau^*\in [1.60,1.61]$.

The following result provides rigorous bounds on the cost of the moderate deviations.

\begin{theorem}[Moderate deviation bounds] 
\label{thm:zoom1}
For $C$ large enough, 
\be
\begin{aligned}
\limsup_{\beta\to \infty}\frac{1}{ \beta^{1/3}} \log \Bigl\{ \e^{\beta I(B_{\Rc})}
\mu_\beta \Bigl( |V(\gamma) - \pi \Rc^2| \leq C \beta^{-2/3}\Bigr) \Bigr\} 
& \leq 2\pi G_\kappa \tau^*,\\
\liminf_{\beta\to \infty}\frac{1}{ \beta^{1/3}} \log \Bigl\{ \e^{\beta I(B_{\Rc})}
\mu_\beta \Bigl( |V(\gamma) - \pi \Rc^2| \leq C \beta^{-2/3}\Bigr) \Bigr\} 
&\geq 0.
\end{aligned}
\ee 
\end{theorem}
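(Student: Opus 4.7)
The plan is to build on the representation of the moderate deviation probability derived in Sections~\ref{sec:surface}--\ref{sec:prep}. After parametrising admissible halos by their boundary points in polar coordinates, extracting the Gaussian fluctuations of the radii via the mean-centred Brownian bridge $B$, and identifying the Poisson point process $\mathcal T$ governing the angular positions, one reduces the problem, up to a factor $\e^{o(\beta^{1/3})}$, to an expectation of the schematic form
\be
\e^{\beta I(B_{R_c})}\, \mu_\beta\bigl(|V(\gamma) - \pi R_c^2|\leq C\beta^{-2/3}\bigr)
 \,\simeq\, \E\Bigl[ \e^{\widehat Y_0 - \widehat Y_1}\, \mathbf 1_{\{Z^{(m_*)} \in \mathcal O\}}\, \mathbf 1_{\mathcal V(\beta,C)}\Bigr],
\ee
where $m_*$ is a centring shift of the radial mean and $\mathcal V(\beta,C)$ enforces the volume window. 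The indicator $\mathbf 1_{\{Z^{(m_*)} \in \mathcal O\}}$ encodes the microscopic admissibility of the outer contour, which is what conditions (C1)--(C3) handle in the sharp Theorem~\ref{thm:zoom2}. Without those conditions, we replace the indicator by $1$ for the upper bound and by the indicator of a tractable sub-event for the lower bound.

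For the upper bound we drop the admissibility indicator and estimate
\be
\E\Bigl[\e^{\widehat Y_0 - \widehat Y_1}\,\mathbf 1_{\mathcal V(\beta,C)}\Bigr]
\,\leq\, \E\bigl[\e^{\widehat Y_0 - \widehat Y_1}\bigr].
\ee
Using independence of $(B_t)$ and $\mathcal T$, the Brownian-bridge integral is Gaussian and contributes only a sub-exponential factor, so the essential contribution comes from the spacings $(\Theta_i)_{i=1}^N$. Conditioning on $N$, rescaling $u_i = \lambda(\beta)\Theta_i$ and writing $f(u) = \sqrt{2\pi u}\,\e^{-u^3/24}$, the $\mathcal T$-expectation reduces to a renewal-type convolution integral on the simplex $\{u_i>0,\sum u_i = 2\pi\lambda(\beta)\}$ whose leading exponential behaviour is identified by a saddle-point / inverse-Laplace analysis to be $\exp\bigl(2\pi G_\kappa \tau_* \beta^{1/3}\bigr)$, with $\tau_*$ precisely characterised by the equation $\int_0^\infty f(u)\,\e^{-\tau_* u}\,du = 1$, i.e.\ \eqref{tstar}. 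This is the single point at which $\tau_*$ enters the argument, and it gives the stated upper bound.

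For the lower bound we must retain a non-trivial portion of the admissibility constraint. The plan is to choose an auxiliary event
\be
\mathcal A = \Bigl\{ c_1 \leq \lambda(\beta)\Theta_i \leq c_2 \text{ for all } i\Bigr\}
\cap \Bigl\{\sup\nolimits_{t\in[0,2\pi]} |B_t| \leq M\Bigr\},
\ee
with constants $c_1,c_2,M$ chosen so that the geometric estimates of Section~\ref{app:geometry} guarantee $Z^{(m_*)}\in\mathcal O$ and $\mathbf 1_{\mathcal V(\beta,C)} = 1$ for $C$ large enough. On $\mathcal A$ the two indicators can be replaced by $1$, so
\be
\E\Bigl[\e^{\widehat Y_0 - \widehat Y_1}\,\mathbf 1_{\{Z^{(m_*)}\in\mathcal O\}}\,\mathbf 1_{\mathcal V(\beta,C)}\Bigr]
\,\geq\, \E\bigl[\e^{\widehat Y_0 - \widehat Y_1}\,\mathbf 1_{\mathcal A}\bigr].
\ee
Applying the same renewal/saddle-point machinery as in the upper bound, but with the spacings restricted to the window $[c_1,c_2]$ and the bridge pinned in a compact tube, shows that this restricted expectation is at least $\exp\bigl(2\pi G_\kappa(\tau_* - c)\beta^{1/3} + o(\beta^{1/3})\bigr)$ for a constant $c = c(c_1,c_2,M,\kappa) \in [0,\infty)$ measuring the entropic cost of the truncation.

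The main obstacle is the lower bound: the set $\mathcal A$ has to be small enough for the admissibility $Z^{(m_*)}\in\mathcal O$ to be verifiable by the direct geometric criteria of Section~\ref{app:geometry} (no hidden holes, all adjacent boundary arcs meeting correctly, outer contour simply connected), yet large enough that $c$ remains finite. Once $\mathcal A$ is fixed, the restricted renewal analysis is essentially parallel to that of the upper bound -- with the free Laplace transform $\widetilde f$ replaced by its restriction to $[c_1,c_2]$ -- and the Gaussian bridge contribution is handled by standard small-tube estimates for the mean-centred Brownian bridge. The combination of the two bounds then yields Theorem~\ref{thm:zoom1}.
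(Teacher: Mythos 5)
Your schematic representation
\[
\e^{\beta I(B_{R_c})}\,\mu_\beta\bigl(|V(\gamma)-\pi R_c^2|\leq C\beta^{-2/3}\bigr)
\,\simeq\,\E\bigl[\e^{\widehat Y_0-\widehat Y_1}\,\mathbf 1_{\{Z^{(m_*)}\in\mathcal O\}}\,\mathbf 1_{\mathcal V(\beta,C)}\bigr]
\]
is missing the essential term $\e^{\frac12 Y_3^{(m)}}$ coming from the $\sum_i\overline{\rho_i}^2\theta_i$ contribution in the volume/surface expansion (Proposition~\ref{prop:expansion}), and it also replaces the genuine integral $\int_{\R}\d m$ by a fixed centring value $m_*$. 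This omission wrecks the upper bound as written: the statement that "the Brownian-bridge integral is Gaussian and contributes only a sub-exponential factor" is false, because $\E[\exp(\frac12\int_0^{2\pi}B_t^2\,\d t)]=\infty$ (noted explicitly in Section~\ref{orders}), and likewise $\int_\R \e^{\pi m^2}\d m=\infty$. You cannot simply drop the volume and centring indicators and hope the remainder is finite: the entire point of Steps 1--8 of Section~\ref{mdub} is to use those very constraints, together with the a~priori geometric estimates, to rewrite $Y_3^{(m)}$ as $O(\eps)(\beta^{1/3}+\beta Y_1+Y_2)+[1+O(\eps)]\chi$, to prove exponential tightness in $m$ so that only $|m|=O(\beta^{1/6})$ contributes, and then (for the weak upper bound without (C1)--(C3)) to control $\chi$ via the stochastic discretisation Lemmas~\ref{lem:stochdiscr1}--\ref{lem:stochdiscr2} and the Karhunen--Lo\`eve bound in Lemma~\ref{lem:varadhanprep}. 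None of this appears in your sketch, so the upper bound as you describe it does not close.

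Your lower bound plan is closer to what the paper actually does: restricting the spacings to a window $[c_1,c_2]$ (after rescaling) and confining the radial fluctuations, then verifying admissibility geometrically and paying the entropic price $c$. This is essentially the same idea as the paper's restriction to the domain $M$ and Lemma~\ref{lem:zinM}. The nontrivial part you gloss over is the actual verification that your tube event $\mathcal A$ implies $Z^{(m_*)}\in\mathcal O$; this requires the locality criterion of Proposition~\ref{prop:locality} and the explicit triangle-inequality computation in Lemma~\ref{lem:zinM}, not just an appeal to "the geometric estimates of Section~\ref{app:geometry}". With that verification spelled out, and the renewal saddle-point argument carried through on the truncated window, the lower bound would be sound. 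But as stated, only the lower-bound half is a plausible skeleton; the upper-bound half has a genuine and large gap where the $Y_3^{(m)}$ divergence must be tamed.
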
 

We conjecture that the bounds in Theorem \ref{thm:zoom1} can be sharpened so that they match. To formulate this, let us first introduce, for $(p,q) \in \R \times \R$, the integral operator $\textbf{K}_{p,q}$ acting from $L^2(\R_+ \times \R)$ into itself (with respect to Lebesgue measure) given by 
\be\label{kpoper}
(\mathbf K_{p,q} f)(\xi) = \int_{\R_+ \times \R} \d\xi' \,K_{p,q}(\xi,\xi') f(\xi'). 
\ee
The integral kernel $K_{p,q}\colon\,(\R_+ \times \R)^2 \to [0,\infty)$ is given by
\be \label{kpdef}
\begin{aligned}
&K_{p,q}\bigl((x_1,y_1),(x_2,y_2)\bigr)\\ 
&= \exp\Bigl( \frac12 \Bigl[ p x_1+qy_1 - \frac{x_1^3}{24} - \frac{y_1^2}{2x_1}\Bigr]\Bigr) 
\, \1_{\big\{ \frac{2}{x_1+x_2}(\frac{y_{2}}{x_{2}} - \frac{y_1}{x_1})<1\big\}}
\, \exp\Bigl( \frac12 \Bigl[ p x_2+qy_2 - \frac{x_2^3}{24} - \frac{y_2^2}{2x_2}\Bigr]\Bigr). 
\end{aligned}
\ee
Let $\lambda(p,q)$ be the principal eigenvalue of the operator $\mathbf K_{p,q}$.

\begin{conjecture}[Moderate deviations] 
\label{thm:zoom2}
For $C$ large enough and $\beta\to\infty$,
\be
\mu_\beta \Bigl( |V(\gamma) - \pi \Rc^2|\leq C \beta^{-2/3}\Bigr) 
= \e^{- \beta I(B_{\Rc}) + \beta^{1/3}\Psi(\kappa) + o(\beta^{1/3})}, 
\ee
where 
\be \label{eq:scaling}
I(B_{\Rc}) = \Phi(\kappa) + (\kappa-1)|\T|, \qquad 
\Phi(\kappa) = \frac{\pi\kappa}{\kappa-1}, \quad	
\Psi(\kappa) = 2\pi G_{\kappa}(\tau^*- p^*).
\ee
Here $\tau^*$ is defined in \eqref{tstar1} and  $p^*$ is the unique solution of the equation $\lambda(p,0)=1$. 
\end{conjecture} 

\begin{remark}[Three conditions]
{\rm Conjecture \ref{thm:zoom2} provides a sharp asymptotics up to second order. In Section \ref{conds} we prove that this conjecture is true subject to \emph{three conditions} related to the \emph{microscopic} fluctuations of the surface of the critical droplet. It turns out that these fluctuations are related to a microscopic model that we call the \emph{parabolic interface model} (PIM), consisting of a sequence of interacting downward parabolas arranged in linear order and pinned at both ends of the sequence. We analyse this model and settle the three conditions in \cite{dHJKP3}.} 
\end{remark}

\begin{remark}[Metastability]
{\rm As shown in \cite{dHJKP1}, Conjecture~\ref{thm:zoom2} also provides a detailed description of the \emph{mesoscopic} fluctuations of the surface of the critical droplet in the Widom-Rowlinson model in the \emph{metastable regime} \eqref{metaregalt}, in terms of a certain constrained Brownian bridge and quantifies the cost of moderate deviations for the surface free energy of droplets, and makes rigorous the heuristic arguments for capillary waves put forward in Stillinger and Weeks~\cite{SW*}.}
\end{remark}

\begin{remark}[Broader context]
{\rm Our results open up a new window in the area of \emph{stochastic geometry for interacting particle systems}.
\begin{itemize}
\item[(a)] 
There is a large literature in stochastic geometry on fluctuations in point processes. General overviews are Chiu, Stoyan, Kendall and Mecke~\cite{CSKM} and Georgii, H\"aggstr\"om and Maes~\cite{GHM}. Schreiber~\cite{Sc2,Sc3} derives limit laws for high-intensity point processes and extremal points. Schreiber and Yukich~\cite{SY} and Calka, Schreiber and Yukich~\cite{CSY}) focus on \emph{random convex polytopes} and their relation to the \emph{paraboloid growth process}. While we marginally touch on these concepts in the present paper, they play an important role in \cite{dHJKP3}, where we discuss the microscopic fluctuations of the surface of the critical droplet. There we prove that, upon rescaling of the random variables describing the boundary of the critical droplet, the effective microscopic model is given by a modification of the \emph{paraboloid hull process}, which is connected to the \emph{paraboloid growth process}. 
\item[(b)] 
The literature on stochastic interfaces is extensive, especially for phase boundaries separating two coexisting phases. In statistical mechanics, interface analyses have been successfully carried out for various lattice models. For the two-dimensional Ising model, Higuchi~\cite{H} proved that, in a properly chosen diffusive scaling limit, the interface converges to a Brownian bridge. This result was later adapted by Higuchi, Murai and Wang~\cite{HMW} to the two-dimensional lattice version of the two species Widom-Rowlinson model.
\item[(c)] 
Several variations on the Widom-Rowlinson model have been considered in the literature. Examples are Baddeley and van Lieshout~\cite{BvL} and Kendall, van Lieshout and Baddeley~\cite{KvLB}, who introduced \emph{quermass interaction processes} by adding to the energy $E(\gamma)$ defined in \eqref{Udef} terms like the perimeter length of the halo $h(\gamma)$ defined in \eqref{halodef} or other Minkowski functionals, and Georgii and H\"aggstr\"om \cite{GH}, who added a short-range repulsive pair interaction between the particles. 
\end{itemize}
}
\end{remark}

%%%

\subsection{Outline of the paper}
\label{outline}

Section~\ref{sec:ldp} is devoted to the proof of the two large deviation principles in Theorems~\ref{thm:ldp-halo} and \ref{thm:ldp-volume},  
as well as to the identification of the minimisers of the rate function $S \mapsto |S| - \kappa |S^-|$, under the conditions $S \in \mathcal{S}$ and $|S|= \pi R^2$, and their stability, as stated in Theorem~\ref{thm:isope}. Section~\ref{overallheur} provides the heuristics behind Theorem~\ref{thm:zoom1} and Conjecture~\ref{thm:zoom2}, whose proof is carried out in Sections~\ref{app:geometry}--\ref{proofmoddev}, which form the technical core of the present paper. Section~\ref{app:geometry} focusses on approximations of certain key geometric functionals that are crucial for the analysis of the moderate deviations. Section~\ref{sec:surface} represents moderate deviation probabilities in terms of geometric surface integrals and introduces auxiliary random processes that are needed for the description of the fluctuations of the surface of the critical droplet. Section~\ref{sec:prep} contains various preparatory statements involving exponential functionals of the auxiliary random variables. Section~\ref{proofmoddev} uses these preparations, in combination with the geometric properties derived in Sections~\ref{app:geometry}--\ref{sec:prep}, to prove the moderate deviations for the halo volume close to the critical droplet stated in Theorem~\ref{thm:zoom1}. In the same section, some conditions are given under which Conjecture~\ref{thm:zoom2} is true. Appendix~\ref{appA} contains several lemmas that are needed in Section~\ref{sec:ldp} and that involve admissible sets, isoperimetric inequalities and stability of the minimisers, with the rate function playing the role of perimeter. The stability of the minimisers plays a crucial role in converting probabilistic restrictions about volumes of halos into statements about geometric distributions of particle configurations.

%%%%%% SECTION 2 %%%%%%%%%%%%%%%%%%%%%%%%

\section{Shape of the critical droplet} 
\label{sec:ldp}

In this section we prove Theorems~\ref{thm:ldp-halo}--\ref{thm:ldp-volume}. Section~\ref{SS:admissible} takes a closer look at the properties of admissible sets (Lemmas~\ref{lem:admissible}--\ref{lem:boundary}). Section~\ref{isoper} gives the proof of Theorem~\ref{thm:isope}. The proof requires two isoperimetric inequalities (Lemmas~\ref{lem:isopequiv}--\ref{L:isope}), which are analogues of the classical isoperimetric and Bonnesen inequalities, and imply that the minimisers of $I$ in \eqref{ihalo} are disks and that the difference of $I$ with its minimum can be quantified in terms of the Hausdorff distance to these disks. Section~\ref{LDP-WR} proves the large deviation principle for the centres of the unit disks in the Widom-Rowlinson model (Proposition~\ref{prop:ldp-centers}), and uses this to prove Theorems~\ref{thm:ldp-halo} and \ref{thm:ldp-volume}. 

%%%

\subsection{Properties of admissible sets} 
\label{SS:admissible}

Write
\be 
\label{F+def}
F^+ = F + B(0) = \bigcup_{x\in B(0)} (F+x) = h(F)
\ee
for the $1$-halo of $F\in\mathcal{F}_{\T}$ (Minkowski addition) and 
\be 
\label{F-def}
F^- = F \ominus B(0) = \bigcap_{x\in B(0)} (F+x) =\{x\in F\colon\, B(x)\subset F\},
\ee
for the 1-interior of $F\in\mathcal{F}_{\T}$ (Minkowski subtraction). In integral geometry, the sets $F^+$ and $F^-$ are called the \emph{dilation} and the \emph{erosion} of $F$, respectively. Note that the erosion and subsequent dilation of a set $F$ is contained in $F$, i.e.,
\be
\label{E:F-+}
(F^-)^+= \bigcup_{B(x) \subset F} B(x) \subset F
\ee
(Matheron~\cite[Section 1.5]{Ma}). See Lemma~\ref{lem:admissible}(\ref{F-+}) below. 

Note that $(F^-)^+$ is not necessarily equal to $F$. We use $\mathcal{S}\subset\mathcal{F}_{\T}$ to denote the collection of all sets for which the equality holds and call them (${\T}$-)\emph{admissible} (\emph{open with respect to $B(0)$} in the terminology of integral geometry), i.e.,
\be
\mathcal{S}_{\T} =\{S \subset {\T} \colon\,    (S^-)^+=S\}=  \{S \subset {\T} \colon\,  
S=(S\ominus B(0)) + B(0)\}. 
\ee
In the following, we typically omit the subscript referring to the torus $\T$, by writing $\mathcal{F}_{\T}=\mathcal{F}$, $\mathcal{S}_{\T}=\mathcal{S}$. There is another useful characterisation of admissible sets: $ S\in\mathcal{S}$ if and only if it is the 1-halo of some $F\in\mathcal{F}$, $S=F^+$ (see Lemma~\ref{lem:admissible}(\ref{adm-comp}) below). Obviously, not every closed set is admissible. For example, when we form 1-halos we round off corners, and so a shape with sharp corners cannot be in $\mathcal{S}$. Also note that  $S^-\neq\emptyset$ whenever $S$ is admissible: $S$ necessarily contains at least one unit disk $B(x)$ with $x\in S$.

In this section we summarise some known properties of admissible sets in a setting that will be needed later. The proofs of these properties rely on various sources. In the proof of Lemma~\ref{lem:admissible}, which is deferred to Appendix~\ref{appA}, we only quote appropriate references, and when instructive supply a short proof.

A key property is that for any set $S\in\mathcal S$ such that $S^-$ is connected and $S$ is simply connected, the set $S^-$ is of reach at least 1. Recall that the reach of a set $F\in\mathcal F$ is 
\be
\reach(F)=\sup\big\{r \geq 0\colon \text{ for any } x\in F+B_r(0) \text{ there exist a unique } 
y\in F\text{ nearest to } x\big\}.
\ee

\begin{lemma}
\label{lem:admissible}
$\mbox{}$
\begin{enumerate}[{\rm (1)}]
\item
\label{F-+}
If $F\in \mathcal F$, then $(F^-)^+\subset F$.
\item
\label{adm-comp}
$S\in\mathcal{S}$ if and only if $S$ is the $1$-halo of some $F\in\mathcal{F}$, i.e., 
\be
\{S \subset {\T} \colon\, S=(S^+)^-\} 
= \{S \subset {\T} \colon\, \exists\,F\in \mathcal{F} \text{ such that } S = F^+\}. 
\ee
\item
\label{contFtoF+}
Both $F \mapsto F^+ = h(F)$ and $F\mapsto |F^+| = |h(F)|$ are continuous with respect to the Hausdorff metric. 
\item
\label{connS-toS}
If $S\in\mathcal{S}$ and $S^-$ is connected, then also $S$ is connected. 
\item
\label{convS-toS}
If $F\in\mathcal{F}$ is convex, then $F^+$ and $F^-$ are convex and $F=(F^+)^-$.
If $F_1,F_2\in\mathcal{F}$ are convex and $F_1^+=F_2^+$, then also $F_1=F_2$. 
\item
\label{FtoS}
The set $\mathcal{S}$ is the closure in $\mathcal{F}$ of the set $\mathcal S^{\textrm{fin}}\subset \mathcal{S}$, where $\mathcal S^{\textrm{fin}}$ consists of all $S$ of the form $S=h(\gamma)$ with $\gamma \subset \T$ finite. 
\item
\label{Sreach} 
If $S\in\mathcal S$, then $\reach(S^-)\geq 1$, provided the following condition is satisfied:

\text{\rm (C)}\ \ \     
$S^-$ is connected and each component of  $\T\setminus S^-$ contains exactly one component of $\T\setminus S$.
\item
\label{noholes}
If $S\in\mathcal S$ and 
\be
\label{E:epsBonnesen}
d_{\text{\rm{H}}}(S,B_R(x)) \leq \eps
\ee
for a ball $B_R(x)$ with $x\in \T$, $\eps$  sufficiently small, and $R\ge 1+\frac{\eps}{1-2\eps}$,  then $d_{\text{\rm{H}}}(S^-,B_{R-1}(x)) \leq \eps$ and $S^-$ is connected and simply connected.
\item
\label{SLip} 
For any $S\in\mathcal S$ such that $\reach(S^-)>0$, the boundary $\partial S^-$ is $1$-rectifiable. If $S \in \mathcal S^{\textrm{fin}}$, then the boundary $\partial S^-$ is Lipschitz.
\end{enumerate}
\end{lemma}
 
We will also need the Steiner formula for sets of positive reach (Federer~\cite{Fe1}). 

\begin{lemma} 
\label{lem:boundary}
Let $S\in \mathcal{S}$ be an admissible set with $S^-$ of reach at least $1$. Then 
\be
|S \setminus S^-| = \mathcal S\mathcal M(S^-)+  \pi  \chi(S^-),
\ee
where $\mathcal S\mathcal M(S^-)$ is the outer Minkowski content of $S^-$ and $\chi(S^-)$ is the Euler-Poincar\'e characteristic of $S^-$ (= the number of connected components minus the number of holes). If the boundary $\partial S^-$ is Lipschitz, then $\mathcal S
\mathcal M(S^-) = \mathcal H^1(\partial S^-)$, where $\mathcal H^1$ is the $1$-dimensional Hausdorff measure. 
\end{lemma}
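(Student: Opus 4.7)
The plan is to deduce the identity from the two–dimensional Steiner/Weyl tube formula for sets of positive reach, applied to $K = S^-$ with tube radius $r=2$. Because $S$ is admissible, Lemma~\ref{lem:admissible}(\ref{adm-comp}) (or equivalently the defining property of $\mathcal{S}$) gives the crucial identity
\[
S = (S^-)^+ = S^- + B_2(0).
\]
Thus $|S\setminus S^-|=|(S^-)^+|-|S^-|$ and the task is reduced to computing the volume of a tube of radius $2$ around $S^-$.

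The tube formula of Federer for a set $K\subset\R^2$ with $\reach(K)\geq r_0$ states that, for every $0<r\leq r_0$,
\[
|K + B_r(0)| \;=\; |K|\;+\;r\,\mathcal S\mathcal M(K)\;+\;\pi r^2\,\chi(K),
\]
where $\mathcal S\mathcal M(K)$ is the outer Minkowski content and $\chi(K)$ is the Euler–Poincar\'e characteristic; see Federer~\cite{Fe1} and the reformulation in terms of outer Minkowski content in Ambrosio, Colesanti and Villa~\cite{ACV}. Since we are working on the torus $\T$ and $r=2<\tfrac12 L$ by hypothesis, no periodic wrap-around issues arise, so the Euclidean statement transfers directly. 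Setting $K=S^-$, which has $\reach(S^-)\geq 2$ by assumption, and taking $r=2$ yields
\[
|S| = |(S^-)^+| = |S^-| + 2\,\mathcal S\mathcal M(S^-) + 4\pi\,\chi(S^-),
\]
and rearranging gives the claimed identity
\[
|S\setminus S^-| = 2\,\mathcal S\mathcal M(S^-) + 4\pi\,\chi(S^-).
\]

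For the second statement, when $\partial S^-$ is Lipschitz the outer Minkowski content coincides with the one-dimensional Hausdorff measure of the boundary, $\mathcal S\mathcal M(S^-)=\mathcal H^1(\partial S^-)$. This is the content of \cite[Theorem 1]{ACV} (or, equivalently, the classical fact that for Lipschitz domains the perimeter in the sense of De Giorgi agrees with $\mathcal H^1$ of the topological boundary). We simply quote this, giving the second claim. The main obstacle, if any, is bookkeeping: one must check that the hypothesis $\reach(S^-)\geq 2$ is exactly what allows the tube formula to be applied up to radius $r=2$ (the borderline case $r=\reach(K)$ is handled by monotone convergence, letting $r\uparrow 2$ and using continuity of $r\mapsto |K+B_r(0)|$, which is immediate). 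Everything else is direct substitution.
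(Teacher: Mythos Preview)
Your proof is correct and follows essentially the same approach as the paper: both apply Federer's Steiner/tube formula for sets of positive reach to $S^-$ at radius $r=2$, using admissibility to identify $S=(S^-)^+$, and both quote Ambrosio--Colesanti--Villa for the Lipschitz claim. The only cosmetic difference is that the paper handles the borderline case $r=2$ by citing Sz.-Nagy~\cite{Na} for continuity of $r\mapsto |S^-+B_r(0)|$, whereas you argue via monotone convergence.
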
 

\begin{proof}
Reformulating the Steiner formula for sets of positive reach as defined by Federer~\cite[Theorem 5.5, Theorem 5.19]{Fe1}, we get, for $S\subset \R^2$ and $S\in\mathcal S$,
\be
\label{E:Steiner}
\abs{S^-+B_r(0)}=\abs{S^-} +  \mathcal S\mathcal M({S^-})r +\chi(S^-)\abs{B_1(0)}r^2
\ee
for any $0<r<1$ and by continuity also for $r=1$. For continuity of the left-hand side, see Sz.-Nagy~\cite{Na}. The last claim is the same as Ambrosio, Colesanti and Villa~\cite[Corollary 1]{ACV}. 
\end{proof}

%%%%%%

\subsection{Minimisers of the shape rate function and their stability} 
\label{isoper}

In this section we prove Theorem~\ref{thm:isope}.

\medskip\noindent
(1) The proof relies on the Brunn-Minkowski inequality and on Lemma~\ref{lem:isopequiv} below, which provides three reformulations of the isoperimetric inequality in \eqref{isope}. 

\begin{lemma} 
\label{lem:isopequiv}
Let $S \in \mathcal{F}$. If  $R>1$ and $|S|= \pi R^2$, then the following three statements are equivalent: 
\begin{itemize}
\item[{\rm (a)}] 
 $|S|- \kappa |S^-|\geq \pi R^2 - \kappa\pi (R-1)^2$. 	
\item[{\rm (b)}] 
$4 \pi |S| \leq (|S\setminus S^-|+ \pi )^2$. 
\item[{\rm (c)}] 
$4 \pi |S^-|\leq (|S\setminus S^-| - \pi)^2$.  
\end{itemize}
Moreover, equality holds in (a), (b), and (c) simultaneously, or in none. 
\end{lemma}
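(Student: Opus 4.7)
I would reduce all three inequalities to a single scalar condition. Setting $B=|S^-|$ and using $S^-\subset S$, we have $|S\setminus S^-|=\pi R^2-B$ with $0\leq B\leq \pi R^2$. My goal is to show that, under the hypothesis $|S|=\pi R^2$, each of (a), (b), (c) is equivalent to
\[
|S^-|\leq \pi(R-2)^2,
\]
with equality iff $|S^-|=\pi(R-2)^2$; the lemma then follows at once.

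The reduction of (a) is one line: cancel $\pi R^2$ from both sides and divide by $\kappa>0$. For (b), the right-hand side $\pi R^2-B+4\pi$ is at least $4\pi>0$ (since $B\leq \pi R^2$), so taking square roots is an equivalence, and a short rearrangement produces the desired form $B\leq \pi R^2-4\pi R+4\pi=\pi(R-2)^2$. For (c), the same trick produces an absolute value $|\pi R^2-B-4\pi|$ and requires a brief case split. On the branch where $\pi R^2-B-4\pi\geq 0$, the inequality rearranges to $(\sqrt{B}+2\sqrt{\pi})^2\leq \pi R^2$, which, using $R>2$, yields $B\leq \pi(R-2)^2$. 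On the opposite branch the same computation gives $B\geq \pi(R+2)^2$, which contradicts $B\leq \pi R^2$; that branch is therefore vacuous. In all three cases the reductions are equivalences in which the equality case is preserved, so the simultaneous-equality clause drops out for free.

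\textbf{Main obstacle.} The only real subtlety is the absolute value appearing in (c); the decisive observation is that the a priori area bound $|S^-|\leq |S|=\pi R^2$ eliminates the large-$B$ branch and leaves only the branch producing $\pi(R-2)^2$. Everything else is elementary algebra. In particular, the nontrivial geometric input (Brunn--Minkowski and the Steiner formula) that will be needed for the proof of Theorem~\ref{thm:isope} itself plays no role in this purely scalar equivalence.
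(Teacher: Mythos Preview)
Your proof is correct and follows essentially the same route as the paper: both reduce the three statements to a single scalar inequality (your $|S^-|\leq \pi(R-2)^2$ is equivalent to the paper's intermediate condition $|S\setminus S^-|\geq 4\pi(R-1)$). The only organizational difference is that the paper proves (b)$\,\Leftrightarrow\,$(c) in one line from $|S|=|S^-|+|S\setminus S^-|$ and then shows (a)$\,\Leftrightarrow\,$(b), thereby sidestepping the case split you need for (c); your handling of that case split is nonetheless correct.
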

 
\begin{proof} 
The equivalence of (b) and (c) is an immediate consequence of the fact that $|S| = |S^-|+ |S\setminus S^-|$. For the equivalence of (a) and (b), we observe that $|S|- \kappa |S^-|= \kappa |S\setminus S^-|-(\kappa-1)|S|$ and therefore, with $|S| = \pi R^2$, (a) is equivalent to 
\be
|S \setminus S^-| \geq \pi R^2 - \pi (R-1)^2 = 2 \pi R -  \pi. 
\ee
We add $\pi $ to both sides and take the square to find that (a) is equivalent to (b). 
\end{proof} 

\begin{proof}[Proof of Theorem~\ref{thm:isope}]\

\noindent
(1) Armed with Lemma~\ref{lem:isopequiv}, we employ the Brunn-Minkowski inequality 
\be
\label{E:BM}
\abs{F+B}^{1/2}\ge \abs{F}^{1/2} + \abs{B}^{1/2},
\ee
which is valid for any non-empty measurable $F$, $B$ and $F+B$ (see Lusternik~\cite{Lu} and Federer~\cite[3.2.41]{Fe2}). Indeed, \eqref{E:BM} with $B=B(0)$ implies
\be
\label{E:F+}
\abs{F^+} - \abs{F} \ge 2\abs{F}^{1/2} \pi^{1/2}+ \pi,
\ee
and yields inequality (c) with $F=S^{-}$ and $F^+=S$, and thus also \eqref{isope} by (a). In $\R^2$, the equality in \eqref{E:F+} occurs only if $F=S^{-}$ is a disk or a point (see Burago and Zalgaller~\cite[Section 8.2.1]{BZ}). 

The role of the condition $R\in (1, \frac{L}{\pi}  + \frac12)$ is to disqualify the alternative of a strip $S^-=a\times L$ with $a=\frac{\pi R^2}L -2$ wrapped around the torus. It yields $|S| = \pi R^2$ as well as the minimum $|S|- \kappa |S^-|= -(\kappa-1) \pi R^2 +2\kappa L< \pi R^2 - \kappa \pi (R-1)^2$ once $R> \frac{L}{\pi} + \frac12$.

\medskip\noindent
(2) We first prove that if $S$ is close to a minimiser, then necessarily $S^-$ is connected and simply connected.

\begin{lemma}
\label{L:isope}
There exist  a function $\eps\mapsto\xi(\eps)$ satisfying $\lim_{\eps \downarrow 0} \xi(\eps)= 0$ such that if $S \in \mathcal S$ satisfies \eqref{E:Bonnassumpt} with $R-1 \geq \max\{\eps_0,\frac{\xi(\eps)}{1-2\xi(\eps)}\}$, then
\be
\label{E:xiBonnesen}
d_{\text{\rm{H}}}(S,B_R) \leq \xi(\eps)
\ee
for sufficiently small $\eps$,  and $S^-$ is connected and simply connected.
\end{lemma}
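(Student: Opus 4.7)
The plan is to argue by compactness and contradiction. Suppose the claim fails: there exist sequences $\varepsilon_n \downarrow 0$, $R_n$ with $R_n - 2 \geq \xi(\varepsilon_n)/(1 - \xi(\varepsilon_n))$, and $(S_n) \subset \mathcal S$ satisfying \eqref{E:Bonnassumpt} but for which at least one of the two conclusions fails. Passing to a subsequence, $R_n \to R_\infty \geq 2$, and by compactness of $(\mathcal F_\mathbb T, d_\mathrm{H})$ also $F_n := S_n^-$ converges in Hausdorff to some $F_\infty \in \mathcal F$. By the continuity of $F \mapsto F^+$ and of $F \mapsto |F^+|$ (Lemma~\ref{lem:admissible}(\ref{contFtoF+})), $S_n = F_n^+ \to F_\infty^+ =: S_\infty$ in Hausdorff and $|S_\infty| = \pi R_\infty^2$.

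To identify $S_\infty$ as a disc, combine the hypothesis $|F_n| \geq \pi(R_n - 2)^2 - \pi\varepsilon_n$ with the Brunn--Minkowski bound $|F_n|^{1/2} + 2\sqrt{\pi} \leq |F_n^+|^{1/2} = \sqrt{\pi}\,R_n$ from \eqref{E:BM} to conclude $|F_n| \to \pi(R_\infty - 2)^2$. Upper semicontinuity of Lebesgue measure under Hausdorff convergence of closed subsets of $\mathbb T$ yields $|F_\infty| \geq \limsup_n |F_n| = \pi(R_\infty - 2)^2$, while \eqref{E:BM} applied to $F_\infty$ gives $|F_\infty| \leq \pi(R_\infty - 2)^2$. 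Hence $|F_\infty| = \pi(R_\infty - 2)^2$ with equality in \eqref{E:BM}. The equality case of Brunn--Minkowski (Burago--Zalgaller~\cite[Sec.~8.2.1]{BZ}) then forces $F_\infty$ to be a closed disc of radius $R_\infty - 2$, so $S_\infty$ is a closed disc of radius $R_\infty$. This contradicts $d_\mathrm{H}(S_n, B_{R_n}) \not\to 0$; running the argument with diagonal extraction and defining $\xi(\varepsilon) := \sup\{d_\mathrm{H}(S, B_R) : S \in \mathcal S \text{ satisfies \eqref{E:Bonnassumpt} with radius } R\}$ yields \eqref{E:xiBonnesen}.

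For the topological conclusions, I would exploit the Steiner-type identity of Lemma~\ref{lem:boundary}. After approximating $S_n$ within $\mathcal S^\mathrm{fin}$ (Lemma~\ref{lem:admissible}(\ref{FtoS})) so that condition (C) of Lemma~\ref{lem:admissible}(\ref{Sreach}) applies and $\mathrm{reach}(S_n^-) \geq 2$, the hypothesis rewrites as $|S_n \setminus S_n^-| \leq 4\pi(R_n - 1) + \pi\varepsilon_n$. Lemma~\ref{lem:boundary} then gives
\begin{equation*}
2\,\mathcal{SM}(S_n^-) + 4\pi\, \chi(S_n^-) \leq 4\pi(R_n - 1) + \pi\varepsilon_n,
\end{equation*}
while the classical planar isoperimetric inequality $\mathcal{SM}(S_n^-) \geq 2\sqrt{\pi\, |S_n^-|} \geq 2\pi\sqrt{(R_n-2)^2 - \varepsilon_n}$ combined with the previous display forces $\chi(S_n^-) \leq 1 + O(\varepsilon_n)$. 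Since $\chi$ is integer-valued, $\chi(S_n^-) \leq 1$ for large $n$. The Hausdorff convergence $F_n \to B_{R_\infty-2}$ from the preceding paragraph guarantees a connected component of $S_n^-$ whose area tends to $\pi(R_\infty-2)^2$, so that $\chi(S_n^-) = c - h \leq 1$ with this tight area and perimeter budget forces $c=1$ and $h=0$: an additional component would, by admissibility, add at least $4\pi$ to $|S_n|$ through the portion of its $B_2$-halo lying outside the main halo, and a hole of $S_n^-$ of radius greater than $2$ would persist as a hole of $S_n$, contradicting $S_\infty = B_{R_\infty}$.

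The main obstacle is the topological step. The compactness argument delivers convergence to a disc in Hausdorff metric but does not a priori exclude tiny extra components or narrow holes of $S_n^-$ that vanish in the Hausdorff limit. The quantitative comparison between $|S_n^-|$, $\mathcal{SM}(S_n^-)$, and $\chi(S_n^-)$ afforded by Lemma~\ref{lem:boundary} is what closes this gap, but one must first reduce to $\mathcal S^\mathrm{fin}$ to guarantee $\mathrm{reach}(S_n^-) \geq 2$ and condition (C), and then handle the subtle case where a small extra component has its $B_2$-halo almost entirely absorbed by the halo of the main component, which requires a finer bookkeeping of overlaps using the geometry of admissibility.
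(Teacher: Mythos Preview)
Your compactness argument for the Hausdorff bound is a legitimate alternative to the paper's route. The paper instead invokes the quantitative stability of the Brunn--Minkowski inequality (Christ, Figalli--Jerison): from smallness of the defect it extracts a convex set $K$ with $S^- \subset K^-$ and $|K^- \setminus S^-|$ small, compares $K^-$ with a disc, and then uses admissibility of $S$ to upgrade the volume estimate to a genuine sandwich $B_{r+2-s}(0) \subset S \subset B_{r+2+r\tilde\xi^2}(0)$. Your soft approach via Hausdorff compactness of $\mathcal F_\T$, upper semicontinuity of $|\cdot|$, continuity of $F\mapsto |F^+|$, and the equality case of Brunn--Minkowski also yields $d_{\text{\rm{H}}}(S, B_R) \to 0$, only without an explicit rate; for the purposes of this lemma that is enough.

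The topological step, however, has a genuine gap, and it is not the one you flag. Your plan is to apply the Steiner formula of Lemma~\ref{lem:boundary} to bound $\chi(S_n^-)$, but Lemma~\ref{lem:boundary} requires $\reach(S_n^-) \geq 2$, and Lemma~\ref{lem:admissible}(\ref{Sreach}) delivers that only under condition (C), which already includes connectedness of $S_n^-$ --- precisely what you are trying to prove. Passing to $\mathcal S^{\mathrm{fin}}$ does not manufacture condition (C); in the paper's proof of Lemma~\ref{lem:admissible}(\ref{Sreach}) the density of $\gamma$ is used to \emph{preserve} (C) under approximation, not to create it. So your Euler-characteristic bookkeeping never gets off the ground, and the ad hoc arguments about extra components contributing $4\pi$ or holes of radius $>2$ are left without a framework. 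The paper bypasses this circularity entirely: once the sandwich $B_{R-2-\xi} \subset S^- \subset B_{R-2+\xi}$ is in hand (which your compactness argument also supplies, via $B_{R-\xi}\subset S\subset B_{R+\xi}$ and erosion), it argues directly that every radial segment $\ell_e$ meets $S^-$ in a single interval. If $x,y\in\ell_e$ with $|x|<|y|$, $x\notin S^-$, $y\in S^-$, then $B_2(y)\cup B_{R-\xi}\subset S$ while $B_2(x)\cap S^{\rm c}\neq\emptyset$; but an elementary computation shows $B_2(x)\setminus B_2(y)\subset B_{R-\xi}$ whenever $R-2\geq \xi/(1-\xi)$, a contradiction. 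This radial-interval argument gives both connectedness and simple connectedness of $S^-$ at once, and is where the otherwise mysterious constraint on $R-2$ enters.
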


\noindent
The proof is given in Appendix~\ref{appA}. 

Now, to finish the proof of (2) we use that $S^-$ is connected and simply connected once \eqref{E:Bonnassumpt} is satisfied and $R-1\geq \frac{\xi(\eps)}{1-2\xi(\eps)}$. For the latter, similarly as in the proof of Lemma~\ref{lem:admissible}(\ref{Sreach}), we may assume that $S \in S^{\textrm{fin}}$. For fixed $R \in (1,\tfrac{L}\pi +\frac12)$ we choose $\eps_0$ such that $R-1\geq \frac{\xi(\eps)}{1-2\xi(\eps)}$ for any $0<\eps\leq\eps_0$. Using now that $\reach(S^-) \geq 1$ according to Lemma~\ref{lem:admissible}(\ref{Sreach}), we will rely on the Bonnesen inequality, which is more precise than the provisional claim in \eqref{E:xiBonnesen} with the bound $\xi(\eps)$ whose dependence on $\eps$ is not explicitly specified. For connected and simply connected $S^-$, its boundary $\partial S^-$ is a Jordan curve and according to the Bonnesen inequality the difference of the radii $r_\mathrm{out}(\partial S^-)$ and $r_\mathrm{in}(\partial S^-)$ of the outer and the inner circle of the curve $\partial S^-$ can be bounded in terms of the isoperimetric defect $\mathcal H^1(\partial S^-)^2-4\pi \abs{S^-}$:
\be 
\label{Bon}
\pi^2 \big(r_\mathrm{out}(\partial S^-)-r_\mathrm{in}(\partial S^-)\big)^2 
\leq \mathcal H^1(\partial S^-)^2-4\pi \abs{S^-}.
\ee
Assuming that $|S|=\pi R^2$, we can use \eqref{E:defect}  combined with the Steiner formula (Lemma~\ref{lem:boundary}),
\be
\label{Bon1}
\big(|S|-\kappa|S^-|\big) - \big(\pi R^2 - \kappa\pi (R-1)^2\big)
= \kappa \big[\mathcal H^1(\partial S^-) - 2\pi(R-1)]
\ee
and reformulate the isoperimetric defect as
\be
\label{Bon2}
\mathcal H^1(\partial S^-)^2-4\pi |S^-| 
= \mathcal H^1(\partial S^-)^2 - 4\pi[|S|-\pi-\mathcal H^1(\partial S^-)] 
= (\mathcal H^1(\partial S^-)+2\pi)^2 - (2\pi R)^2.
\ee
If the left-hand side of \eqref{Bon1} is $\leq \pi\kappa\eps$, then the right-hand side of \eqref{Bon2} is bounded from above by
\be
\pi \eps \left( \pi \eps +4\pi R\right)
= \pi^2 \eps\,  (4R+\eps)<  5\pi^2 R \eps.
\ee
It follows from \eqref{Bon} that 
\be
r_\mathrm{out}(\partial S^-)-r_\mathrm{in}(\partial S^-) \le \sqrt{5 R\eps}.
\ee
Since $r_\mathrm{out}(\partial S)-r_\mathrm{in}(\partial S) = r_\mathrm{out}(\partial S^-)-r_\mathrm{in}(\partial S^-)$, we get the claim in \eqref{Bonnesen}.  
\end{proof}

%%%

\subsection{Large deviation principle for the Widom-Rowlinson model}
\label{LDP-WR}

In this section we prove Theorems~\ref{thm:ldp-halo} and~\ref{thm:ldp-volume}, and the claim made in \eqref{partsum}.

Recall that $\mu_\beta$ denotes the Gibbs measure \eqref{gibbs1alt} at inverse temperature $\beta$ and activity $z = \kappa\, z_t(\beta)$. This is a probability measure on the space $\Gamma$ of particle configurations, which we may view as a subset of $\mathcal F$ equipped with the Hausdorff topology. By a slight abuse of notation, we identify $\mu_\beta$ on $\Gamma$ with the measure on $\mathcal F$ supported on $\{h(\gamma)\colon\,\gamma\in\Gamma\}$. Theorem~\ref{thm:ldp-halo} builds on the large deviation principle for the Gibbs measure $\mu_\beta$ itself, i.e., the large deviation principle for the set of particle locations. The LDP for $\mu_\beta$ is summarised in the following proposition (recall that a rate function is called good when it is lower semi-continuous and has compact level sets). This proposition is in the spirit of Schreiber~\cite{Sc1}, \cite[Theorem 1]{Sc2}. The latter is stated in a slightly different setting, but the main ideas of the proof carry over.

\begin{proposition}[Large deviation principle for the Widom-Rowlinson model]  
\label{prop:ldp-centers}
The family of probability measu\-res $(\mu_\beta)_{\beta \geq 1}$ on $\mathcal{F}$, supported on $\Gamma\subset \mathcal F$, satisfies the LDP with rate $\beta$ and good rate function $I^\mathrm{WR}$ given by
\be 
I^\mathrm{WR}=J^\mathrm{WR}-\inf_{\mathcal{F}} J^\mathrm{WR},
\quad J^\mathrm{WR}(F) =  |F^+|- \kappa |F|, \quad F \in \mathcal{F}.
\ee
\end{proposition}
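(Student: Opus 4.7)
The plan is to realize $\mu_\beta$ as an exponential tilt of the Poisson point process $\mathsf P_{\kappa\beta}$ and then apply Varadhan's lemma. The Radon--Nikodym formula recalled just after \eqref{gibbs1alt} gives $\d\mu_\beta/\d\mathsf P_{\kappa\beta} \propto \exp(-\beta V)$, and by Lemma~\ref{lem:admissible} the functional $F\mapsto V(F)=|F^+|$ is continuous on the compact space $(\mathcal F,d_{\text{H}})$ and bounded, $0\le V\le |\T|$. It therefore suffices to establish an LDP for the reference measures $\mathsf P_{\kappa\beta}$ on $(\mathcal F,d_{\text{H}})$ with speed $\beta$ and good rate function $I^{\mathrm{PPP}}(F)=\kappa|\T\setminus F|$. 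Varadhan's lemma will then yield an LDP for $\mu_\beta$ with rate function $V+I^{\mathrm{PPP}}-\inf_{\mathcal F}(V+I^{\mathrm{PPP}})$, and the identity
\begin{equation*}
|F^+|+\kappa|\T\setminus F|=J^{\mathrm{WR}}(F)+\kappa|\T|
\end{equation*}
converts this directly into the stated form $I^{\mathrm{WR}}=J^{\mathrm{WR}}-\inf_{\mathcal F}J^{\mathrm{WR}}$.

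The reference LDP rests entirely on the void probabilities $\mathsf P_{\kappa\beta}(\Pi\cap A=\emptyset)=\exp(-\kappa\beta|A|)$. For the upper bound, $d_{\text{H}}(\Pi,F)\le\eps$ forces $\Pi\subset F+B_\eps(0)$, hence
\begin{equation*}
\mathsf P_{\kappa\beta}\bigl(d_{\text{H}}(\Pi,F)\le\eps\bigr)\le\exp\bigl(-\kappa\beta\,|\T\setminus(F+B_\eps(0))|\bigr),
\end{equation*}
and sending $\eps\downarrow 0$ yields $-\kappa|\T\setminus F|$ by continuity of Lebesgue measure under Minkowski dilation. For the lower bound, cover $F$ by finitely many pairwise disjoint discs $B_{\eps/2}(x_1),\ldots,B_{\eps/2}(x_M)$ with centres in $F$. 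The event that $\Pi$ is disjoint from $\T\setminus(F+B_\eps(0))$ and meets each $B_{\eps/2}(x_j)$ implies $d_{\text{H}}(\Pi,F)\le\eps$, and by independence of Poisson increments on disjoint regions has probability
\begin{equation*}
\exp\bigl(-\kappa\beta\,|\T\setminus(F+B_\eps(0))|\bigr)\,\prod_{j=1}^M\bigl(1-\exp(-\kappa\beta\,|B_{\eps/2}(x_j)|)\bigr),
\end{equation*}
whose logarithm is asymptotic to $-\kappa\beta\,|\T\setminus(F+B_\eps(0))|$ as $\beta\to\infty$. Lower semi-continuity of $F\mapsto|\T\setminus F|$ reduces to upper semi-continuity of $F\mapsto|F|$ under Hausdorff convergence (since $F_n\to F$ implies $F_n\subset F+\eps B_1(0)$ eventually), and goodness is automatic because $(\mathcal F,d_{\text{H}})$ is itself compact.

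The main technical obstacle is the Poisson LDP itself. The upper bound is a one-line void estimate, but the lower bound requires handling both inclusions encoded in the Hausdorff metric simultaneously: the covering argument must be kept uniform enough in $M$ that the finite product over covering discs does not degrade the exponential rate. Once this is settled, extending from the dense subclass of finite halos in $\mathcal F$ (Lemma~\ref{lem:admissible}) to arbitrary closed sets is routine via semi-continuity, and the identification of the rate function via Varadhan reduces to the elementary algebra above. An alternative is to quote Schreiber's closely related sample-path LDP for high-intensity Poisson processes \cite{Sc1,Sc2} as a black box, and then specialize it to the Hausdorff topology considered here.
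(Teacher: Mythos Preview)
Your proposal is correct and follows essentially the same route as the paper: establish an LDP for the Poisson point process $\Pi_{\kappa\beta}$ on $(\mathcal F,d_{\mathrm H})$ with rate function $F\mapsto\kappa|\T\setminus F|$, note that $F\mapsto|F^+|$ is continuous and bounded (Lemma~\ref{lem:admissible}(\ref{contFtoF+})), and apply Varadhan's lemma to the tilt. The paper's proof is terser --- it essentially asserts the Poisson LDP from the void probability $\PP(\Pi_{\kappa\beta}\subset F)=\e^{-\kappa\beta|\T\setminus F|}$ and the upper semi-continuity of $F\mapsto|F|$, pointing to Schreiber~\cite{Sc1,Sc2} for the spirit of the argument --- whereas you spell out both the upper and lower bounds explicitly. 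One small wording slip: you cannot literally ``cover $F$ by pairwise disjoint discs''; what you need (and clearly intend) is an $\eps/2$-net $\{x_1,\ldots,x_M\}\subset F$ with the smaller discs $B_{\eps/4}(x_j)$ pairwise disjoint, so that independence applies and $\Pi$ hitting each small disc forces $F\subset\Pi+B_\eps(0)$.
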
 

\begin{proof} 
Let $\Pi_{\kappa\beta}$ be the homogeneous Poisson point process on $\T$ with intensity $\kappa\beta$. We may view $\Pi_{\kappa\beta}$ as a random variable on a probability space $(\Omega,\mathcal{B},\P)$ taking values in $\mathcal{F}$. Since $\P(\Pi_{\kappa\beta} \subset F) = \P(\Pi_{\kappa\beta} \cap ({\T}\setminus F) = \emptyset) = \e^{-\kappa\beta|{\T} \setminus F|}$, $F \in \mathcal{F}$, it follows that the family $(\Pi_{\kappa\beta})_{\beta \geq 1}$ satisfies the large deviation principle with rate $\beta$ and good rate function $I(F) = \kappa |{\T} \setminus F|$, $F\in\mathcal{F}$. Note that $F\mapsto |F|$ is upper semi-continuous (Schneider and Weil~\cite[Theorem 12.3.6]{SW}), but not continuous: sets $F$ of positive measure can be approximated by finite sets, which have measure zero. It follows that $F \mapsto I(F) = \kappa(|{\T}|- |F|)$ is lower semi-continuous, but not continuous. Nevertheless, the map $F\mapsto |F^+| = |h(F)|$ is continuous with respect to the Hausdorff metric (see Lemma~\ref{lem:admissible}(\ref{contFtoF+})). Therefore, since
\be
\label{Gibbsid} 
\mu_\beta(\mathcal{A}) 
= \frac{\e^{-|{\T}|}}{\Xi_{\beta}}\,\e^{\kappa \beta |{\T}|} \E \left(\e^{-\beta |h(\Pi_{\kappa \beta})|} 
\1_{\{ \Pi_{\kappa \beta} \in \mathcal{A} \}} \right)
\quad \text{ for measurable } \mathcal{A} \subset \mathcal{F},  
\ee
the claim follows from the LDP for the Poisson point process $(\Pi_{\kappa\beta})_{\beta \geq 1}$ and Varadhan's lemma.  
\end{proof}

With the help of Proposition~\ref{prop:ldp-centers}, the proof of Theorems~\ref{thm:ldp-halo} and~\ref{thm:ldp-volume} becomes straightforward via the contraction principle.

\begin{proof}[Proof of Theorem~\ref{thm:ldp-halo}]
As mentioned above, the map $\mathcal{F}\to \mathcal{S}$ defined by $F\mapsto S = F^+$ is continuous with respect to the Hausdorff metric. Proposition~\ref{prop:ldp-centers} and the contraction principle therefore imply that the LDP for the law of $h(\gamma)$ under $\mu_\beta$ holds with rate $\beta$ and good rate function $I$ given by 
\be \label{eq:ldp_functional}
I=J-\inf_{\mathcal{F}} J, 
\quad J(S) = \inf \big\{ |F^+|- \kappa |F|\colon\,F\in \mathcal{F}, 
F^+=S\big\}, \quad S\in\mathcal S.
\ee
We show that $J(S) = |S|-\kappa |S^-|$. Indeed, if $F^+=S$, then $F\subset S^-$ and $|F^+| - \kappa |F|\geq |S|- \kappa |S^-|$ yielding $J(S) \geq |S|-\kappa |S^-|$. On the other hand, taking $F = S^-$ with  $F^+= (S^-)^+ = S$ in view of admissibility of $S$, we get  $J(S) \leq |F^+|- \kappa |F| = |S| - \kappa |S^-|$.
\end{proof}

\begin{proof}[Proof of Theorem~\ref{thm:ldp-volume}]
Theorem~\ref{thm:ldp-volume} follows from Proposition~\ref{prop:ldp-centers}, the continuity of the map $F\mapsto |F^+|= |h(F)|$, and the contraction principle. The rate function $I^*$ is given by 
\be
\begin{aligned}
I^*(A) &= \inf \{ I(S)\colon\ S\in \mathcal S,\, |S| = A\}\\ 
&= \inf\{|S|- \kappa|S^-|\colon\, S\in \mathcal S,\, |S| = A \} 
- \inf\{|S|- \kappa |S^-|\colon\, S \in \mathcal S \}.
\end{aligned}
\ee
In the difference of the two infima, the first infimum (when $A=\pi R^2$ with $R\in (1,\frac{L}{\pi}+\frac12 )$) is equal to $\pi R^2 - \kappa \pi (R-1)^2$ by Theorem~\ref{thm:isope}(1). For the second infimum, we note that 
\be
|S|- \kappa |S^-| \geq (1- \kappa)|S|\geq (1-\kappa) |\mathbb T|
\ee
with equality for $S = \mathbb T$.
\end{proof} 

We conclude the section with the following lemma.

\begin{lemma}[Asymptotics of the partition function]
\label{lem:partfun}
Let $\Xi_{\beta}$ be the partition function defined in \eqref{Xidefalt}. Then
\be
\Xi_{\beta} = \e^{[(\kappa-1)\beta-1] |{\T}|}(1+o(1)), \qquad \beta\to\infty.
\ee
\end{lemma}

\begin{proof}
From \eqref{Gibbsid}, by picking $\mathcal{A}=\mathcal{F}$, we get 
\be
\Xi_{\beta}= \e^{-|{\T}|}\,\e^{\kappa \beta |{\T}|} \E \left(\e^{-\beta |h(\Pi_{\kappa \beta})|}  \right).
\ee
Therefore it suffices to show that
\be
 \E \left(\e^{\beta | \T \setminus h(\Pi_{\kappa \beta})|}\right) = 1+o(1).
\ee
We split the expectation as
\be
\label{eq:three_for_exp}
\E \left(\e^{\beta |\T \setminus h(\Pi_{\kappa \beta})|}\right) =  1- \P\left(h(\Pi_{\kappa \beta}) \neq \T \right)
+ \E \left(\e^{\beta | \T \setminus h(\Pi_{\kappa \beta})|} \1_{\{ h(\Pi_{\kappa \beta}) \neq \T\}} \right).
\ee
The second term in the right-hand side of \eqref{eq:three_for_exp} is small. Indeed, this is immediate from Proposition \ref{prop:ldp-centers}, since there exists a non-empty $F \in \mathcal F$ such that
\be
\P\left(h(\Pi_{\kappa \beta}) \neq \T\right) = \P\left(\Pi_{\kappa \beta} \subset F\right)
= \e^{-\kappa\beta|\T\setminus F|},
\ee
which is equal to $\e^{-\kappa\beta c}$ for some $c>0$.
 The third term in the right-hand side of \eqref{eq:three_for_exp} is also small. Indeed, from Theorem~\ref{thm:ldp-halo} and \eqref{eq:ldp_functional} we obtain that
\be
\E \left(\e^{\beta |\T \setminus h(\Pi_{\kappa \beta})|} \1_{\{h(\Pi_{\kappa \beta}) \neq \T\}} \right)
= \exp\left(\beta\sup_{F\colon\,F^+\,\neq\,\T} \left\{|\T\setminus F^+|-\kappa |\T\setminus F|\right\} + o(\beta)\right).
\ee
Therefore, setting $F = S^-$ with $F^+= (S^-)^+ = S$ in view of admissibility of $S$, we see that it suffices to show that there exists a $C>0$ such that
\be\label{eq:goal_now}
(|S| - \kappa |S^-|) - (|\T| - \kappa |\T|) \geq C \qquad \forall\,\mathcal S \ni S \neq \T.
\ee
To that end, we observe that if $ S \neq \T$, then $|\T \setminus S^-| \geq \pi$. 
Indeed, if $S\neq \T$, then there must be at least one point $p \in \T\setminus S$, and hence the disk of radius $1$ centered at $p$ is a forbidden region for the centers, i.e., is contained in $\T\setminus S^-$. 
 The claim in \eqref{eq:goal_now} now follows by noting that if $\mathcal S \ni S \neq \T$, then
\be
(|S|-\kappa|S^-|) - (|\T|-\kappa |\T|) = |S \setminus S^-| + (\kappa-1)|\T\setminus S^-|
\geq (\kappa-1) \pi
\ee
which completes the proof.
%
%
%To that end, we observe that if $ S \neq \T$, then $|S \setminus S^-| \geq \pi$. 
%Indeed, let 
%$A\subset\T$ be a set that is not covered by 
%$S$. Then no center of the disks defining 
%$S$ can lie in the 1-halo of 
%$A$. In particular, suppose that a point 
%$p\in\T$ does not belong to 
%$S$. Since the disks in 
%$S$ all have radius 
%$1$, this means that no disk center lies within distance 
%$1$ of 
%$p$. Consequently, there is a forbidden region around 
%$p$ where disk centers cannot occur.
%This forbidden region, which is exactly $|S \setminus S^-| $, contains at least the disk of radius 
%$1$ centered at 
%$p$, and therefore has area at least $\pi$.
%A more precise expansion of $|S \setminus S^-| $ will be derived later (see \eqref{eq:Delta}).
% The claim in \eqref{eq:goal_now} now follows by noting that if $\mathcal S \ni S \neq \T$, then
% there exists $C>0$ such that
%\be
%(|S|-\kappa|S^-|) - (|\T|-\kappa |\T|) = |S \setminus S^-| + (\kappa-1)(|\T|-|S^-|)
%\geq \pi+(\kappa-1)(|\T|-|S^-|) \geq C,
%\ee
%which completes the proof.
\end{proof}

%%%%%%%%%%% SECTION 3 %%%%%%%%%%%%%%%%%%%%%%%%%%%

\section{Intermezzo: heuristics for fluctuations of the critical droplet} 
\label{overallheur}

In this section we provide the main ideas behind the proof of Theorem~\ref{thm:zoom1} given in Sections~\ref{app:geometry}--\ref{proofmoddev}. While these ideas are heuristic in nature, they offer valuable insight into the proof of the moderate deviation bounds, which is both lengthy and complex. In Section~\ref{sec:heur-surfred} we explain how the moderate deviation probability for the halo volume can be expressed in terms of a certain \emph{surface integral}. In Section~\ref{approxsurface} we explain how the weight in this surface integral can be approximated in terms of the \emph{polar coordinates} of the boundary points. In Section~\ref{orders} we provide a quick guess of what the \emph{orders of magnitude} of the angles and the radii of the boundary points are as $\beta\to\infty$. In Section~\ref{sec:heur-aux} we introduce \emph{auxiliary random processes} that allow us to transform the surface integral into an expectation of a certain exponential functional, capturing the \emph{global} (= mesoscopic) scaling of the boundary of the critical droplet. In Section~\ref{effint} we perform a further change of variable to rewrite the expectation in terms of an \emph{effective interface model}, capturing the \emph{local} (= microscopic) scaling of the boundary of the critical droplet. 

%%%

\subsection{Reduction to a surface integral} 
\label{sec:heur-surfred}

The starting point for the proof of Theorem~\ref{thm:zoom1} is the following. In view of the large deviation principles in Theorems~\ref{thm:ldp-halo} and~\ref{thm:ldp-volume} and the quantitative isoperimetric inequality~in Theorem~\ref{thm:isope}, the dominant contribution to the event $|V(\gamma) - \pi \Rc^2|\leq C\beta^{-2/3}$ should come from approximately disk-shaped halos (``droplets''). 

Consider the event 
\be 
\label{eventDeps}
\cD_\eps(x) = \Bigl\{\gamma \in \Gamma\colon\, d_\mathrm{H}(\partial h(\gamma), \partial B_{\Rc}(x))\leq\eps\Bigr \}
\ee
that $h(\gamma)$ is close to a disk $B_{\Rc}(x)$ (with no holes). Because of translation invariance, we may focus on $\cD_\eps(0)$. For $\gamma\in \cD_\eps(0)$, the boundary $\partial h(\gamma)$ of the droplet is a union of circle arcs centred at points $z_1,\ldots, z_n\in \gamma$, called \emph{boundary} points. Each boundary point is \emph{extremal} in the sense that $h(\gamma\setminus x) \subsetneq h(\gamma)$. We call a collection of points $\z=\{z_1,\ldots,z_n\}$ a \emph{connected outer contour} if there exists a halo $S$ with a simply connected $1$-interior $S^-$ having exactly these boundary points. The halo $S$, if it exists, is unique, and we denote it by $S(\z)$. The set of connected outer contours is denoted by $\mathcal O$. 

%%%%%%%%%%%%%%%%%%%%%%%%%%%%%%%%%%%%%%%%%%%%
\begin{figure}[htbp]
\vspace{-0.2cm}
\begin{center}
\includegraphics[width=7cm]{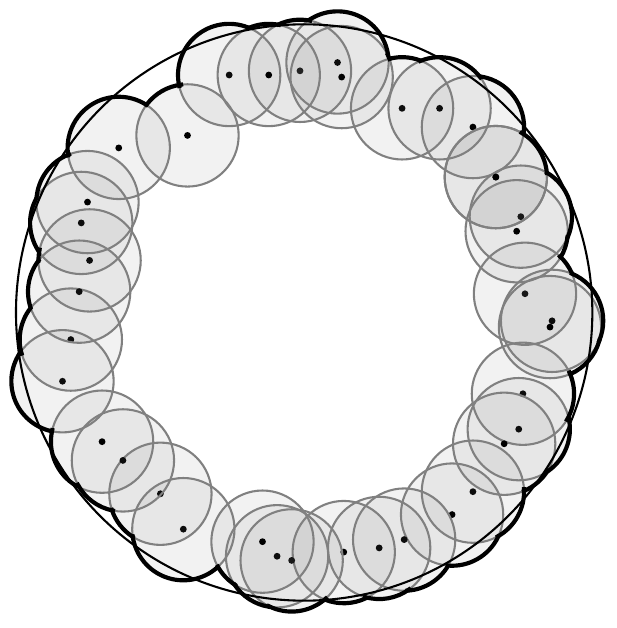}
\raisebox{-0.54cm}{\includegraphics[width=8.08cm]{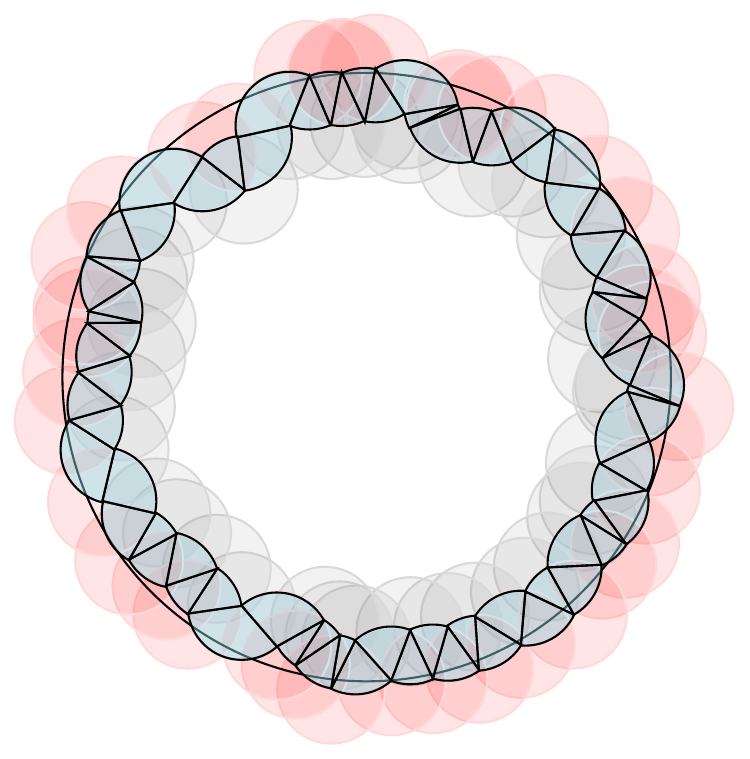}}
\end{center}
\caption{\small The set  $\z(\gamma)$ consisting of the boundary points of the configuration $\gamma$ from Fig.~\ref{fig:critdrop} is shown on the left. The thick line is the  boundary $\partial h(\gamma)$ of the halo $h(\gamma)$. On the right, the boundary layer $S(\z)\setminus S(\z)^-$ is shown in blue with the outer boundary $\partial h(\gamma)$ the same as on the left picture and the inner boundary $\partial h(\gamma)^-$ consisting of circular arcs of unit disks in pale red with centres at the cusps of the outer boundary.}
\label{fig:cleansausage}
\end{figure}
%%%%%%%%%%%%%%%%%%%%%%%%%%%%%%%%%%%%%%%%%%%%

For $\gamma \in \cD_\eps(0)$, both $h(\gamma)$ and $V(\gamma)$ are uniquely determined by the boundary points, since $h(\gamma) = S(\z)$ and $V(\gamma) = |S(\z)|$. Given the above equivalence, it is natural to use the same notation $\cD_\eps(0)$ for both $\gamma \in \cD_\eps(0)$ and the corresponding $\z \in \cD_\eps(0)$. It also makes sense to compute probabilities by conditioning on the boundary points. Abbreviate 
\be 
\label{Ezdefs}
\!\!\!\! \cH(\z) = \bigl(|S(\z)| - \kappa |S(\z)^-|\bigr) - \pi\bigl(\Rc^2 - \kappa (\Rc-1)^2\bigr)
= \Delta(\z) - (2\pi \Rc-\pi)  +(1-\kappa)(|S(\z)^-|-\pi(\Rc-1)^2),
\ee
where $\Delta(\z)=\abs{S(\z)\setminus S(\z)^-}$. We will see that the following is true: For all measurable set $\mathcal{A} \subset \mathcal S$, as $\beta \to \infty$,
\begin{equation}
\label{eq:surface}
\begin{aligned}
&\mu_\beta \bigl( h(\gamma) \in \mathcal{A}, \gamma \in \cD_\eps(0)\bigr)\\
&\geq \bigl(1-\pi(\Rc-2+\eps)^2(\kappa\beta)^{2/3} \e^{-\kappa\beta}\bigr)\,
e^{-\beta I^*(\pi \Rc^2)} 
\sum_{n \in \N_0} \frac{(\kappa\beta)^n}{n!} \!\int_{\mathbb T^n} \dd \z\,
 \e^{-\beta \cH(\z)} \1_{\{ S(\z)\in\mathcal{A}\}} \1_{\cD_\eps(0)}(\z). 
\end{aligned}
\end{equation}
Here, $I^*$ is the rate function defined in \eqref{eq:ivol}. In view of this geometric constraint, the only contributions to the right-hand side of \eqref{eq:surface} are from connected outer contours $\z\in \mathcal O$ that lie in an annulus: $|z_i - (\Rc-1)|\leq \eps$, $1 \leq i \leq n$. Hence we may think of~\eqref{eq:surface} as a \emph{surface integral}. 

%%%

\subsection{Approximation of the surface term}
\label{approxsurface}

In view of~\eqref{eq:surface}, our next task is to evaluate $\cH(\z)$. We choose polar coordinates for the boundary points and write 
\be 
\label{polar}
z_i = ( r_i \cos t_i, r_i \sin t_i), \qquad 1 \leq i \leq n.
\ee
Upon relabelling the centres, we may without loss of generality assume that $0\leq t_1\leq \cdots \leq  t_n < 2\pi $. We set $t_{n+1} =t_0 + 2\pi$ and $r_{n+1} = r_1$, and define angular increments 
\be 
\label{thetaj}
\theta_i = t_{i+1}- t_i, \qquad 1 \leq i \leq n. 
\ee
Note that $\theta_i\geq 0$ and $\sum_{i=1}^n \theta_i = 2\pi$. We will show that the function $\cH(\z)$ admits an expansion after we put 
\be 
\label{rhoi}
\rho_i= r_i - (\Rc-1), \qquad \bar \rho_i =\frac{\rho_i + \rho_{i+1}}{2}. 
\ee
Namely,
\begin{equation} 
\label{eq:cH}
\cH(\z) = \frac{\kappa-1}{2}\sum_{i=1}^n \Bigl\{ \frac{(\rho_{i+1} - \rho_i)^2}{\theta_i} 
- {\bar \rho_i}^2\theta_i \Bigr\} +C_1\sum_{i=1}^n \theta_i^3  + \text{error terms},
\end{equation}
as summarised in Proposition~\ref{prop:expansion}, with
\be
C_1= \frac{\Rc^2(\Rc-1)}{24} = \frac{\kappa^2}{24\, (\kappa-1)^3} =\, \frac1{24}\, G_\kappa^3,
\ee
where we use that $\kappa-1 = 1/(\Rc-1)$ and $G_\kappa$ is defined in \eqref{E:Gkappa}.

In the following, in analogy with $\z$, we use bold $\r$, $\t$, $\rr$ and $\tt$, and later also $\s$, $\x$, $\y$, $\vp$,and $\vt$, for the corresponding vectors  $(r_1,\dots,r_n), \dots, (\vartheta_1,\dots,\vartheta_n)$.

%%%

\subsection{Orders of magnitude}
\label{orders}

Neglecting higher order terms in $\cH(\z)$ in~\eqref{eq:cH}, we see that the weight $\exp(-\beta\cH(\z))$ involves several terms. The factor $\exp(-\beta C_1\theta_i^3)$ suggests that the typical angular increment $\theta_i$ is of order $\beta^{-1/3}$, and that the typical number of boundary points $n$ is of order $\beta^{1/3}$. The factor 
\be 
\label{gaussian}
\exp\Bigl( -\beta \frac{\kappa-1}{2}\frac{(\rho_{i+1}- \rho_i)^2}{\theta_i}\Bigr)
\ee
suggests that $\rho_{i+1} - \rho_i$ is approximately normal with variance proportional to $\theta_i /\beta$. Hence, we expect that the radial increment $\rho_{i+1}- \rho_i$ is of order $\beta^{-2/3}$. Combining these observations, we expect that 
\be
\beta\sum_{i=1}^n \Bigl\{  \frac{\kappa-1}{2} \frac{(\rho_{i+1} - \rho_i)^2}{\theta_i} +  C_1 \theta_i^3\Bigr\} 
\approx \mathrm{const}\, \beta^{1/3}, 
\ee
which explains the exponent $\beta^{1/3}$ in Theorem~\ref{thm:zoom1}. Furthermore, it will be natural to think of $\rho_i$ as 
\be
\rho_i = \frac{m +  B_{t_i}}{\sqrt{(\kappa - 1)\beta}}
\ee
with $m$ some unknown mean value and $(B_{t})_{t\geq 0}$ the mean-centred Brownian bridge (see \eqref{Btildedef} below). The consistency with the guessed order of magnitude of the radial increment is guaranteed by the fact that $B_{t_{i+1}} - B_{t_i} \approx \sqrt{\theta_i} \approx \beta^{-1/6}$ and the observation that $\beta^{-1/2-1/6}= \beta^{-2/3}$. Finally, we note that 
\be
\beta \frac{\kappa -1}{2} \sum_{i=1}^n \bar \rho_i^2 \theta_i \approx \pi m^2 + \frac12 \int_0^{2\pi} B_t^2 \dd t, 
\ee
which should not contribute on the scale $\beta^{1/3}$ we are interested in (unless $m$ is large). Nevertheless, we will need to treat this term carefully, because
\be
\E\Bigl[\exp\Bigl( \frac12 \int_0^{2\pi} B_t^2 \dd t\Bigr)\Bigr] = \infty,
\ee
and extra arguments will be needed to cure this divergence. 

For later usage, let us also have a closer look at the volume constraint $|V(\gamma) - \pi \Rc^2| \leq C\beta^{-2/3}$. If we substitute the expansion of $V(\gamma)=\abs{S(\z)}$  (cf.\ Proposition~\ref{prop:expansion}) and neglect higher order terms, then the volume constraint becomes 
\be
\Bigl|  \sum_{i=1}^n \Bigl\{ \frac12 \bigl[ (\Rc+ \bar \rho_i)^2 - \Rc^2\bigr] \theta_i 
+ \frac12 \frac{(\rho_{i+1}- \rho_i)^2}{(\Rc - 1)\theta_i} - C_1 \theta_i^3  \Bigr\}   \Bigr| \lesssim C \beta^{-2/3}. 
\ee
Making a few leaps of faith, we may approximate
\begin{align}
\sum_{i=1}^n  \tfrac12 \bigl[ (\Rc+ \bar \rho_i)^2 - \Rc^2\bigr] \theta_i  	
&\approx \frac12 \int_0^{2\pi} \Bigl( \bigl( \Rc+ [(\kappa-1)\beta]^{-1/2} (m+B_t)\bigr)^2 
- \Rc^2\Bigr) \dd t \notag \\
& = \pi\Bigl( \bigl( \Rc + [(\kappa-1)\beta]^{-1/2} m \bigr)^2 - \Rc^2 \Bigr) 
+ \frac{1}{2(\kappa-1)\beta} \int_0^{2\pi} B_t^2 \dd t,
\end{align}
where we use that $\int_0^{2\pi} B_t \dd t =0$. From the considerations above we should expect the sum overall to be of order $\beta^{-2/3}$. Hence $[(\kappa-1)\beta]^{-1/2} m$ should also be of order $\beta^{-2/3}$, i.e., $|m| = O(\beta^{-1/6})$. Later we will only prove that $|m| = O(\beta^{1/6})$, but this will turn out to be enough for our purpose. 

%%%

\subsection{Global scaling: auxiliary random processes} 
\label{sec:heur-aux}

If we substitute the approximation~\eqref{eq:cH} for $\cH(\z)$ into the surface integral in~\eqref{eq:surface} and drop error terms and indicators, we are naturally led to the investigation of expressions of the type
\begin{multline} 
\label{heursurf}
\sum_{n\in\N_0} (\kappa \beta)^n \int_{[0,2\pi)^n} \d \t\,\,\1_{\{t_1\leq \cdots \leq t_n\}} 
\int_{\R^n}\dd \rr\,\prod_{i=1}^n (\Rc-1+\rho_i)^n \\
\times \exp\Bigl( - \beta (\kappa-1) \Bigl\{\sum_{i=1}^n \frac{(\rho_{i+1}-\rho_i)^2}{2\theta_i} 
- \sum_{i=1}^n \tfrac12 {\bar \rho}_i^2 \theta_i\Bigr\} - \beta C_1 \sum_{i=1}^n \theta_i^3 \Bigr) f(\z),
\end{multline}
where $f$ is a non-negative test function, and we recall~\eqref{polar} and~\eqref{rhoi} (by convention the summand with $n=0$ equals $1$). The Gaussian term (see also \eqref{gaussian}) is conveniently expressed with the heat kernel 
\be 
\label{heatkernel}
P_\theta(x-y) = \frac{1}{\sqrt{2\pi \theta}} \exp\Bigl( - \frac{(x-y)^2}{2\theta}\Bigr),
\ee
and we have 
\be
\begin{aligned}
&\exp\Bigl( - \beta (\kappa-1) \sum_{i=1}^n \frac{(\rho_{i+1}-\rho_i)^2}{2\theta_i} 
- \beta C_1 \sum_{i=1}^n \theta_i^3 \Bigr) \\
&\qquad = \prod_{i=1}^n P_{\theta_i}\Bigl( \sqrt{(\kappa -1)\beta}(\rho_{i+1}- \rho_i)\Bigr) 
\times  \prod_{i=1}^n  \sqrt{2\pi \theta_i}\,  \e^{- \beta C_1 \theta_i^3}.
\end{aligned}
\ee
Let us approximate $\Rc-1+\rho_i \approx \Rc-1$, drop the term $\sum_i {\bar \rho_i}^2\theta_i$, and change variables as $x_i = \sqrt{(\kappa -1)\beta}\, \rho_i$. Then the integral in~\eqref{heursurf} becomes
\be 
\label{heursurf2}
\sum_{n\in\N_0} \Bigl(\frac{\kappa \beta(\Rc-1)}{\sqrt{(\kappa-1)\beta}}\Bigr)^n 
\int_{[0,2\pi)^n} \d \t \,\,\1_{\{t_1\leq \cdots \leq t_n\}} \int_{\R^n}\dd \x\, 
\prod_{i=1}^n \Bigl( P_{\theta_i} (x_{i+1}- x_i) \sqrt{2\pi \theta_i}\,  \e^{- \beta C_1 \theta_i^3} \Bigr) f(\z).
\ee
With the help of
\be
\frac{\kappa \beta(\Rc-1)}{\sqrt{(\kappa-1)\beta}}\, \sqrt{2\pi \theta_i} 
= \frac{\kappa \sqrt{\beta}}{(\kappa -1)^{3/2}}\, \sqrt{2\pi \theta_i} 
= \sqrt{\beta}\, G_\kappa^{3/2} \, \sqrt{2\pi \theta_i} 
= \beta^{1/3} G_\kappa \sqrt{2\pi G_\kappa \beta^{1/3} \theta_i},
\ee
and $C_1 = G_\kappa^3/24$, the expression in \eqref{heursurf2} can be rewritten as 
\be 
\label{heursurf2alt}
\sum_{n\in\N_0} \bigl( \beta^{1/3} G_\kappa\bigr) ^n 
\int_{[0,2\pi)^n} \d \t \,\,\1_{\{t_1\leq \cdots \leq t_n\}}
\int_{\R^n}\dd \x\, \prod_{i=1}^n \Bigl( P_{\theta_i} (x_{i+1}- x_i) \sqrt{2\pi \beta^{1/3} 
G_\kappa \theta_i}\,  \e^{- \frac{1}{24} \beta G_\kappa^3 \theta_i^3} \Bigr) f(\z).
\ee
This expression motivates the auxiliary processes introduced in Section~\ref{sec:auxiliary}. Moreover, $\beta$ and $\kappa$ only enter in the combination $\beta^{1/3}G_\kappa$ (except possibly in the test function $f$), which explains the scaling of the surface corrections in Theorem~\ref{thm:zoom1}.

The picture that emerges of the droplet boundary is that its deviation from $\partial B_{\Rc}(0)$ should be of the order of $\beta^{-1/2}$, and that the boundary points are obtained by selecting points of a Gaussian bridge process according to an angular point process. We may call this picture \emph{mesoscopic}, since it describes the overall shape of the droplet: the Gaussian process $(B_t)_{t\in [0,2\pi]}$ does not see the \emph{microscopic} details. 

%%%
 
\subsection{Local scaling: effective interface model}
\label{effint}

Equation~\eqref{heursurf} suggests one last change of variables, namely, set 
\be 
\label{localscaling}
s_i = \beta^{1/3} G_\kappa t_i,\quad \varphi_i = \sqrt{\beta^{1/3} G_\kappa}\, x_i,
\quad \vartheta_i = s_{i+1}- s_i = \beta^{1/3} G_\kappa\, \theta_i.
\ee
Note that
\be 
\label{rhoxphi}
\rho_i = \frac{x_i}{\sqrt{(\kappa -1)\beta}} = \frac{\varphi_i}{\beta^{2/3} \kappa^{1/3}}.
\ee
Then~\eqref{heursurf2} becomes 
\be 
\label{heursurf3}
\sum_{n\in\N_0} \int_{[0,2\pi G_\kappa \beta^{1/3})^n} \dd \s \,\,\1_{\{s_1\leq \cdots \leq s_n\}}
\int_{\R^n}\dd \vp\, \prod_{i=1}^n \Bigl( P_{\vartheta_i} (\varphi_{i+1}- \varphi_i) 
\sqrt{2\pi \vartheta_i}\,  \e^{- \frac{1}{24}  \vartheta_i^3} \Bigr) f(\z),
\ee
or, equivalently, 
\be 
\label{heursurf4}
\sum_{n\in\N_0} \int_{[0,2\pi G_\kappa \beta^{1/3})^n} \dd \s \,\,\1_{\{s_1\leq \cdots \leq s_n\}}
\int_{\R^n}\dd \vp\, \exp\Biggl(- \sum_{i=1}^n \frac{(\varphi_{i+1}- \varphi_i)^2}{2\vartheta_i} 
- \sum_{i=1}^n\frac{\vartheta_i^3}{24}\Biggr) f(\z).
\ee
Let us finally return to the term $\sum_{i=1}^n \bar \rho_i^2 \theta_i$ that we had dropped from~\eqref{heursurf}. By  \eqref{localscaling} and \eqref{rhoxphi}, we have 
\be 
\label{eq:interface-background}
\beta\, \frac{\kappa -1}{2} \sum_{i=1}^n \bar \rho_i^2 \theta_i 
= \frac{1}{2G_\kappa^2 \beta^{2/3}} \sum_{i=1}^n \bar \varphi_i^2 \vartheta_i.
\ee
Taking this term into account, we see that \eqref{heursurf4} should be replaced by the more accurate integral
\be \label{heursurf5}
\int_{\R^n} \dd \vp \int_{[0,2\pi G_\kappa \beta^{1/3})^n} \dd \s\,\,\1_{\{s_1\leq \cdots \leq s_n\}}
\,\exp\Biggl( \frac{1}{2 G_\kappa^2 \beta^{2/3}} 
\sum_{i=1}^n \bar \varphi_i^2 \vartheta_i -  \sum_{i=1}^n 
\Bigl\{ \frac{(\varphi_{i+1} - \varphi_i)^2}{2\vartheta_i} 
+ \frac{\vartheta_i^3}{24}\Bigr\}  \Biggr) f(\z).
\ee
We may view the exponential, together with an additional indicator for the boundary points, as the Boltzmann weight for an effective interface model, which is studied in detail in a forthcoming paper \cite{dHJKP3}. The term $\frac{1}{2 G_\kappa^2 \beta^{2/3}} \sum_{i=1}^n \bar \varphi_i^2 \vartheta_i$ plays the role of a background potential. 

Having thus explained the \emph{heuristics} behind the proof of Theorem~\ref{thm:zoom1}, we are now ready to start with the \emph{technical core} of the paper, which is collected in Sections~\ref{app:geometry}--\ref{proofmoddev}.

%%%%%% SECTION 4 %%%%%%%%%%%%%%%%%%%%%%%%%%%

\section{Stochastic geometry I: approximation of geometric functionals} 
\label{app:geometry}

This section collects a number of geometric facts that will be needed for the moderate deviations of the halo volume. In Section~\ref{sec:apriori} we prove a number of \emph{a priori estimates} on the radial and the angular coordinates of the \emph{boundary points}, i.e., the centres of the unit disks that lie at the boundary of the critical droplet (Lemma~\ref{L:boundaryorder}, Proposition~\ref{prop:apriori}, Definition~\ref{def:yzsums}, Corollary~\ref{cor:apriori} and Lemma~\ref{ordermagn1}). These estimates play a crucial role for the arguments in Sections~\ref{sec:surface}--\ref{proofmoddev}. In Section~\ref{bounddet} we show that the set of boundary points allows for a local characterisation, in the sense that whether or not a unit disk touches the boundary of the critical droplet only depends on the centre of the two neighbouring unit disks (Definition~\ref{def:bd}, Lemma~\ref{L:local} and Proposition~\ref{prop:locality}). In Section~\ref{sec:taylor} we use the results in Sections~\ref{sec:apriori}--\ref{bounddet} to derive an approximation for the \emph{volume} and the \emph{surface} of halos that are close to a critical disk, in terms of certain sums involving the radial and the angular coordinates of the boundary points (Proposition~\ref{prop:expansion}). In Section~\ref{app:cenmass}, finally, we define the geometric center of the droplet and derive \emph{a priori estimates} on its location (Definition~\ref{def:bdalt} and Lemma~\ref{lem:aprvolsurcen}).

%%%

\subsection{A priori estimates on boundary points}
\label{sec:apriori}

Theorem \ref{thm:isope} can be applied to sets of the form  $S = h(\gamma)$, the halo of the configuration $\gamma$. In particular, the condition that the boundary $\partial S$ is close to a disk $B_R$ with $R>1$ is a strong restriction on the geometry of the boundary points $\z=(z_1,\ldots,z_n)$ (recall Fig.~\ref{fig:cleansausage}). In this section we collect several \emph{a priori estimates} and constraints that follow from the fact that $S = h(\gamma)$ has a simply connected 1-interior $S^-$ and $d_{\text{\rm{H}}}(\partial S,\partial B_R) \leq \eps$. Recall the notions of boundary points, the set of connected outer contours $\mathcal O$  introduced in Section~\ref{sec:heur-surfred}, and the polar coordinates $(r_i,t_i)_{i=1}^n$ as well as angular increments  $\theta_i$ from~\eqref{thetaj}. We write $\ell_{z_i}$ or $\ell_i$ to denote the ray from the origin passing through the point $z_i$, and $A_{R,\eps}$ and $A_{R-1,\eps}$ to denote the $\eps$-annuli defined as the closures of $B_{R+\eps}(0) \setminus B_{R-\eps}(0)$ and $B_{R-1+\eps}(0) \setminus  B_{R-1-\eps}(0)$, respectively.

First we show that the intersections of the boundary circles in $\partial S$ follow the order of the corresponding boundary points:

\begin{lemma} 
\label{L:boundaryorder}
Fix $R\in(1,\frac{L}{\pi}+\tfrac12)$. If $\z\in\cO$ and $d_\text{{\rm H}}(\partial S(\z), B_R(0)) \le \eps$, then as $\eps \downarrow 0$:
\begin{itemize}
\item[{\rm (a)}] 
$z_i\in A_{R-1,\eps}$ for all $1 \leq  i \leq n$.
\item[{\rm (b)}]
The distance between any two points $x,x'\in A_{R-1,\eps}$ such that $\partial B(x) \cap \partial B(x')\cap A_{R,\eps}\neq\emptyset$ satisfies
\be
\label{E:|z-z|}
\abs{x-x'} \le 4\sqrt{\,\frac{R-1}{R}}\,\eps^{1/2} -  \frac{2\sqrt{R-1}(R-2)}{R^{3/2}}\,\eps^{3/2}  
+ O(\eps^{5/2}),
\ee
and the angle $\theta_{xx'}$ between the rays $\ell_x$ and $\ell_{x'}$ satisfies
\be
\label{E:theta}
\abs{\theta_{xx'}}\le \frac{4}{\sqrt{R(R-1)}}\,\eps^{1/2} 
+  \frac{2\bigl(3R(R-1)+2\bigr)}{3(R(R-1))^{3/2}}\, \eps^{3/2} 
+ O(\eps^{5/2}).
\ee
\item[{\rm (c)}] 
For every $1 \leq i \leq n$ there exists a unique $v_i\in \partial B(z_i)\cap  \partial B(z_{i+1})$ such that $v_i\in  A_{R.\eps}$. 
(Such a $v_i$ is referred to as a boundary cusp.)
\item[{\rm (d)}]  
The boundary $\partial S(\z)$ consists of the union of closed arcs of the circles $\partial B(z_i)$ between the boundary cusps $v_i$ and $v_{i+1}$, $1 \leq i \leq n$, contained in $A_{R,\eps}$ (with $v_{n+1}=v_1$).
\end{itemize}
\end{lemma}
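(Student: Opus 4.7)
The plan is to establish the four parts sequentially, reducing each to elementary plane geometry in the annulus $A_{R,\varepsilon}$ together with a careful Taylor expansion to extract the explicit $\varepsilon^{3/2}$ coefficients in (b).

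For part (a), I would use that each boundary point $z_i$ belongs to $S(z)^-$ (since the disc $B_2(z_i)$ is by definition a subset of $S(z)=h(\gamma)$). The hypothesis $d_\mathrm{H}(\partial S(z),\partial B_R)\le\varepsilon$ combined with admissibility of $S(z)$ yields $S(z)\subset \overline{B_{R+\varepsilon}(0)}$, whence $|z_i|+2\le R+\varepsilon$. For the lower bound, since $z_i$ is the centre of a boundary arc there exists $w\in\partial B_2(z_i)\cap\partial S(z)\subset A_{R,\varepsilon}$, and the triangle inequality gives $|z_i|\ge |w|-2\ge R-2-\varepsilon$.

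For part (b), given $v\in\partial B_2(x)\cap\partial B_2(x')\cap A_{R,\varepsilon}$, both $x$ and $x'$ lie on the circle $\partial B_2(v)$, which I would parameterise as
\begin{equation*}
x = v - 2\cos\phi\,\hat v + 2\sin\phi\,\hat v^{\perp},\qquad \hat v := v/|v|,
\end{equation*}
and similarly for $x'$ with angle $\phi'$, so that $|x|^2 = |v|^2 - 4|v|\cos\phi + 4$. The constraint $|x|\le R-2+\varepsilon$ becomes $\cos\phi\ge g(|v|)$ with $g(u)=(u^2+4-(R-2+\varepsilon)^2)/(4u)$; checking that $g$ is increasing on $[R-\varepsilon,R+\varepsilon]$ for small $\varepsilon$ shows that $\phi_{\max}$ is attained at $|v|=R-\varepsilon$, yielding the closed form $\cos\phi_{\max}=1-\varepsilon(R-2)/(R-\varepsilon)$. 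The worst case places $x$ and $x'$ symmetrically on opposite sides of the ray through $v$, so $|x-x'|\le 4\sin\phi_{\max}$, and expanding $\sin\phi_{\max}$ as a power series in $\varepsilon$ produces \eqref{E:|z-z|}. In this extremal configuration $|x|=|x'|=R-2+\varepsilon$, so the law of cosines in triangle $0xx'$ gives $\sin(\theta_{xx'}/2)=|x-x'|/[2(R-2+\varepsilon)]$ as an exact identity, and a second Taylor expansion yields \eqref{E:theta}.

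For parts (c) and (d), the bound in (b) guarantees $|z_i-z_{i+1}|=O(\sqrt\varepsilon)<4$, so the circles $\partial B_2(z_i)$ and $\partial B_2(z_{i+1})$ meet in exactly two points, symmetric about the segment $z_iz_{i+1}$. One of these (the \emph{outer} intersection) is at distance close to $R$ from the origin and the other (the \emph{inner} one) at distance close to $|R-4|$; the latter is separated from $\partial B_R$ by a fixed positive amount for small $\varepsilon$, proving uniqueness of the intersection inside $A_{R,\varepsilon}$. Part (d) then follows: since $S(z)=\bigcup_i B_2(z_i)$ with $S(z)^-$ simply connected and $\partial S(z)$ a Jordan curve close to $\partial B_R$, its circular-arc decomposition must use consecutive outer arcs of the $\partial B_2(z_i)$ joined at the unique points $v_i$ from (c). The principal technical hurdle will be the bookkeeping of the $\varepsilon^{3/2}$ corrections in (b): one must cleanly expand $\sin\phi_{\max}$ and the passage to $\theta_{xx'}$, and verify that the joint extremum over $|v|$ and over the placement of $x,x'$ on $\partial B_2(v)$ is indeed realised by the symmetric configuration at $|v|=R-\varepsilon$; the remaining parts (a), (c), (d) are geometrically transparent once (b) is in hand.
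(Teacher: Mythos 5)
Your treatment of parts (a) and (b) is essentially the paper's argument in a different parameterisation. For (a), the two observations you use ($B_2(z_i)\subset B_{R+\varepsilon}$ and $\partial B_2(z_i)\cap A_{R,\varepsilon}\neq\emptyset$) are exactly the paper's. For (b), your parameterisation of $\partial B_2(v)$ by the angle $\phi$ reduces to the same optimisation as the paper's Cartesian setup: both identify the extremal configuration as $|v|=R-\varepsilon$, $|x|=|x'|=R-2+\varepsilon$, symmetric about the ray through $v$, and your $\cos\phi_{\max}=1-\varepsilon(R-2)/(R-\varepsilon)$ agrees with the paper's $x_0,y_0$ calculation (you then pass to the angle via $\arcsin$, the paper via $\arctan$; these give the same value in the extremal configuration). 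You correctly flag the need to verify that the symmetric configuration at $|v|=R-\varepsilon$ realises the joint extremum for both $|x-x'|$ and $\theta_{xx'}$; the paper asserts this without further detail.

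The genuine gap is in part (c). You prove only \emph{uniqueness} of the intersection point in $A_{R,\varepsilon}$, not its \emph{existence}, and the existence is the nontrivial content. Worse, your argument is circular: you invoke part (b) to conclude $|z_i-z_{i+1}|=O(\sqrt\varepsilon)<4$ so that the circles meet, but (b) is conditional on $\partial B_2(z_i)\cap\partial B_2(z_{i+1})\cap A_{R,\varepsilon}\neq\emptyset$, which is precisely the existence claim you need. Without (c), part (a) gives you only $|z_i-z_{i+1}|\leq 2(R-2+\varepsilon)$, which for $R\geq 4$ does not force the circles to intersect at all. And even granting a small $|z_i-z_{i+1}|$, your heuristic places the outer intersection at distance $R-O(\varepsilon)$ from the origin, which does not automatically land in $A_{R,\varepsilon}$ (the implicit constant matters). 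The paper's proof of (c) is an actual contradiction argument: assuming $v_i,v_i'\notin A_{R,\varepsilon}$ (or that the circles fail to meet), one finds a point $p_1$ on $\partial B_{R-\varepsilon}$ between the two arcs that must be covered by some other $B_2(z_j)$; but then $B_2(z_j)$ also covers the reflected point $p_2$ and hence all of $\partial B_2(z_{i+1})\cap A_{R,\varepsilon}$, contradicting that $z_{i+1}$ is a boundary point. This is where the assumption $z\in\mathcal O$ (all $z_i$ are genuine boundary points, simply connected $S^-$) does real work, and it is absent from your proposal. A similar contradiction argument, not your appeal to ``$\partial S(z)$ is a Jordan curve close to $\partial B_R$'', is what carries part (d). Also note $S(z)\supsetneq\bigcup_i B_2(z_i)$ in general, since $S(z)$ is the filled halo; the equality you write is not literally correct, though it is inessential.
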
 
\begin{proof}
The proof is based on a series of geometric observations.
 
\medskip\noindent
(a) This claim is immediate from the fact that $\dist_\text{{\rm H}}(\partial S(\z), B_R(0))\le \eps$ and that each $z_i$ is a boundary point with  $B(z_i)\subset B_{R+\eps}(0)$ and $\partial B(z_i)\cap A_{R,\eps} \neq\emptyset$.

\medskip\noindent
(b) Let $v\in \partial B(x)\cap\partial B(x')\cap A_{R,\eps}$. To get the bound in \eqref{E:|z-z|}, we note that the maximal distance between $x$ and $x'$ and the maximal angle $\theta$ consistent with the condition $v\in A_{R,\eps}$ and $x,x'\in A_{R-1,\eps}$ occur when $v \in \partial B_{R-\eps}(0)$ and $x,x'\in \partial B_{R-1+\eps}(0)$, i.e., when the ray $0\, v$ is orthogonal to the segment $z \, z'$. Assume, without loss of generality, that $v=(R-\eps,0)$ and $x,x'=(x_0,\pm y_0)$ with $x_0^2+y_0^2 =(R-1+\eps)^2$. Then the condition $v \in \partial B(x)\cap \partial B(x')$ reads 
\be
(R-\eps-x_0)^2+ y_0^2=1,
\ee
which, in combination with the equation $x_0^2+y_0^2=(R-1+\eps)^2$, yields $x_0=\frac{R(R-1) -\eps+\eps^2}{R-\eps}$ and implies 
\be
\max \abs{x-x'}^2 = 4y_0^2 =\frac{16(R-1)}{R}\eps -\frac{16(R-1)(R-2)}{R^2}\eps^2 
+ O(\eps^3),
\ee
which settles \eqref{E:|z-z|}. For the corresponding angle, we have $\abs{\tan\frac{\theta}2}\le \frac{y_0}{x_0}$, and hence $\abs{\theta} \le 2 \arctan(\frac{y_0}{x_0})$, which settles \eqref{E:theta}.

\medskip\noindent
(c) Let $\{v_i,v_i'\}=  \partial B(z_i)\cap  \partial B(z_{i+1})$  and consider the boundary piece $\partial (B(z_i)\cup  B(z_{i+1}))$. This consists of two arcs, $C_i\subset  \partial B(z_i))$ and  $C_{i+1}\subset  \partial B(z_{i+1}))$, both ending in the points $v_i$ and $v_i'$. A necessary condition for both $z_i$ and $z_{i+1}$ to be boundary points is that both arcs $C_i$ and $C_{i+1}$ intersect the annulus $A_{R,\eps}$. If, in addition, $v_i, v_i'\notin A_{R,\eps}$, then we get a contradiction with the assumption that both $z_i$ and $z_{i+1}$ are boundary points. Indeed, consider the line $\ell$ through $v_i'$ and  $v_i$ (and through $\overline{z_i}=\tfrac12(z_i+z_{i+1})$) and the intersection point $p_1= \ell\cap \partial B_{R-\eps}(0)$ as shown in Fig.~\ref{fig:v_i}. There exists a $j\neq i, i+1$ such that $p_1\in B(z_j)$. Otherwise there would be a gap in the boundary $\partial S(\z)$ along $\ell\cap A_{R,\eps}$. Assuming, without loss of generality, that the line segment $z_i\,z_j$ intersects the ray $\ell_{i+1}$ (in view of  \eqref{E:|z-z|}, this segment cannot intersect both $\ell_{i+1}$ and $\ell_{i-1}$), we conclude that if $p_1\in B(z_j)$, then also $p_2
\in B(z_j)$, where $p_2$ is the reflection of $p_1$ with respect to the ray $\ell_{i+1}$. But then also $ \partial B(z_{i+1})\cap A_{R,\eps} \subset B(z_j)$, which is in contradiction with the fact that $z_{i+1}$ is a boundary point. Note that there is a severe restriction on the position of the point $z_j$: it has to be contained in $B(p_1)$. The allowed region is shown in Fig.~\ref{fig:v_i} in a darker shade. Thus, necessarily, $v_i\in  A_{R,\eps}$, while $v_i'\in  B_{R-1-\eps}$, because $v_i'$ is a reflection of $v_i$ with respect to $\overline{z_i}\in A_{R-1,\eps}$.

%%%%%%%%%%%%%%%%%%%%%%%%%%%%%%%%%%%%%%%%
\begin{figure}[htbp]
\centering
\begin{overpic}[width=8cm,scale=.25,percent]
{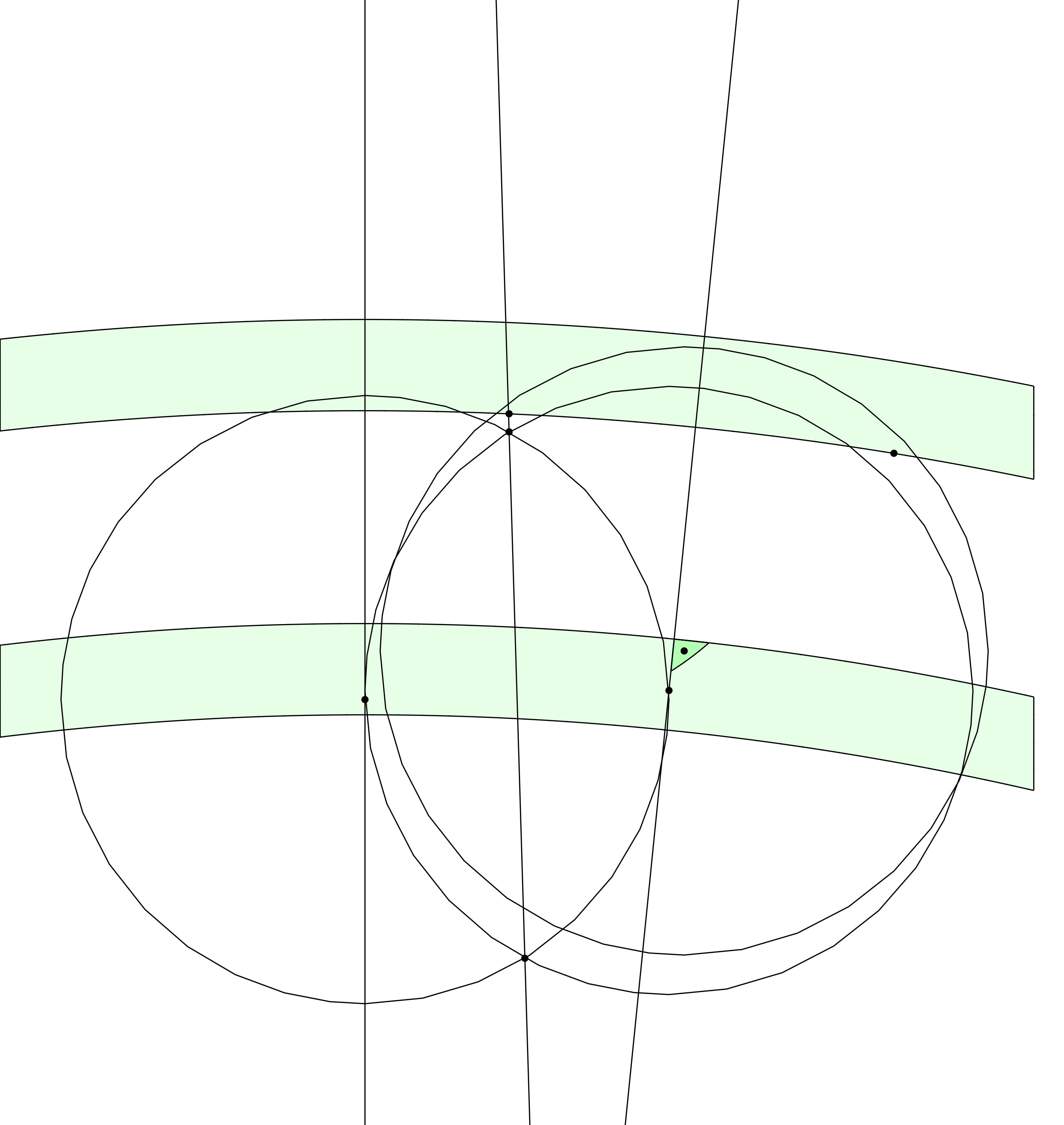}
\put(42,57) {$v_i$} 
\put(41.5,65.5) {$p_1$} 
\put(80.5,63) {$p_2$} 
\put(7,65) {$A_{R,\eps}$}
\put(7,39) {$A_{R-1,\eps}$} 
\put(27,38) {$z_i$} 
\put(51,37.5) {$z_{i+1}$} 
\put(62,39) {$z_j$} 
\put(43,11) {$v_i'$} 
\put(46,85) {$\ell$} 
\put(33.5,85) {$\ell_i$} 
\put(65.5,85) {$\ell_{i+1}$} 
\end{overpic}
\caption{\small The boundary points $z_i$ and $z_{i+1}$ and the intersection $v_i$ of their halo circles. The upper circle on the right has centre $z_j$. The dark shaded region is the intersection of the disk $B(p_1)$, the annulus $A_{R-1,\eps}$ and the halfplane to the right of $\ell_{i+1}$. The light shaded regions are the annuli $A_{R,\eps}$ and $A_{R-1,\eps}$.}
\label{fig:v_i}
\end{figure}
%%%%%%%%%%%%%%%%%%%%%%%%%%%%%%%%%%%%%%%%

\medskip\noindent
(d) If the arc of the circle $\partial B(z_i)$ between the points $v_i$ and $v_{i+1}$ is intersected by a circle $\partial B(z_j)$ for some $j\notin\{i-1,i,i+1\}$, then necessarily $\{v_{i},v_{i+1}\}\cap B(z_j)\neq\emptyset$. Similarly as above, assuming that the line segment $z_i\,z_j$ intersects the ray $\ell_{i+1}$ and knowing that $v_i\in  A_{R,\eps}\cap B(z_j)$, we get that also its reflection with respect to the ray $\ell_{i+1}$ belongs to $B(z_j)$, which implies that $(\partial B(z_{i+1})\setminus B(z_i))\cap A_{R,\eps} \subset B(z_j)$, in contradiction with the fact that $z_{i+1}$ is a boundary point.
\end{proof}

Let
\be
\label{rhoidef}
\rho_i=  r_i - (R-1),
\ee 
and note that $r_{i+1}-r_i = \rho_{i+1}-\rho_i $. 
Abbreviate
\be
\label{baridef}
\begin{aligned}
&\overline{z_i} = \tfrac{1}{2}(z_i+z_{i+1}), \quad
\overline{\rho_i} = \tfrac{1}{2}(\rho_i+\rho_{i+1}), \\
&\overline{\rho_i\cos t_i} = \tfrac12(\rho_i\cos t_i+\rho_{i+1}\cos t_{i+1}), \quad
\overline{\rho_i\sin t_i} = \tfrac12(\rho_i\sin t_i +\rho_{i+1}\sin t_{i+1}).
\end{aligned}
\ee
Introduce two constants, $\overline{K}=\overline{K}(R) = \frac{4}{\sqrt{R(R-1)}}$ and $K=K(R) = 2\sqrt{R(R-1)}$.

\begin{proposition}[A priori estimates for angular and radial coordinates]
\label{prop:apriori}
$\mbox{}$\\
Fix $R\in(1,\frac{L}{\pi}+\tfrac12)$. For any $\z\in\mathcal O$ such that $d_{\text{\rm{H}}}(\partial S(\z),\partial B_{R}(0))\leq\eps$, and any $1 \leq i \leq n$,
\be
\label{apriori}
|\rho_i|\le \eps, \quad
 \theta_i \le \overline{K} \sqrt{\eps}, \quad 
\abs{\rho_{i+1}-\rho_i}/\theta_i \le K
 \sqrt{\eps}, \text{ and }
n^{-1}\le\frac{\overline{K}\sqrt{\eps}}{2\pi},
\ee
for sufficiently small $\eps$.
\end{proposition}
\begin{proof}

The first estimate in \eqref{apriori} is a trivial consequence of the inequality $d_{\text{\rm{H}}}(\partial S(\z), \partial B_{R}(0))\leq\eps$. The second estimate is the bound \eqref{E:theta} in Lemma~\ref{L:boundaryorder}(b). The fourth estimate is a consequence of the second estimate. Indeed,  $\frac{2\pi}{n} \leq  \max_{1\leq i\leq n} \theta_i\le \overline{K} \sqrt{\eps}$.

%%%%%%%%%%%%%%%%%%%%%%%%%%%%%%%%%%%%%%%%
\begin{figure}[htbp]
%\centering
%\hglue-3cm
%\begin{overpic}[width=15cm,scale=.25,percent,grid]
\begin{overpic}[width=14cm,scale=.25,percent]
{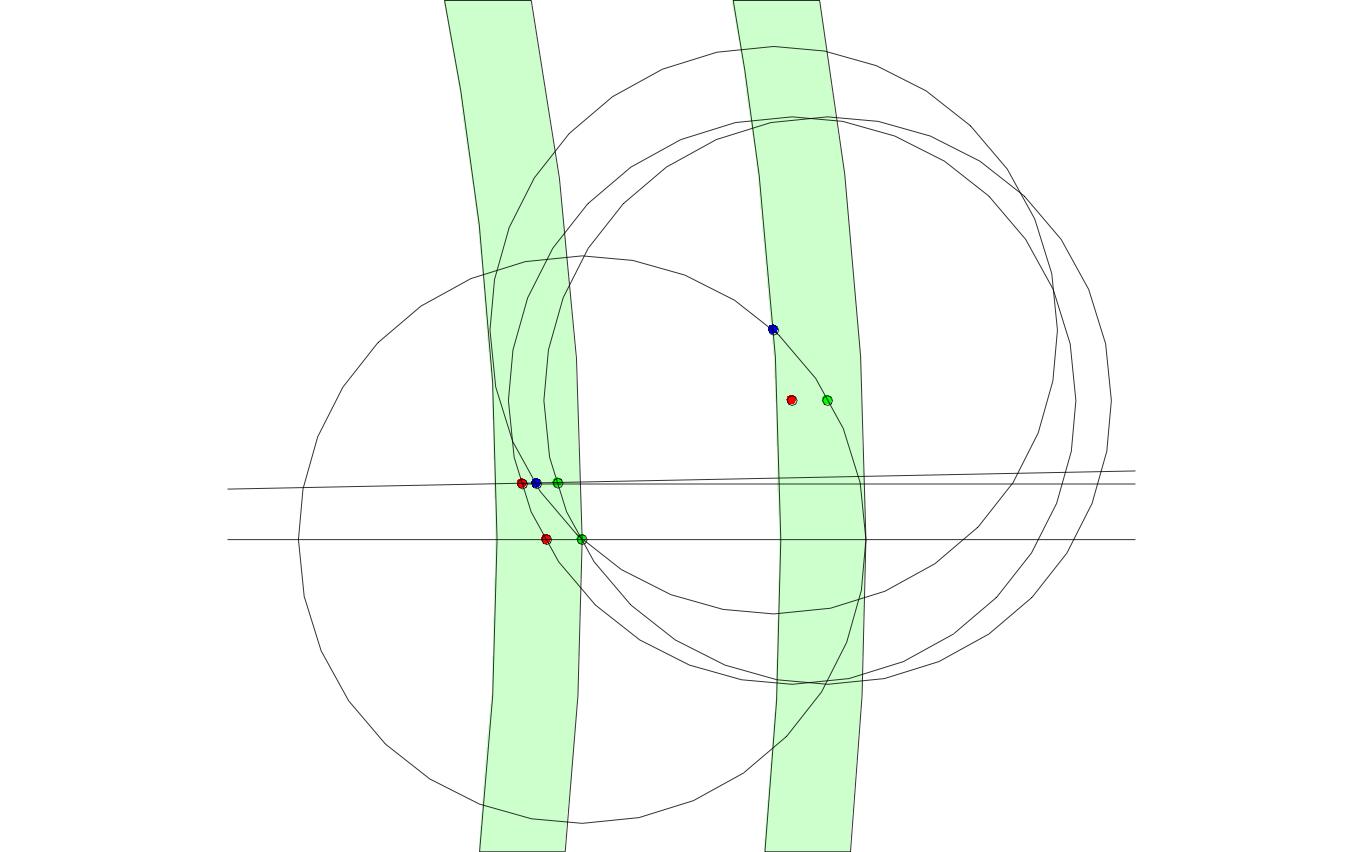}
\put(83,23.5) {$\ell_0$} 
\put(83,28.6) {$\ell_{\theta}$} 
\put(83.5,26) {$\ell_{\theta}'$} 
\put(33,56) {$A_{R-1,\eps}$} 
\put(56,56) {$A_{R,\eps}$}
\put(37.6,21.5) {$z_0$} 
\put(43.3,23.5) {$\bar z_0 $} 
\put(36.5,25.5) {$z_\theta $} 
\put(39,27.8) {$\tilde z_\theta $} 
\put(41.4,28) {$\bar z_\theta $} 
\put(41.5,25.8) {$\bar z'$} 
\put(57.5,34) {$v$}
\put(57.5,38) {$\tilde v$}
\put(60.5,34) {$\bar v$}
\put(43,13) {$\partial B(v)$}
\put(70,55) {$\partial B(\tilde v)$} 
\put(70,13) {$\partial B(\bar v)$} 
\put(16.7,13){$\partial B(\bar z_0)$}
\end{overpic}
\caption{\small The points $z_0, z_\theta \in  A_{R-1,\eps}$ lie on the rays $\ell_0$ and $\ell_\theta$, respectively. Assume that these correspond to nearest neighbours in $\z(\gamma)$. Then the intersection of the unit circles $\partial B(z_0)\cap \partial B(z_{\theta})$ contains a point $v\in A_{R,\eps}$ -- a boundary cusp -- such that $z_0, z_\theta\in \partial B(v)$ (all 3 points $z_0, z_\theta$, $v$ in red). Here, we illustrate the case when $v$ lies above $\ell_\theta$. Shift $\partial B(v)$ to $\partial B(\bar v)$ along the direction $\ell_0$, so that $z_0$ moves into $\bar z_0$ on $\partial B_{R-1+\eps}$. Then we get the point $\bar z_\theta = \partial B(\bar v) \cap \ell_\theta\cap A_{R-1,\eps}$ (all 3 points $\bar z_0, \bar z_\theta$, $\bar v$ in green), as well as $\bar z' = \partial B(\bar v)\cap \ell'\cap A_{R-1,\eps}$, which is indistinctly close to $\bar z_\theta$ such that $\bar r_0-r_0 = \abs{\bar z'-z_\theta}$. Finally, $\tilde z_\theta = \partial B(\tilde v)\cap \ell_\theta\cap A_{r-1,\eps}$ with $\tilde v\in \partial B(\bar z_0) \cap \partial B_{R-\eps}(0))$ (both in blue), which yields the upper bound on $r_0-r_\theta$ in the form $\bar r_0-\tilde r_\theta\le  K\theta\sqrt{\eps}$.}
\label{fig:mon}
\end{figure}
%%%%%%%%%%%%%%%%%%%%%%%%%%%%%%%%%%%%%%%%%%%%%%%%%%

The third estimate is slightly more involved. Omit the index $i$, similarly as in the proof of Lemma~\ref{L:boundaryorder}(c), and consider points $z_0, z_\theta \in A_{R-1,\eps}$ with polar coordinates $(r_0,0)$ and $(r_\theta,\theta)$, respectively. Assuming that these correspond to neighbouring boundary points $z_i$ and $z_{i+1}$ from $\z(\gamma)$, we have that the intersection of the circles $\partial B(z_0)\cap \partial B(z_\theta)$ necessarily contains a boundary cusp $v\in \partial h(\gamma) \subset A_{R,\eps}$. Thus, $z_0, z_\theta\in \partial B(v)$. Without loss of generality, we may assume that the point $v$ belongs to the half-space with boundary $\ell_0$ and containing the ray $\ell_\theta$. Our aim is to maximise $\abs{r_0-r_\theta}$ with a fixed $\theta$.

First, consider the case when the point $v$ falls into the region of $A_{R,\eps}$ between the rays $\ell_0$ and $\ell_\theta$ with the difference $\abs{r_0-r_\theta}$ maximised when $v\in \ell_\theta$ or, symmetrically, $v\in\ell_0$. We begin by inspecting the particular instance of $v\in\ell_\theta$ with polar coordinates $(R-\eps,\theta)$. The circle $\partial B(v)$ intersects the rays $\ell_\theta$ and $\ell_0$ in points $z_\theta, z_0 \in A_{R-1,\eps}$ with radial coordinates $r_\theta=R-1-\eps$ and  $r_0$, respectively. To compute $r_0$, we rely on the condition $z_0\in\partial B(v)$ implying that $r_0$ is a root of quadratic equation $(r_0-(R-\eps)\cos\theta)^2 + ((R-\eps)\sin\theta)^2=1$ with the solution  
\be
\label{E:r_0}
r_0=(R-\eps-1)\bigl((1+\tfrac12(R-\eps)\theta^2+O(\theta^4)\bigr)= (1+\alpha) (R-\eps-1),
\ee
where we introduce the short-hand notation $\alpha= \tfrac12(R-\eps)\theta^2+O(\theta^4)$. By rescaling $r_0, r_\theta  \to r_0' = \frac{R-1+\eps}{r_0} r_0= R-1+\eps, r_\theta'=  \frac{R-1+\eps}{r_0} r_\theta$, we get that the difference $r_0-r_\theta=\alpha (R-1-\eps)$ increases to $r_0'-r_\theta'= \frac{R-1+\eps}{r_0} \alpha (R-1-\eps)=\frac{\alpha}{1+\alpha} (R-1+\eps)$, which in view of the second bound in \eqref{apriori} yields the necessary upper bound
\be
r_0-r_\theta\le \tfrac12(R-\eps)(R-1+\eps)\theta^2\le \tfrac12 R(R-1) \overline{K} \theta \sqrt{\eps} = K \theta \sqrt{\eps}.
\ee

Next, let us move to the case when the boundary cusp $v$ determined by $z_0$ and $z_\theta$ falls ``above'' $\ell_\theta$,  as illustrated in the Fig.~\ref{fig:mon}. Note that, necessarily, $r_0=\abs{z_0}\ge r_\theta=\abs{z_\theta}$. Let us shift the circle $\partial B(v)$ along the ray $\ell_0$ so that the point $z_0$ is shifted onto the point $\bar z_0\in \partial B_{R-1+\eps}(0)\cap \ell_0$. Use $\bar v$ for the centre of the shifted circle with $\bar v\in \partial B(\bar z_0)$. Further, consider the half-line $\ell'$ starting at $z_\theta$ parallel with $\ell_0$ and the points $\bar z_\theta=\ell_\theta\cap\partial B(\bar v)\cap A_{R-1,\eps}$ and $\bar z'=\ell'\cap\partial B(\bar v)\cap A_{R-1,\eps}$. Note that these two points are very close (diverging by the angle $\theta$ over the distance less than $2\eps$), $\abs{\bar z'-\bar z_\theta}\le \theta O(\eps)$. Also observe that the quadrangle formed by the points $z_0$, $\bar z_0$, $\bar z'$ and $ z_\theta$ is a parallelogram with $\bar r_0-r_0= \abs{\bar z'-z_\theta}=\abs{\bar v-v}$. Thus, $\bar r_0 -r_0=\bar r_\theta-  r_\theta+\theta O(\eps)$ implying $r_0-r_\theta = \bar r_0-\bar r_\theta+\theta O(\epsilon)$.

Finally, moving the point $\bar v$ on the circle $\partial B(\bar z_0)$ towards the point $\tilde v$ on the inner boundary of $A_{R,\eps}$, $\tilde v\in \partial B(\bar z_0)\cap \partial B_{R-\eps}(0))$, the point $\bar z_\theta \in \ell_{\theta}\cap \partial B(\bar v)$ moves on the ray $\ell_\theta$ in the direction of the origin to the point $\tilde z_\theta= \partial B(\tilde v)\cap \ell_\theta\cap A_{r-1,\eps}$, leading to the upper bound $r_0-r_\theta \le \bar r_0-\abs{\tilde z_\theta}+\theta O(\epsilon)$.

Recall that $\bar r_0=R-1+\eps$. It remains to compute $\tilde r_\theta=\abs{\tilde z_\theta}$. To compute $\abs{\tilde z_\theta}$, let us first get the point $\tilde v=(x,y)\in \partial B_{R-\eps}(0) \cap \partial B(\bar z_0)$ satisfying the equations
\be
\label{E:x^2+y^2}
x^2+y^2=(R-\eps)^2\  \text{ and }\  (x-(R-1+\eps))^2+y^2=1,
\ee
which yields 
\be
\label{E:x,y}
x=\frac{R(R-1)-\eps+\eps^2}{R-1+\eps}, 
\qquad y=\sqrt{(R-\eps)^2
- \Bigl(\frac{R(R-1)-\eps+\eps^2}{R-1+\eps} \Bigr)^2}
=2\frac{\sqrt{R(R-1)(1-\eps)\eps}}{R-1+\eps}.
\ee
Given that $\tilde z_\theta\in \partial B(\tilde v)$, we get that $\tilde v\in  \partial B(z_\theta)$ and thus 
\be
(x-\tilde r_\theta\cos\theta)^2+(y-\tilde r_\theta\sin\theta)^2=1.
\ee
In view of the first equation in \eqref{E:x^2+y^2}, this is equivalent to
\be
{\tilde r_\theta}^2-2(x\cos\theta+y\sin\theta)\tilde r_\theta+(R-\eps)^2-1=0.
\ee
Substituting for $x$ and $y$ from \eqref{E:x,y} and solving this equation and expanding into powers in $\eps$ and $\theta$, we get 
\be
\label{E:r-t}
\tilde r_\theta=\tilde r_\theta(R,\eps,\theta)= (R-1+\epsilon )-2 \sqrt{(R-1) R \epsilon} \theta + O(\theta\eps^{3/2})+ O(\theta^2),
\ee 
which yields $\bar r_0-\tilde r_\theta \le 2 \sqrt{(R-1)R\epsilon}\, \theta = K \theta \sqrt{\eps}$.
\end{proof}

In the sequel we will need six sums involving $\theta_i,\rho_i$, which we collect here.

\begin{definition}
\label{def:yzsums}
Fix $R\in(1,\frac{L}{\pi}+\tfrac12)$. Recall \eqref{polar} and \eqref{rhoidef}--\eqref{baridef}. Define
\be
\label{yzdef}
\begin{array}{llll}
&y_1(\z) = \sum\limits_{i=1}^n \theta_i^3, 
&y_2(\z) = \sum\limits_{i=1}^n (\rho_{i+1}-\rho_i)^2/\theta_i, 
&y_3(\z) = \sum\limits_{i=1}^n \overline{\rho_i}^2 \theta_i,\\[0.3cm]
&y_4(\z) = \sum\limits_{i=1}^n \overline{\rho_i}\,\theta_i,
&y_5(\z) = \sum\limits_{i=1}^n \theta_i\,\overline{\rho_i \cos t_i}, 
&y_6(\z) = \sum\limits_{i=1}^n \theta_i\,\overline{\rho_i \sin t_i}.
\end{array}
\ee
\end{definition}

\noindent
These expressions will appear in the expansions in Proposition~\ref{prop:expansion} and in Definition~\ref{def:bd} below. The following estimates will play a crucial role in the sequel. Note that $y_1(\z),y_2(\z),y_3(\z)$ are non-negative, while $y_4(\z),y_5(\z),y_6(\z)$ are not necessarily so.
 
\begin{corollary}[A priori estimates for sums in approximations]
\label{cor:apriori}
$\mbox{}$\\
Fix $R\in(1,\frac{L}{\pi}+\tfrac12)$. If $\z \in \mathcal{O}$ and $d_\text{{\rm H}}(\partial S(\z), \partial B_{R}(0)) \leq \eps$, then, as $\eps \downarrow 0$,
\be
\label{yzest}
0 \leq y_1(\z) \leq  2\pi{\overline K}^2 \eps, \ 
0 \leq y_2(\z) \leq  2\pi K^2 \eps,\ 
0 \leq y_3(\z) \leq  2\pi \eps^2,\ 
\abs{y_4(\z)}, \abs{y_5(\z)}, \abs{y_6(\z)} \leq 2\pi\eps.
\ee 
\end{corollary}

\begin{proof}
Using the bounds in \eqref{apriori}, we estimate
\be
\label{yzestalt}
\begin{array}{lll}
&0 \leq y_1(\z) \leq \bigl(\max\limits_{1 \leq i \leq n} \theta_i\bigr)^2 \sum\limits_{i=1}^n \theta_i \le 2\pi{\overline K}^2 \eps, 
&0 \leq y_2(\z) \leq \bigl(\max\limits_{1 \leq i \leq n} \frac{|\rho_{i+1}-\rho_i|}{\theta_i}\bigr)^2 \sum\limits_{i=1}^n \theta_i 
\le 2\pi K^2 \eps,\\ [0.3cm]
&0 \leq y_3(\z) \leq \bigl(\max\limits_{1 \leq i \leq n} |\rho_i|^2 \bigr) \sum\limits_{i=1}^n \theta_i \le 2\pi \eps^2,
&\abs{y_4(\z)}, \abs{y_5(\z)}, \abs{y_6(\z)}
\leq \bigl(\max\limits_{1 \leq i \leq n} |\rho_i|\bigr) \sum\limits_{i=1}^n \theta_i \le 2\pi\eps.
\end{array}
\ee 
\end{proof}

The following estimate will be needed later on.

\begin{lemma}
\label{ordermagn1}
$\sum_{i=1}^n  \theta_i\,\overline{\cos t_i} = O\left(\sum_{i=1}^n\theta_i^3\right)$.
\end{lemma}

\begin{proof}
We begin by writing
\be
\begin{aligned}
\sum_{i=1}^n  \theta_i\, \overline{\cos t_i}- \int_0^{2\pi} \d t\, \cos t 
= \sum_{i=1}^n \int_{t_i}^{ t_{i+1}} \d t\,\bigl(\overline{\cos t_i} - \cos t\bigr),
\end{aligned}
\ee
where in the left-hand side we have simply subtracted 0. Substituting $\tau =t-\bar t_i$, we get
\be
\begin{aligned}
&\int_{t_i}^{ t_{i+1}} \d t\,\bigl(\overline{\cos t_i} - \cos t\bigr)
= \int_{-{\frac{\theta_i}{2}}}^{\frac{\theta_i}{2}} \d\tau\,\Bigl(\tfrac12[\cos(\bar t_i+\tfrac{\theta_i}{2}) 
+ \cos(\bar t_i-\tfrac{\theta_i}{2})] - \cos(\bar t_i +\tau)\Bigr)\\
&= \int_{-\frac{\theta_i}{2}}^{\frac{\theta_i}{2}} \d\tau\,\Big(\cos\bar t_i\cos \tfrac{\theta_i}{2} 
- \cos(\bar t_i+\tau)\Big)
= \int_{-\frac{\theta_i}{2}}^{\frac{\theta_i}{2}} \d\tau\,\Big(\cos\bar t_i(\cos \tfrac{\theta_i}{2} 
- \cos \tau) +\sin\bar t_i\sin \tau\Big)\\
&= \cos\bar t_i \int_{-\frac{\theta_i}{2}}^{\frac{\theta_i}{2}} \d\tau\,(\cos \tfrac{\theta_i}{2} - \cos \tau)
= \cos\bar t_i \int_{-\frac{\theta_i}{2}}^{\frac{\theta_i}{2}} \left[\tfrac12(\tau^2-\theta_i^2) + O(\theta_i^4)\right]\, \d\tau
= O(\theta_i^3).
\end{aligned}
\ee
Here, when passing to the third line, we use that $\int_{-\theta_i/2}^{\theta_i/2} \sin\tau\, d\tau = 0$. 
\end{proof}

%%%

\subsection{Locality for boundary determination}
\label{bounddet}

In this section we present a crucial property of the boundary points, namely, their location is constrained only by the location of the two neighbouring boundary points (Proposition \ref{prop:locality} below). This property will be used in the proof of the lower bound in Theorem \ref{thm:zoom1} carried out in Section \ref{sec:wdmp}. 

\begin{definition}
\label{def:bd}
$\mbox{}$
\begin{itemize}
\item[{\rm (a)}] 
Let  $\z = (z_i)_{1 \leq i \leq n} = ((r_i\cos t_i,r_i\sin t_i))_{1 \leq i \leq n}$ be a sequence of points in $A_{R-1,\eps}$ in polar coordinates ordered by angle, i.e., $t_1<\cdots<t_n$ (and with $z_{n+1} = z_n$), such that for each pair $z_{i}, z_{i+1}$ there exists a $v_i\in \partial B(z_i) \cap \partial B(z_{i+1})\cap A_{R,\eps}$. Note that this condition implies that $S(\z)$ is well defined ($S(\z)$ is obtained by filling the inner part) and that $\partial S(\z)\subset A_{R,\eps}$. but does not necessarily imply that $\z\in\cO$, because possibly only a subset of $\z$ contributes to $\partial S(\z)$.
\item[{\rm (b)}] 
For any $1 \leq i,j \leq n$, let  $v_{i,j}\in \partial B(z_i)\cap  \partial B(z_j)$ be such that $\abs{v_{i,j}}=\max\{\abs{v}\colon\, v\in \partial B(z_i)\cap  \partial B(z_j)\}$ (if there is a tie, then take the intersection with minimal angle in polar coordinates). In polar coordinates, 
$v_{i,j}=(r_{i,j}\cos t_{i,j},r_{i,j}\sin t_{i,j})$, $r_{i,j}\in [R-\eps, R+\eps]$.
\begin{itemize}
\item[{\rm (i)}] 
A point $z_i$ is \emph{extremal} in $\z$ if $\partial B(z_i)  \cap \partial S(\z)\neq\emptyset$.
\item[{\rm (ii)}]  
A sequence $\z$ is a \emph{set of boundary points}, i.e., $\z\in \cO$, if each $z_i$, $1 \leq i \leq n$, are extremal.
\item[{\rm (iii)}] 
A triplet $(z_i,z_j,z_k)$ with $t_i<t_j<t_k$ is called an \emph{extremal triplet} if $( B(z_j)\setminus B_{R-\eps}(0))\setminus (B(z_i)\cup  B(z_k))\neq\emptyset$.
\end{itemize}
\end{itemize}
\end{definition}

We need the following lemma.

\begin{lemma}
\label{L:local}
Let $R\in(1+\frac{\eps}{1-2\eps},\frac{L}{\pi}+\tfrac12)$. Consider two points $x,x'\in A_{R-1,\eps}$. Set $\{v,v'\} = \partial B(x) \cap \partial B(x')$ and suppose that $\{v,v'\} \cap A_{R,\eps} \neq\emptyset$. Then the following hold:
\begin{itemize}
\item[{\rm (i)}] 
Exactly one of the vectors $\{v,v'\}= \partial B(x) \cap \partial B(x')$, say $v$, is in $A_{R,\eps}$. The other is in the interior of the ball $B_{R-\eps}(0)$.
\item[{\rm (ii)}]  
Let $x=(r\cos t, r\sin t)$, $x' = (r'\cos t', r'\sin t')$ and assume that $t<t'$. Let $H$ be the halfplane with the boundary consisting of the line $v v'$ containing the point $x'$. Then $B(x) \cap H \subset B(x')\cap H$.
\item[(iii)] 
A triplet $(z_i,z_j,z_k)$  with $v_{j,k} \in A_{R,\eps}$ is extremal if and only if $t_{i,j} <t_{j,k} $.
\end{itemize}
\end{lemma}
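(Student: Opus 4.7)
The plan is to establish the three claims in turn, using elementary planar geometry together with the standing assumption $R > 2 + \varepsilon/(1-\varepsilon)$. The two non-trivial steps are (i), where I convert the ring constraint into an algebraic inequality in $R$ and $\varepsilon$, and (iii), where I use (i)--(ii) to reduce extremality to the single angular inequality $t_{i,j} < t_{j,k}$.

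For (i) I would parametrize the two intersection points via the midpoint $\bar{x} = \tfrac12(x+x')$: writing $w = v - \bar{x}$, the identities $|v-x| = |v-x'| = 2$ force $w \perp (x-x')$ and $|w|^2 = 4 - \tfrac14|x-x'|^2 \leq 4$, while the other intersection is $v' = \bar{x} - w$. Expanding,
$$
|v|^2 + |v'|^2 = 2\bigl(|\bar{x}|^2 + |w|^2\bigr) \leq 2\bigl((R-2+\varepsilon)^2 + 4\bigr),
$$
using $|\bar x|\leq R-2+\varepsilon$. Taking the intersection lying in $A_{R,\varepsilon}$ to be $v$, so that $|v| \geq R - \varepsilon$, this gives $|v'|^2 \leq 2\bigl((R-2+\varepsilon)^2 + 4\bigr) - (R-\varepsilon)^2$, and a short manipulation shows that the right-hand side is strictly below $(R-\varepsilon)^2$ precisely when $(1-\varepsilon)(R-1) > 1$, i.e., $R > (2-\varepsilon)/(1-\varepsilon) = 2 + \varepsilon/(1-\varepsilon)$, which is our hypothesis. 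Hence $v'$ lies in the interior of $B_{R-\varepsilon}(0)$, which settles (i).

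Claim (ii) is immediate from the equal-radius configuration: since $|v-x| = |v-x'|$ and $|v'-x| = |v'-x'|$, both intersection points lie on the perpendicular bisector of the segment $xx'$, so this bisector coincides with the line through $v$ and $v'$, and the halfplane $H$ containing $x'$ is precisely the set of points closer to $x'$ than to $x$. Consequently, for any $p \in B_2(x) \cap H$ we have $|p-x'| \leq |p-x| \leq 2$, giving $p \in B_2(x') \cap H$.

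For (iii) I would apply (ii) twice to decompose $B_2(z_j) \setminus B_{R-\varepsilon}(0)$ into covered and uncovered pieces. Applied to the pair $(z_i,z_j)$ and to $(z_j,z_k)$, part (ii) shows that the portion of $B_2(z_j)$ on the $z_i$-side of the bisector of $z_i z_j$ is contained in $B_2(z_i)$, and the portion on the $z_k$-side of the bisector of $z_j z_k$ is contained in $B_2(z_k)$. The uncovered set $(B_2(z_j) \setminus B_{R-\varepsilon}(0)) \setminus (B_2(z_i) \cup B_2(z_k))$ therefore coincides with the wedge in $B_2(z_j) \setminus B_{R-\varepsilon}(0)$ lying on the $z_j$-side of both bisectors, bounded by the two bisector chords emanating from the outward intersections $v_{i,j}$ and $v_{j,k}$ (identified as such via (i)) and by the outer arc of $\partial B_2(z_j)$ joining them in $A_{R,\varepsilon}$. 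This wedge is non-empty precisely when the two bisector chords do not overlap on the outer side, which in polar coordinates around the origin translates to $t_{i,j} < t_{j,k}$. The main obstacle I foresee is the careful verification that the polar-angle comparison of $v_{i,j}$ and $v_{j,k}$ faithfully detects the orientation of the wedge in all near-degenerate situations; the a priori bounds from Lemma~\ref{L:boundaryorder}(b) on the angular spacing of neighbouring centers, combined with part (i) and the hypothesis $v_{j,k} \in A_{R,\varepsilon}$, should suffice to rule out pathological configurations when $\varepsilon$ is small.
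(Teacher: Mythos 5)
Your arguments for (i) and (ii) are correct. For (i), the parallelogram identity $|v|^2+|v'|^2 = 2(|\bar x|^2+|w|^2) \le 2\bigl((R-2+\varepsilon)^2+4\bigr)$ gives a cleaner, purely algebraic route than the paper's proof, which argues geometrically via a tangent line from $v$ to $B_{R-2+\varepsilon}(0)$ and reduces to $\ell^2=(R-\varepsilon)^2-(R-2+\varepsilon)^2=4(R-1)(1-\varepsilon)>4$; both yield the same threshold $R>2+\varepsilon/(1-\varepsilon)$, and your version avoids the paper's ``most dangerous case'' and auxiliary-shift steps. Part (ii) matches the paper's argument.

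Part (iii) has a genuine gap. You assert that the uncovered set $(B_2(z_j)\setminus B_{R-\varepsilon}(0))\setminus(B_2(z_i)\cup B_2(z_k))$ \emph{coincides} with the wedge of $B_2(z_j)$ on the $z_j$-side of the two bisectors, but part (ii) only gives a one-sided inclusion. Because the circles have equal radii, the lens $B_2(z_i)\cap B_2(z_j)$ is \emph{symmetric} about the bisector line through $v_{i,j}$ and $v'_{i,j}$, so $B_2(z_i)$ spills over to the $z_j$-side of that line and the wedge can be partly covered near $v_{i,j}$; hence ``wedge nonempty'' does not by itself yield ``uncovered set nonempty''. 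The bisector gives an exact dichotomy only on the \emph{circle} $\partial B_2(z_j)$: for $p$ with $|p-z_j|=2$, one has $p\in B_2(z_i)$ iff $p$ lies on the $z_i$-side of the bisector. That is what the paper's terse remark is using when it identifies $t_{i,j}=t_{j,k}$ as the configuration where $\partial B_2(z_j)$ merely \emph{touches} $\partial B_2(z_i)\cup\partial B_2(z_k)$ in a single point: the portion of the outer arc of $\partial B_2(z_j)$ strictly between $v_{i,j}$ and $v_{j,k}$ is the uncovered part, and it is nonempty iff $t_{i,j}<t_{j,k}$. To finish you still need to pass from the arc to the two-dimensional region (coverage of the outer arc forcing coverage of the lune $B_2(z_j)\setminus B_{R-\varepsilon}(0)$), which is exactly where the a priori controls of Lemma~\ref{L:boundaryorder} come in and where the ``near-degenerate'' worry you flag at the end actually lives.
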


\begin{proof} 
$\mbox{}$

%%%%%%%%%%%%%%%%%%%%%%%%%%%%%%%%%%%%%%%%
\begin{figure}[htbp]
\centering
\begin{overpic}[width=9cm,scale=.25,percent]
{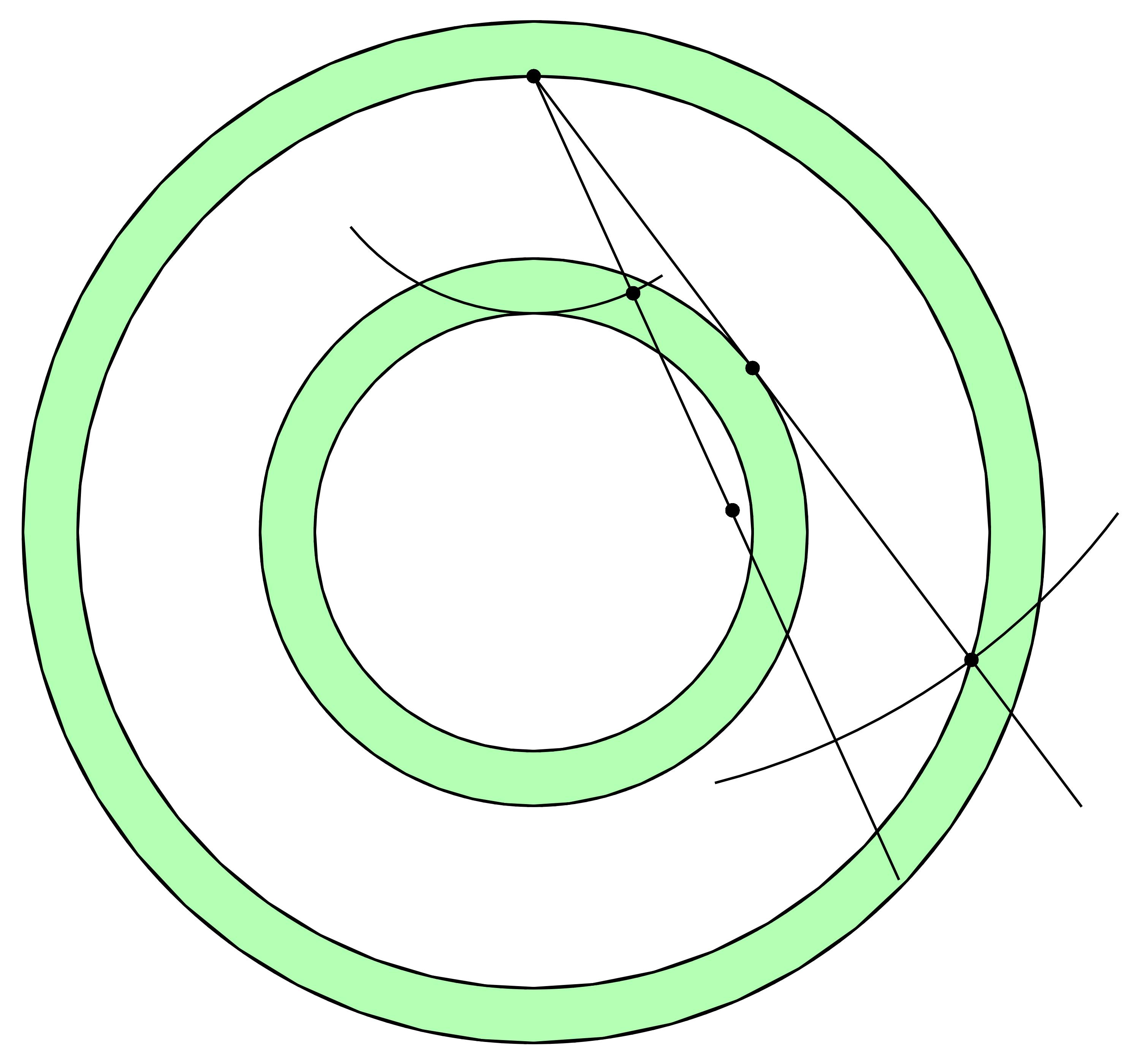}
\put(45,88) {$v$} 
\put(80,80) {$A_{R,\eps}$}
\put(14,60) {$A_{R-1,\eps}$} 
\put(25,75) {$\partial B(v)$} 
\put(60,72) {$\ell$} 
\put(68,61) {$\tau$} 
\put(80.5,35) {$\overline v$} 
\put(51,67.5) {$s$} 
\put(59,47.5) {$v'$} 
\put(60,20) {$\partial B_{2\ell}(v)$} 
\end{overpic}
\caption{\small Illustration of the fact that $v'\in B_{R-\eps}(0)$.}
\label{fig:volume}
\end{figure}
%%%%%%%%%%%%%%%%%%%%%%%%%%%%%%%%%%%%%%%%

\medskip\noindent
(i) Choosing a position of $v$ in $A_{R,\eps}$, we see that the points $x$ and $x'$ necessarily lie on the arc $\partial B(v)\cap A_{R-1,\eps}$. Thus, the barycentre $s=\frac12(x+x')$, which is the centre of symmetry of the union $B(x)\cup B(x')$, is contained in  $\partial  B(v)\cap A_{R-1,\eps}$. The point $v'$ is symmetric to $v$ with respect to the barycentre $s$: $v'= s - (v-s)$. Suppose, without loss of generality, that $v=(0,y)$ with $y\in [R-\eps, R+\eps]$. To show that $v'\in B_{R-\eps}(0)$, consider the most extremal case $y=R-\eps$ when $\abs{v}=R-\eps$. Indeed, for $y>R-\eps$ we could shift the points $x$ and $x'$, and thus also $v$ and $v'$, by the vector $u=(0,R-\eps-y)$. The shifted $x+u, x'+u$ lead to the shifted $v+u$ and $v'+u$. Note that necessarily $x+u, x'+u\in A_{R-1,\eps}$ since $x+u, x'+u\in \partial B(v)\cap A_{R-1,\eps}+u\subset \partial B(v+u)\cap A_{R-1,\eps}$ in view of the fact that the point  $v+u-1=(0,R-1-\eps)\in  A_{R-1,\eps}$. Now, if $v'+u\in B_{R-\eps}(0)$, then necessarily also $v'\in B_{R-\eps}(0)$. Consider, in addition, the extreme case when $s$ is asymptotically approaching the extremal point $B(v)\cap \partial B_{R-1-\eps}(0)$. Consider the tangent line from $v$ to the disk $B_{R-1+\eps}(0)$ touching in the point $\tau$. This tangent line intersects the circle $B_{R-\eps}(0)$ in $v$ and a point $\tilde v$ symmetric with respect $\tau$. Clearly, if the distance $\ell$ from $v$ to $\tau$ is larger than $1$, then the point $v'$ on the line $v s$ falls short of $\partial B_{R-\eps}(0)$ and we get the claim. To show that $\ell > 1$, we compute it from the rectangular triangle $v \tau 0$:
\be
\ell^2=(R-\eps)^2-(R-1+\eps)^2=(2R-1)(1-2\eps).
\ee
This yields $\ell >1$ if and only if $R>1+\frac{\eps}{1-2\eps}$. 

\medskip\noindent
(ii) The claim immediately follows by inspecting the union $B(x)\cup B(x')$ with the intersection points $\{v,v'\}=\partial B(x) \cap \partial B(x')$ and noting the symmetry with respect to the barycentre $s$.

\medskip\noindent 
(iii) Just observe that the condition  $t_{i,j}=t_{j,k}$ means that the circle $\partial B(x_j)$ is touching the set $\partial B(x_i)\cup \partial B(x_k)$ in the point $v_{i,k}$. 
\end{proof}
	
\begin{proposition}[Local characterisation of sets of boundary points]
\label{prop:locality}
$\mbox{}$\\
Let $R\in(1+\frac{\eps}{1-2\eps},\frac{L}{\pi}+\tfrac12)$ and $\z = ((r_i\cos t_i,r_i\sin t_i))_{1 \leq i \leq n}$ be a sequence of points in $A_{R-1,\eps}$, ordered by angle. Then the following two conditions are equivalent:\\
(i) The set $\z$ is a set of boundary points, $\z\in\cO$.\\
(ii) Every triplet $(z_{j-1},z_{j},z_{j+1})$, $1 \leq j \leq n$, is extremal.
\end{proposition}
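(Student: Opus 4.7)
Each direction is treated separately.

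\emph{$(i)\Rightarrow(ii)$.} Fix $j$. Since $z_j$ is extremal, $\partial B_2(z_j)\cap\partial S(z)\ne\emptyset$. Combining Definition~\ref{def:bd}(a) with Lemma~\ref{L:local}(i), $\partial B_2(z_j)$ meets each of $\partial B_2(z_{j\pm 1})$ at a unique outer point in $A_{R,\varepsilon}$, namely $v_{j-1}$ and $v_j$, and the intersection $\partial B_2(z_j)\cap\partial S(z)$ is the arc of $\partial B_2(z_j)$ in $A_{R,\varepsilon}$ joining them. A generic interior point $p$ of this arc satisfies $|p|>R-\varepsilon$ and lies strictly outside $B_2(z_{j-1})\cup B_2(z_{j+1})$; hence $(B_2(z_j)\setminus B_{R-\varepsilon}(0))\setminus(B_2(z_{j-1})\cup B_2(z_{j+1}))\ne\emptyset$ and the consecutive triplet is extremal.

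\emph{$(ii)\Rightarrow(i)$.} Fix $j$ and pick $p^*\in(B_2(z_j)\setminus B_{R-\varepsilon}(0))\setminus(B_2(z_{j-1})\cup B_2(z_{j+1}))$, provided by (ii). It suffices to show $p^*\notin B_2(z_k)$ for every $k\notin\{j-1,j,j+1\}$, since then $p^*\in B_2(z_j)\setminus h(z\setminus\{z_j\})$ witnesses the extremality of $z_j$. By the cyclic symmetry it is enough to rule out the existence of an index $k^*\ge j+2$ with $p^*\in B_2(z_{k^*})$; assume for contradiction such a $k^*$ exists and take the smallest one. Minimality gives $p^*\in B_2(z_j)\setminus B_2(z_{j+1})$ and $p^*\in B_2(z_{k^*})\setminus B_2(z_{k^*-1})$. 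Applying Lemma~\ref{L:local}(ii) to the two consecutive pairs $(z_j,z_{j+1})$ and $(z_{k^*-1},z_{k^*})$ --- whose boundary circles intersect $A_{R,\varepsilon}$ by Definition~\ref{def:bd}(a) --- produces
\begin{equation*}
|p^*-z_j|<|p^*-z_{j+1}|\qquad\text{and}\qquad|p^*-z_{k^*}|<|p^*-z_{k^*-1}|.
\end{equation*}

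\emph{The crux} is to turn these two distance inequalities into a contradiction. Writing $p^*=r_{p^*}e^{it_{p^*}}$, $z_l=r_l e^{it_l}$, expanding each squared distance by the cosine rule and using the sum-to-product identity $\cos\alpha-\cos\beta=-2\sin\tfrac{\alpha+\beta}{2}\sin\tfrac{\alpha-\beta}{2}$, one checks that to leading order in $\varepsilon$ the first inequality reduces to $t_{p^*}<(t_j+t_{j+1})/2$ and the second to $t_{p^*}>(t_{k^*-1}+t_{k^*})/2$; because $k^*\ge j+2$ forces $(t_{k^*-1}+t_{k^*})/2\ge(t_{j+1}+t_{j+2})/2>(t_j+t_{j+1})/2$, these two conditions are manifestly incompatible. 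The main technical point is to show that the $O(\varepsilon)$ corrections coming from the radial variations $r_l-r_{l+1}$ do not erode this separation: I would handle this using the a~priori estimate $|\rho_{l+1}-\rho_l|\le C_0\sqrt{\varepsilon}\,\theta_l$ of Proposition~\ref{prop:apriori}, which bounds the angular slack of each sign condition uniformly by a constant multiple of $\sqrt{\varepsilon}/[R(R-2)]$ (the division by $\sin(\theta_l/2)$ being absorbed by the same a~priori bound); together with the standing hypothesis $R>2+\varepsilon/(1-\varepsilon)$ this slack is strictly smaller than the midpoint gap $(t_{k^*-1}+t_{k^*})/2-(t_j+t_{j+1})/2$ for $\varepsilon$ sufficiently small, yielding the required contradiction.
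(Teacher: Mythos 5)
Your proof takes a genuinely different route from the paper's, and in the $(ii)\Rightarrow(i)$ direction the route has a gap that I don't think can be closed as written.

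The paper's pivot is Lemma~\ref{L:local}(iii), which says that the triplet $(z_{i},z_{j},z_{k})$ is extremal \emph{if and only if} $t_{i,j}<t_{j,k}$, an exact angular inequality for the angles of the intersection points $v_{i,j}$ (no error terms). For $(i)\Rightarrow(ii)$ the paper uses the contrapositive of this equivalence directly; for $(ii)\Rightarrow(i)$ it chains the strict inequalities $t_{j-1,j}<t_{j,j+1}<\cdots<t_{k-1,k}$ coming from extremality of \emph{all} the intermediate triplets, and derives a contradiction with the exact statement that all points of $B_2(z_k)\cap A_{R,\varepsilon}$ at angle $<t_{k-1,k}$ lie in $B_2(z_{k-1})$. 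Your $(i)\Rightarrow(ii)$ argument is a slightly longer but valid variant; the concern is with $(ii)\Rightarrow(i)$.

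There you replace this machinery by two distance inequalities from Lemma~\ref{L:local}(ii) and a perturbative conversion to angular inequalities around the midpoints $\tfrac12(t_j+t_{j+1})$ and $\tfrac12(t_{k^*-1}+t_{k^*})$. Two things go wrong. First, the \emph{midpoint gap} you want to beat equals $\tfrac12(t_{k^*-1}+t_{k^*})-\tfrac12(t_j+t_{j+1})$, which for $k^*=j+2$ is just $\tfrac12(\theta_j+\theta_{j+1})$. Proposition~\ref{prop:apriori} only provides \emph{upper} bounds $\theta_i=O(\sqrt{\varepsilon}\,)$; there is no lower bound, and the $\theta_i$ can be arbitrarily small. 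The \emph{angular slack} you need to absorb is genuinely of order $\sqrt{\varepsilon}$: the angle of $v_{j,j+1}$ deviates from $\tfrac12(t_j+t_{j+1})$ by roughly $\tfrac{4}{R(R-2)}\tfrac{\rho_{j+1}-\rho_j}{\theta_j}$, which the a~priori estimate only bounds by $O(\sqrt{\varepsilon}\,)$, \emph{not} by $O(\sqrt{\varepsilon}\,\theta_j)$. So ``slack $<$ gap'' can fail, and the hypothesis $R>2+\tfrac{\varepsilon}{1-\varepsilon}$ gives no leverage on the $\theta_i$. Second, and more structurally, you use hypothesis $(ii)$ only for the single triplet at $j$; you never invoke extremality of the intermediate triplets $(z_{j+1},z_{j+2},z_{j+3}),\dots$. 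But the conclusion $t_{j,j+1}<t_{k^*-1,k^*}$ is precisely the telescoped consequence of those intermediate conditions via Lemma~\ref{L:local}(iii); it is \emph{not} a consequence of the midpoint ordering alone once one accounts for the $O(\sqrt{\varepsilon}\,)$ displacement of each $t_{i,i+1}$ from its midpoint. So even apart from the error bookkeeping, the argument is using strictly less than $(ii)$ and cannot establish the needed ordering of the $v$-angles in general. The fix is essentially to do what the paper does: replace the midpoint/Taylor argument by Lemma~\ref{L:local}(iii) and chain the resulting exact inequalities.
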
  

\begin{proof}
$\mbox{}$

\medskip\noindent
(i) $\implies$ (ii):\\
If (ii) does not hold, then there exist a $j$ such that the triplet $(z_{j-1},z_{j},z_{j+1})$ is not extremal. According to Lemma~\ref{L:local}(iii), this implies that  $z_j$ is not extremal in $\z$ and that condition (i) is not satisfied. 

\medskip\noindent
(ii) $\implies$ (i):\\
If (i) does not hold, then there exist either $k>j$ or $i<j$ such that either $v_{j-1,j} \in  B(z_{k}) \cap A_{R,\eps}$ or $v_{j-1,j} \in B(z_{i}) \cap A_{R,\eps}$. Consider the former case. We will show that, necessarily, one of the triplets  
\be
(z_{j-1}, z_j z_{j+1}),\, (z_j, z_{j+1}, z_{j+2}), \dots, (z_{k-2}, z_{k-1}, z_{k})
\ee 
is not extremal, which breaks (ii). Indeed, if all these triplets were extremal, then we would have $t_{j-1,j}<t_{j,j+1} <t_{j+1,j+2}<\dots < t_{k-2,k-1}<t_{k-1,k}$. Just observe that $t_{j-1,j}<t_{j,j+1}$ because the triplet $(z_{j-1}, z_j, z_{j+1})$ is extremal, $t_{j,j+1} <t_{j+1,j+2}$ because the triplet $(z_j, z_{j+1}, z_{j+2})$ is extremal, etc. Now, given that $t_{k-1}<t_k$ and the fact that the arcs $\partial B(z_k)\cap A_{R,\eps}$ and $\partial B(z_{k-1})\cap A_{R,\eps}$ intersect only once at $v_{k-1,k}$,  all points  $x=(t,\varphi)\in B(z_k)\cap A_{R,\eps}$ with $t< t_{k-1,k}$ belong to $B(z_{k-1})$. On the other hand, the point $v_{j-1,j} = \partial B(z_{j-1}) \cap \partial B(z_j)\cap A_{R,\eps}$ does not belong to $B(z_{j+1}), B(z_{j+2}), \dots, B(z_{k-1}), B(z_{k})$. This is in contradiction with the condition that $v_{j-1,j}\in B(z_{k})$. 
\end{proof}

Proposition~\ref{prop:locality} shows that $\1_{\cO \cap \cD_\eps(0)}(\z)$ is a product of indicators involving triples of successive boundary points. This means that the constraint given by $\cO\cap\cD_\eps(0)$ is \emph{nearest-neighbour} only. 

%%%

\subsection{Volume and surface approximation}  
\label{sec:taylor}

In this section we derive approximations of two key quantities for $\z\in\mathcal O$ with $d_{\text{\rm{H}}}(\partial S(\z),\partial B_{\Rc}(0))\leq\eps$: 
\begin{itemize}
\item[$\bullet$]
$\mathcal H(\z)= (|S(\z)|-\ka|S^-(\z)|)- (\pi \Rc^2-\kappa\pi (\Rc-1)^2)$, the rate function for \emph{surface fluctuations} (see Fig.~\ref{fig:cleansausage}).
\item[$\bullet$]
$|S(\z)| - \pi \Rc^2$, the volume of the halo minus the volume of the \emph{critical disk}.
\end{itemize} 
For fixed $\kappa \in (1,\infty)$, we write $\Rc=\Rc(\kappa)$, assume that $\Rc\in(1+\frac{\eps}{1-2\eps},\frac{L}{\pi}+\tfrac12)$, and introduce explicit constants 
\be 
\label{C123def}
C_1 = \frac1{24}\Rc^2(\Rc-1), \quad  C_2 = \frac{1}{2(\Rc-1)}, \quad C_3 = \frac12, \quad C_4=\Rc.
\ee 

\begin{proposition}[Volume and surface approximation] 
\label{prop:expansion}
If $\z\in\mathcal O$ and $d_{\text{\rm{H}}}(\partial S(\z),\partial B_{\Rc}(0))\leq\eps$, then as $\eps\downarrow 0$, 
\be 
\begin{aligned}
\label{expansions}
\mathcal H(\z) 
&= \big(|S(\z)|-\ka|S^-(\z)|\big)-\big(\pi \Rc^2 - \kappa \pi (\Rc-1)^2\big)
= C_1^\eps y_1(\z) + C_2^\eps y_2(\z)-C_2 y_3(\z),\\
|S(\z)| - \pi \Rc^2
&= -C_1^\eps y_1(\z) + C_2^\eps y_2(\z) + C_3 y_3(\z) + C_4 y_4(\z),
\end{aligned}
\ee 
where $C_k^\eps = C_k\, [1+O(\eps)]$, $k=1,2$.
\end{proposition}

\begin{proof}
The proof comes in 4 steps. 

\medskip\noindent
{\bf 1.} 
First, we have $|S^-(\z)|=\sum_{i=1}^n V^-_i$, 
with  
\be
\label{eq:V-new}
V^-_i= A_i-B_i = \tfrac{1}{2} r_i r_{i+1}\sin\theta_i - \tfrac{1}{2}(\varphi_i-\sin\varphi_i), \qquad 1 \leq i<n,
\ee
where $A_i$ is the area of the triangle $z_i\, z_{i+1}\, 0$ and  $B_i$ is the area of the circular segment bounded by the line segment $z_i\, z_{i+1}$ and the arc of the circle $\partial B(v_i)$ subtending the angle $\varphi_i$ between the line segments $ v_i\, z_i$ and $ v_i\, z_{i+1}$ touching at the boundary cusp $v_i$. We use $\theta_i$ to denote the angle between the rays $0z_i$ and $0z_{i+1}$.

According to  Lemma~\ref{lem:boundary}, 
\be\label{eq:Delta}
\Delta(\z)=\mathcal H^1(\partial S^-(\z)) +\pi= \sum_{i=1}^n (\varphi_i + \tfrac12\theta_i), 
\ee
where $ \sum_{i=1}^n \varphi_i $ represents the length of the boundary $\partial S^-(\z)$ consisting of the union of arcs of circles $\partial B(v_i)$. Hence
\be
\abs{S(\z)}= \sum_{i=1}^n \tfrac{1}{2} (r_i r_{i+1}\sin\theta_i +\sin\varphi_i +\varphi_i+\theta_i).
\ee
In addition to the shorthand notation $V_i^-= \tfrac{1}{2} r_i r_{i+1}\sin\theta_i - \tfrac{1}{2}(\varphi_i-\sin\varphi_i)$, we introduce
\be
\label{E:Vi}
V_i = \tfrac{1}{2} (r_i r_{i+1}\sin\theta_i +\sin\varphi_i +\varphi_i+\theta_i) = V_i^-+ \varphi_i+ \tfrac{1}{2} \theta_i.
\ee
Our aim is to expand all the relevant terms in powers of $\theta_i$ and $\rho_{i+1} - \rho_{i}$.

%%%%%%%%%%%%%%%%%%%%%%%%%%%%%%%%%%%%%%%%
\vspace{0.3cm}
\begin{figure}[htbp]
\centering
\begin{overpic}[width=10cm,scale=.25,percent]
{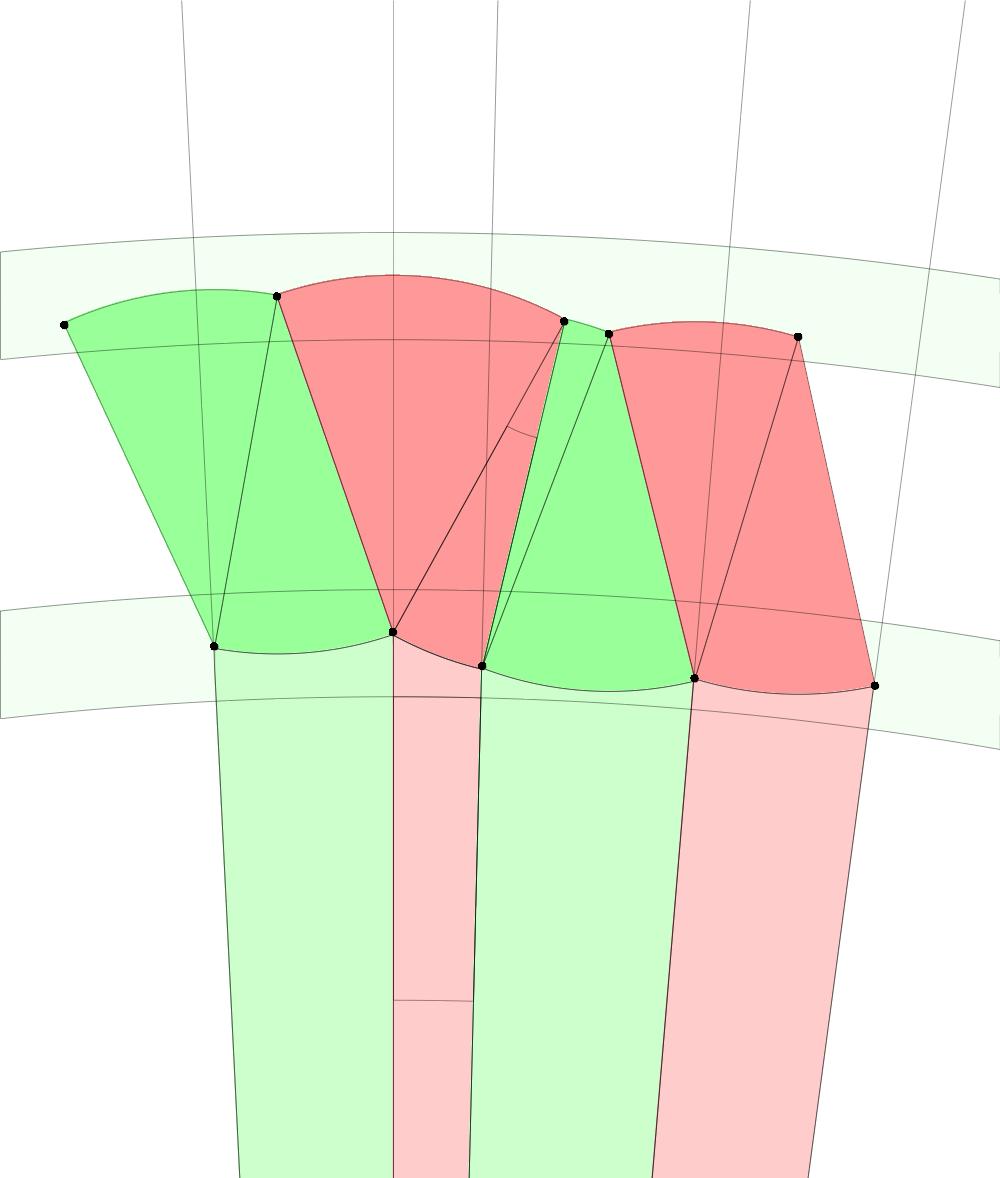}
\put(1.5,74.5) {$v_{i-1}$}
\put(20,76,5) {$v_{i}$}
\put(45,75) {$v_{i+1}$}
\put(51,73.5) {$v_{i+2}$}
\put(67,72.5) {$v_{i+3}$}
\put(41,58) {$\varphi_{i}$}
\put(35,25) {$V_{i}^{-}$}
\put(36,10) {$\theta_{i}$}
\put(11.5,44) {$z_{i-1}$}
\put(30.5,43) {$z_{i}$}
\put(35,41.7) {$z_{i+1}$}
\put(52.7,39.4) {$z_{i+2}$}
\put(71,39) {$z_{i+3}$}
\end{overpic}
\vspace{0.5cm}
\caption{\small The area $S^-(\z)$ splits into the union of fragments  with areas $V_{i}^{-} = \tfrac{1}{2} r_i r_{i+1}\sin\theta_i - \tfrac{1}{2}(\varphi_i-\sin\varphi_i)$ (in lighter colours), while the  boundary region  (darker) splits into the fragments of the areas $\varphi_i+\tfrac12\theta_i$, yielding $\abs{S(\z)} = \abs{S^-(\z)}+\Delta(\z) = \sum_{i=1}^n \tfrac{1}{2} (r_i r_{i+1}\sin\theta_i + \sin\varphi_i +\varphi_i + \theta_i)$.}
\label{fig:volumealt}
\end{figure}
%%%%%%%%%%%%%%%%%%%%%%%%%%%%%%%%%%%%%%%%

\medskip\noindent
{\bf 2.}
Write
\be
\label{twodifferences}
\begin{aligned}
V_i - \tfrac{1}{2} \Rc^2\theta_i= I_i ^{(1)}+ I_i^{(2)},\\
V^-_i - \tfrac{1}{2} (\Rc-1)^2\theta_i=I_i^{(3)}+ I_i^{(4)},
\end{aligned}
\ee
where 
\be
\label{eq:vvmin}
\begin{array}{lll}
&I_i^{(1)} = V_i - \frac 1 2 (\overline{r_i}+1)^2\theta_i,
\qquad
&I_i^{(2)} = \frac12 (\overline{r_i}+1)^2\theta_i-\frac12\Rc^2\theta_i,\\[0.2cm]
&I_i^{(3)} = V^-_i-\frac{1}{2}\overline{r_i}^2\theta_i,
\qquad
&I_i^{(4)} = \frac{1}{2}\overline{r_i}^2\theta_i-\frac12 (\Rc-1)^2 \,  \theta_i .
\end{array}
\ee
To evaluate the two quantities in \eqref{expansions}, we need to compute the relevant approximations of the expressions
\be
\label{E:HS}
\mathcal H(\z) = \sum_{i=1}^n (I_i^{(1)}-\kappa I_i^{(3)}) + \sum_{i=1}^n (I_i^{(2)}-\kappa I_i^{(4)}),
\qquad \abs{S(\z)} -\pi \Rc^2 = \sum_{i=1}^n I_i^{(1)} + \sum_{i=1}^n I_i^{(2)}.
\ee 
For the terms $I_i^{(2)}$ and $I_i^{(4)}$, we use the definition $\rho_i=r_i-(\Rc-1)$ and $\overline{\rho_i}=\frac{1}{2}(\rho_i+\rho_{i+1})$ to get the identities 
\be 
\label{eq:i2i4}
I_i^{(2)}  = \tfrac12\overline{\rho_i} \theta_i \bigl(\overline{\rho_i}+2\Rc\bigr), \qquad 
I_i^{(4)}  = \tfrac12 \overline{\rho_i}\,\theta_i\bigl( \overline{\rho_i}+2(\Rc-1)\bigr), 
\ee 
yielding, in view of the relation $\kappa=\frac{\Rc}{\Rc-1}$, the equality
\be 
\label{E:i2i4comb}
I_i^{(2)} -\kappa I_i^{(4)} =- \frac{\kappa-1}2 \overline{\rho_i}^2 \theta_i.
\ee 

\medskip\noindent
{\bf 3.}
The terms $I_i^{(1)}$ and $I_i^{(3)}$ are a bit more elaborate and require an expansion of some terms.  Beginning with $I_i^{(3)}$, we use \eqref{eq:V-new} to rewrite the definition in \eqref{eq:vvmin} as
\be
\label{I3exp}
I_i^{(3)} = \tfrac{1}{2} (r_i r_{i+1}\sin\theta_i - \varphi_i+\sin\varphi_i 
-\overline{r_i}^2\theta_i).
\ee
Then, using 
\be
\label{E:rr}
r_i r_{i+1}= (\overline{r_i}-\tfrac12 (\rho_{i+1}-\rho_i) )(\overline{r_i}+\tfrac12(\rho_{i+1}-\rho_i))
=\overline{r_i}^2-(\tfrac{\rho_{i+1}-\rho_i}2)^2,
\ee
we write the difference between the first and the last term in \eqref{I3exp} as
\begin{align}
\label{eq:secpar}
\tfrac{1}{2} r_i r_{i+1}\sin\theta_i-\tfrac{1}{2}\overline{r_i}^2\theta_i
&=\tfrac12\overline{r_i}^2(\sin\theta_i-\theta_i) 
- \tfrac{1}{2} \bigl(\tfrac{\rho_{i+1}-\rho_i}{2}\bigr)^2\sin\theta_i\\\notag
&=-\tfrac{1}{12}\overline{r_i}^2(\theta_i^3+O(\theta_i^5))
- \tfrac{1}{2}\bigl(\tfrac{\rho_{i+1}-\rho_i}{2}\bigr)^2\sin\theta_i.
\end{align}
With the shorthand notation 
\begin{equation}
\label{short}
u_i=\abs{z_{i+1}-z_i}
\end{equation}
for the length of the line segment $z_i\,z_{i+1}$, we express this in terms of the angle $\varphi_i$ in the isosceles triangle $z_i\, z_{i+1}\, v_{i+1}$,
\be
\label{E:uisin}
u_i= 2\sin(\tfrac{\varphi_i}{2}).
\ee
Inverting \eqref{E:uisin}, we get
\be
\label{varphii}
\varphi_i=u_i+\tfrac{1}{24}u_i^3+ O(u_i^5)
\ee
and thus
\be
\tfrac12 (\varphi_i-\sin\varphi_i)=\tfrac{1}{12}u_i^3+O(u_i^5),
\ee
which together with \eqref{I3exp} and \eqref{eq:secpar} implies
\be
\label{eq:expvmin}
I_i^{(3)} = -\tfrac{1}{12}\overline{r_i}^2\theta_i^3
-\tfrac{1}{12}u_i^3-\tfrac{1}{2}\bigl(\tfrac{\rho_{i+1}-\rho_i}{2}\bigr)^2\sin\theta_i
+ O(u_i^5) +\overline{r_i}^2 O(\theta_i^5).
\ee
Expressing now $u_i$ from the triangle $0z_iz_{i+1}$ and using \eqref{E:rr}, we get
\begin{multline}
\label{trifirst}
u_i=(r_i^2+ r_{i+1}^2 - 2 r_i r_{i+1} \cos\theta_i)^{1/2}=\bigl((\rho_i-\rho_{i+1})^2 +
2 (\overline{r_i}^2-(\tfrac{\rho_{i+1}-\rho_i}2)^2) (1-\cos\theta_i)\bigr)^{1/2}=\\
= \bigl(4 \overline{r_i}^2 \sin^2(\tfrac{\theta_i}{2}) +( \rho_{i+1}-\rho_i )^2\cos^2(\tfrac{\theta_i}{2})\bigr)^{1/2}.
\end{multline}
 Approximating $u_i$ by $\widetilde u_i= 2  \overline{r_i} \sin(\tfrac{\theta_i}{2})$ ($\widetilde u_i$ representing  the base of isosceles triangle with legs $\overline{r_i}$ and the vertex angle $\theta_i$), we express the difference as
\be
\label{E:Eidef}
u_i -\widetilde u_i=2  \overline{r_i} \sin(\tfrac{\theta_i}{2})
\Biggl(
\sqrt{1+\frac{( \rho_{i+1}-\rho_i )^2\cos^2(\tfrac{\theta_i}{2})}
{4\overline{r_i}^2\sin^2(\tfrac{\theta_i}{2})}}-1\Biggr). 
\ee
Whenever $\z\in\mathcal O$ and $d_{\text{\rm{H}}}(\partial S(\z),\partial B_{\Rc}(0))\leq\eps$, we can use Proposition~\ref{prop:apriori}, asserting that $\overline{r_i}\ge \Rc-1-\eps$ and $\abs{\rho_{i+1}-\rho_i}/\theta_i \le K(\Rc)\sqrt{\eps}$. Assuming further that $\eps$ is  small enough so that $\Rc -1\ge \eps^{1/4}$, we get the bound
\be
\frac{( \rho_{i+1}-\rho_i )^2\cos^2(\frac{\theta_i}{2})}
{4\overline{r_i}^2\sin^2(\frac{\theta_i}{2})}\le
 \frac{4\Rc(\Rc-1)}{4 (\eps^{1/4}-\eps)^2} 4\eps\le O(\eps^{1/2}).
\ee
Hence, using that $\frac{\cos^2(\theta_i/2)}{\sin(\theta_i/2)} = \frac{2}{\theta_i}\bigl(1-\frac5{24}\theta_i^2+O(\eps^2)\bigr)$, we get
\be
\label{E:Eiexp}
u_i -\widetilde u_i = 2\,\overline{r_i}\sin(\tfrac{\theta_i}{2}) \tfrac12 \frac{( \rho_{i+1}-\rho_i )^2\cos^2(\tfrac{\theta_i}{2})}
{4\overline{r_i}^2\sin^2(\tfrac{\theta_i}{2})}\,[1+O(\eps)]
= \tfrac12\frac{(\rho_{i+1}-\rho_i)^2}{\overline{r_i}\theta_i}\,[1+O(\eps)].
\ee
For $u_i$ determined by \eqref{E:Eidef} we thus get, up to the order $O(\eps^{3/2})$, 
\be
\label{uiexp}
u_i= \overline{r_i}\left(\theta_i-\tfrac{\theta_i^3}{24} [1+O(\eps)]
+ \tfrac12 \frac{(\rho_{i+1}-\rho_i)^2} 
{\overline{r_i}^2\theta_i} [1+O(\eps)]\right)
\ee
and thus also
\be
\label{trisec}
 u_i^3 = (\bar  r_i \theta_i)^3[1 +  O(\eps)].
\ee 
Furthermore, considering the definitions of $I_i^{(1)}$ and $I_i^{(3)}$ in \eqref{eq:vvmin} and substituting in $I_i^{(1)}$ the term $V_i$ by $V_i^-$ from the equality \eqref{E:Vi}, we get
\be
\label{eq:I1-I3}
I_i^{(1)}=I_i^{(3)}+\varphi_i - \overline{r_i}\theta_i.
\ee
Using \eqref{uiexp} and  \eqref{trisec} jointly with \eqref{varphii}, we get for the term $\varphi_i - \overline{r_i}\theta_i$ above that
\be
\varphi_i - \overline{r_i} \theta_i = \Bigl( \tfrac12\tfrac{(\rho_{i+1}-\rho_i)^2}
{\overline{r_i}\theta_i}+\tfrac{\overline{r_i}}{24}(\overline{r_i}^2-1)\theta_i^3 \Bigr)[1+O(\eps)].
\ee
Inserting \eqref{trisec} into \eqref{eq:expvmin}, we get
\be
I_i^{(3)} = -\frac{\overline{r_i}^2(\overline{r_i}+1)}{12}\theta_i^3
-\frac{1}{2}\bigl(\frac{\rho_{i+1}-\rho_i}{2}\bigr)^2\sin\theta_i
+ O(\eps^{5/2}).
\ee
Combining \eqref{eq:I1-I3} with the last two equations, absorbing the term $\bigl(\rho_{i+1}-\rho_i\bigr)^2\theta_i$ into $\frac{(\rho_{i+1}-\rho_i)^2}{\theta_i}O(\eps)$, and using that $-\frac{1}{12}\overline{r_i}^2(\overline{r_i}+1)+\tfrac{1}{24}\overline{r_i}(\overline{r_i}^2-1) = -\tfrac{1}{24}\overline{r_i}(\overline{r_i}+1)^2$, we obtain
\be
\label{Ii1}
I_i^{(1)}=\Bigl(\frac12\frac{(\rho_{i+1}-\rho_i)^2}
{\overline{r_i}\theta_i}-\frac{\overline{r_i}(\overline{r_i}+1)^2}{24} \theta_i^3
\Bigr)[1+O(\eps)]
\ee
and
\be
\label{Ii13}
I_i^{(1)}-\kappa I_i^{(3)}=\Bigl(\frac12\frac{(\rho_{i+1}-\rho_i)^2}{\overline{r_i}\theta_i}
+\bigl(-\overline{r_i}(\overline{r_i}+1)^2+2 \kappa\overline{r_i}^2(\overline{r_i}+1)\bigr) \frac{\theta_i^3}{24}
\Bigr)[1+O(\eps)].
\ee

\smallskip\noindent
{\bf 4.}
Using that $\overline{r_i} = \Rc-1 + O(\eps)$ and $\kappa(\Rc-1)=\Rc$, and applying Proposition~\eqref{prop:apriori} once more while referring to  the definitions in \eqref{C123def}, we arrive at
\be
\label{eq:I1fin}
I_i^{(1)}  = \Bigl(-C_1 \theta_i^3+C_2 \frac{(\rho_{i+1}-\rho_i)^{2}}{\theta_i}\Bigr)[1+O(\eps)]
\ee
and
\be
\label{eq:I1I3fin}
I_i^{(1)}-\ka I_i^{(3)} = \Bigl(C_1\theta_i^3 +C_2 \frac{(\rho_{i+1}-\rho_i)^{2}}{\theta_i} \Bigr) [1+O(\eps)].
\ee
Combining \eqref{E:HS} with \eqref{eq:I1fin}, \eqref{eq:I1I3fin}, \eqref{eq:i2i4}, \eqref{E:i2i4comb}, \eqref{C123def}, and summing over $i$, we arrive at the claims in \eqref{expansions}.
\end{proof} 

%%%

\subsection{Geometric centre of a droplet}
\label{app:cenmass}

In the following, it will be useful to classify connected outer contours $\z$ by their uniquely and deterministically defined \emph{centre} 
$\cC(\z) = (\Sigma_1(\z), \Sigma_2(\z))\in \mathbb T$.

\begin{definition}[Centre of droplet]
\label{def:bdalt}
$\mbox{}$
For any $\z = \{z_1,\ldots,z_n\}\in\mathcal O$, consider the point $v=(v_1,v_2)\in \mathbb T$ defined as the centre of the smallest rectangle $R(\z)$, with horizontal and vertical sides, containing the set points from $\z$. Let $\z-v$ be the shift of $\z$ by $-v$, $\z-v= \{z_1-v,\ldots,z_n-v\}$, such that the rectangle $R(\z-v)$ has its centre in the origin. The \emph{geometric centre} of the halo shape $S(\z)$ is \emph{defined} as  
\be
\mathcal C(\z) = (\Sigma_1(\z),\Sigma_2(\z))
\ee
with (recall Definition \ref{def:yzsums})
\be
\label{CMexp}
\Sigma_1 (\z)=v_1+\tfrac{1}{2\pi} \,y_5(\z-v),  
\qquad
\Sigma_2 (\z)=v_2+\tfrac{1}{2\pi} \,y_6(\z-v).
\ee
\end{definition}

It can be shown that the centre $\mathcal C(\z)$ is an approximation of the barycentre of the boundary $\partial P(\z)$ of the polygon $P(\z)$ obtained by connecting the boundary points $z_i$ with the line segment of length $u_i$ connecting $z_i$ and $z_{i+1}$ (see Fig.~\ref{fig:volume}). However, this fact is not really needed for our purpose. The above definition is adopted for mathematical convenience -- its single role is to be used as a technical tool to achieve a discretisation that is needed to separate the set of all halo shapes $S(\z)$ into subsets with the geometric centre in a well defined area. 

In addition, the principal feature of crucial use is that the centre is given by the discretised Fourier coefficients $\frac1{2\pi} y_5(\z-v)$ and $\frac1{2\pi} y_6(\z-v)$. 

\begin{lemma}[A priori estimate of the centre location]
\label{lem:aprvolsurcen}
$\mbox{}$\\
Let $\z\in\mathcal O$ and $d_\text{{\rm H}}(\partial S(\z),\partial B_{R_c}(x))\leq \eps$ with $x\in \mathbb T$. Then, as $\eps\downarrow 0$,
\be
\label{E:cCeps}
\norm*{\cC(\z)-x}_{\infty} = \max\{\Sigma_1(\z)-x_1, \Sigma_2(\z)-x_2\} \le 2 \eps.
\ee
For $x=0$, this equals
\be
\label{E:cCeps0}
\norm*{\cC(\z)}_{\infty} = \max\{\Sigma_1(\z), \Sigma_2(\z)\} \le \eps.
\ee
\end{lemma}

\begin{proof}
If $d_\text{{\rm H}}(\partial S(\z),\partial B_{\Rc}(v))\leq \eps$, then the lengths of both sides of the rectangle $R(\z)$ belong to the interval $[2(\Rc -1)-2\eps, 2 (\Rc-1)+2\eps]$ and thus the position of its centre $v$ from $x$ satisfies $\norm*{v-x}_{\infty} \le \eps$. Given that, by \eqref{yzestalt}, $\abs{y_5(\z-v) }\le 2\pi \eps$ and $\abs{y_6(\z-v) }\le 2\pi \eps$, the claim \eqref{E:cCeps} follows. For $x=0$, the claim  \eqref{E:cCeps0} follows directly from \eqref{yzestalt}.
\end{proof}

%%%%%%% SECTION 5 %%%%%%%%%%%%%%%%%%%%%%%%%%%%

\section{Stochastic geometry II: representation of probabilities as surface integrals} 
\label{sec:surface}

In Section~\ref{diskdrop} we prove that the contribution to the free energy coming from halos that are \emph{not} close to a critical disk either in volume or in Hausdorff distance is negligible (Lemma~\ref{lem:ng}), and that the centre of the critical disk can be placed at the origin (Lemma~\ref{lem:centering}). In Section~\ref{sec:inthaloshape} we show how to rewrite the integral over halo shapes in \eqref{Diri6} representing the free energy of the critical droplet, by tracking the \emph{boundary points} (Lemma~\ref{lem:coarsed} and Corollary~\ref{cor:recapI}). In Section~\ref{sec:auxiliary} we introduce \emph{auxiliary random variables}, list some of their properties (Lemma~\ref{lem:rhoangleint}), and rewrite the integral over halo shapes as an expectation of a certain exponential functional over these auxiliary random variables (Proposition~\ref{prop:keyrep}). The latter will serve as the starting point for the analysis in Sections~\ref{sec:prep}--\ref{proofmoddev}.  

%%%

\subsection{Only disk-shaped droplets matter} 
\label{diskdrop}

For $\delta, \eps>0$, define the events 
\be
\label{eventepsalt}
\begin{aligned}
\cV_\delta
&= \big\{\gamma \in \Gamma\colon\,|V(\gamma)-\pi \Rc^2| \leq \delta\big\},\\
\cD_\eps
&= \big\{\gamma \in \Gamma\colon\, d_{\text{\rm{H}}}(\partial h(\gamma),
\partial B_{\Rc}(x)) < \eps \mbox{ for some } x\in\T\big\},
\end{aligned}
\ee
i.e., the events where the the halo is $\delta$-close to a critical disk in volume and $\eps$-close to a critical disk in Hausdorff distance. Similarly as in Section~\ref{sec:heur-surfred}, where we introduced the notation $\cD_\eps(0)$ both for the configuration $\gamma \in \cD_\eps(0)$ and its boundary points $\z \in \cD_\eps(0)$, to lighten the notation we also write $\z\in\cV_\delta$ and $\z\in \cD_\eps$.

From Lemma~\ref{lem:admissible}(\ref{noholes}) we know that,  on the event $\cD_\eps$ with $\eps$ small enough, $h(\gamma)^-$ is connected and simply connected. First we check that we may discard the event $\cV_\delta \cap \cD_\eps^c$ with $\cD_\eps^c = \Gamma \setminus \cD_\eps$. 

\begin{lemma}
\label{lem:ng}
For every $C \in (0,\infty)$ and all $\eps>0$ small enough,
\be 
\label{eq:ng}
\mu_\beta\Bigl( |V(\gamma)- \pi \Rc^2|\leq C \beta^{-2/3}, \, \gamma \notin \cD_\eps \Bigr) 
\leq \e^{-\beta[1+o(1)][I^*(\pi \Rc^2) + \eps^2\frac{\ka-1}{4\pi}]}, \qquad \beta \to \infty.
\ee
\end{lemma}

\begin{proof}
Fix $C \in (0,\infty)$ and $\eps>0$. Note that $\cV_{C \beta^{-2/3}} \subset \cV_\eps$ for $\beta$ large enough. Since $\cV_\eps \cap \cD_\eps^c$ is closed, we can use the large deviation principle for the halo shape and the halo volume, derived in Theorems \ref{thm:ldp-halo} and \ref{thm:ldp-volume}, to bound 
\be
\label{ldpAB}
\limsup_{\beta\to \infty}\frac{1}{\beta} \log \mu_\beta\Bigl( |V(\gamma)- \pi \Rc^2| \leq C \beta^{-2/3}, \, 
\gamma \notin \cD_\eps \Bigr)  \leq -\inf_{S \in \cV_\eps \,\cap\, \cD_\eps^c} I(S).
\ee
In view of~\eqref{jhalo} and  \eqref{ihalo}, we are left with the minimisation problem
\be
\inf\Big\{|S| - \kappa |S^-|\colon\, S \in \cV_\eps \cap \cD_\eps^c\Big\}.
\ee
If $|S|=\pi R^2$, then, on the event $\cV_\eps$, we have $|R^2- \Rc^2| \leq \frac{\eps}{\pi}$ and, because $R + \Rc\ge 1$, also $\abs{R- \Rc} \leq \frac{\eps}{\pi}$. According to the negation of the second claim in Theorem~\ref{thm:isope}, $\z\in \cD_\eps^c$ implies that $|S(\z)|-\ka |S(\z)^-| \geq (\pi R^2-\ka \pi(R-1)^2) + \pi \ka \frac{\eps^2}{5R}$. Furthermore, employing the equality $\Rc = \kappa/(\kappa-1)$, we get $(\pi R^2-\ka \pi(R-1)^2) - (\pi \Rc^2 -\ka \pi(\Rc-1)^2) = -\pi(\ka-1)(R-\Rc)^2 \ge -\pi(\ka-1)\tfrac{\eps^2}{\pi^2}$, where, in the first equality we use that $(\ka-1)(\Rc^2-R^2)-2\ka(\Rc-R) = (\Rc-R)((\ka-1)(\Rc+R)-2\ka) = (\Rc-R)(\ka+(\ka-1)R-2\ka) = -(\ka-1)(\Rc-R)(\Rc-R)$. Hence
\be
|S|-\ka|S^-| \geq  (\pi \Rc^2 -\ka \pi(\Rc-1)^2) + \eps^2\bigl(\tfrac{\pi\ka}{5R}-\tfrac{\ka-1}{\pi}\bigr),
\ee
where $\tfrac{\pi\ka}{5R}-\tfrac{\ka-1}{\pi} \ge \frac{\ka-1}{4\pi}$ once $\eps$ is sufficiently small. To verify the latter, evaluate 
\be
\tfrac{\pi\ka}{5R} - \tfrac{\ka-1}{\pi} = \bigl(\tfrac{\pi^2}{5}\tfrac{\Rc}{R}-1\bigr)\tfrac{\ka-1}{\pi}
\ge \bigl(\tfrac{3}{2}\tfrac{\Rc}{R}-1\bigr)\tfrac{\ka-1}{\pi}\ge \tfrac{\ka-1}{4\pi}.
\ee
The last inequality uses $\frac{\Rc}{R}\ge 1-\frac{\eps}{\pi} \frac{\ka-1}{\ka}$, implied by $\abs{R- \Rc} \leq \frac{\eps}{\pi}$, and yields $\abs{\frac{R}{\Rc}-1}\le \frac{\eps}{\pi}\frac{\ka-1}{\ka}$. Consequently, \eqref{ldpAB} yields
\be
\limsup_{\beta\to \infty}\frac{1}{\beta} \log \mu_\beta\bigl( |V(\gamma)- \pi \Rc^2|\leq C \beta^{-2/3}, \, 
\gamma \notin \cD_\eps \bigr)  \leq  - I^*(\pi \Rc^2) - \eps^2\tfrac{\ka-1}{4\pi}
\ee
for $\eps$ small enough.
\end{proof} 

Next, we check that on the event $\z \in \cD_\eps = \bigcup_{x\in \mathbb T} \cD_\eps(x)$ we need only to consider droplets that are close to $B_{\Rc}(0)$. Here we exploit translation invariance. Introduce the event 
\be
\label{eventepsdelta}
\cC_\delta(x) = \bigl\{\z\in\mathcal O\colon\, \norm*{\cC(\z)-x}_{\infty}\leq \delta\bigr\},
\ee 
i.e., the centre $\cC(\z) = (\Sigma_1(\z), \Sigma_2(\z))$ of the halo shape $S(\z)$ is $\delta$-close to $x$.

\begin{lemma} 
\label{lem:centering}
For every $C \in (0,\infty)$, every $\eps>0$, and all $\beta$ sufficiently large, 
\be
\mu_\beta\bigl( \cV_{C\beta^{-2/3}}\cap \cD_{\eps}(0)\bigr)
\le \mu_\beta\bigl( \cV_{C\beta^{-2/3}}\cap \cD_{\eps} \bigr) 
\leq  L^2 \beta^{4/3}  \mu_\beta\bigl( \cV_{C\beta^{-2/3}} \cap \cC_{\delta}(0) \cap \cD_{\eps+\sqrt2 \beta^{-2/3}}(0) \bigr). 
\ee
\end{lemma} 

\begin{proof}
If $S(\z)$ is $\eps$-close in Hausdorff distance to the boundary of a disk of radius $\Rc$, then by Lemma~\ref{lem:aprvolsurcen} the centre $\cC(\z)$ of $ h(\gamma)$ is $2\eps$-close to the centre of that disk. Let $G_\delta=(\delta\Z)^2\cap \T$ be the grid of linear spacing $\delta= \beta^{-2/3}$. Observing that $\cD_{\eps}(y) \subset \cD_{\eps+\sqrt2 \delta}(x)$ if $\norm{x-y}_\infty \le \delta$, we get
\be
\label{eventsand}
\1_{\cD_{\eps}(0)} \leq \1_{\cD_\eps}
\leq \sum_{x \in G_\delta} \1_{\cC_{\delta}(x) \,\cap\, \cD_{\eps+\sqrt2 \delta}(x)}.
\ee
Given that the torus is periodic, every indicator contributes the same. Clearly, $\abs{G_{\delta}} = L^2 \beta^{4/3}$, and hence the proof is complete. 
\end{proof} 

Lemmas~\ref{lem:ng} and~\ref{lem:centering} leave us with the task of bounding $\mu_\beta(\mathcal  V_{C\beta^{-2/3}}\cap \cD_{\eps+\sqrt2 \delta}(0))$ from above. For the lower bound, it will be enough to bound $\mu_\beta(\cV_{C\beta^{-2/3}}\cap \cD_{\eps}(0) )$ from below. 

%%%

\subsection{Integration over halo shapes}
\label{sec:inthaloshape} 

Recall from Section~\ref{sec:heur-surfred} the definition of \emph{boundary points} of a halo $S=h(\gamma)$ with a simply connected $1$-interior $S^-$ and of the corresponding notion of a \emph{connected outer contour} $\z = (z_1,\ldots,z_n) \in \mathcal O$  (see Fig.~\ref{fig:cleansausage}).

\begin{lemma} 
\label{lem:coarsed}
Let $\Pi_\alpha$  be a Poisson point process of intensity $\alpha$ in $\mathbb T$. For all $\eps>0$ small enough and every bounded test function $f\colon\,\mathcal S\to \R$, as $\alpha \to\infty$,
\begin{equation}
\label{PPrepr}
\begin{aligned}
&\E\bigl[f \bigl(h(\Pi_\alpha)\bigr) \,\1_{\{\Pi_\alpha \in \cD_{\eps}(0) \}}\bigr]\\
&\in \bigl(1-\pi(\Rc-2+\eps)^2\alpha^{2/3}\e^{-\alpha}, 1\bigr)\,\e^{-\alpha |{\T}|}\, 
\sum_{n\in\N_0} \frac{\alpha^n}{n!} \int_{{\T}^n} \d \z\, f(S(\z))\, 
\e^{\alpha|S(\z)^-|}\1_{\cD_\eps(0)}(\z).
\end{aligned}
\end{equation}
\end{lemma}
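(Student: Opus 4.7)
The plan is to expand the Poisson expectation via the Campbell formula and then decompose each configuration into its \emph{boundary} points (whose $B_2$-discs contribute arcs to $\partial h(\gamma)$) and its \emph{interior} points (lying strictly inside $h(\gamma)^-$), with the configurations for which $h(\gamma)^-$ fails to be simply connected supplying the $O(\e^{-c\alpha})$ error.

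First, I apply the standard Campbell formula for $\Pi_\alpha$ to rewrite the left-hand side as
\be
\e^{-\alpha|\T|}\sum_{n \geq 0}\frac{\alpha^n}{n!}\int_{\T^n} f(h(\gamma))\,\mathbf{1}_{\cD_\eps(0)}(\gamma)\, \d x_1\cdots \d x_n,
\ee
with $\gamma = \{x_1,\ldots,x_n\}$.

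Second, I split $\cD_\eps(0)$ according to whether $h(\gamma)^-$ is simply connected. On the simply-connected subevent, let $z(\gamma) = (z_1,\ldots,z_m)$ be the boundary points (ordered by angle around the origin) and $y(\gamma) = \gamma \setminus z(\gamma)$ the interior points. Since $\partial h(\gamma)$ is determined by its boundary points, one has $h(\gamma) = S(z(\gamma))$ and hence $f(h(\gamma)) = f(S(z(\gamma)))$; combined with $d_{\text{H}}(\partial S(z(\gamma)), \partial B_{R_c}(0)) < \eps$ and $S(z(\gamma))^-$ simply connected, this gives $z(\gamma) \in \cD'_{\eps}(0)$. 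The points $y_j$ lie a.s.\ in the open interior of $S(z(\gamma))^-$, since a point on $\partial S(z(\gamma))^-$ would itself be a boundary point, a measure-zero event.

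Third, I exploit permutation symmetry of the integrand: I choose which $m$ out of the $n$ labels correspond to boundary points, multiply by $\binom{n}{m}$, and observe that the integral over the $(n-m)$ interior labels collapses to $|S(z)^-|^{n-m}$. Swapping the order of summation and recognising the series $\sum_{k \geq 0}(\alpha|S(z)^-|)^{k}/k! = \e^{\alpha|S(z)^-|}$ produces exactly the right-hand side of \eqref{PPrepr}, with $z$ now playing the role of the integration variable over boundary-point configurations.

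Fourth, I would estimate the contribution of the non-simply-connected subevent of $\cD_\eps(0)$. Such a configuration has at least one \emph{hole}: a bounded component $H$ of $\T \setminus h(\gamma)$ lying strictly inside the disc-like region. Since every $x \in \gamma$ satisfies $B_2(x) \cap H = \emptyset$, the dilation $H + B_2(0)$ contains no point of $\Pi_\alpha$, and its area is bounded below by some $\delta_0 = \delta_0(\eps) > 0$. A covering-and-union-bound argument, gridding a neighbourhood of $B_{R_c}(0)$ into finitely many $O(1)$-sized cells and applying a Poisson-void estimate to each, then gives $\P(\text{non-simply-connected part of } \cD_\eps(0)) \leq C\e^{-c\alpha}$. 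Multiplied by $\|f\|_\infty$, this yields the claimed multiplicative correction $[1 - O(\e^{-c\alpha})]$.

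The main obstacle is Step~4: obtaining a uniform positive lower bound $\delta_0(\eps)$ on the forbidden area over all possible hole shapes and positions, and controlling the union bound over only finitely many cells. The geometric constraint from $\cD_\eps(0)$, which confines $h(\gamma)$ to a bounded neighbourhood of $B_{R_c}(0)$, is essential here, since it localises both the potential holes and the relevant region of $\T$ on which the union bound must be performed.
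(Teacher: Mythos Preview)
Your boundary/interior decomposition and the binomial-plus-exponential-series manipulation are exactly what the paper does. However, there is a structural error in how you locate the $[1-O(\e^{-c\alpha})]$ correction.

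In Step~3 you assert that the integral over the $(n-m)$ interior labels ``collapses to $|S(z)^-|^{n-m}$''. This is not correct while you are still working on the event $\cD_\eps(0)$: for fixed boundary points $z\in\cD'_\eps(0)$ and arbitrary $y\in (S(z)^-)^{n-m}$, the configuration $\gamma=z\cup y$ need \emph{not} satisfy $h(\gamma)=S(z)$. If $y$ is too sparse, $h(z\cup y)\subsetneq S(z)$ has an inner boundary component (a hole), and then $\gamma\notin\cD_\eps(0)$ because $\partial h(\gamma)$ is no longer Hausdorff-close to $\partial B_{R_c}(0)$. So the constraint you must keep in the inner integral is $\mathbf 1_{\{h(y\cup z)=S(z)\}}$, and the correct value of the inner sum over $k=n-m$ is
\[
r_\alpha(S(z))=\sum_{k\ge 0}\frac{\alpha^k}{k!}\int_{\T^k}\d y\,\mathbf 1_{\{h(y\cup z)=S(z)\}}
=\e^{\alpha|S(z)^-|}\bigl[1-p_\alpha(z)\bigr],
\]
where $p_\alpha(z)=\PP\bigl(h(\Pi_\alpha\cup z)\subsetneq S(z)\,\big|\,\Pi_\alpha\subset S(z)^-\bigr)$ is the conditional hole probability. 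Your Step~3 silently replaces $r_\alpha(S(z))$ by $\e^{\alpha|S(z)^-|}$, i.e.\ it \emph{over}-counts and already produces the full right-hand side of the lemma.

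Consequently Step~4 is mis-aimed. The ``non-simply-connected subevent of $\cD_\eps(0)$'' is essentially empty for small $\eps$ (any hole would force $\partial h(\gamma)\not\subset A_{R_c,\eps}$), so it cannot account for the discrepancy; and in any case you would be \emph{adding} a non-negative term to the left-hand side, whereas the lemma says the left-hand side is \emph{smaller} than the sum on the right by a factor $1-O(\e^{-c\alpha})$. What actually needs to be bounded is $p_\alpha(z)$, uniformly over $z\in\cD'_\eps(0)$. Your Poisson-void idea is exactly the right tool for this: one shows $\inf_{x\in S(z)^-}|B_2(x)\cap S(z)^-|\ge c>0$, discretises over a grid of spacing $\delta$, and union-bounds to get $p_\alpha(z)\le |\T|\delta^{-2}\e^{-c\alpha+O(\delta)}$. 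This is precisely the paper's argument; you just need to apply it to $p_\alpha(z)$ rather than to a (vacuous) piece of $\cD_\eps(0)$.
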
 

\begin{proof} 
For $\z \in \cO$, define 
\be 
\label{rhokappabeta}
r_{\alpha}(S(\z)) 
= \sum_{m\in\N_0}\frac{\alpha^m}{m!} \int_{{\T}^m} \d \y\,\,
\1_{\{h(\y \cup \z) = S(\z)\}}.
\ee
Clearly,
\be
\begin{aligned}
r_{\alpha}(S(\z)) &= \sum_{m\in\N_0}\frac{\alpha^m}{m!} \int_{{\T}^m} \d \y\,\,
\1_{\{h(\y \cup \z) \subset S(\z)\}}
- \sum_{m\in\N_0}\frac{\alpha^m}{m!} \int_{{\T}^m} \d \y\,\,
\1_{\{h(\y \cup \z) \subsetneq S(\z)\}}\\
&= \e^{\alpha |S(\z)^-|}\,[1-p_{\alpha}(\z)], 
\end{aligned}
\ee
where
\be 
\label{eq:pkappa}
p_{\alpha}(\z) =  \P\bigl( h(\Pi_{\alpha} \cup \z) 
\subsetneq S(\z) ~\big|~ \Pi_{\alpha} \subset S(\z)^-\Bigr),
\ee
is the probability that the halo has a \emph{hole}. To prove \eqref{PPrepr}, we argue as follows. Any configuration $\gamma$ with halo $h(\gamma) = S(\z)$ must be of the form 
\be
\gamma = \z\, \dot \cup \,\y \quad \text{ for some } \quad \z = \{z_1,\ldots,z_n\}, 
\quad \y = \{y_1,\ldots,y_{{\ell}-n}\}, \quad \ell \geq n,
\ee 
where $\y$ represents the set of \emph{interior points} of the configuration, i.e., $B(y_i) \cap \partial S(\z) = \emptyset$ for $1 \leq i \leq {\ell}-n$. Therefore, for every bounded test function $f$, 
\be 
\label{eq:shapeint}
\begin{aligned}
&\E\bigl[f \bigl(h(\Pi_\alpha)\bigr) \,\1_{\{\Pi_\alpha \in \cD_{\eps}(0) \}}\bigr] \\
&\quad = \e^{-\alpha |{\T}|} \sum_{\ell\in\N_0} \frac{\alpha^{\ell}}{{\ell}!} 
\int_{{\T}^{\ell}} \d\gamma\,\, f(h(\gamma)) 
\,\1_{\cD_{\eps}(0)}(\gamma)\\
&\quad = \e^{-\alpha |{\T}|} \sum_{\ell\in\N_0} \frac{\alpha^{\ell}}{\ell!} 
\sum_{n=0}^{\ell} \binom{\ell}{n} \int_{{\T}^n} \d \z\,\, 
f(S(\z))\,\1_{\cD_{\eps}(0)}(\z)
\int_{{\T}^{\ell-n}} \d \y\, \,\1_{\{h(\y \cup \z) = S(\z)\}} \\
&\quad = \e^{-\alpha |{\T}|} \sum_{n\in\N_0} \frac{\alpha^n}{n!}
\int_{{\T}^n} \d \z\,\, f(S(\z))\,\1_{\cD_{\eps}(0)}(\z)
\sum_{{\ell}=n}^\infty \frac{\alpha^{\ell-n}}{(\ell-n)!} 
\int_{{\T}^{\ell-n}} \d \y\,\,\1_{\{h(\y \cup \z) = S(\z)\}}\\ 
&\quad = \e^{-\alpha |{\T}|} \sum_{n\in\N_0} \frac{\alpha^n}{n!} 
\int_{{\T}^n} \d \z\,\, f(S(\z))\,\1_{\cD_{\eps}(0)}(\z)\,r_{\alpha}(S(\z))\\
&\quad = \e^{-\alpha |{\T}|} \sum_{n\in\N_0} \frac{\alpha^n}{n!} 
\int_{{\T}^n} \d \z\,\, f(S(\z))\,\1_{\cD_\eps(0)}(\z)\, 
\e^{\alpha|S(\z)^-|}\,[1-p_{\alpha}(\z)].
\end{aligned}
\ee
We get the claim in \eqref{PPrepr} once we show that, for $\eps$ small enough and uniformly in $\z \in \cD_\eps(0)$, 
\be 
\label{eq:rhokappaest}
p_{\alpha}(\z) \le \pi(\Rc-2+\eps)^2 \alpha^{2/3}\e^{-\alpha}, \qquad \alpha \to \infty.
\ee

The proof of the upper bound goes as follows. Let $\Pi^{S(\z)^-}_{\alpha}$ be the Poisson point process on $S(\z)^-$ with intensity $\alpha$. Then
\be
\label{pzrew}
\begin{aligned}
p_{\alpha}(\z) 
= \P\bigl(h\bigl(\Pi^{S(\z)^-}_{\alpha} \!\!\cup \z\bigr) \subsetneq S(\z)\bigr)
= \P\bigl(\exists\,y \in S(\z)\setminus h(\z)\colon\Pi_\alpha^{S(\z)^{-}}\!\! \cap B(y) 
= \emptyset\bigr).
\end{aligned}
\ee
To bound the last expression in \eqref{pzrew} we discretise. As in the proof of Lemma~\ref{lem:centering}, we consider the grid $G_{\delta}$ of linear spacing $\delta$. For every $y\in S(\z)\setminus h(\z)$ there exists an $x \in G_{\delta} \cap B_{\Rc-2}(0)$ such that $B(y) \supset B_{1-\delta\sqrt2} (x)$. Therefore
\be
\P\bigl(\exists\,y \in S(\z)\setminus h(\z)\colon\Pi_\alpha^{S(\z)^{-}}\!\! \cap B(y) = \emptyset\bigr)
\leq \sum_{x\in \mathcal G_{\delta}}  \P\bigl(\Pi^{S(\z)^{-}}_{\alpha}\!\! \cap B_{1-\delta\sqrt{2}}(x)= \emptyset\bigr).
\ee

The fact  that $\z$ is contained in $A_{\Rc-1,\eps}$ implies that the outer boundary of $h(\z)$, which is identical to the boundary of $S(\z)$, is contained in $A_{\Rc,\eps}$, while its inner boundary, which is identical with the boundary of $S(\z)\setminus h(\z)$, is contained in $A_{\Rc-2,\eps}$ (see the illustration on the left in Fig.~\ref{fig:cleansausage} as well as Fig.~\ref{fig:volumealt}). Using this observation, we note that $S(\z)\setminus h(\z) \supset  B_{\Rc -2-\eps}(0)$, and as a result 
\be
\abs{B(y) \cap (S(\z)\setminus h(\z))} \geq \abs{B(y)\cap B_{\Rc -2-\eps}(0)} \ge  1
\ee
for any $y\in S(\z)\setminus h(\z)$ and $\eps$ sufficiently small. The final inequality follows from the easily verified fact that, for the point $y_0(\eps)$ with orthogonal coordinates $x=0$ and $y=\Rc -2+\eps$,
\be
\lim_{\eps\to 0} \abs{B(y_0(\eps)) \cap B_{\Rc -2-\eps}(0)} = \abs{B(y_0(0)) \cap B_{\Rc -2}(0)} \ge \pi/3.
\ee
Analogously, for all $x \in G_{\delta} \cap (S(\z)\setminus h(\z))$ we have $\abs{B_{1-\sqrt2\delta}(x) \cap (S(\z)\setminus h(\z))} \geq 1$ for $\delta$ sufficiently small. Hence we get
\be
p_{\alpha}(\z) \leq  \abs{G_{\delta} \cap B_{\Rc-2+\eps}(0)} \e^{-\alpha} \leq \pi(\Rc-2+\eps)^2  \delta^{-2} \,\e^{-\alpha}.
\ee
Choosing $\delta=\alpha^{-2/3}$, we get the upper bound in \eqref{eq:rhokappaest}.
\end{proof} 

Recall from Section~\ref{LDiso} that $\mathcal{F}$ is the family of non-empty closed (and hence compact) subsets of the torus $\T$. In view of \eqref{Gibbsid}, we next observe that, for any $\mathcal{A} \subset \mathcal F$ measurable,
\be
\label{muidplus}
\mu_\beta \bigl( h(\gamma)\in\mathcal{A}, \, \gamma \in \cD_\eps(0)\bigr)
= \frac{\e^{-|{\T}|}}{\Xi_{\beta}}\,\e^{\kappa\beta |\mathbb{T}|}\E\bigl[\e^{-\beta V(\Pi_{\kappa\beta})} 
\,\1_{\{h(\Pi_{\kappa\beta}) \in \mathcal{A}\}}
\,\1_{\{\Pi_{\kappa\beta} \in \cD_{\eps}(0) \}}\bigr].
\ee
Applying Lemma~\ref{lem:coarsed} with $\alpha = \kappa \beta$ and with test functions of the form 
\be
f(S) = \e^{-\beta |S|}\1_\mathcal{A}(S), \quad \mathcal{A} \subset \mathcal F\text{ measurable},
\ee
we get the upper bound
\be
\mu_\beta \bigl( h(\gamma)\in\mathcal{A}, \, \gamma \in \cD_\eps(0)\bigr) \leq 
\frac{\e^{-|{\T}|}}{\Xi_{\beta}}\sum_{n\in\N_0} \frac{(\kappa\beta)^n}{n!} \int_{{\T}^n} 
\d \z\, \e^{-\beta(|S(\z)|-\kappa|S(\z)^-|)}\, 
\1_{\cD_\eps(0)}(\z)\,\1_{\{S(\z)\in\mathcal A\}}.
\ee
Abbreviating
\be 
\label{Ezdefsalt}
\cH(\z) = \big(|S(\z)|-\ka|S^-(\z)|\big) - \big(\pi \Rc^2 - \kappa\pi (\Rc-1)^2\big),
\ee
and recalling that
\be
I^*(\pi R^2)=I(B_R)=\pi R^2-\kappa\pi (R-1)^2 + (\kappa-1)\T,
\ee
we obtain
\be\label{eq:maybelast}
\mu_\beta \bigl( h(\gamma)\in\mathcal{A}, \, \gamma \in \cD_\eps(0)\bigr) 
\leq \e^{-\beta I^*(\pi R_c^2) }[1-o(1)] \sum_{n\in\N_0} \frac{(\kappa\beta)^n}{n!} 
\int_{{\T}^n} \d \z\, \e^{-\beta\mathcal H(\z)}\, 
\1_{\cD_\eps(0)}(\z)\1_{\{S(\z)\in\mathcal A\}},
\ee
where we use that $\Xi_\beta = \e^{[(\kappa-1)\beta-1]|\mathbb T|}[1+o(1)]$ (recall Lemma~\ref{lem:partfun}). An analogous lower bound holds (we skip the details).

Specialising the choice of $\mathcal{A}$ in \eqref{eq:maybelast}, we obtain the following corollary. 

\begin{corollary}[Representation as surface integral]
\label{cor:recapI}
For every $C \in (0,\infty)$ and all $\eps>0$ small enough, as $\beta\to\infty$,
\be 
\begin{aligned}
\mu_\beta\bigl( \cV_{C\beta^{-2/3}}\cap \cD_\eps(0) \cap \cC_{C\beta^{-2/3}}(0) \bigr)	
&= \big[1-o(1)\bigr]
\,\e^{-\beta\, I^*(\pi \Rc^2)}\,\mathcal{I}^\UB(\kappa,\beta;C,\eps),\\
\mu_\beta\bigl( \cV_{C\beta^{-2/3}}\cap \cD_\eps (0)\bigr)	
&= \big[1-o(1)\bigr]
\,\e^{-\beta\, I^*(\pi \Rc^2)}\,\mathcal{I}^\LB(\kappa,\beta;C,\eps)
\end{aligned}
\ee
with
\be
\label{eq:sumsurface} 
\begin{aligned}
\mathcal{I}^\UB(\kappa,\beta;C,\eps) 
&= \sum_{n \in \N_0} \frac{(\kappa \beta)^n}{n!} \int_{{\T}^{n}} \d \z\,\, 
\e^{-\beta \cH(\z)} \,\1_{\cV_{C\beta^{-2/3}} \,\cap\, \cC_{C\beta^{-2/3}}(0)
\,\cap\, \cD_\eps(0)}(\z),\\
\mathcal{I}^\LB(\kappa,\beta;C,\eps) 
&= \sum_{n \in \N_0} \frac{(\kappa \beta)^n}{n!} \int_{{\T}^{n}} \d \z\,\, 
\e^{-\beta \cH(\z)} \,\1_{\cV_{C\beta^{-2/3}} \,\cap\, \cD_\eps(0)}(\z).
\end{aligned}
\ee
\end{corollary}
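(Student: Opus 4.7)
The plan is to reduce the corollary to a direct bookkeeping exercise combining Lemma~\ref{lem:coarsed}, the definition \eqref{Ezdefsalt} of $\Delta(z)$, and the partition-function asymptotics \eqref{partsum}. Starting from the Radon--Nikodym derivative of $\mu_\beta$ relative to $\Pi_{\kappa\beta}$ (displayed just before Section~\ref{target}) together with $V(\gamma) = |h(\gamma)|$, for any event $A$ depending on $\gamma$ only through $h(\gamma)$ I have
\be
\mu_\beta(A) = \frac{\e^{\kappa\beta|\T|}}{\Xi_\beta}\, \E\bigl[\e^{-\beta|h(\Pi_{\kappa\beta})|}\, \mathbf 1_A(\Pi_{\kappa\beta})\bigr].
\ee
I then apply Lemma~\ref{lem:coarsed} with $\alpha = \kappa\beta$ and the bounded test function $f(S) = \e^{-\beta|S|}\,\mathbf 1_{A'}(S)$, where $A'$ is the condition on halo shapes equivalent to $A$: for the upper-bound version, $A' = \{S\colon\, ||S|-\pi R_c^2|\le C\beta^{-2/3},\, \|\mathcal C(z(S))\|_\infty \le C\beta^{-2/3}\}$, and analogously for the lower-bound one.

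The first thing to verify is that the volume and centre constraints do indeed depend on $\gamma$ only through $h(\gamma)$, so that Lemma~\ref{lem:coarsed} applies. The volume is $V(\gamma) = |h(\gamma)|$ by definition, and on $\cD_\eps(0)$ with $\eps$ small (relative to $R_c-2$), Theorem~\ref{thm:isope} ensures $h(\gamma)^-$ is connected and simply connected so that the set of boundary points $z = z(S)$ is uniquely determined by $S = h(\gamma)$; hence $\mathcal C(z)$ is a function of $S$ too. Under the identification $z \mapsto S(z)$ used in Lemma~\ref{lem:coarsed}, the indicator $\mathbf 1_{A'}(S(z))$ coincides with the indicator $\mathbf 1_{\cV'_{C\beta^{-2/3}} \cap \cC'_{C\beta^{-2/3}}(0) \cap \cD'_\eps(0)}(z)$ appearing in \eqref{eq:sumsurface} (and analogously for the lower bound). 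Lemma~\ref{lem:coarsed} then produces, after cancelling the factors $\e^{\pm\kappa\beta|\T|}$,
\be
\mu_\beta(A) = [1-O(\e^{-c\kappa\beta})]\, \frac{1}{\Xi_\beta} \sum_{n\in\N_0} \frac{(\kappa\beta)^n}{n!} \int_{\T^n}\d z\, \exp\bigl(-\beta|S(z)| + \kappa\beta|S(z)^-|\bigr)\, \mathbf 1_{\cdots}(z).
\ee

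The remaining work is purely algebraic. The exponent rewrites, via \eqref{Ezdefsalt} and the identity $\pi R_c^2 - \kappa\pi(R_c-2)^2 = \Phi(\kappa)$, as
\be
-\beta\bigl(|S(z)| - \kappa|S(z)^-|\bigr) = -\beta\Delta(z) - \beta\Phi(\kappa),
\ee
so pulling $\e^{-\beta\Phi(\kappa)}$ out of the sum produces exactly $\mathcal I^\UB(\kappa,\beta;C,\eps)$ or $\mathcal I^\LB(\kappa,\beta;C,\eps)$. Finally, I combine prefactors: by Theorems~\ref{thm:isope}(1) and~\ref{thm:ldp-volume}, $I^*(\pi R_c^2) = I(B_{R_c}) = \Phi(\kappa) - (1-\kappa)|\T|$, while \eqref{partsum} gives $1/\Xi_\beta = \e^{\beta(1-\kappa)|\T| + o(1)}$, so
\be
\frac{\e^{-\beta\Phi(\kappa)}}{\Xi_\beta} = \e^{-\beta[\Phi(\kappa) - (1-\kappa)|\T|] + o(1)} = \e^{-\beta I^*(\pi R_c^2) + o(1)}.
\ee
The $\e^{o(1)}$ correction is absorbed into the prefactor $[1-O(\e^{-c\kappa\beta})]$ after slightly reducing $c>0$, yielding the corollary.

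\paragraph{Main obstacle.}
There is no serious analytic obstacle, since Lemma~\ref{lem:coarsed} already does the hard work of decomposing the Poisson integral into boundary and interior points and controlling the no-hole probability. The only genuinely delicate point is the indicator translation in the second paragraph: one must know that on $\cD_\eps(0)$ the map from configurations to halo shapes, and from halo shapes to connected outer contours $z \in \cO$, is well-defined and uniquely invertible after quotienting out interior points. This is exactly the content of the admissibility discussion in Section~\ref{SS:admissible} and Theorem~\ref{thm:isope}, so it is available by the time the corollary is proved.
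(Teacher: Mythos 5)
Your proof follows the same route as the paper: rewrite $\mu_\beta$ via the Radon--Nikodym derivative with respect to $\mathsf P_{\kappa\beta}$, apply Lemma~\ref{lem:coarsed} with $\alpha=\kappa\beta$ and $f(S)=\e^{-\beta|S|}\mathbf 1_A(S)$, use the uniqueness of the boundary-point parametrization on $\cD_\eps(0)$ to translate the volume and centering constraints into indicators on $z$, and reorganize the exponent using $\Delta(z)$, the identity $\Phi(\kappa)=I^*(\pi R_c^2)+(1-\kappa)|\T|$, and \eqref{partsum}.

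One caveat on the last step: you claim that the $\e^{o(1)}$ factor arising from $1/\Xi_\beta$ can be absorbed into $[1-O(\e^{-c\kappa\beta})]$ by ``slightly reducing $c$''. That inference is not valid for a generic $o(1)$ error, since for instance $\beta^{-1}$ is $o(1)$ but not $O(\e^{-c\kappa\beta})$ for any $c>0$; shrinking $c$ widens the exponential bound but cannot capture a polynomially small term. The paper is loose at the very same point (it invokes only the $o(1)$ of \eqref{partsum} but states \eqref{eq:surface} and the corollary with an exponentially small correction), and the discrepancy is ultimately harmless for Theorems~\ref{thm:zoom1}--\ref{thm:zoom2}, which are stated at precision $\e^{o(\beta^{1/3})}$; still, the honest conclusion from \eqref{partsum} alone is a $[1+o(1)]$ prefactor unless one establishes a sharper asymptotics for $\Xi_\beta$.
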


%%%

\subsection{From surface integral to auxiliary random variables}
\label{sec:auxiliary}

To describe the boundary points, it will be expedient to introduce \emph{auxiliary random variables}. These variables will not only allow us to write surface integrals as expectations of exponential functionals involving a \emph{single} mean-centred Brownian bridge, they also allow us to \emph{embed} into the latter the surface integrals for $n$ particles in a \emph{consistent way}, so that we can pass to the limit $n\to\infty$ (see Fig.~\ref{fig:embed}). 

Let $(W_t)_{t\geq 0}$ be standard Brownian motion starting in $0$, and let
\begin{equation}
\label{Bhatdef}
(\widetilde W_t)_{t \in [0,2\pi]}, \qquad \widetilde W_t  = W_t - \frac{t}{2\pi} W_{2\pi},
\end{equation} 
be standard Brownian bridge on $[0,2\pi]$. Consider the process
\begin{equation} 
\label{Btildedef}
(B_t)_{t \in [0,2\pi]}, \qquad  B_t  = \widetilde W_t - \frac{1}{2\pi}\int_0^{2\pi} \widetilde W_s \d s,
\end{equation}
called the \emph{mean-centred Brownian bridge} (Deheuvels~\cite{D}). Set 
\be
\label{lambdadef}
\lambda(\beta) = G_\kappa \beta^{1/3}, \quad G_\kappa = \frac{\kappa^{2/3}}{\kappa-1}.
\ee
Let 
\be
\label{TNdef}
\begin{aligned}
\mathcal{T} &= \text{Poisson point process on $[0,2\pi)$ with intensity $\lambda(\beta)$},\\ 
N &= |\mathcal{T}| = \text{cardinality of $\mathcal{T}$}.
\end{aligned}
\ee 
Thus, $N$ is a Poisson random variable with parameter $2 \pi \lambda(\beta) = 2\pi G_\kappa\beta^{1/3}$. We assume that $(B_t)_{t\in [0,2\pi]}$ and $\mathcal T$ are defined on a common probability space $(\Omega,\mathscr F,\P)$ and they are independent. Conditional on the event $\{N=n\}$, we may write $\mathcal{T} = \{T_i\}_{i=1}^n$ with $0\leq T_1<\cdots<T_n<2\pi$, and define 
\be
\label{Thetadef}
\Theta_i = T_{i+1}- T_i,\,\,1 \leq i \leq n, \qquad \Theta_n = (T_1 + 2\pi) - T_n.
\ee 
Note that $\Theta_i\geq 0$, $1 \leq i \leq n$, and $\sum_{i=1}^n \Theta_i = 2 \pi$. For $m\in \R$, set 
\be\label{Zmdef}
\ZZ^{(m)} = \{Z_i^{(m)}\}_{i=1}^N
\ee 
with
\be\label{Zimdef}
Z_i^{(m)} = \bigl(r_i^{(m)} \cos T_i,  r_i ^{(m)} \sin T_i\bigr), \quad 
r_i^{(m)} = (\Rc-1) + \frac{m+  B_{T_i}}{\sqrt{(\kappa -1)\beta}},
\qquad 1 \leq i \leq N.
\ee
Later we will see that the natural reference measure for the angles $t_i$ is not the Poisson process $\mathcal T$ but a periodic version of a renewal process, which is conveniently constructed by tilting the distribution of the Poisson process $\mathcal T$. 

%%%%%%%%%%%%%%%%%%%%%%%%%%%%%%%%%%%%%%%%%%%%
\begin{figure}[htbp]
\begin{center}
\includegraphics[width=5cm]{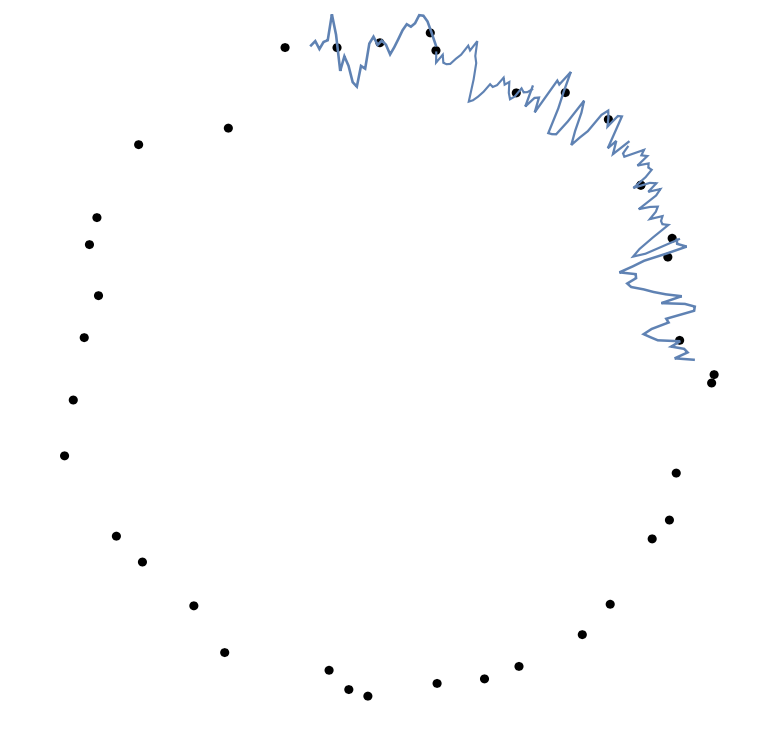}
\end{center}
\vspace{-0.3cm}
\caption{\small The black dots are the centers of the unit disks at the boundary, the blue curve draws a Brownian bridge through these centers. See also Fig.~\ref{fig:cleansausage}.}
\label{fig:embed}
\end{figure}
%%%%%%%%%%%%%%%%%%%%%%%%%%%%%%%%%%%%%%%%%%%%

The following notation allows us to avoid indicators that order variables. For $(\t,\rr)\in [0,2\pi)^n \times\R^n$ with $\t=(t_1,\ldots,t_n)$ pairwise distinct, let $\sigma\in \mathfrak{S}_n$ (the set of permutations of $\{1,\ldots,n\}$) be such that $0\leq t_{\sigma(1)}<\cdots < t_{\sigma(n)} <2\pi$. We abbreviate $(t_{(i)},  r_{(i)}) = (t_{\sigma(i)},  r_{\sigma(i)})$ and $(t_{(n+1)},  r_{(n+1)}) = (t_{(1)} + 2\pi , r_{(1)})$. The reordering depends on the vector $\t$, but for simplicity we  suppress the $\t$-dependence from $r_{(i)}$ and $t_{(i)}$.

The following lemma can be viewed as a relation between measures on the space of marked point processes.

\begin{lemma}
\label{lem:rhoangleint}
For every non-negative test function $f$ on $([0,\infty) \times [0,2\pi])^n$ with $f(\emptyset)=0$, 
\begin{multline} 
\label{eq:rhoaint}
\sum_{n\in\N_0} \frac{1}{n!} (\kappa \beta (\Rc-1))^n  
\int_{\R^n} \d  \r \int_{[0,2\pi)^n} \d \t \,\, \exp\left(-\beta \frac{\kappa-1}{2} \sum_{i=1}^n 
\frac{(r_{(i+1)}- r_{(i)}) ^2}{t_{(i+1)} - t_{(i)}}\, f\bigl( \{( r_{i}, t_{i})\}_{i=1}^n\right) \\
= \frac{1}{\sqrt{2\pi}}\, \e^{2\pi G_\kappa \beta^{1/3} } \int_\R \d m\,\, 
\E\left[ f\left( \left\{ \left(\frac{m+ B_{T_i}}{\sqrt{(\kappa-1)\beta}}, T_i\right)\right\}_{i=1}^N\right) 
\prod_{i=1}^N \sqrt{2\pi G_\kappa \beta^{1/3} \Theta_i}\,\right]. 	
\end{multline} 
\end{lemma}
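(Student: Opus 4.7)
The plan is to convert the LHS, a Gaussian circular path-integral indexed by ordered radii and angles, into the Poissonized Brownian-bridge expectation on the RHS via four reductions.

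\emph{Step 1 (Symmetrization).} Since $f$ depends only on the unordered multiset $\{(r_i,t_i)\}$ and the Gaussian exponent depends only on the order statistics $(r_{(i)},t_{(i)})$, the integral over $[0,2\pi)^n$ equals $n!$ times the integral over the ordered simplex $\{0<t_1<\cdots<t_n<2\pi\}$. The factor $n!$ cancels $1/n!$, and the integrand is symmetric in the ordered labels.

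\emph{Step 2 (Rescaling).} Substitute $x_i=\sqrt{(\kappa-1)\beta}\,r_i$; the Gaussian exponent becomes $-\tfrac12\sum_i(x_{(i+1)}-x_{(i)})^2/\Theta_i$ with Jacobian $((\kappa-1)\beta)^{-n/2}$. Using $R_c-2=2/(\kappa-1)$ and $G_\kappa=(2\kappa)^{2/3}/(\kappa-1)$, the prefactor $(\kappa\beta(R_c-2))^n((\kappa-1)\beta)^{-n/2}$ simplifies to $(\beta^{1/2}G_\kappa^{3/2})^n$. Factor the Gaussian as $\prod_i P_{\Theta_i}(x_{(i+1)}-x_{(i)})\sqrt{2\pi\Theta_i}$ and distribute $(\beta^{1/2}G_\kappa^{3/2})^n=(\beta^{1/3}G_\kappa)^n\prod_i\sqrt{G_\kappa\beta^{1/3}}$ to produce $\prod_i\sqrt{2\pi G_\kappa\beta^{1/3}\Theta_i}$, matching the target.

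\emph{Step 3 (Circular Brownian bridge representation).} The core identity is that for any fixed $0<t_1<\cdots<t_n<2\pi$ and any non-negative test $g$,
\begin{equation*}
\int_{\R^n}\prod_{i=1}^n P_{\Theta_i}(x_{(i+1)}-x_{(i)})\,g(x)\,dx = c\int_\R dm\,\E\bigl[g(m+B_{t_1},\ldots,m+B_{t_n})\bigr],
\end{equation*}
with an explicit normalizing constant $c$ involving $P_{2\pi}(0)$. To derive it, first write $X_i=m+\widetilde W_{t_i}$ with the standard Brownian bridge $\widetilde W$ whose joint density at ordered times is explicit. Integrating over $dm$ uses the heat semigroup identity $P_{t_{(1)}}\ast P_{2\pi-t_{(n)}}=P_{t_{(1)}+2\pi-t_{(n)}}=P_{\Theta_{(n)}}$, which supplies the wrap-around factor that closes the circular product. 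Finally, transfer from $\widetilde W$ to the mean-centred $B_t=\widetilde W_t-(2\pi)^{-1}\int_0^{2\pi}\widetilde W_s\,ds$ by translation invariance of $dm$ under the shift $m\mapsto m+(2\pi)^{-1}\int_0^{2\pi}\widetilde W_s\,ds$, performed inside a Fubini swap with the bridge expectation.

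\emph{Step 4 (Poissonization).} For any non-negative symmetric test $H$ on finite subsets of $[0,2\pi)$ and the Poisson process $\mathcal{T}$ of intensity $\lambda(\beta)=G_\kappa\beta^{1/3}$,
\begin{equation*}
\E\bigl[H(\mathcal{T})\bigr]=\e^{-2\pi\lambda(\beta)}\sum_{n=0}^\infty\lambda(\beta)^n\int_{0<t_1<\cdots<t_n<2\pi}dt\,H(\{t_i\}).
\end{equation*}
Applying this with $H$ encoding both the inner BB expectation and the product $\prod_i\sqrt{2\pi G_\kappa\beta^{1/3}\Theta_i}$ collapses the sum-over-$n$ and ordered-$t$ integral into a single joint expectation under $\PP\otimes\PP_{\mathcal{T}}$, producing the prefactor $\e^{2\pi G_\kappa\beta^{1/3}}$ and yielding the RHS.

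\emph{Main obstacle.} Step 3 carries the real content: the kernel $\prod_i P_{\Theta_i}(x_{(i+1)}-x_{(i)})$ is invariant under diagonal translation of $x$ and therefore defines an improper measure on $\R^n$; the $dm$ integration on the RHS absorbs exactly this zero mode. Correctly identifying the normalizing constant via $P_{2\pi}(0)$, applying the semigroup identity to close the circular product, and justifying the Fubini interchange that transports from the standard to the mean-centred bridge all require care; the hypothesis $f(\emptyset)=0$ together with non-negativity makes all the interchanges legitimate (or allows a truncation argument when $f$ is bounded).
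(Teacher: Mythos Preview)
Your proof is correct and follows essentially the same route as the paper's: rescale, recognise the circular Gaussian kernel as a Brownian-bridge density times Lebesgue measure on the translation zero mode, shift from $\widetilde W$ to the mean-centred $B$ via translation invariance of $\d m$, and Poissonise. The only cosmetic difference is in Step~3: you close the circular product by integrating out $m$ first via the semigroup identity $P_{t_{(1)}}*P_{2\pi-t_{(n)}}=P_{\Theta_{(n)}}$, whereas the paper changes variables $(x_i)\mapsto(x_1,\,x_i-x_1)$ to isolate the zero mode and then identifies the remaining $(n{-}1)$-dimensional integral directly with the bridge density --- these are two sides of the same computation and yield the same normalising constant $P_{2\pi}(0)$.
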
 

\begin{proof} 
We revisit the computations from Section~\ref{sec:heur-aux}. Put 
\be
\label{xrhoscal}
x_i = \sqrt{(\kappa-1)\beta}\, \rho_i, \qquad \rho_i= r_i-(\Rc-1),
\ee
and set $\theta_i = t_{(i+1)}- t_{(i)}$, where $t_{(n+1)} = t_{(1)} +2\pi$, and $\bar f_{t}(\x) = f(\{( r_i,t_i)\}_{i=1}^n)$. Let $P_\theta(x-y)$ be the heat kernel, see~\eqref{heatkernel}. Proceeding as in Section~\ref{sec:heur-aux}, we see that the left-hand side of~\eqref{eq:rhoaint} is equal to 
\be 
\label{rhoaint2}
\sum_{n\in\N_0} \frac{1}{n!} (\beta^{1/3} G_\kappa )^n  
\int_{[0,2\pi)^n} \d \t \,  \prod_{i=1}^n \sqrt{(2\pi) \beta^{1/3} G_\kappa \theta_i} \, 
\int_{\R^n} \d  \x \,   \bar f_{t}(\x)\,  \prod_{i=1}^n  P_{\theta_i}(x_{(i+1)} - x_{(i)}).
\ee
We first evaluate the inner integral for fixed $n\in \N$ and $\t\in [0,2\pi)^n$ with $t_1<\cdots <t_n$, so that $(t_{(i)},  r_{(i)}) = (t_i, r_i)$ and $x_{i+1} = x_i$. Change variables as $x\to (x_1,x'_2,\ldots, x'_n)$ with $x'_i = x_i - x_1$. The inner integral becomes 
\be
\int_{\R^n} \d  \x \,   \bar f_{t}(\x)\,  \prod_{i=1}^n  P_{\theta_i}(x_{(i+1)} - x_{(i)}) 
= \int_\R \d x_1\, \int_{\R^{n-1}} \d \x'\,  \bar f_{t}(x_1,x_1+\x')\,  \prod_{i=1}^n  P_{\theta_i}(x'_{i+1} - x'_{i}), 
\ee
where $x'_1 = x'_{n+1}=0$. From the semi-group property of the heat kernel, we get
\be
\label{eq:gaussiannorm} 
\int_{\R^{n-1}} \d \x'\, \prod_{i=1}^n P_{\theta_i}(x'_{i+1} - x'_i) 
= P_{\sum_{i=1}^n \theta_i}(x'_{n+1} - x'_{1}) = P_{2\pi}(0) = \frac{1}{\sqrt{2\pi}}. 
\ee
Next, we note 
\be
\int_\R \d x_1\, \int_{\R^{n-1}} \d \x'\,  \bar f_{t}(x_1,x_1+\x')\,  \prod_{i=1}^n  P_{\theta_i}(x'_{i+1} - x'_{i})  
= (2\pi)^{-1/2} \int_\R \d x_1\, \E\Bigl[\bar f_t\big(\{x_1 + \widetilde W_{t_i}\}_{i=1}^n\big) \Bigr]
\ee
(think of $x'_i = \widetilde W_{t_i}$). Furthermore, for every non-negative test function $g$ on path space, 
\begin{equation}
\begin{aligned}
\E\Big[\int_\R \d x_1\, g( x_1 + \widetilde W)\Bigr]	
& = \E\Bigg[\E\Big[\int_\R \d x_1\, g( x_1 + \widetilde W) 
\,\Bigg|\, (\widetilde W_t)_{t\in [0,2\pi]}\Bigr]\Biggr]\\
& = \E\Bigg[\E\Big[\int_\R \d m\, g( m + B)\,\Bigg|\, (\widetilde W_t)_{t\in [0,2\pi]}\Bigr]\Biggr]
= \E\Big[\int_\R \d m\, g( m +  B) \Bigr],
\end{aligned}
\end{equation}
where we have changed variables $m= x_1 + M$ with $M= \frac{1}{2\pi} \int_0^{2\pi} \d t\,\widetilde W_t$. It follows that 
\begin{equation} 
\label{eq:innerintegral}
\int_{\R^n} \d  \x \,   \bar f_{t}(\x)\,  \prod_{i=1}^n  P_{\theta_i}(x_{(i+1)} - x_{(i)}) \\
=\int_\R \d m\, (2\pi)^{-1/2}\, \E \Bigl[ f\Bigl( \Bigl\{ \Bigl(\frac{m+ B_{t_{(i)}}}{\sqrt{(\kappa-1)\beta}}, 
t_{(i)} \Bigr)\Bigr\}_{i=1}^n\Bigr) \Bigr]. 	
\end{equation}
This holds as well when the $t_i$'s are pairwise distinct but not necessarily labeled in increasing order. The case when $t_i = t_j$ for some $i\neq j$ has Lebesgue measure zero and need not be considered. Denote the right-hand side in \eqref{eq:innerintegral} by $g(t)$. Then~\eqref{rhoaint2} reads 
\be
\frac{1}{\sqrt{2\pi}}\,\sum_{n\in\N_0} \frac{1}{n!} (G_\kappa \beta^{1/3})^n  
\int_{[0,2\pi)^n} \d \t \,  \prod_{i=1}^n \sqrt{(2\pi) \beta^{1/3} G_\kappa \theta_i}\,\, g(\t).
\ee
With the help of \eqref{TNdef}--\eqref{Thetadef}, this expression in turn is equal to 
\be 
\frac{1}{\sqrt{2\pi}}\e^{2\pi G_\kappa \beta^{1/3}} 
\E\Bigl[ g(\{T_i\}_{i=1}^N)\prod_{i=1}^N \sqrt{2\pi G_\kappa \beta^{1/3}}\,\Bigr]
\ee 
and the proof is readily concluded.
\end{proof} 

The integral over $m$ corresponds to a freedom of choice in the \emph{average height of the surface of the critical droplet} with respect to critical radius $\Rc$, and constitutes a fine tuning of the volume. We will later see that the integral is dominated by values of $m$ that are at most of order $\beta^{1/6}$.       

Remember the process $Z^{(m)}= (Z_i^{(m)})_{i=1}^N$ from ~\eqref{Zmdef} and~\eqref{Zimdef}, define the following  random variables
\be 
\label{Y0Y1}
\widehat{Y}_0 = \frac12 \sum_{i=1}^N \log \bigl(2\pi \beta^{1/3} G_\kappa \Theta_i\bigr), 
\quad \widehat{Y}_1= \frac{1}{24} \sum_{i=1}^N \bigl(\beta^{1/3}G_\kappa\Theta_i\bigr)^3,
\ee
and consider the tilted probability measure $\widehat \P$ on $(\Omega,\mathscr F,\P)$ defined by
\be 
\label{tiltedproba}
\widehat \P(\mathcal{A}) = \frac{\E[\exp( \widehat{Y}_0 - \widehat{Y}_1)\1_\mathcal{A}]}
{\E[\exp(\widehat{Y}_0 - \widehat{Y}_1)]}, \qquad \mathcal{A}\subset \Omega \text{ measurable}.
\ee
In the sequel we use the notation $\overline{u_i} = \tfrac12(u_i+u_{i+1})$, $1 \leq i \leq N$. 
 Further define (recall the definition of $C_1$ in \eqref{C123def})
\be
\label{Y0Y1def}
Y_0=\widehat{Y}_0, \qquad Y_1 = \sum_{i=1}^N \Theta_i^3= \frac{1}{C_1\beta} \widehat{Y}_1,
\ee
\be
\label{Ydefs}
Y_2 = \sum_{i=1}^N \frac{(B_{T_{i+1}}- B_{T_i})^2}{\Theta_i}, \quad
Y_3^{(m)} = \sum_{i=1}^N \big(m+ \overline{B_{T_i}}\,\big)^2 \Theta_i.
\ee
Put
\be
\delta(\beta) = \beta^{-2/3}.
\ee

\begin{proposition}[Representation of key surface integrals] 
\label{prop:keyrep}
The integrals in \eqref{eq:sumsurface} equal
\begin{equation}
\label{kikihat}
\begin{aligned}
\mathcal{I}^\UB(\kappa, \beta;C,\eps) 
&= \e^{2\pi G_\kappa \beta^{1/3} + o(\beta^{1/3})}\, \E[\e^{Y_0 - C_1\beta Y_1}]\\
&\qquad \times \int_{\mathbb R} \mathrm \d m\,\,\widehat \E\Bigl[ \e^{O(\eps) (\beta Y_1 + Y_2  + N) 
+ \frac12 Y_3^{(m)}} \1_{\{Z^{(m)}\in \cV_{C\delta(\beta)} \,\cap\, 
\cC_{C\delta(\beta)}(0) \,\cap\, \cD_\eps(0)}\}\Bigr],\\
\mathcal{I}^\LB(\kappa, \beta;C,\eps) 
&= \e^{2 \pi G_\kappa \beta^{1/3} + o(\beta^{1/3})}\, \E[\e^{Y_0 - C_1\beta Y_1}]\\
&\qquad \times \int_{\mathbb R} \mathrm \d m\,\,\widehat \E\Bigl[ \e^{O(\eps) (\beta Y_1 + Y_2  + N) 
+ \frac12 Y_3^{(m)}} \1_{\{Z^{(m)}\in \cV_{C\delta(\beta)} \,\cap\, \cD_\eps(0)}\}\Bigr].
\end{aligned}
\end{equation} 
where $C_1=\frac{\kappa^2}{24(\kappa-1)^3}$.
\end{proposition}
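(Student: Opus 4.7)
The plan is to transform each surface integral in \eqref{eq:sumsurface} by (i) switching to polar coordinates for the boundary points, (ii) expanding $\Delta(z)$ via Proposition~\ref{prop:expansion}, (iii) applying Lemma~\ref{lem:rhoangleint} to encode the Gaussian factor produced by $C_3^{\eps}y_2(z)$ as an expectation over $(B_t)_{t \in [0,2\pi]}$ and $\mathcal{T}$, and (iv) absorbing all remaining prefactors and sub-leading corrections into the single exponential correction $e^{O(\eps)(\beta Y_1 + Y_2 + N)}$ allowed in the statement.

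More concretely, for each $z_i$ write $z_i = (r_i\cos t_i, r_i\sin t_i)$, so that $\dd z_i = r_i\,\dd r_i\,\dd t_i$, and substitute $r_i = (R_c-2) + \rho_i$. On the event $\cD'_\eps(0)$ Proposition~\ref{prop:apriori} gives $|\rho_i|=O(\eps)$, hence
\begin{equation*}
(\kappa\beta)^n\prod_{i=1}^n r_i
= \bigl(\kappa\beta(R_c-2)\bigr)^n \exp\!\Biggl(\sum_{i=1}^n \log\!\bigl(1+\tfrac{\rho_i}{R_c-2}\bigr)\Biggr)
= \bigl(\kappa\beta(R_c-2)\bigr)^n\,\e^{O(\eps)\,n},
\end{equation*}
which will eventually become the $O(\eps)N$ piece. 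Using Proposition~\ref{prop:expansion} and $C_k^{\eps}=[1+O(\eps)]C_k$,
\begin{equation*}
\e^{-\beta\Delta(z)}
= \exp\!\Bigl(-\beta C_1 y_1(z) - \beta C_3 y_2(z) + \beta C_3 y_3(z) + O(\eps)\bigl[\beta y_1(z)+\beta y_2(z)\bigr]\Bigr),
\end{equation*}
where the $\beta C_3 y_2$-term is exactly the exponent appearing on the left-hand side of \eqref{eq:rhoaint}, since $C_3=(\kappa-1)/2$ and $\theta_i = t_{(i+1)}-t_{(i)}$.

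Now apply Lemma~\ref{lem:rhoangleint} with $f$ chosen to package the remaining factors: the residual Jacobian correction, the geometric indicator, and the surviving exponentials $\e^{-\beta C_1 y_1 + \beta C_3 y_3 + O(\eps)[\beta y_1+\beta y_2]}$. Under the scaling $\rho_i = (m+B_{T_i})/\sqrt{(\kappa-1)\beta}$ provided by the lemma, one verifies the identifications
\begin{equation*}
y_1(z)=Y_1,\qquad \beta y_2(z)=\tfrac{1}{\kappa-1}Y_2,\qquad \beta y_3(z)=\tfrac{1}{\kappa-1}Y_3^{(m)},
\end{equation*}
so $\beta C_1 y_1 = C_1\beta Y_1$, $\beta C_3 y_3 = \tfrac12 Y_3^{(m)}$, and the $O(\eps)[\beta y_1+\beta y_2]$ piece becomes $O(\eps)(\beta Y_1 + Y_2)$. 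The factor $\prod_{i=1}^N \sqrt{2\pi G_\kappa \beta^{1/3}\Theta_i}$ yielded by the lemma is exactly $\e^{Y_0}$ by \eqref{Y0Y1}, and the overall normalisation $\e^{2\pi G_\kappa\beta^{1/3}}$ is also delivered by the lemma. The combinatorial constant $\kappa\beta(R_c-2)/\sqrt{(\kappa-1)\beta}$ per point is $G_\kappa \beta^{1/2}$, which is the source of the angular intensity $\lambda(\beta)=G_\kappa\beta^{1/3}$ when combined with $\sqrt{2\pi\theta_i}$.

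Finally, dividing and multiplying by $\E[\e^{Y_0-C_1\beta Y_1}]$ converts the $\PP$-expectation into a $\widehat\PP$-expectation via \eqref{tiltedproba}, yielding the claimed formulae; the two cases $\UB$ and $\LB$ differ only in the indicator function carried through the computation. The main technical point is step (iv): ensuring that \emph{all} error terms produced along the way — from the Taylor remainders absorbed into $C_1^\eps, C_3^\eps$ in Proposition~\ref{prop:expansion}, from the Jacobian $\prod r_i$, and from the rewriting of the angular measure — can be uniformly dominated, on the geometric event $\cD'_\eps(0)$, by $O(\eps)(\beta Y_1 + Y_2 + N)$. This is where the a priori bounds of Proposition~\ref{prop:apriori} and Corollary~\ref{cor:apriori} are indispensable: they convert $\eps$-pointwise bounds into sum-level bounds of the required form without introducing additional large factors, and they justify the error notation $o(\beta^{1/3})$ for residual terms not captured in the explicit $O(\eps)(\beta Y_1+Y_2+N)$ expression.
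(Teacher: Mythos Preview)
Your proof is correct and follows essentially the same route as the paper: polar coordinates, the expansion of $\Delta(z)$ from Proposition~\ref{prop:expansion}, the Gaussian/Poisson representation via Lemma~\ref{lem:rhoangleint}, and absorption of all Jacobian and $C_k^\eps$ corrections into $O(\eps)(\beta Y_1+Y_2+N)$ using the a~priori bounds. One tiny slip: the per-point constant $\kappa\beta(R_c-2)/\sqrt{(\kappa-1)\beta}$ equals $G_\kappa^{3/2}\beta^{1/2}$, not $G_\kappa\beta^{1/2}$ (cf.\ \eqref{heursurf2}--\eqref{heursurf2alt}), though your conclusion about the angular intensity $G_\kappa\beta^{1/3}$ is unaffected.
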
 

\begin{proof} 
Return to \eqref{eq:sumsurface} and recall \eqref{C123def}, Propositions~\ref{prop:expansion} and \eqref{Ezdefs}. First we rewrite the expression in \eqref{eq:sumsurface} in terms of polar coordinates $z_i = ( r_i \cos t_i, r_i \sin t_i)$, $1 \leq i \leq N$, i.e., $\int_{\T^N} \d \z$ becomes  $\int _{\R^N} \d \r \int_{[0,2\pi)^N} \d \t \, \prod_{i=1}^N  r_i$. The latter product becomes 
\be
\label{radlim}
\prod_{i=1}^{N}  r_i = \prod_{i=1}^{N} \big[\Rc-1+\rho_i \big]
= [\Rc-1+O(\eps)]^{N} = (\Rc-1)^{N}[1+O(\eps)]^{N},
\ee
where the last equality uses the constraint $\cC_{\delta(\beta)}(0) \cap \cD_\eps(0)$ together with the a priori estimate from Proposition~\ref{prop:apriori}. In this way we obtain the factor $(\Rc-1)^N$ needed for Lemma~\ref{lem:rhoangleint}, together with an error term $\exp[O(\eps) N]$, which shows up as the term $O(\eps) N$ in \eqref{kikihat}. The factor $\e^{2\pi G_\kappa \beta^{1/3}}$ is needed to compensate for the exponent in the Poisson distribution of $N$ (recall \eqref{lambdadef}).

The Gaussian density in Lemma~\ref{lem:rhoangleint} is obtained from $\exp[-\beta\cH(\z)]$, more precisely, from the second  term in the expansion of $\cH(\z)$ in the first line of \eqref{expansions}, up to an error term $O(\eps)$ in the constant $C_3^\eps=C_3\,[1+O(\eps)]$, which shows up as the term $O(\eps)Y_2$ in \eqref{kikihat}. Hence Lemma~\ref{lem:rhoangleint} is applicable. The first and the third term in the expansion of $\cH(\z)$ in the first line of \eqref{expansions} give rise to the term $- \beta C_1^\eps Y_1 + \tfrac12 Y_3^{(m)}$ (recall \eqref{xrhoscal}). The factor $\prod_{i=1}^N \sqrt{2\pi G_\kappa \beta^{1/3}\Theta_i}$ in \eqref{eq:rhoaint} gives rise to $Y_0$. The indicator is inherited from the original expression for the integral. 
\end{proof}

%%%%%%%% SECTION 6 %%%%%%%%%%%%%%%%%%%%%%%%%%%%%%%

\section{Asymptotics of surface integrals I: preparations} 
\label{sec:prep}

Our primary task for proving Theorem~\ref{thm:zoom1} in Section~\ref{proofmoddev} is the evaluation of the key surface integrals $\mathcal I^\mathrm{UB}$ and $\mathcal I^\mathrm{LB}$ in \eqref{kikihat}. In this section we collect some properties of the auxiliary random variables appearing in Section~\ref{sec:auxiliary} that will help us to estimate these integrals. This requires various approximation arguments, including control of \emph{exponential moments} and \emph{discretisation errors}. 

In Section~\ref{angularldp} we look at moderate deviations for the angular process and compute the leading order contribution to the key surface integrals (Proposition~\ref{prop:theta-asympt} and Lemma~\ref{lem:ehatperturbed}). In Section~\ref{mcbb} we analyse the radial process, which is controlled by the mean-centred Brownian bridge introduced in \eqref{Bhatdef}--\eqref{Btildedef}(Lemma~\ref{lem:gaussproc}), and estimate two exponential moments involving the latter (Lemma~\ref{lem:varadhanprep}--\ref{lem:y3moments}). In Section~\ref{diskrerrors} we focus on discretisation errors that arise because the mean-centred Brownian motion is only observed along the angular process (Lemmas~\ref{estimateslamb}--\ref{lem:stochdiscr2}).

%%%

\subsection{Moderate deviations for the angular point process}
\label{angularldp} 

The key technical result of this section is the following. Remember $\widehat{Y}_0,\widehat{Y}_1$ from~\eqref{Y0Y1} and $Y_0, Y_1$ from \eqref{Y0Y1def}, $G_\kappa$ and $\lambda(\beta) = G_\kappa \beta^{1/3}$ from~\eqref{lambdadef}. Let $\tau^* \in \R$ be the unique solution to the equation 
\be 
\label{tstar}
\int_0^\infty \sqrt{2\pi u} \exp\Bigl(-\tau^* u - \frac{u^3}{24}\Bigr)\, \dd u =1. 
\ee
The change of variables $s = u^3/24$ together with $\int_0^\infty s^{-1/2}\e^{-s} \d s = \Gamma(\frac12) 
= \sqrt \pi$ yields
\be
\int_0^\infty \sqrt{2\pi u}\exp\left(-\frac{u^3}{24}\right)\,\d u = \frac{4 \pi}{\sqrt 3} > 1,
\ee 
and so $\tau^*>0$. A numerical approximation of the integral gives $\tau^*\in (1.60,1.61)$.

\begin{proposition}[Leading order prefactor] 
\label{prop:theta-asympt}
\be 
\label{asympbare} 
\lim_{\beta \to \infty}\frac{1}{\beta^{1/3}}\log \E\bigl[\e^{Y_0 - \beta C_1 Y_1}\bigr] =- 2\pi G_\kappa (1- \tau^*).
\ee
\end{proposition}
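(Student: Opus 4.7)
The plan is to reduce the left-hand side to a renewal-theoretic quantity via an exponential tilt, and then invoke a local central limit theorem. To begin, observe that $\beta C_1 Y_1 = \widehat Y_1$ (combining $C_1 = G_\kappa^3/24$ with the definitions in \eqref{Y0Y1}--\eqref{Y0Y1def}), so that with $\lambda = G_\kappa \beta^{1/3}$, $L := 2\pi\lambda$, and $\vartheta_i := \lambda \Theta_i$,
\be
\E\bigl[\e^{Y_0 - \beta C_1 Y_1}\bigr] = \E\Bigl[\prod_{i=1}^N f(\vartheta_i)\Bigr], \qquad f(u) := \sqrt{2\pi u}\,\e^{-u^3/24}.
\ee
Conditionally on $\{N=n\}$, the ordered points $T_1<\cdots<T_n$ have joint density $n!/(2\pi)^n$. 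Changing variables to $(T_1, \Theta_1, \ldots, \Theta_{n-1})$ with the cyclic convention $\Theta_n = 2\pi - \sum_{i<n}\Theta_i$, and integrating out $T_1 \in [0,\Theta_n]$, yields marginal density $(n!/(2\pi)^n)\,\Theta_n$ for the gaps, the extra factor $\Theta_n$ reflecting the inspection paradox for the arc containing the origin. After rescaling to $\vartheta_i$ this produces
\be
\E\bigl[\e^{Y_0 - \beta C_1 Y_1}\bigr] = \e^{-L} + \e^{-L}\sum_{n\geq 1} \int_{V_n^L} \prod_{i=1}^n f(\vartheta_i)\, \vartheta_n\, \d\vartheta_1\cdots \d\vartheta_{n-1},
\ee
with $V_n^L = \{\vartheta_i \geq 0, \sum_{i<n}\vartheta_i \leq L\}$ and $\vartheta_n = L - \sum_{i<n}\vartheta_i$.

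Next I would exploit the $S_n$-symmetry of $\prod_i f(\vartheta_i)$ on the simplex $\{\sum\vartheta_i = L\}$: since by relabeling the chart $(\vartheta_1,\ldots,\vartheta_{n-1})$ the integral with weight $\vartheta_k$ is the same for every $k\in\{1,\ldots,n\}$, and $\sum_k \vartheta_k = L$,
\be
\int_{V_n^L} \prod_{i=1}^n f(\vartheta_i)\,\vartheta_n\, \d\vartheta_1\cdots \d\vartheta_{n-1} = \frac{L}{n}\, g_n(L), \qquad g_n(L) := f^{*n}(L).
\ee
Applying the exponential tilt $\tilde f(u) := f(u)\,\e^{-\tau_* u}$, which by \eqref{tstar} is a probability density on $[0,\infty)$, one obtains $g_n(L) = \e^{\tau_* L}\tilde g_n(L)$ with $\tilde g_n := \tilde f^{*n}$, and hence
\be
\E\bigl[\e^{Y_0 - \beta C_1 Y_1}\bigr] = \e^{-L} + L\,\e^{-L(1-\tau_*)}\sum_{n\geq 1}\frac{1}{n}\,\tilde g_n(L).
\ee
Since $\tau_*>0$, the first summand is negligible at the scale $L = 2\pi G_\kappa \beta^{1/3}$, and the proposition reduces to showing that $\log\bigl(L\sum_{n\geq 1}(1/n)\tilde g_n(L)\bigr) = o(L)$ as $L\to\infty$.

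The main obstacle lies in this last asymptotic, which is not a standard renewal statement because of the weight $1/n$. Since $\tilde f(u) \propto \sqrt{u}\,\e^{-\tau_* u - u^3/24}$ has finite mean $\mu_* := \int u\,\tilde f(u)\,\d u$ and super-exponentially decaying tails, a local central limit theorem gives $\tilde g_n(L) \asymp 1/\sqrt{L}$ uniformly on the window $|n - L/\mu_*|\leq C\sqrt{L}$, while Chernoff-type bounds show $\tilde g_n(L)$ decays exponentially in $L$ outside this window. On the window $1/n = \mu_*/L + O(L^{-3/2})$, so
\be
\sum_{n\geq 1}\frac{1}{n}\tilde g_n(L) = \frac{\mu_*}{L}\,U(L) + o(1/L), \qquad U(L) := \sum_{n\geq 1}\tilde g_n(L),
\ee
and Blackwell's renewal theorem gives $U(L)\to 1/\mu_*$. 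Combining these, $L\sum_{n\geq 1}(1/n)\tilde g_n(L) \to 1$, hence $\log \E\bigl[\e^{Y_0-\beta C_1 Y_1}\bigr] = -L(1-\tau_*)+O(\log L)$. Dividing by $\beta^{1/3}$ produces the claimed limit $-2\pi G_\kappa(1-\tau_*)$.
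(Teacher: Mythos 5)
Your proof is correct and follows essentially the same route as the paper's: you uncover the same cyclic renewal structure, apply the same exponential tilt by $\tau_*$ to turn $f^{*n}$ into the $n$-fold convolution of a probability density, and conclude via a local CLT and the renewal theorem that the remaining prefactor contributes only $o(\beta^{1/3})$ to the log. You handle the $\tfrac{1}{n}$ weight slightly more carefully (showing that $L\sum_n \tfrac1n \tilde g_n(L)\to 1$ in one shot), whereas the paper uses a cruder bound for the upper estimate and a single well-chosen term $n=\lfloor L/\mu_*\rfloor$ for the lower; both suffice since only the exponential rate matters.
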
 

\begin{proof} 
The proof builds on an underlying renewal structure. Define the probability density 
\be
\label{eq:qdensity} 
q_*(u) = \sqrt{2\pi u}\, \exp\Bigl(-\tau^* u - \tfrac{1}{24} u^3 \Bigr), \qquad u \in (0,\infty), 
\ee
where $\tau^*$ is given in~\eqref{tstar}. A close look at the relevant expressions in polar coordinates reveals that
\be 
\begin{aligned} 
\label{renewal1}
\E\bigl[\e^{Y_0 - \beta C_1 Y_1}\bigr]
&= \e^{- 2\pi G_\kappa \beta^{1/3}} \Biggl(1+
\sum_{n\in\N} (G_\kappa \beta^{1/3})^n \frac{2\pi}{n}  \int_{[0,2\pi]^{n-1}} \d \theta\,  
\mathbf1_{\{\sum_{i=1}^{n-1}\theta_i\leq 2\pi\}}\\
&\qquad\qquad\qquad\qquad\qquad \times  
\prod_{i=1}^n \Bigl(\sqrt{2\pi G_\kappa \beta^{1/3} \theta_i}\,\, \e^{- \frac{1}{24} 
G_\kappa \theta_i^3}\Bigr)\Biggr),
\end{aligned}
\ee 
where $\theta_n = 2\pi - \sum_{i=1}^{n-1} \theta_i$. (The factor $\frac{2\pi}{n}$ represents the number of ways to rotate a configuration in such a way that the origin falls within in an interval of average length $\frac{2\pi}{n}$, and is similar to a factor appearing in the definition of stationary renewal processes.) Rewrite~\eqref{renewal1} with $n$-fold convolutions of $q_{*}$ as 
\be 
\label{eq:renewal2}
\E\bigl[\e^{Y_0 - \beta C_1 Y_1}\bigr]
= \e^{- 2\pi G_\kappa \beta^{1/3}} \Bigl(1+ \e^{2\pi G_\kappa \beta^{1/3} \tau^*} G_\kappa \beta^{1/3}
\sum_{n\in\N} \frac{2\pi}{n} (q_{*})^{*n}(2\pi G_\kappa \beta^{1/3})\Bigr).
\ee
If the factor $\frac{2\pi}{n}$ were absent, then the sum over $n$ would correspond to the probability for a renewal process with interarrival distribution $q_*$ to have a renewal point at time $2\pi G_\kappa \beta^{1/3}$ given that it has  a renewal point at time $0$. For large time intervals standard renewal theory tells us that the renewal probability converges to the inverse of the expected interarrival time, which is finite. Therefore we may expect the inner sum to converge and the overall expression to behave like a constant times $G_\kappa \beta^{1/3}\, \exp(-2\pi G_\kappa \beta^{1/3}(1-\tau^*))$. Remember that $\tau^*>0$, so the contribution from $n=0$ is negligible. 

For the proof of the upper bound in~\eqref{asympbare}, we bound the sum over $n$ in~\eqref{eq:renewal2} by 
\be
2\pi\, \mathcal R\bigl( 2\pi G_\kappa \beta^{1/3}\bigr) 
\quad \text{ with } \quad \mathcal R(\ell)= \sum_{n\in \N} (q_{*})^{*n}(\ell).
\ee
The quantity $\mathcal R(\ell)$ solves the renewal equation 
\be
\mathcal R(\ell) = q_*(\ell) + \int_0^\infty \d y\,q_*(y)\,\mathcal R(\ell - y). 
\ee
It follows from~\cite[Theorem 2, Chapter XI.3]{feller-vol2} and the smoothness of $\ell\mapsto\mathcal R(\ell)$ that 
\be
\lim_{\ell\to \infty} \mathcal R(\ell) = \frac{1}{\int_0^\infty \d u\,u\, q_*(u)}
\ee
and hence $\lim_{\ell\to \infty}\frac{1}{\ell} \log \mathcal R(\ell) = 0$. Combining this with~\eqref{eq:renewal2} and recalling that $\tau^*>0$, we get 
\be
\limsup_{\beta\to \infty} \frac{1}{\beta^{1/3}} \log \E\bigl[\e^{Y_0 - \beta C_1 Y_1}\bigr] 
\leq 2\pi G_\kappa (\tau^*-1). 
\ee

For the proof of the lower bound in~\eqref{asympbare}, we drop all except one term from the sum in~\eqref{eq:renewal2}, i.e., 
\be
\label{lbrenew}
\E[\e^{Y_0 - C_1\beta Y_1}] \geq G_\kappa \beta^{1/3} \e^{- 2\pi G_\kappa \beta^{1/3}(1 - \tau^*)} 
\frac{2\pi}{n} (q_{*})^{*n}(2\pi G_\kappa \beta^{1/3}).
\ee
This inequality holds for every $n\in \N$, and a proper choice will be made later.  Let $(X_i)_{i\in\N}$ be i.i.d.\ random variables with probability density function $q_*$. Then $\E[X_1] = \mu_*$ with $\mu_* = \int_0^\infty \d u\,u\, q_*(u)$, and $(\sum_{i=1}^n X_i - n \mu_*)/\sqrt{n}$ has probability density function 
\be
p_n(y) = \sqrt{n}\, (q_*)^{*n}\bigl( \sqrt{n}\,[n \mu_* +y]\bigr). 
\ee
Put differently, 
\be
(q_*)^{*n}(x) = \tfrac{1}{\sqrt{n}}\,p_n\bigl(\tfrac{x - n \mu_*}{\sqrt n}\bigr). 
\ee
By the local central limit theorem for i.i.d.\ random variables with densities (see~\cite[Chapter 4.5]{IL}), we have 
\be
\lim_{n\to \infty} \sup_{y\in \R}\,  \Bigl|\, p_n(y) - \frac{1}{\sqrt{2\pi  \sigma^2}}
\exp\Bigl( - \tfrac{y^2}{2\sigma^2}\Bigr)\Bigr| = 0
\ee
with $\sigma^2$ the variance of $X_1$. We now choose $n = n(\beta) = \lfloor 2\pi G_\kappa \beta^{1/3} /\mu_* \rfloor$.  Then $2\pi G_\kappa \beta^{1/3} = n(\beta) \mu_* + o(1)$ and 
\be
(q_{*})^{*{n(\beta)}}(2\pi G_\kappa \beta^{1/3}) = \frac{1}{\sqrt{n(\beta)}}\, p_{n(\beta)}\bigl( o(1)\bigr) 
= [1+ o(1)]\, \frac{1}{\sqrt{2\pi G_\kappa \beta^{1/3}/\mu_*}} \frac{1}{\sqrt{2\pi  \sigma^2 }}. 
\ee
Consequently, 
\be
\lim_{\beta \to \infty} \frac{1}{\beta^{1/3}} \log\,\Big[(q_{*})^{*{n(\beta)}}(2\pi G_\kappa \beta^{1/3})\Big] = 0
\ee
and so \eqref{lbrenew} gives
\be
\liminf_{\beta \to \infty}\frac{1}{\beta^{1/3}}\log \E[\e^{Y_0 - C_1\beta Y_1}] \geq- 2\pi G_\kappa (1-\tau^*). 
\ee
\end{proof}

Proposition~\ref{prop:theta-asympt} is complemented by the following lemma, which will help us take care of small perturbations. 

\begin{lemma}
\label{lem:ehatperturbed}
As $\delta \downarrow 0$, 
\begin{equation}
\limsup_{\beta\to \infty}\left| \frac{1}{\beta^{1/3}} \log
\widehat \E\bigl[\e^{O(\delta)(\beta Y_1 + N)}\bigr] \right| = O(\delta). 
\end{equation}
\end{lemma}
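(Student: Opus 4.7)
The strategy is to repeat the renewal-theoretic analysis of Proposition~\ref{prop:theta-asympt} for a one-parameter family of tilted interarrival densities and to exploit smoothness of the tilt in the perturbation parameter.

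First I would unpack the tilted measure. By \eqref{tiltedproba} together with the identities $Y_0=\widehat Y_0$ and $\beta Y_1 = \widehat Y_1/C_1$ from \eqref{Y0Y1def}, writing $a$ for the $O(\delta)$ quantity multiplying $\beta Y_1 + N$, we have
\be
\widehat\EE\bigl[\e^{a(\beta Y_1 + N)}\bigr]
= \frac{\EE\bigl[\exp\bigl(\widehat Y_0 - (1-a/C_1)\widehat Y_1 + aN\bigr)\bigr]}
{\EE\bigl[\exp(\widehat Y_0 - \widehat Y_1)\bigr]}.
\ee
The denominator was evaluated in Proposition~\ref{prop:theta-asympt} (together with the equality of its exponent with $Y_0-C_1\beta Y_1$): it equals $\exp(-2\pi G_\kappa\beta^{1/3}(1-\tau_*)+o(\beta^{1/3}))$.

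Next, I would re-run the derivation leading to \eqref{renewal1}--\eqref{eq:renewal2} with the same change of variables $u_i = G_\kappa\beta^{1/3}\Theta_i$, but now for the numerator. The extra $\e^{aN}$ contributes one factor of $\e^a$ per boundary point, while the extra $\e^{a\beta Y_1}$ rescales the cubic term in the exponent. The result is an analogous renewal-type expression in which $q_*$ from \eqref{eq:qdensity} is replaced by
\be
q_*^{(a)}(u) = \e^a\sqrt{2\pi u}\,\exp\Bigl(-\tau_*^{(a)}\,u - \frac{1-a/C_1}{24}u^3\Bigr),
\qquad u>0,
\ee
where $\tau_*^{(a)}$ is the unique positive solution of $\int_0^\infty q_*^{(a)}(u)\,\d u=1$. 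The implicit function theorem, applied to the smooth and strictly decreasing (in $\tau$) map $(\tau,a)\mapsto\int_0^\infty\e^a\sqrt{2\pi u}\exp(-\tau u-(1-a/C_1)u^3/24)\d u$ at $(\tau_*,0)$, shows that $\tau_*^{(a)}$ is smooth in $a$ near $0$ with $\tau_*^{(0)}=\tau_*$, hence $\tau_*^{(a)}=\tau_*+O(a)=\tau_*+O(\delta)$ uniformly.

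Repeating the renewal-theoretic argument of Proposition~\ref{prop:theta-asympt} verbatim with $q_*^{(a)}$ in place of $q_*$ (the upper bound uses boundedness of the renewal density $\mathcal R^{(a)}$ at infinity via Feller's key renewal theorem; the lower bound uses the local central limit theorem for i.i.d.\ draws from $q_*^{(a)}$), I obtain
\be
\EE\bigl[\exp(\widehat Y_0 - (1-a/C_1)\widehat Y_1 + aN)\bigr]
= \exp\!\bigl(-2\pi G_\kappa\beta^{1/3}(1-\tau_*^{(a)})+o(\beta^{1/3})\bigr),
\ee
so that
\be
\frac{1}{\beta^{1/3}}\log \widehat\EE\bigl[\e^{a(\beta Y_1+N)}\bigr]
= 2\pi G_\kappa\bigl(\tau_*^{(a)}-\tau_*\bigr)+o(1) = O(\delta)+o(1).
\ee
Taking $\limsup_{\beta\to\infty}$ and then letting $\delta\downarrow 0$ yields the claim.

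The only genuinely delicate point is checking that the $o(\beta^{1/3})$ corrections in the renewal/local CLT asymptotics hold \emph{uniformly} in $a$ on a small neighborhood of $0$; this is standard because $q_*^{(a)}$ is smooth in $a$, has uniformly bounded exponential moments, and has variance bounded away from $0$ and $\infty$, so the rate of convergence in the local CLT and in the key renewal theorem can be made uniform in $a$ for $|a|$ small. No other obstacle arises.
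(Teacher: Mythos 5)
Your proposal follows the paper's own proof essentially verbatim: both rewrite the tilted expectation as a ratio of untilted exponential moments, recognize that the numerator is a perturbed version of the quantity handled in Proposition~\ref{prop:theta-asympt} with interarrival density $q_*$ replaced by an $a$-tilted version, and conclude by showing $\tau_*^{(a)}=\tau_*+O(\delta)$. Your additional remarks (invoking the implicit function theorem explicitly, and flagging the uniformity of the $o(\beta^{1/3})$ error in $a$) are sound elaborations of steps the paper leaves implicit, though the uniformity point is not actually needed here since the $O(\delta)$ factor is $\beta$-independent and the $\limsup_{\beta\to\infty}$ is taken at fixed $\delta$.
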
 

\begin{proof}
Let $c\geq 0$ be an arbitrary constant that does not depend on $\beta$. We write $- c \min (1, C_1) \delta \leq O(\delta) \leq c \min (1, C_1) \delta$. Then 
\be 
\label{ehat1}
\log\widehat \E\bigl[\e^{c \delta (\beta C_1 Y_1 + N)}\bigr] 
= \log \E \bigl[\e^{c \delta N + Y_0 - (1- c\delta) \beta C_1 Y_1}\bigr] 
- \log \E\bigl[\e^{Y_0 - \beta C_1 Y_1}\bigr]. 
\ee
The asymptotic behavior of the second term in the difference is given by Proposition~\ref{prop:theta-asympt}. For the first term, let $\tau^*(c\delta)$ be the solution of 
\be
\e^{c\delta} \int_0^\infty \sqrt{2\pi u}\, \exp\Bigl( - \tau^*(c\delta) - \tfrac{1}{24}(1-c\delta) u^3\Bigr) \, \d u =1. 
\ee
Thus, $\tau^*(0) = \tau^*$. For sufficiently small $\delta$, the solution exists, is unique, and satisfies 
\be
\tau^*(c\delta) - \tau^* = O(\delta) \quad (\delta \downarrow 0). 
\ee
Arguments analogous to the proof of Proposition~\ref{prop:theta-asympt} show that
\be
\lim_{\beta\to \infty}\frac{1}{\beta^{1/3}} \log  \E \bigl[\e^{c \delta N + Y_0 - (1- c\delta) 
\beta C_1 Y_1}\bigr] = - 2\pi \bigl(1 - \tau^*(c\delta)\bigr).  
\ee
Hence \eqref{ehat1} yields 
\be
\lim_{\beta \to \infty} \frac{1}{\beta^{1/3}} \log \widehat \E\bigl[\e^{c \delta (\beta C_1 Y_1 + N)}\bigr]  
= - 2\pi \bigl( \tau^*(c\delta) - \tau^*\bigr) = O(\delta). 
\ee
A similar argument shows that
\be
\lim_{\beta \to \infty} \frac{1}{\beta^{1/3}} \log \widehat \E\bigl[\e^{- c \delta (\beta C_1 Y_1 + N)}\bigr]  
= O(\delta). 
\ee
\end{proof} 

We close this section with the following observation, which will not be needed in the sequel but is nonetheless instructive. Let $\mathcal P(0,\infty)$ denote the space of probability measures on $(0,\infty)$, equipped with the weak topology, and put
\begin{equation}
\label{Varadhan2}
\mathcal M = \left\{(x,\mu) \in [0,\infty) \times \cP(0,\infty) \colon\,	
\int_0^\infty \d\mu(\alpha)\,\alpha =1\right\}.
\end{equation}
Define
\be
L = \frac{1}{N} \sum_{i=1}^{N} \delta_{N\Theta_i/2\pi}.
\ee 
For $(x,\mu) \in \mathcal M$ with $x>0$, define 
\be 
\label{ratefunct}
I_\mathcal{T}(x,\mu) = (x \log x - x +1) + x H \big(\mu \mid \mathrm{EXP}(1)\big),
\ee
where $\mathrm{EXP}(1)$ is the exponential distribution with parameter $1$, and $H(\mu \mid \mathrm{EXP}(1))$ is the relative entropy of $\mu$ with respect to $\mathrm{EXP}(1)$. For $(0,\mu)\in \mathcal M$, define
\be 
\label{ratefunct0}
I_\mathcal{T}(0,\mu) = \liminf_{\substack{(x,\nu)\to (0,\mu):\\x>0,\,\nu \in \cP(0,\infty)}} 
I_\mathcal T(x,\nu),
\ee
Then the family
\be
\label{NLfam}
\left(\mathbb{P}\bigg(\Big(\frac{N}{2\pi G_\kappa \beta^{1/3}},L\Big) \in \cdot\, 
\mid N\geq 1\bigg)\right)_{\beta \geq 1}
\ee
satisfies the weak LDP on $\mathcal M$ with rate $2\pi G_\kappa \beta^{1/3}$ and lower semi-continuous rate function $I_\mathcal T$. (Not all level sets of $I_\mathcal T$ are compact.)

%%%

\subsection{Properties of the mean-centred Brownian bridge} 
\label{mcbb}

\medskip \noindent 
\textbf{Covariance of mean-centred Brownian bridge.}
The following lemma clarifies the nature of the process $( B_t)_{t\in [0,2\pi]}$ and will be used repeatedly later on. 
 
\begin{lemma} 
\label{lem:gaussproc}\hfill 
\begin{enumerate}[(a)] 
\item[{\rm (a)}]
$( B_t)_{t\in [0,2\pi]}$ is a Gaussian process with mean $\E[ B_t]=0$ 
and covariance $\E[B_t  B_s] = k(t-s)$, where 
\be\label{covdef}
k(t) = \frac{1}{4\pi}\big[(\pi - |t|)^2 - \pi^2\big] +\frac{\pi}{6}.
\ee 
\item[{\rm (b)}] 
For every continuous function $f\colon\,[0,2\pi]\to \R$, 
\be\label{eq:gaussiancf}
\E\bigl[ \e^{\mathrm{i} \int_0^{2\pi} \d t\,f(t)  B_t}\bigr] 
= \e^{- \frac{1}{2}\langle f,K f\rangle},
\ee
with $\langle \cdot,\cdot\rangle$ the scalar product in $L^2([0,2\pi])$, and $g(t)=(Kf)(t)= \int_0^{2\pi} \d s\,k(t-s) f(s)$ the solution of $- g''(t) =f (t)- \frac{1}{2\pi} \int_0^{2\pi} \d s\,f(s)$, $g(2\pi) = g(0)$ and $\int_0^{2\pi} \d s\,g(s) = 0$.  
\end{enumerate} 
\end{lemma}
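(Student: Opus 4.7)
For part (a), the computation is direct from the definitions. Since $W$ is standard Brownian motion, the standard Brownian bridge $\widetilde W_t = W_t - (t/2\pi) W_{2\pi}$ has the textbook covariance $\EE[\widetilde W_t \widetilde W_s] = \min(t,s) - ts/(2\pi)$. Writing $M = \frac{1}{2\pi}\int_0^{2\pi}\widetilde W_u\,\d u$, a bilinear expansion gives
\[
\EE[B_t B_s] = \EE[\widetilde W_t \widetilde W_s] - \tfrac{1}{2\pi}\int_0^{2\pi}\EE[\widetilde W_t\widetilde W_u]\,\d u - \tfrac{1}{2\pi}\int_0^{2\pi}\EE[\widetilde W_s\widetilde W_u]\,\d u + \EE[M^2].
\]
Each of the auxiliary integrals is elementary: a split at $u=t$ yields $\int_0^{2\pi}[\min(t,u) - tu/(2\pi)]\,\d u = \pi t - t^2/2$, and integrating once more gives $\EE[M^2] = \pi/6$. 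Assuming $s\leq t$ without loss of generality, collecting terms collapses everything into a quadratic in $t-s$ which one checks matches the claimed expression for $k(t-s)$.

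For part (b), the characteristic functional formula is automatic: $B$ is a centred Gaussian process, so $\int_0^{2\pi} f(t)B_t\,\d t$ is a centred Gaussian random variable whose variance is, by Fubini, exactly $\langle f,Kf\rangle$. The substantive content lies in identifying $Kf$ as the solution of the stated ODE. The plan is to differentiate $g(t) = \int_0^{2\pi} k(t-s)f(s)\,\d s$ twice after splitting the integral at $s=t$. The kernel $k$ is continuous everywhere, but direct inspection of the piecewise-quadratic formula gives $k'(0^{\pm}) = \mp 1/2$ and $k''(r) = 1/(2\pi)$ for $r\neq 0$, so $k'$ has a jump of $-1$ at the origin. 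Differentiating once, the boundary contributions from $s=t$ cancel because $k$ is continuous at $0$; differentiating a second time produces a boundary contribution of $-f(t)$ from the jump of $k'$, plus a bulk contribution $(1/2\pi)\int_0^{2\pi} f(s)\,\d s$ from the constant second derivative. This yields exactly $-g''(t) = f(t) - (1/2\pi)\int_0^{2\pi} f$.

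The boundary conditions follow from two easy symmetries of $k$. A one-line algebraic identity shows $k(s) = k(2\pi - s)$ for $s\in[0,2\pi]$, and together with the evenness of $k$ this gives $g(0) = g(2\pi)$. A direct integration of the piecewise polynomial yields $\int_0^{2\pi} k(r)\,\d r = 0$, so Fubini forces $\int_0^{2\pi} g(t)\,\d t = 0$. Uniqueness of the solution is immediate: the kernel of $-\partial_t^2$ on $L^2$-functions with periodic boundary conditions consists of constants, which are excluded by the mean-zero constraint. The only real obstacle is the bookkeeping with the split integral and the jump of $k'$, but this is entirely routine.
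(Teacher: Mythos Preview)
Your proposal is correct and follows essentially the same route as the paper: the same bilinear expansion for part (a) with the same auxiliary integrals $\int_0^{2\pi}\EE[\widetilde W_t\widetilde W_u]\,\d u = \tfrac12 t(2\pi-t)$ and $\EE[M^2]=\pi/6$, and for part (b) the same split of the convolution integral at $s=t$ exploiting the jump $k'(0^+)-k'(0^-)=-1$ and the constant value of $k''$ away from the origin, together with the periodicity/symmetry and mean-zero property of $k$. (Your value $k''=1/(2\pi)$ is the correct one; the paper states $-1/(2\pi)$ but then uses the positive sign in its computation of $g''$, so this is a harmless typo there.)
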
 

\begin{proof}  
(a) $( B_t)_{t \in [0,2\pi]}$ is a linear transformation of the Gaussian process $(W_t)_{t\in[0,2\pi]}$ and therefore is itself Gaussian. The mean-zero property of $B_t$ is inherited from $W_t$. The elementary computation of the covariance is similar to Deheuvels~\cite[Lemma 2.1]{D}. We provide the details to identify constants. Set $M= \frac{1}{2\pi}\int_0^{2\pi} \d s\,B_s$. Since $\E[W_t W_s] = \min (s,t)$ we have, for $s\leq t$,
\be
\begin{aligned} 
\E[\widetilde W_s \widetilde W_t] & =\min(s,t) - \frac{st}{2\pi} = \frac{1}{2\pi} s(2\pi - t),\\
\E[\widetilde W_t M] & = \frac{1}{2\pi} \int_0^{2\pi} \d u\,\Bigl( \min(u,t) - \frac{ut}{2\pi}\Bigr)
=  \frac{1}{4\pi}t(2\pi - t),\\
\E[M^2] &=  \frac{1}{8\pi^2} \int_0^{2\pi} \d t\, (2 \pi t - t^2)
= \frac{\pi}{6},
\end{aligned} 
\ee
and hence 
\be
\begin{aligned} 
\E[  B_t  B_s] 
&= \min(s,t) - \frac{st}{2\pi} -\frac{1}{2\pi} \Bigl( - \frac{t^2}{2}+t\pi\Bigr)
- \frac{1}{2\pi} \Bigl( - \frac{s^2}{2}+s\pi\Bigr) + \E[M^2] \\
& =  \frac{1}{4\pi}\bigl[\bigl(\pi - |t-s|\bigr)^2 - \pi^2 \bigr] +\frac{\pi}{6} = k(t-s).
\end{aligned}
\ee 
By symmetry, the identity also holds for $t\leq s$. 

\medskip\noindent	
(b) \eqref{eq:gaussiancf} follows from standard arguments for Gaussian processes. The kernel $k\colon\,[-2\pi,2\pi]\to \R$ satisfies
\be 
\label{kernelprop}
k(t) = k(t+2\pi) \quad \forall\,t\in [-2\pi,0], \qquad 
\int_0^{2\pi} \d t \, k(t) = 0,
\ee 
and is twice differentiable with second derivative $-1/2\pi$, except at $t=0$ where the first derivative jumps from $+\frac{1}{2}$ to $-\frac{1}{2}$. Let $g= K f$. Then $g$ has mean zero and satisfies $g(2\pi) = g(0)$. Furthermore, 
\be
\begin{aligned}
g'(t) & = \int_0^{t} \d s\,k'(t-s) f(s) + \int_t^{2\pi} \d s\,k'(t-s) f(s),\\
g''(t) & = k'(0+) f(t) - k'(0-) f(t) + \tfrac{1}{2\pi} \int_0^{2\pi} \d s\,f(s)
= - f(t) + \tfrac{1}{2\pi} \int_0^{2\pi} \d s\,f(s).
\end{aligned}
\ee
\end{proof} 

\noindent 
For later purpose we record the variance of the increments, namely,
\be
\label{eq:cov1}
\E\bigl[( B_{t+h} -  B_t)^2\bigr] = 2 k(0) - 2 k(h) =  |h| -  \frac{h^2}{2\pi}.
\ee
Thus, for small time increments we recover the variance of standard Brownian increments. We also record the covariance of two distinct increments, namely, for $h,u\geq 0$ and $t+h\leq s$, 
\be
\label{eq:cov2}
\begin{aligned}
\E\bigl[( B_{t+h} -  B_t) ( B_{s+u} -  B_s)\bigr]
&=  \frac{1}{4\pi}\Bigl((s+u-t-h)^2 - (s+u-t)^2 - (s-t-h) ^2 + (s-t)^2\Bigr) \\
& = -\frac{1}{2\pi} hu. 
\end{aligned}
\ee 
Thus, two distinct increments are not independent, however, for $h,u \downarrow 0$ the covariance is negligible compared to the variance of the individual increments (since $hu = o(h)+ o(u)$). This will be needed in Lemma~\ref{lem:y3moments} below.

\medskip \noindent 
\textbf{Exponential moments for the mean-centred Brownian bridge.} 
For $k\in \N$, define the random variables
\be
\label{foucoeff}
A_k = \frac{k}{\sqrt{\pi}} \int_0^{2\pi} \d t\, B_t \cos (kt), \quad 
 A_k^* = \frac{k}{\sqrt{\pi}} \int_0^{2\pi} \d t\, B_t \sin (kt). 
\ee
In view of Lemma~\ref{lem:gaussproc}, these random variables are i.i.d.\ standard normal. They represent the Fourier coefficients of $ B$, i.e., 
\be
\label{fourier} 
 B_t =\frac{1}{\sqrt{\pi}} \sum_{k\in\N} \Bigl[ \frac{A_k}{k} \cos (kt) 
+ \frac{ A_k^*}{k} \sin (kt) \Bigr],
\ee 
where the series converges in $L^2(0,2\pi)$ $\P$-a.s. The expansion in \eqref{fourier} is the Karhunen-Lo{\`e}ve expansion of the Gaussian process $ B$ (see Alexanderian~\cite{A}, Deheuvels~\cite{D}, and references therein).

\begin{lemma} 
\label{lem:varadhanprep}
For every $-\infty<s<1$, 
\be
\label{Fest1}
\E\left[\exp\left[\tfrac{1}{2}s \left(\int_0^{2\pi} \d t\,{B}_t^2\right)\right]\right] 
= \prod_{k\in\N} \Bigl(1 - \frac{s}{k^2}\Bigr)^{-1} \quad \P\text{-a.s.}
\ee
For every $-\infty<s<4$,
\be
\label{Fest2}
\E\left[\exp\left[\tfrac{1}{2}s \left(\int_0^{2\pi} \d t\,{B}_t^2 - A_1^2 - {A^*_1}^2\right)\right]\right] 
= \prod_{k\in\N\setminus\{1\}} \Bigl(1 - \frac{s}{k^2}\Bigr)^{-1} \quad \P\text{-a.s.}
\ee 
\end{lemma}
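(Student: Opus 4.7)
The plan is to exploit the Karhunen--Lo\`eve expansion of $B$ in \eqref{fourier}, which diagonalises the quadratic functional $\int_0^{2\pi} B_t^2\,\d t$ in terms of the i.i.d.\ standard normals $A_k, A_k^*$. The first step is Parseval: since the family $\{\frac{1}{\sqrt\pi}\cos(kt), \frac{1}{\sqrt\pi}\sin(kt)\}_{k\in \N}$ is orthonormal in $L^2(0,2\pi)$ restricted to the subspace of mean-zero functions (which is where $B$ lives since $\int_0^{2\pi} B_t\,\d t=0$), squaring \eqref{fourier} and integrating termwise gives
\be
\int_0^{2\pi} B_t^2\,\d t = \sum_{k\in \N} \frac{A_k^2 + (A_k^*)^2}{k^2}\qquad \PP\text{-a.s.}
\ee
The termwise integration is justified by $L^2$-convergence of the Karhunen--Lo\`eve series together with orthogonality of the trigonometric system.

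Next, I would compute the Laplace transform by monotone convergence. Writing $S_n = \sum_{k=1}^n (A_k^2+(A_k^*)^2)/k^2$, the sequence $\exp(\tfrac12 s S_n)$ increases to $\exp(\tfrac12 s \int_0^{2\pi} B_t^2\,\d t)$ when $s\geq 0$, and for $s<0$ I would apply dominated convergence with the constant $1$ as dominating function. By independence of the $A_k, A_k^*$ and the standard Gaussian identity $\E[\exp(\tfrac12 s X^2)] = (1-s)^{-1/2}$ for $s<1$,
\be
\E\bigl[\exp(\tfrac12 s S_n)\bigr] = \prod_{k=1}^n \Bigl(1 - \frac{s}{k^2}\Bigr)^{-1}, \qquad s<1,
\ee
and letting $n\to \infty$ yields \eqref{Fest1}. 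The requirement $s<1$ is exactly the binding constraint at $k=1$; for $s\geq 1$ the $k=1$ factor blows up, consistent with divergence of the expectation.

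For \eqref{Fest2}, the key observation is that $A_1^2 + (A_1^*)^2$ is precisely the $k=1$ summand in the Parseval identity (using the definitions \eqref{foucoeff}, which give $A_k,A_k^*$ as the scalar products of $B$ against the normalised trigonometric basis). Subtracting it removes the $k=1$ term, so
\be
\int_0^{2\pi} B_t^2\,\d t - A_1^2 - (A_1^*)^2 = \sum_{k\geq 2} \frac{A_k^2 + (A_k^*)^2}{k^2}.
\ee
Repeating the same monotone/dominated convergence argument gives the product over $k\geq 2$, and the new convergence threshold is $s<4$ because the $k=2$ factor is now the binding one. The main (very mild) obstacle is simply being careful with termwise integration and the interchange of limit and expectation; everything else is routine Gaussian computation.
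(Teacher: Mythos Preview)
Your proposal is correct and follows essentially the same approach as the paper: the paper's proof also invokes the Karhunen--Lo\`eve expansion \eqref{fourier} to obtain the Parseval identity \eqref{bnorm}, then applies the standard Gaussian moment formula $\EE[\exp(\tfrac12 u X^2)]=(1-u)^{-1/2}$ with $u=s/k^2$. Your write-up is in fact more careful than the paper's, which does not spell out the monotone/dominated convergence step.
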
 

\begin{proof} 
Note that \eqref{fourier} implies 
\be 
\label{bnorm}
\int_0^{2\pi} \d t\, B_t^2 = \sum_{k\in\N} \tfrac{1}{k^2}(A_k^2 +  A_k^{*2}).
\ee
Since $A_k$, $A_k^*$ are i.i.d.\ standard normal, the claim follows from the identity $\E[\exp(\tfrac12 u X^2)] = (1-u)^{-1/2}$ when $X$ is standard normal and $u<1$. Apply this identity with $u=s/k^2$, $k \in \N$ to get \eqref{Fest1}. The proof of \eqref{Fest2} is similar.
\end{proof} 

The next lemma allows us to subsume the error term $O(\eps)Y_2$ into the error term $O(\eps) N$ appearing in \eqref{kikihat}.

\begin{lemma} 
\label{lem:y3moments}
Let $0\leq t_1<\cdots<t_n<2\pi$, $t_{n+1} = t_1+2\pi$, and $\theta_i = t_{i+1} - t_i$. 
For every $s \in (0,1)$, 
\be
\E\left[\exp\Bigl( \tfrac{1}{2} s \sum_{i=1}^n
\frac{(B_{t_{i+1}} - B_{t_i})^2}{\theta_i}\Bigr) \right]  
= \frac{1}{(\sqrt{1-s}\,)^{n-1}}.
\ee
\end{lemma}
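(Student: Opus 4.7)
The plan is to reduce the exponential moment to the moment generating function of a chi-square distribution, by exploiting the Gaussian structure of the increments of the mean-centred Brownian bridge together with the linear constraint forced by periodicity.

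First I would set $X_i = B_{t_{i+1}} - B_{t_i}$ for $i=1,\ldots,n$, treating the index cyclically; since $(B_t)_{t\in[0,2\pi]}$ satisfies $B_0 = B_{2\pi}$ (both equal to $-M$ in the notation of the proof of Lemma~\ref{lem:gaussproc}), we have the crucial degeneracy constraint
\begin{equation*}
\sum_{i=1}^n X_i = 0.
\end{equation*}
Using \eqref{eq:cov1} and \eqref{eq:cov2}, the covariance matrix of the Gaussian vector $X=(X_1,\ldots,X_n)$ is
\begin{equation*}
\Sigma_{ij} = \theta_i \delta_{ij} - \frac{\theta_i \theta_j}{2\pi}, \qquad 1 \leq i,j \leq n.
\end{equation*}

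Next I would diagonalise the quadratic form by rescaling. Set $Y_i = X_i/\sqrt{\theta_i}$, so that
\begin{equation*}
\frac12 s\sum_{i=1}^n \frac{(B_{t_{i+1}} - B_{t_i})^2}{\theta_i} = \frac{s}{2}\,|Y|^2,
\end{equation*}
and let $e\in\R^n$ be the unit vector with components $e_i = \sqrt{\theta_i/(2\pi)}$ (unit because $\sum_i \theta_i = 2\pi$). A direct computation then gives
\begin{equation*}
\mathrm{cov}(Y_i,Y_j) = \delta_{ij} - e_i e_j,
\end{equation*}
i.e., the covariance of $Y$ is the orthogonal projector $P = I_n - ee^{\top}$ onto $e^{\perp}$. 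The constraint $\sum_i X_i = 0$ reads $\langle Y,e\rangle = 0$, consistent with $Y$ being supported in $e^{\perp}$.

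Consequently, $Y$ is distributed as the standard Gaussian on the $(n-1)$-dimensional subspace $e^{\perp}$: if $Z\sim\mathcal N(0,I_n)$, then $Y \overset{d}{=} PZ$, and on $e^\perp$ the law $PZ$ has identity covariance. Hence $|Y|^2 = |PZ|^2$ is chi-square distributed with $n-1$ degrees of freedom, and the standard moment generating function formula yields, for any $s\in(0,1)$,
\begin{equation*}
\EE\!\left[\exp\!\Bigl(\tfrac12 s \sum_{i=1}^n \frac{(B_{t_{i+1}} - B_{t_i})^2}{\theta_i}\Bigr)\right]
= \EE\!\left[\e^{\frac{s}{2}|Y|^2}\right]
= (1-s)^{-(n-1)/2},
\end{equation*}
which is exactly the claimed identity.

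The only place requiring care is the identification of the covariance as the rank-$(n-1)$ projector; this is really a geometric restatement of the fact that the $n$ increments sum to zero. Once this is made explicit, the computation is entirely mechanical, so no serious obstacle is expected.
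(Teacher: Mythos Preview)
Your proof is correct and essentially identical to the paper's: the paper also rescales to $L_i = (B_{t_{i+1}}-B_{t_i})/\sqrt{\theta_i}$, identifies the covariance as the rank-$(n-1)$ projector $\mathrm{id} - ee^\top$ with $e_i=\sqrt{\theta_i/2\pi}$, and then computes the Gaussian integral in orthonormal coordinates on $e^\perp$. The only cosmetic difference is that you name the resulting distribution as $\chi^2_{n-1}$ and quote its moment generating function, whereas the paper writes out the $(n-1)$-dimensional Gaussian integral directly.
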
 

\begin{proof} 
Define $L_i = (B_{t_{i+1}} - B_{t_i})/\sqrt{\theta_i}$, $1 \leq i \leq n$. It follows from \eqref{eq:cov1} and \eqref{eq:cov2} that $(L_i)_{1 \leq i \leq n}$ is a Gaussian vector with covariance matrix $C =(C_{ij})_{1\leq i,j\leq n}$ given by
\be 
\label{scaledcov}
C_{ij} = \delta_{ij} - \frac{\sqrt{\theta_i\theta_j}}{2\pi}.
\ee
Hence $C = \mathrm{id} - P$, where $P$ is the orthogonal projection onto the linear span of $(\sqrt{\theta_i/2\pi})_{i=1}^n$ in $\R^n$. Thus, $C$ is the orthogonal projection onto the $(n-1)$-dimensional hyperplane defined by $\{\ell=(\ell_i)_{i=1}^n\colon\, \sum_{i=1}^n \ell_i \sqrt{\theta_i} = 0\}$. Using orthonormal coordinates on that hyperplane, we find that 
\be
\E\left[ \exp\Bigl(\tfrac{1}{2}s \sum_{i=1}^n L_i^2\Bigr)\right] 
= \frac{1}{\sqrt{2\pi}^{\,n-1}} \int_{\R^{n-1}} \d x\,\exp\Bigl( \tfrac{1}{2} s x^2 - \tfrac{1}{2} x^2\Bigr) 
= \frac{1}{(\sqrt{1-s}\,)^{\,n-1}},
\ee
which settles the claim.
\end{proof} 

%%%

\subsection{Discretisation errors}
\label{diskrerrors}
 
%%%

\paragraph{Deterministic discretisation errors.}
Later we need to bound the error in the approximations 
\be
\sum_{i=1}^n \overline B_{t_i}\theta_i \approx \int_0^{2\pi} B_t\, \d t 	=0, \qquad 
\sum_{i=1}^n {\overline B_{t_i}}^2 \theta_i \approx \int_0^{2\pi} B_t^2\, \d t, 
\ee
and discretisation errors for Fourier coefficients. Lemmas~\ref{lem:stochdiscr3} and~\ref{lem:stochdiscr1} treat the errors as random variables and provide bounds for their exponential moments. The proofs build on bounds for deterministic discretisation errors, which we provide first. 

Let $\mathcal H$ be the space of absolutely continuous functions $f: [0,2\pi]\to \R$  with square-integrable derivative, satisfying $\tau(2\pi) = \tau(0)$ and $\int 0^{2\pi} \tau(s) \d s =0$. Let $\tau\in\cH$. Note that $|\tau(t) - \tau(s)| = |\int_s^t \dot {\tau}(s')\d s'| \leq ||\dot \tau||_2 \sqrt{2\pi}$, and hence
\be\label{tausup}
\begin{aligned}
|\tau(t)| &= \left| \frac{1}{2\pi}\int_{0}^{2\pi} \d s\, [\tau(t) - \tau(s)] \right| 
\leq \sqrt{2\pi}\, \|\dot \tau\|_2,\\
\|\tau\|_\infty &\leq \sqrt{2\pi}\, \|\dot \tau\|_2.
\end{aligned}
\ee
Set 
\be
\overline{\tau_i}=\tfrac{1}{2}\big[\tau(t_i) +\tau(t_{i+1})\big],
\ee
and consider the sums
\begin{equation}
\label{Lambdaidef}
\begin{aligned}
&\Lambda_1 = \sum_{i=1}^n \overline{\tau_i}^2\,\theta_i, \quad  
\Lambda_2 = \sum_{i=1}^n \tau(t_i)\,\theta_i,\\
&\Lambda_3 = \sum_{i=1}^n \tau(t_i)\,\theta_i \cos t_i, \quad 
\Lambda_4 = \sum_{i=1}^n \tau(t_i)\,\theta_i \sin t_i.
\end{aligned}
\end{equation}

\begin{lemma}
\label{estimateslamb}
Suppose that $\tau\in\cH$ and put $\eps_n =\sum_{i=1}^n \theta_i^3 = y_2(\z)$. Then
\be
\label{XYapprox}
\begin{array}{lll}
&\big|\Lambda_1-\|\tau\|_2^2\big| \leq  \sqrt{\eps_n}\,\|\tau\|_\infty \|\dot\tau\|_2,
&\big|\Lambda_2\big| \leq \sqrt{\eps_n/3}\,\|\dot\tau\|_2,\\[0.2cm]
&\big|\Lambda_3-\int_0^{2\pi} \d t\,\tau(t)\cos t \big| 
\leq 2\sqrt{\eps_n/3}\,\|\dot\tau\|_2,
&\big|\Lambda_4-\int_0^{2\pi} \d t\,\tau(t)\sin t\big|  
\leq 2 \sqrt{\eps_n/3}\,\|\dot\tau\|_2.
\end{array}
\ee
\end{lemma}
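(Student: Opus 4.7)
All four bounds will follow from a single template: realise each $\Lambda_k$ minus its continuous counterpart as a sum of local discrepancies $\sum_{i}\int_{t_i}^{t_{i+1}} E_i(t)\,\d t$, express $E_i$ through Fubini starting from $\tau(t)-\tau(t_i)=\int_{t_i}^t\dot\tau(s)\,\d s$, and close the estimate by two Cauchy--Schwarz inequalities (one in $s$ inside each interval, one in $i$ over the outer sum). The factor $\eps_n=\sum_i\theta_i^3$ is produced by the elementary identity $\int_{t_i}^{t_{i+1}}(t_{i+1}-s)^2\,\d s=\theta_i^3/3$.

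The cleanest instance is $\Lambda_2$. Since $\int_0^{2\pi}\tau\,\d t=0$ by the definition of $\cH$, we have $\Lambda_2=\sum_i\int_{t_i}^{t_{i+1}}[\tau(t_i)-\tau(t)]\,\d t$, and a swap of the order of integration turns each summand into $-\int_{t_i}^{t_{i+1}}(t_{i+1}-s)\dot\tau(s)\,\d s$. Cauchy--Schwarz in $s$ delivers the bound $(\theta_i^3/3)^{1/2}\|\dot\tau\mathbf{1}_{[t_i,t_{i+1}]}\|_2$ on each summand, and a second Cauchy--Schwarz in $i$ collapses the sum to $\sqrt{\eps_n/3}\,\|\dot\tau\|_2$. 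For $\Lambda_3$ and $\Lambda_4$ I would simply rerun the $\Lambda_2$ argument with $\tilde\tau(t)=\tau(t)\cos t$ and $\tilde\tau(t)=\tau(t)\sin t$ in place of $\tau$, obtaining $|\Lambda_{3,4}-\int_0^{2\pi}\tilde\tau\,\d t|\le\sqrt{\eps_n/3}\,\|\dot{\tilde\tau}\|_2$. The only extra work is to bound $\|\dot{\tilde\tau}\|_2\le\|\dot\tau\|_2+\|\tau\|_2$ and then absorb the second term via the Poincar\'e--Wirtinger inequality $\|\tau\|_2\le\|\dot\tau\|_2$ for mean-zero $2\pi$-periodic functions (immediate from $\|\tau\|_2^2=2\pi\sum_{k\neq 0}|c_k|^2\le 2\pi\sum_{k\neq 0} k^2|c_k|^2=\|\dot\tau\|_2^2$). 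This is where the factor $2$ in the stated bounds for $\Lambda_3,\Lambda_4$ comes from.

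For $\Lambda_1$ the telescoping takes the form $\Lambda_1-\|\tau\|_2^2=\sum_i\int_{t_i}^{t_{i+1}}(\overline{\tau_i}-\tau(t))(\overline{\tau_i}+\tau(t))\,\d t$. The second factor is bounded trivially by $2\|\tau\|_\infty$. For the first, writing $\overline{\tau_i}-\tau(t)=\tfrac12[(\tau(t_i)-\tau(t))+(\tau(t_{i+1})-\tau(t))]$ and applying the fundamental theorem of calculus gives the pointwise estimate $|\overline{\tau_i}-\tau(t)|\le\tfrac12\int_{t_i}^{t_{i+1}}|\dot\tau(s)|\,\d s$, so each $t$-integral is controlled by $\|\tau\|_\infty\,\theta_i\int_{t_i}^{t_{i+1}}|\dot\tau|\,\d s\le\|\tau\|_\infty\,\theta_i^{3/2}\,\|\dot\tau\mathbf{1}_{[t_i,t_{i+1}]}\|_2$, the last step by Cauchy--Schwarz. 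A final Cauchy--Schwarz in $i$ yields $|\Lambda_1-\|\tau\|_2^2|\le\sqrt{\eps_n}\,\|\tau\|_\infty\|\dot\tau\|_2$. None of the four arguments are subtle; the only point that needs a moment's care is the Poincar\'e reduction that absorbs $\|\tau\|_2$ into $\|\dot\tau\|_2$ in the $\Lambda_3/\Lambda_4$ estimates, without which the bounds would be stated in terms of $\|\dot\tau\|_2+\|\tau\|_2$ rather than $2\|\dot\tau\|_2$.
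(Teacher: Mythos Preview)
Your proof is correct and, for $\Lambda_1$ and $\Lambda_2$, matches the paper's argument essentially line by line: the same telescoping via $\tau(t_i)-\tau(t)=-\int_{t_i}^t\dot\tau$, the same Fubini producing the weight $t_{i+1}-s$, and the same Cauchy--Schwarz closing (you split it into a local and a global application, the paper does one global Cauchy--Schwarz against $\|\dot f\|_2$; the bounds coincide).

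For $\Lambda_3$ and $\Lambda_4$ your route is actually more complete than what the paper records. The paper says to apply the general discretisation bound with ``$f(s)=\cos s$'' and ``$f(s)=\sin s$'', which as written does not produce $\Lambda_3=\sum_i\tau(t_i)\cos(t_i)\,\theta_i$; the intended choice is clearly $f=\tau\cos$ and $f=\tau\sin$, exactly as you do. The paper also does not indicate how the resulting $\|\dot f\|_2$ is converted to $2\|\dot\tau\|_2$. Your Poincar\'e--Wirtinger step ($\|\tau\|_2\le\|\dot\tau\|_2$ for mean-zero $2\pi$-periodic $\tau$) is the natural way to supply that missing reduction, and it is precisely what yields the stated constant~$2$. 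So on this point your write-up improves on the paper's.
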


\begin{proof}
Note that  
\be
\bigl|\tau(t)^2 - \overline{\tau_i}^2\bigr| = \bigl|\bigl(\tau(t) - \overline{\tau_i}) 
(\tau(t)+ \overline{\tau_i})\bigr|
\leq 2 \|\tau\|_\infty\, |\tau(t) - \overline{\tau_i}| 
\ee
and 
\be
\begin{aligned}
&\int_{t_i}^{t_{i+1}} \d t\,|\tau(t) - \overline{\tau_i}| 
= \int_{t_i}^{t_{i+1}} \d t \left| \int_{t_i}^t \d s\,\dot{\tau}(s)\right|
\leq  \int_{t_i}^{t_{i+1}} \d t \int_{t_i}^t \d s\,|\dot{\tau}(s)|
= \int_{t_i}^{t_{i+1}} \d s \,(t_{i+1}-s)\,|\dot{\tau}(s)|\\
&= \int_{t_{i}}^{t_{i+1}} \d s \,\tfrac12 \bigl[ (t_{i+1} - s)+ (s- t_i)\bigr] \,|\dot{\tau}(s)|
= \tfrac{1}{2} \theta_i \int_{t_i}^{t_{i+1}}\d s\,|\dot{\tau}(s)|.
\end{aligned}
\ee
It therefore follows that 
\be\label{lamba4}
\begin{aligned}
&\Bigl| \Lambda_1 - \|\tau\|_2^2\Bigr| 
= \left| \sum_{i=1}^n \overline{\tau_i}^2\theta_i - \int_0^{2\pi} \d t\, \tau(t)^2\right|
= \left| \sum_{i=1}^n \int_{t_i}^{t_{i+1}} \d t\,\big(\overline{\tau_i}^2 - \tau(t)^2\big)\right|\\
&\leq 2\|\tau\|_\infty \sum_{i=1}^n \int_{t_i}^{t_{i+1}} \d t\,|\tau(t)-\overline{\tau_i}|
\leq \|\tau\|_\infty \sum_{i=1}^n  \theta_i \int_{t_i}^{t_{i+1}} \d t\,|\dot{\tau}(t)|
=  \|\tau\|_\infty \int_0^{2\pi} \d t\, |\dot{\tau}(t)| 
\sum_{i=1}^n  \theta_i\, \1_{[t_i,t_{i+1})}(t)\\  
&\leq \|\tau\|_\infty  \|\dot \tau\|_2 \left(\int_0^{2\pi} \d t\, 
\left[\sum_{i=1}^n \theta_i \1_{[t_i,t_{i+1})}(t)\right]^2\right)^{1/2}
= \|\tau\|_\infty  \|\dot \tau\|_2 \left(\sum_{i=1}^n \theta_i^3\right)^{1/2},
\end{aligned}
\ee
which is the inequality for $\Lambda_1$.

Let $f\colon\,\R\to \R$ be absolutely continuous and $2\pi$-periodic. Suppose for simplicity that $t_1=0$ (otherwise replace integrals over $[0,2\pi]$ by integrals over $[t_1,t_{n+1}]= [t_1,t_1+2\pi]$). By partial integration we have 
\begin{align} 
\label{boundlambda2}
\sum_{i=1}^n \theta_i f(t_i) - \int_0^{2\pi} \d s\,f(s)
& = \sum_{i=1}^n \int_{t_i}^{t_{i+1}}  \d s\,[f(t_i)-f(s) ]
= - \sum_{i=1}^n \int_{t_i}^{t_{i+1}} \d u\, (t_{i+1} - u) \dot f(u),
\end{align} 
from which we get, by Cauchy-Schwarz, 
\begin{align}
\label{boundlambda2a}
\left| \sum_{i=1}^n \theta_i f(t_i) - \int_0^{2\pi} \d s\,f(s) \right|
\leq \|\dot{f}\|_2 \left(\sum_{i=1}^{n} \tfrac{1}{3} \theta_i^3\right)^{1/2}
= \|\dot{f}\|_2 \sqrt{\eps_n/3}.
\end{align}
The bound on $\Lambda_2$ is obtained from \eqref{boundlambda2a} by picking $f(s)= \tau(s)$ and using that $\int_0^{2\pi} \d s\, \tau(s) = 0$. The bounds on $\Lambda_3$ and $\Lambda_4$ are obtained from \eqref{boundlambda2a} by picking $f(s)= \cos s$ and $f(s) = \sin s$, respectively, and using that $\int_0^{2\pi} \d s\, \cos s = \int_0^{2\pi} \d s\, \sin s = 0$.
\end{proof}

%%%

\paragraph{Random discretisation errors.}
The estimates of \emph{deterministic} discretisation errors in Lemma~\ref{estimateslamb} can be used to derive estimates of exponential moments of \emph{random} discretisation errors which is needed later. 

\begin{lemma}[Discretised mean] 
\label{lem:stochdiscr3}
Put $\eps_n = \sum_{i=1}^n \theta_i^3$. Then, for every $s\in\R$, 
\be 
\label{EEintest}
\E\left[\exp\left(s\beta^{1/2} \sum_{i=1}^n \overline{B_{t_i}} \theta_i \right) \right] 
\leq \exp\Bigl(\tfrac{1}{32\pi} s^2\beta\eps_n\Bigr).
\ee
\end{lemma}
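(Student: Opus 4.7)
The random variable
\[ Z \;:=\; \sum_{i=1}^n \overline{B_{t_i}}\,\theta_i \;=\; \tfrac12 \sum_{i=1}^n \theta_i\bigl(B_{t_i}+B_{t_{i+1}}\bigr) \]
is a finite linear combination of values of the mean-centred Brownian bridge, and is therefore a centred Gaussian random variable by Lemma~\ref{lem:gaussproc}(a). Consequently its moment generating function factorises explicitly,
\[ \EE\bigl[\exp(s\beta^{1/2}Z)\bigr] \;=\; \exp\bigl(\tfrac12 s^2\beta\,\Var(Z)\bigr), \]
so the entire task reduces to an upper bound on $\Var(Z)$ in terms of the single invariant $\eps_n=\sum_i\theta_i^3$.

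To evaluate $\Var(Z)$ I would substitute the Karhunen--Lo\`eve expansion \eqref{fourier} of $B_t$ into $Z$ and use the independence and unit variance of the coefficients $A_k,A_k^*$ to get
\[ \Var(Z)\;=\;\tfrac{1}{\pi}\sum_{k\ge 1}\frac{c_k^2+s_k^2}{k^2},\qquad c_k=\sum_{i=1}^n\theta_i\,\overline{\cos kt_i},\quad s_k=\sum_{i=1}^n\theta_i\,\overline{\sin kt_i}. \]
The key observation is that $c_k$ and $s_k$ are exactly the trapezoidal-rule discretisation errors for the vanishing integrals $\int_0^{2\pi}\cos kt\,\dd t=0$ and $\int_0^{2\pi}\sin kt\,\dd t=0$. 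Applying the same integration-by-parts identity already used to derive the $\Lambda$-estimates in Lemma~\ref{estimateslamb}, but with the test function $f(t)=e^{ikt}$, gives the clean representation
\[ c_k+is_k \;=\; ik\int_0^{2\pi} e^{iku}\,w(u)\,\dd u,\qquad w(u)\;=\;\sum_{i=1}^n (u-\bar t_i)\,\mathbf 1_{[t_i,t_{i+1})}(u), \]
so that $c_k^2+s_k^2=k^2|\widehat w(k)|^2$. The factor $k^2$ precisely cancels the denominator in the Karhunen--Lo\`eve sum, removing the potential divergence at large $k$ that the trivial bound $|c_k|,|s_k|\le 2\pi$ would have produced.

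At this stage Parseval's identity on $L^2([0,2\pi])$ --- combined with the vanishing of the zeroth Fourier coefficient $\widehat w(0)=\int_0^{2\pi}w(u)\,\dd u=0$ (by symmetry of each summand of $w$ about $\bar t_i$) and the reality of $w$ (so that $|\widehat w(-k)|=|\widehat w(k)|$) --- collapses the sum over $k$ to a single $L^2$-integral,
\[ \sum_{k\ge 1}|\widehat w(k)|^2 \;=\; \pi\int_0^{2\pi} w(u)^2\,\dd u \;=\; \pi\sum_{i=1}^n\int_{t_i}^{t_{i+1}}(u-\bar t_i)^2\,\dd u \;=\; \tfrac{\pi}{12}\,\eps_n. \]
Plugging this back yields an explicit bound for $\Var(Z)$ in terms of $\eps_n$, which, substituted into the Gaussian MGF formula above, produces an estimate of the form claimed in \eqref{EEintest}; the numerical constant emerges from bookkeeping of the prefactor $1/\pi$ from the Karhunen--Lo\`eve variance structure, the factor $1/2$ from the Gaussian MGF identity, and the Parseval normalisation. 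No information about the positions of the $t_i$'s beyond $\eps_n$ is used. The only delicate step is the trapezoidal error identity together with Parseval: the trivial termwise bound on $c_k^2+s_k^2$ diverges, so the cancellation provided by $c_k+is_k=ik\widehat w(k)$ is essential.
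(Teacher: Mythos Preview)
Your approach is correct and in fact sharper than the paper's. Both begin with the same observation --- that $Z=\sum_i\overline{B_{t_i}}\theta_i$ is centred Gaussian, so $\EE[\exp(s\beta^{1/2}Z)]=\exp(\tfrac12 s^2\beta\,\Var Z)$ --- but diverge in how $\Var Z$ is controlled. The paper writes $\Var Z=(2\pi)^2\iint k(s-t)\,\Gamma(\mathrm ds)\,\Gamma(\mathrm dt)$ for a discrete probability measure $\Gamma$, performs two Abel summations to transfer the double sum onto $\ddot k$, and then bounds the resulting integral crudely via a uniform bound on $\ddot k$ together with Jensen. You instead substitute the Fourier (Karhunen--Lo\`eve) expansion of $k$, recognise $c_k+is_k$ as the trapezoidal error for $\int_0^{2\pi}e^{ikt}\,\mathrm dt$, rewrite it as $ik\,\widehat w(k)$ for the sawtooth $w(u)=\sum_i(u-\bar t_i)\mathbf 1_{[t_i,t_{i+1})}(u)$, and close with Parseval. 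Your route is more transparent, gives $\Var Z=\tfrac{1}{12}\eps_n$ as an \emph{equality}, and sidesteps the distributional singularity of $\ddot k$ at the origin that the paper's double summation-by-parts has to negotiate.

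One remark on the bookkeeping you defer: carrying it through yields the constant $\tfrac{1}{24}$, not the $\tfrac{1}{32\pi}$ printed in the statement. Since your computation is an identity (check $n=1$: $Z=2\pi B_0$, $\Var Z=4\pi^2 k(0)=\tfrac{2\pi^3}{3}=\tfrac{1}{12}(2\pi)^3$), the discrepancy lies with the stated constant rather than with your argument; the paper's own proof, once one uses $\ddot k=\tfrac{1}{2\pi}$ away from $0$, likewise does not reproduce $\tfrac{1}{32\pi}$. This is immaterial for the downstream uses of the lemma, which only need a bound of the form $\exp(O(s^2\beta\eps_n))$.
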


\begin{proof}
Write
\be
\label{Grepr1}
\mathbb E\left[\exp\left(s \beta^{1/2} \sum_{i=1}^n \overline{B_{t_i}}\,\theta_i\right)\right]
= \mathbb E\left[\exp\left(2\pi s \beta^{1/2} \int_0^{2\pi} \Gamma(\d t) B_t\right)\right],
\ee
where
\be
\Gamma = \sum_{i=1}^{n}\tfrac 1 2 \bigl(\delta_{t_i} +\delta_{t_{i+1}}\bigr) \frac{\theta_i}{2\pi}
= \sum_{i=1}^n \delta_{t_i} \frac{t_{i+1}-t_{i-1}}{4\pi}
\ee
is a probability measure on $[0,2\pi]$. Apply \eqref{eq:gaussiancf} to get
\be
\label{Grepr2}
\mathbb E\left[\exp\left(2\pi s \beta^{1/2}\int_0^{2\pi} \Gamma(\d t) B_t\right)\right]
=\exp\left(\tfrac12(2\pi s)^2 \beta G \right) 
\ee
with
\be
\label{Gint1}
\begin{aligned}
G = \int_0^{2\pi} \Gamma(\d s) \int_0^{2\pi} \Gamma(\d t)\, k(s-t)
= \frac{1}{(4\pi)^2} \sum_{i=1}^n (t_{i+1}-t_{i-1}) \sum_{j=1}^n (t_{j+1}-t_{j-1})\,k(t_i-t_j).
\end{aligned} 
\ee
The proof proceeds in two approximation steps. First, use \eqref{kernelprop} to write
\begin{equation}
\begin{aligned}
\sum_{i=1}^n (t_{i+1}-t_{i-1})\,k(t_i-t_j) 
&= \sum_{i=1}^n \int_{t_i}^{t_{i+1}} \d s\,\big[k(t_i-t_j) + k(t_{i+1}-t_j) - 2 k(s-t_j)\big]\\
&= - \sum_{i=1}^n \int_{t_i}^{t_{i+1}} \d s\, \int_{t_i-t_j}^{s-t_j} \d u\,\dot{k}(u)
+ \sum_{i=1}^n \int_{t_i}^{t_{i+1}} \d s\, \int_{s-t_j}^{t_{i+1}-t_j} \d u\,\dot{k}(u)\\
&=  - \sum_{i=1}^n \int_{t_i-t_j}^{t_{i+1}-t_j} \d u\,\dot{k}(u) \int_{u+t_j}^{t_{i+1}} \d s
+ \sum_{i=1}^n \int_{t_i-t_j}^{t_{i+1}-t_j} \d u\,\dot{k}(u) \int_{t_i}^{t_j+u} \d s\\
&= \sum_{i=1}^n \int_{t_i}^{t_{i+1}} \d u'\,\dot{k}(u'-t_j)\,(2u'-t_i-t_{i+1}).
\end{aligned}
\end{equation}
Substitute this into \eqref{Gint1} to get
\begin{equation}
\label{Gint2}
G = \frac{1}{(4\pi)^2} \sum_{i=1} \int_{t_i}^{t_{i+1}} 
\d u'\,(2u'-t_i-t_{i+1}) \sum_{j=1}^n (t_{j+1}-t_{j-1})\,\dot{k}(u'-t_j).
\end{equation}
Next, use \eqref{kernelprop} to write 
\begin{equation}
\begin{aligned}
\sum_{j=1}^n (t_{j+1}-t_{j-1})\,\dot{k}(u'-t_j)
&= \sum_{j=1}^n \int_{t_j}^{t_{j+1}} \d s\,\big[\dot{k}(u'-t_j) + \dot{k}(u'-t_{j+1}) - 2 \dot{k}(u'-s)\big]\\
&=  \sum_{j=1}^n \int_{t_j}^{t_{j+1}} \d s\, \int_{u'-s}^{u'-t_j} \d u\,\ddot{k}(u)
- \sum_{j=1}^n \int_{t_j}^{t_{j+1}} \d s\, \int_{u'-t_{j+1}}^{u'-s} \d u\,\ddot{k}(u)\\
&=   \sum_{j=1}^n \int_{u'-t_{j+1}}^{u'-t_j} \d u\,\ddot{k}(u) \int_{u'-u}^{t_{j+1}} \d s
- \sum_{j=1}^n \int_{u'-t_{j+1}}^{u'-t_j} \d u\,\ddot{k}(u) \int_{t_j}^{u'-u} \d s\\
&= \sum_{j=1}^n \int_{u'-t_{j+1}}^{u'-t_j} \d u\,\ddot{k}(u)\,(2u-2u'-t_j+t_{j+1}).
\end{aligned}
\end{equation}
Substitute this into \eqref{Gint2} to get
\begin{equation}
G = \frac{1}{(4\pi)^2} \sum_{i=1}^n \int_{t_i}^{t_{i+1}} \d u'\,(2u'-t_i-t_{i+1}) 
\sum_{j=1}^n \int_{t_j}^{t_{j+1}} \d u''\,(t_{j+1}-2u''-t_j)\,\,\ddot{k}(u'-u'').
\end{equation}
Finally, note that $|2u'-t_i-t_{i+1}| \leq \theta_i$, $|t_{j+1}-2u''-t_j| \leq \theta_j$ and $\|\ddot{k}\|_\infty =\tfrac12$, to estimate
\begin{equation}
\begin{aligned}
|G| &\leq \frac{1}{8} \left[ \frac{1}{2\pi} \int_0^{2\pi} \d t 
\left(\sum_{i=1}^n \theta_i 1_{[t_i,t_{i+1})}(t)\right) \right]^2
\leq \frac{1}{8} \frac{1}{2\pi} \int_0^{2\pi} \d t\,
\left[ \sum_{i=1}^n \theta_i 1_{[t_i,t_{i+1})}(t)\right]^2\\
&= \frac{1}{8} \frac{1}{2\pi} \int_0^{2\pi} \d t\, \sum_{i=1}^n \theta_i^2 1_{[t_i,t_{i+1})}(t) 
= \frac{1}{8} \frac{1}{2\pi} \sum_{i=1}^n \theta_i^3.
\end{aligned}
\end{equation}
Combine this with \eqref{Grepr1} and \eqref{Grepr2}, and note that $G \geq 0$, to get the claim in \eqref{EEintest}. 
\end{proof}

\begin{lemma}
\label{lem:stochdiscr1}
Put $\eps_n =\sum_{i=1}^n \theta_i^3$. Then, for every $s\in\R$ with $|s|< 1/\sqrt{2\pi \eps_n}$, 
\be 
\E\left[\exp\left[ \tfrac12 s \left(\sum_{i=1}^n 
\overline{B_{t_i}}^2 \theta_i - \int_0^{2\pi} \d t\, B_{t}^2 \right) \right] \right] 
\leq \exp\Bigl(|s|\,O\bigl(\sqrt{\eps_n} \log(1/\eps_n)\bigr) \Bigr).
\ee
\end{lemma}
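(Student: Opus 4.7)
The quantity
\begin{equation*}
Q := \sum_{i=1}^n\overline{B_{t_i}}^2\theta_i - \int_0^{2\pi}B_t^2\,\d t
\end{equation*}
is a quadratic functional of the centred Gaussian process $B$. Writing $Q = \EE[Q] + \widetilde Q$ with $\widetilde Q$ centred, the spectral theorem for Gaussian chaos yields $\widetilde Q = \sum_j\mu_j(Z_j^2 - 1)$ with $(Z_j)$ i.i.d.\ standard normals and $(\mu_j)$ the eigenvalues of a self-adjoint operator $A$ on the Cameron--Martin space $H$ of $B$ (the space of absolutely continuous, zero-mean, $2\pi$-periodic $\tau$ with inner product $\langle\tau_1,\tau_2\rangle_H = \langle\dot\tau_1,\dot\tau_2\rangle_{L^2}$). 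This yields
\begin{equation*}
\log\EE\bigl[\e^{sQ/2}\bigr] \leq \tfrac{s}{2}\,\EE[Q] + \frac{s^2\,\Tr(A^2)}{2\bigl(1-|s|\,\|A\|_{\mathrm{op}}\bigr)}
\end{equation*}
for $|s|<1/\|A\|_{\mathrm{op}}$, reducing the task to bounding the three quantities $\EE[Q]$, $\|A\|_{\mathrm{op}}$, and $\Tr(A^2)$.

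The bias is computed directly from the covariance in Lemma~\ref{lem:gaussproc}(a): using $k(\theta) = k(0) - \theta/2 + \theta^2/(4\pi)$ for $\theta\in[0,2\pi]$ one finds
\begin{equation*}
\EE[Q] = \sum_{i=1}^n\tfrac{\theta_i}{2}\bigl[k(0)+k(\theta_i)\bigr] - 2\pi k(0) = -\tfrac14\sum_{i=1}^n\theta_i^2 + \tfrac{1}{8\pi}\eps_n,
\end{equation*}
and the Cauchy--Schwarz estimate $\sum\theta_i^2 \leq \bigl(\sum\theta_i\bigr)^{1/2}\bigl(\sum\theta_i^3\bigr)^{1/2} = \sqrt{2\pi\eps_n}$ gives $|\EE[Q]|=O(\sqrt{\eps_n})$. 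For the operator norm, the key observation is that $Q$ extends to a deterministic quadratic form on test functions $\tau\in\mathcal H$ via $\tau\mapsto\Lambda_1(\tau)-\|\tau\|_2^2$, which Lemma~\ref{estimateslamb} bounds by $\sqrt{\eps_n}\,\|\tau\|_\infty\|\dot\tau\|_2$; together with the Sobolev-type bound $\|\tau\|_\infty\leq\sqrt{2\pi}\,\|\dot\tau\|_2$ from \eqref{tausup} and $\|\tau\|_H = \|\dot\tau\|_2$, this gives $\|A\|_{\mathrm{op}}\leq\sqrt{2\pi\eps_n}$, which is exactly the critical threshold appearing in the hypothesis on $s$.

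For the Hilbert--Schmidt norm I would expand $A$ in the Karhunen--Lo\`eve eigenbasis $\{\psi_k,\psi_k^*\}_{k\in\N}$ with $\psi_k(t) = \pi^{-1/2} k^{-1}\cos(kt)$ and $\psi_k^*(t)=\pi^{-1/2}k^{-1}\sin(kt)$. The matrix entries $\langle A\psi_k,\psi_\ell\rangle_H$ are trigonometric sums over the grid $(t_i)_{i=1}^n$; applying Lemma~\ref{estimateslamb} to individual basis functions (whose sup- and derivative-norms are explicit) and exploiting oscillatory cancellation between $\psi_k$ and $\psi_\ell$ for $k\neq\ell$ yields decay of order $O(\sqrt{\eps_n}/(k\ell))$ up to the resolution scale $k,\ell\lesssim\eps_n^{-1/2}$, with improved decay $O(1/(k\ell))$ once the frequencies exceed the number of grid points $n\asymp\eps_n^{-1/2}$. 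Summing the squares produces $\Tr(A^2) = O(\eps_n\log(1/\eps_n))$, with the logarithm coming from the harmonic sum $\sum_{k\leq\eps_n^{-1/2}}1/k$. Substituting back, for $|s|\leq c/\sqrt{2\pi\eps_n}$ with any fixed $c<1$ the spectral bound becomes
\begin{equation*}
\log\EE[\e^{sQ/2}] \leq O(|s|\sqrt{\eps_n}) + O\bigl(s^2\eps_n\log(1/\eps_n)\bigr) = |s|\,O\bigl(\sqrt{\eps_n}\log(1/\eps_n)\bigr),
\end{equation*}
where the last equality uses $|s|\sqrt{\eps_n} = O(1)$ throughout this range to absorb the quadratic-in-$s$ term into a linear one, and $\log(1/\eps_n)\geq 1$ for small $\eps_n$ to absorb the bias contribution.

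\textbf{Main obstacle.} The principal technical difficulty is establishing the bound $\Tr(A^2) = O(\eps_n\log(1/\eps_n))$. Each matrix element requires a trigonometric sum on the irregular grid $(t_i)$, and a naive pointwise application of Lemma~\ref{estimateslamb} to each basis function independently would pick up a factor of $n\asymp\eps_n^{-1/2}$ upon summing over pairs, which would destroy the bound. Oscillatory cancellation between basis functions at well-separated frequencies must be exploited carefully via summation-by-parts or direct estimation of the relevant Dirichlet-type kernels on the grid, and the logarithm $\log(1/\eps_n)$ is the precise residual obstruction that persists along the near-diagonal and cannot be removed.
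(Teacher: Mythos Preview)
Your overall architecture is sound and shares its skeleton with the paper: both view the discretisation error as a quadratic form $Q(f,f)=\sum_i\overline{f(t_i)}^2\theta_i-\int_0^{2\pi}f^2$ represented by a self-adjoint trace-class operator on the Cameron--Martin space $\mathcal H$, and both extract the operator-norm bound $\|\tilde Q\|\le\sqrt{2\pi\eps_n}$ from Lemma~\ref{estimateslamb} together with the Sobolev estimate~\eqref{tausup}. The difference is in how the exponential moment is then controlled.

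The paper does \emph{not} separate bias and fluctuation. It invokes the exact Fredholm-determinant identity $\EE[\exp(\tfrac12 sQ(B,B))]=\det(\mathrm{id}-s\tilde Q)^{-1/2}$ from abstract Wiener-space theory and then bounds $\log\det(\mathrm{id}-s\tilde Q)$ by $|s|\,O(\mathrm{Tr}\,\tilde Q)$, so the entire estimate reduces to the \emph{trace}. The trace is computed in the Karhunen--Lo\`eve basis as $\sum_k Q(e_k,e_k)$, and each diagonal entry is bounded two ways: $|Q(e_k,e_k)|\le\sqrt{\eps_n/\pi}/k$ from Lemma~\ref{estimateslamb}, and $|Q(e_k,e_k)|\le 4/k^2$ trivially from $\|e_k\|_\infty^2$. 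Taking the minimum and summing produces the $\sqrt{\eps_n}\log(1/\eps_n)$ directly, with the logarithm coming from $\sum_{k\lesssim\eps_n^{-1/2}}1/k$. No off-diagonal entries are ever touched.

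Your route via $\Tr(A^2)$ is correct in principle but strictly harder: the Hilbert--Schmidt norm genuinely requires control of all matrix entries $\langle A e_k,e_\ell\rangle_{\mathcal H}$, and the ``oscillatory cancellation'' argument you sketch for the off-diagonal part is exactly the work the paper's approach avoids. Your direct computation of $\EE[Q]=-\tfrac14\sum\theta_i^2+\tfrac1{8\pi}\eps_n$ is cleaner than anything in the paper, but it buys nothing once the determinant formula is on the table. If you want to keep your sub-exponential framework, a shortcut worth noting is that the operator splits as $A=A_1-(-\Delta)^{-1}$ with $A_1$ of finite rank and $(-\Delta)^{-1}$ explicit, which can be used to control the trace norm without full off-diagonal analysis; but the determinant route is shorter still.
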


\begin{proof} 
The proof uses a determinant formula for exponent moments of quadratic functionals in abstract Wiener spaces (see \eqref{quadmom} below). 
		
We view $ B$ as a random variable taking values in the Banach space $E$ of mean-zero, $2\pi$-periodic, continuous functions $f\colon\,\R \to \R$, equipped with the supremum norm $\|\cdot\|_\infty$. The scalar product $\langle f,g\rangle_\cH = \int_0^{2\pi} \d t\, f'(t)g'(t)$ turns $\cH \subset E$ into a real Hilbert space. For $\mu$ a finite signed measure on $[0,2\pi]$ with total mass zero, define $(-\Delta)^{-1} \mu\in\cH$ by 
\be
(-\Delta)^{-1}\mu(t) = \int_0^{2\pi}  \d \mu(t')\,k(t-t'), \qquad t\in \R, 
\ee
with $k(t-t')$ the Green function from Lemma~\ref{lem:gaussproc}. Following the proof of Lemma~\ref{lem:gaussproc}(b), we can check that $g= (-\Delta)^{-1} \mu$ is in $\cH$ and satisfies $-g'' = \mu$ in a distributive sense: 
\be
\label{dualemb}
\mu(f) = \int_0^{2\pi} \d \mu \,f = \int_0^{2\pi} \d t\,f(t)(-g''(t)) 
=  \langle g,f\rangle_\cH = \langle (-\Delta)^{-1}\mu, f\rangle_\cH, \qquad f\in \cH.
\ee
When $f\in E\setminus \cH$, we define $\langle g, f\rangle_\cH$ as $\mu(f)$, so that \eqref{dualemb} remains true.  A slight adaptation of the proof of Lemma~\ref{lem:gaussproc}(b) then shows that
\be 
\label{wiener} 
\E\bigl[ \e^{\mathrm{i} \langle g,  B\rangle_\cH}\bigr] 
= \E\bigl[ \e^{\mathrm{i} \mu( B)}\bigr] 
= \e^{-\frac{1}{2} \int_0^{2\pi} \d\mu(t)\,g(t)} = \e^{-\frac{1}{2} \|g\|_\cH^2}.
\ee
Now consider the continuous quadratic form
\be 
\label{qdef}
Q(f,f) = \sum_{i=1}^n \overline{f(t_i)}^2 \theta_i 
- \int_0^{2\pi} \d t\,f(t)^2, \qquad f\in E. 
\ee
The restriction of $Q$ to $\cH$ is represented by a bounded symmetric operator $\tilde Q\colon\,\cH\to\cH$ that is uniquely defined by the requirement that
\be
Q(h,h) = \langle h,\tilde Q h\rangle_\cH, \qquad h\in\cH.
\ee
From \eqref{tausup} and the first line of \eqref{XYapprox} we know that 
\be
\label{qerror}
\bigl|Q(h,h) \bigr| \leq \sqrt{\eps_n}\, \|h\|_\infty \|\dot{h}\|_2
\leq \sqrt{2\pi \eps_n}\, \|\dot{h}\|_2^2 = \sqrt{2\pi \eps_n}\, \|h\|_\cH^2, 
\qquad h\in\cH.
\ee
Hence $\|\tilde Q\| \leq \sqrt{2\pi \eps_n}$ and so, for all $s\in \R$ with $|s| \leq 1/\sqrt{2\pi \eps_n}$, the operator $\mathrm{id} - s \tilde Q$ is positive definite and invertible in $\cH$, with bounded inverse. 
	
Next, we check that $\tilde Q$ is a trace-class operator (Simon~\cite{simon-tracebook}). From\eqref{qdef} we see that $\tilde Q$ is a difference of two terms. The first term corresponds to the finite sum in \eqref{qdef}, has finite rank, and is therefore trivially trace class. The second term is $(-\Delta)^{-1}$. Indeed, for $f\in\cH$, setting $F= (-\Delta)^{-1} f \in\cH$ and integrating by part, we have
\be
\int_0^{2\pi} \d t\,f(t)^2 = \int_0^{2\pi} \d t\,f(t) (- F''(t)) 
= \int_0^{2\pi} \d t\, f'(t) F'(t) = \langle f, (-\Delta)^{-1} f\rangle_\cH.
\ee
Consider the orthonormal basis of $\cH$ consisting of the vectors
\be
e_k(t) = \frac{\cos (kt) }{k\sqrt{\pi}}, \quad e^*_k(t) = \frac{\sin(k t)}{k\sqrt{\pi}}, 
\qquad k\in \N.
\ee	
Then $(-\Delta)^{-1} e_k  = \frac{1}{k^2} e_k$ and $(-\Delta)^{-1} e^*_k = \frac{1}{k^2} e^*_k$. Thus, $(-\Delta)^{-1}$ is trace class with trace $2 \sum_{k=1}^\infty 1/k^2 <\infty$, and $\tilde Q$, as a difference of two trace class operators, is itself trace class. It follows from general results on abstract Wiener spaces (Chiang, Chow and Lee~\cite{chiang-chow-lee94}) that, for every $s\in \R$ with $|s|< (2\pi \eps_n)^{-1}$, the exponential moments of $Q( B, B)$ are given by 
\be
\label{quadmom}
\E\Big[\exp\big[\tfrac{1}{2} s Q( B, B)\big]\Big] = \det (\mathrm{id} -s \tilde Q)^{-1/2},
\ee
where the determinant is a Fredholm determinant. Indeed, if $\lambda_1\geq\lambda_2\geq \cdots$ are the eigenvalues of $\tilde Q$, then $\max_{j\in\N} |\lambda_j| = \|\tilde Q\| = O(\sqrt{\eps_n})$, and the Fredholm determinant is given by 
\be
\label{fredholm} 
\det (\mathrm{id} -s \tilde Q)  = \prod_{j\in\N} (1- |s|\lambda_j) 
= \exp\Bigg[|s|\, O\bigg(\sum_{j\in\N} \lambda_j\bigg)\Bigg] 
= \exp\Bigl(O(|s|\, \mathrm{Tr}\, \tilde Q)\Bigr).
\ee
Thus, it remains to estimate the trace of $\tilde Q$. To that aim we apply the first line of \eqref{XYapprox} for $\tau(t) = e_k(t)$. Since $\|e_k\|_2=1/k$ and $\|\dot e_k\|_\infty = 1/\sqrt{\pi}$, this gives
\be
\bigl|\langle e_k, \tilde Q e_k\rangle_\cH \bigr| = \bigl| Q(e_k,e_k)\bigr| 
\leq \sqrt{\eps_n}\, \|e_k\|_\infty \|\dot e_k\|_2
= \frac{\sqrt{\eps_n/\pi}}{k}.
\ee
On the other hand, clearly $Q(e_k,e_k) \leq 4 \pi \|e_k\|_\infty^2 = 4/ k^2$. Similar bounds hold for $Q(e^*_k, e^*_k)$. Therefore 
\be
\bigl|\mathrm{Tr}\, \tilde Q\bigr| \leq 2 \sum_{k\in\N} \min\bigl((\sqrt{\eps_n/\pi})/k, 4/k^2\bigr) 
= O\bigl(\sqrt{\eps_n}\log(1/\eps_n)\bigr).
\ee
The  proof is concluded with~\eqref{quadmom} and~\eqref{fredholm}.
\end{proof} 

%%%

\paragraph{Fourier coefficients.}

Define the discretised Fourier coefficients 
\be
\label{difou}
D_1= \frac{1}{\sqrt{\pi}} \sum_{i=1}^n  \theta_i\,	\overline{B_{t_i}\cos t_i},\qquad 
D^*_1 = \frac{1}{\sqrt{\pi}} \sum_{i=1}^n  \theta_i\,\overline{B_{t_i}\sin t_i}.
\ee
By a slight abuse of notation we shall use the same letter $D_1$, $D_1^*$ for the random variable obtained by substituting $t_i\to T_i$, $\theta_i\to \Theta_i$. 

\begin{lemma}
\label{lem:stochdiscr2}
Put $\eps_n =\sum_{i=1}^n \theta_i^3$. Then, for every $|s|<1/8\sqrt{2\pi\eps_n/3}$,
\be 
\E\Bigl[\exp\Bigl( \tfrac{1}{2} s [A_1^2 - D_1^2] \Bigr) \Bigr] 
\leq \exp\Bigl( |s|\, O\bigl(\sqrt{\eps_n} \log(1/\eps_n)\bigr) \Bigr).
\ee
A similar bound holds for $A_1^{*2} - {D_1^*}^2$. 
\end{lemma}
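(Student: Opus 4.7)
The plan is to mimic the abstract Wiener space framework from the proof of Lemma~\ref{lem:stochdiscr1}. Introduce the bounded linear functionals
\[
L(f) = \tfrac{1}{\sqrt{\pi}} \int_0^{2\pi} f(t) \cos t \, \d t, \qquad
L_n(f) = \tfrac{1}{\sqrt{\pi}} \sum_{i=1}^n \theta_i \, \overline{f(t_i) \cos t_i},
\]
on the Banach space $E$ of mean-zero, $2\pi$-periodic continuous functions, and set $Q(f,f) = L(f)^2 - L_n(f)^2$, so that $Q(B,B) = A_1^2 - D_1^2$. Polarising as $Q(f,g) = \tfrac12[(L-L_n)(f)(L+L_n)(g) + (L+L_n)(f)(L-L_n)(g)]$ exhibits the symmetric representative $\tilde Q \colon \cH \to \cH$ defined by $\langle h, \tilde Q h\rangle_\cH = Q(h,h)$ as a sum of two rank-one operators; in particular $\tilde Q$ is automatically trace-class and has at most two non-zero eigenvalues $\lambda_1, \lambda_2$. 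This rank-two structure replaces the infinite series that produced the logarithmic factor in Lemma~\ref{lem:stochdiscr1}.

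The main step is to estimate $\|\tilde Q\|$. For $h \in \cH$, the Sobolev bound $\|h\|_\infty \leq \sqrt{2\pi}\|\dot h\|_2$ from \eqref{tausup} immediately gives $|L(h)|, |L_n(h)| = O(\|h\|_\cH)$. Moreover, rewriting $\sum_i \theta_i \overline{f(t_i)\cos t_i} = \tfrac12\sum_i \theta_i[f(t_i)\cos t_i + f(t_{i+1})\cos t_{i+1}]$ shows that $L_n(h)$ is precisely the trapezoidal quadrature rule for $L(h)$ applied to $\phi(t) = h(t) \cos t$. A direct $L^2$-error estimate, obtained as in \eqref{lamba4}--\eqref{boundlambda2a} by writing the error on each interval $[t_i,t_{i+1}]$ in the form $\int(\phi(t) - \tfrac12[\phi(t_i)+\phi(t_{i+1})])\,\d t$ and applying Cauchy--Schwarz twice, yields $|L(h)-L_n(h)| \leq C\sqrt{\eps_n}\|\dot\phi\|_2$. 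Since $\|\dot\phi\|_2 \leq \|\dot h\|_2 + \|h\|_2 = O(\|h\|_\cH)$, this gives $|L(h)-L_n(h)| = O(\sqrt{\eps_n})\|h\|_\cH$, and factorising $|Q(h,h)| = |L(h)-L_n(h)|\cdot|L(h)+L_n(h)|$ produces $|Q(h,h)| = O(\sqrt{\eps_n})\|h\|_\cH^2$. Therefore $\|\tilde Q\| = O(\sqrt{\eps_n})$.

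For $|s| < 1/(8\sqrt{2\pi \eps_n /3})$, and hence for all $\eps_n$ sufficiently small, one has $|s|\|\tilde Q\| \leq \tfrac12$, so $\mathrm{id} - s \tilde Q$ is invertible on $\cH$. The abstract Wiener space identity \eqref{quadmom}, whose justification via \cite{chiang-chow-lee94} transfers verbatim to the present rank-two quadratic form, therefore yields
\[
\EE\bigl[\exp\bigl(\tfrac12 s(A_1^2 - D_1^2)\bigr)\bigr] = \det(\mathrm{id} - s\tilde Q)^{-1/2} = \prod_{j=1}^2 (1 - s\lambda_j)^{-1/2}.
\]
Using $|\log(1-s\lambda_j)| \leq 2|s\lambda_j|$ on this range together with $|\lambda_j| \leq \|\tilde Q\|$, one concludes that the right-hand side is bounded by $\exp(2|s|\|\tilde Q\|) = \exp(O(|s|\sqrt{\eps_n}))$, which is stronger than (and hence implies) the claimed bound $\exp(|s|\,O(\sqrt{\eps_n}\log(1/\eps_n)))$. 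The case of $A_1^{*2}-{D_1^*}^2$ is identical with $\sin$ in place of $\cos$. The only mildly delicate point I expect is the deterministic trapezoidal error estimate, which requires some care because $\phi = h \cos$ has only an $L^2$ derivative; but the integration-by-parts argument used for $\Lambda_3$ in Lemma~\ref{estimateslamb} carries over essentially unchanged.
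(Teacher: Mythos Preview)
Your proof is correct and actually sharper than the paper's. Both arguments sit inside the same abstract Wiener space framework of Lemma~\ref{lem:stochdiscr1}: represent $A_1^2 - D_1^2 = Q(B,B)$ for a quadratic form $Q$, bound $\|\tilde Q\|$ via the quadrature error $|L(h)-L_n(h)| = O(\sqrt{\eps_n})\|h\|_\cH$, and invoke the Fredholm determinant identity~\eqref{quadmom}. The paper then follows the template of Lemma~\ref{lem:stochdiscr1} literally: it bounds $|Q(e_k,e_k)| = O(\sqrt{\eps_n}/k)$ on the Fourier basis and sums (together with the crude bound $|Q(e_k,e_k)| = O(1/k^2)$ coming from $L(e_k)=0$ for $k\geq 2$), which produces the logarithmic factor $\sqrt{\eps_n}\log(1/\eps_n)$ in the trace estimate. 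You instead observe that $\tilde Q$ has rank at most two, so $\det(\mathrm{id}-s\tilde Q)^{-1/2}$ is a two-term product and $|\mathrm{Tr}\,\tilde Q|\leq 2\|\tilde Q\| = O(\sqrt{\eps_n})$ directly---no basis sum, no logarithm. This is a genuine simplification and yields the stronger bound $\exp(O(|s|\sqrt{\eps_n}))$.

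One small point of care: the hypothesis $|s|<1/(8\sqrt{2\pi\eps_n/3})$ together with your $\|\tilde Q\| = O(\sqrt{\eps_n})$ only gives $|s|\,\|\tilde Q\|<c$ for some absolute constant $c$, not necessarily $c=\tfrac12$; so the inequality $|\log(1-x)|\leq 2|x|$ should be replaced by $|\log(1-x)|\leq C_c|x|$ on $|x|\leq c<1$. This is cosmetic. Your trapezoidal-error computation for $\phi=h\cos$ is also fine; the paper instead quotes the bound for $\Lambda_3$ in Lemma~\ref{estimateslamb}, which uses the left-endpoint rule rather than the trapezoidal average, but both discretisations give the same $O(\sqrt{\eps_n}\|\dot\phi\|_2)$ error, so either route works.
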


\begin{proof} 
The proof is similar to that of Lemma~\ref{lem:stochdiscr1}. Let 
\be
Q(f,f) = \Bigl(\frac{1}{\sqrt{\pi}}\sum_{i=1}^n \theta_i\,\overline{f(t_i) \cos t_i}\,\Bigr)^2 
- \Bigl(\frac{1}{\sqrt{\pi}} \int_0^{2\pi} \d t\,f(t)\,\cos t\Bigr)^2 
\ee
and let $\tilde Q\colon\,\cH\to\cH$ be the associated symmetric operator. Then $\tilde Q$ has finite rank and is therefore trace-class. Using the identity $|x^2 - y^2| = |x+y|\,|x-y|$, $x,y\in\R$, in combination with \eqref{tausup} and the second line of \eqref{XYapprox}, we see that for all $h\in\cH$, 
\be
|Q(h,h)| \leq 4\|h\|_\infty \left|\sum_{i=1}^n \theta_i\,\overline{h(t_i) \cos t_i}
- \int_0^{2\pi} \d t \, h(t)\,\cos t\,\right| \leq 8 \sqrt{\eps_n/3}\,\|h\|_\infty\,\|\dot{h}\|_2
\leq 8 \sqrt{2\pi \eps_n/3}\, \|h\|_\cH^2.
\ee
Consequently, $\|\tilde Q\|\leq 8\sqrt{2\pi \eps_n/3}$, $\mathrm{id} - s \tilde Q$ is positive definite for all $|s|< 1/8\sqrt{2\pi \eps_n/3}$, and
\be
|Q(e_k,e_k)| \leq 8 \sqrt{2\pi\eps_n/3}\,\|e_k\|_\infty\,\|\dot{e}_k\|_\infty 
= 8 \sqrt{2\eps_n/3}/k = O(\sqrt{\eps_n}/k),
\ee		 
and similarly for $e^*_k$. The proof is concluded in the same way as that of Lemma~\ref{lem:stochdiscr1}. 
\end{proof}

%%%%%%%%%%%% SECTION 7 %%%%%%%%%%%%%%%%%%%%%%%%%

\section{Asymptotics of surface integrals II: proof of moderate deviations}
\label{proofmoddev}

In this section we collect the preparations made in Sections~\ref{app:geometry}--\ref{sec:prep}. Section~\ref{mdub} provides an upper bound for the key surface integral $\cI^\mathrm{UB}(\kappa,\beta; C,\eps)$ (Proposition~\ref{prop:upperkey}). The proof consists of a sequence of steps involving decomposition, elimination and separation of terms. A particularly delicate point is how to control the integral over $m$ in \eqref{kikihat}: we show that only small values of $m$ contribute, namely, $|m|=O(\beta^{1/6})$. Section~\ref{sec:wdmp} provides a lower bound for the key surface integral $\cI^\mathrm{LB}(\kappa,\beta; C,\eps)$ (Proposition~\ref{prop:lowerkey}). In Section~\ref{conds} we formulate \emph{three technical conditions} (C1)--(C3) in terms of the auxiliary random variables introduced in Section~\ref{sec:auxiliary}, and use Propositions~\ref{prop:upperkey}--\ref{prop:lowerkey} to show that under those conditions Conjecture~\ref{thm:zoom2} is true. In Section~\ref{sec:moddevbds} we prove Theorem~\ref{thm:zoom1}. 

%%%

\subsection{Upper bound on the key integral}
\label{mdub}

\begin{proposition}[Upper bound key integral] 
\label{prop:upperkey}
For every $C \in (0,\infty)$ and all $\eps>0$ sufficiently small, there exists a $c= c(C,\eps)>0$ such that
\be
\begin{aligned} 
\label{ubkey} 
&\limsup_{\beta \to \infty}\frac{1}{\beta^{1/3}} \log \mathcal I^\mathrm{UB}(\kappa,\beta; C,\eps)\\
&\qquad \leq  2\pi G_\kappa \tau^{*} + \limsup_{\beta \to \infty}\sup_{|m|\leq c\beta^{1/6}}  \frac{1}{\beta^{1/3}}  
\log \widehat \E \Bigl[\e^{\tfrac12 [1+O(\eps)]\chi}\,\1_{\{Z^{(m)} \in \mathcal O\}}\Bigr] + O(\eps),
\end{aligned}
\ee 
where $\chi=\sum_{i=1}^N \Theta_i \overline{B_{T_i}}^2 - D_1^2 - {D_1^*}^2$ and $D_1,D_1^*$ are defined in \eqref{difou}.
\end{proposition}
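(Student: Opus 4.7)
The plan is to start from the representation of $\mathcal I^\UB(\kappa,\beta;C,\eps)$ in Proposition~\ref{prop:keyrep} and to peel off the various factors sequentially. The prefactor $\e^{2\pi G_\kappa\beta^{1/3}}\EE[\e^{Y_0-C_1\beta Y_1}]$ yields $\e^{2\pi G_\kappa\tau_*\beta^{1/3}+o(\beta^{1/3})}$ by Proposition~\ref{prop:theta-asympt}, which produces the leading $2\pi G_\kappa\tau_*$ on the right-hand side of~\eqref{ubkey}. The remaining work is to analyse the inner $\widehat\EE$ together with the $m$-integral.

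First I would expand $\tfrac12 Y_3^{(m)}$ using $\sum_i\Theta_i=2\pi$ and the definition of $\chi$:
\[
\tfrac12 Y_3^{(m)}=\pi m^2+m\sum_{i=1}^N\Theta_i\overline{B_{T_i}}+\tfrac12\chi+\tfrac12(D_1^2+{D_1^*}^2),
\]
so that the principal factor $\e^{\frac12\chi}$ can be separated from the rest by Hölder's inequality with exponents $1+\delta$ and $(1+\delta)/\delta$, where $\delta=O(\eps)$; this inflates the coefficient of $\chi$ to $\tfrac12(1+O(\eps))$. The conjugate factor bundles together $\e^{O(\eps)(\beta Y_1+Y_2+N)/\delta}$, $\e^{\pi m^2/\delta}$, $\e^{m\sum\Theta_i\overline{B_{T_i}}/\delta}$ and $\e^{(D_1^2+{D_1^*}^2)/(2\delta)}$, and must be shown to be at most $\e^{O(\eps)\beta^{1/3}}$. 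Lemma~\ref{lem:ehatperturbed} takes care of $\beta Y_1+N$, Lemma~\ref{lem:y3moments} (applied conditionally on $\mathcal T$) of $Y_2$, and Lemma~\ref{lem:stochdiscr3} of the cross term, whose exponential moment is $\e^{O(m^2/(\delta^2\beta^{2/3}))}=o(1)$ for $|m|\le c\beta^{1/6}$. The term $(D_1^2+{D_1^*}^2)/(2\delta)$ is controlled by restricting to $\mathcal C'_{C\delta(\beta)}(0)\subset\Upsilon^{\UB'}_{\beta,\eps}$ and invoking Proposition~\ref{prop:centermass} with the substitution $\rho_i=(m+B_{T_i})/\sqrt{(\kappa-1)\beta}$: this yields $|D_1|,|D_1^*|=O(\beta^{-1/6})$, so that $(D_1^2+{D_1^*}^2)/(2\delta)=o(1)$.

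The delicate step is the a priori divergent $\int\e^{\pi m^2/\delta}\,\d m$. Here the volume constraint $\mathcal V'_{C\delta(\beta)}$ is essential: substituting $\rho_i=(m+B_{T_i})/\sqrt{(\kappa-1)\beta}$ into the expansion of $|S(z)|-\pi R_c^2$ from Proposition~\ref{prop:expansion},
\[
\tfrac12 y_3+C_2 y_4=\frac{\pi m^2}{(\kappa-1)\beta}+\frac{2\pi R_c m}{\sqrt{(\kappa-1)\beta}}+\text{($B$-terms)},
\]
and together with the typical scales $y_1,y_2=O(\beta^{-2/3})$ the constraint $||S(z)|-\pi R_c^2|\le C\beta^{-2/3}$ forces $|m|\le c\beta^{1/6}$ on the set that contributes to the surface integral at scale $\beta^{1/3}$. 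For that range $\int_{|m|\le c\beta^{1/6}}\e^{\pi m^2/\delta}\,\d m\le\e^{\pi c^2\beta^{1/3}/\delta}$, and choosing $c$ so that $c^2/\delta=O(\eps)$ absorbs the contribution into the final $O(\eps)$.

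Dropping the indicator $\Upsilon^{\UB'}_{\beta,\eps}$ in favour of $\mathbf 1_{\{Z^{(m)}\in\mathcal O\}}$ in the principal factor (legitimate for an upper bound) and taking the supremum over $|m|\le c\beta^{1/6}$ concludes the proof. The main obstacle is the simultaneous balancing of $\delta$, $c$ and the Hölder exponents so that the divergent $\e^{\pi m^2/\delta}$ factor is tamed by the volume constraint, the inflation of the $\chi$-coefficient is only $1+O(\eps)$, and every non-principal contribution is at most $\e^{O(\eps)\beta^{1/3}}$.
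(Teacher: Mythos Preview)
Your overall strategy---separate $\e^{\frac12\chi}$ by H\"older and control the rest---matches the paper's, but the execution has two genuine gaps.

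\textbf{The $\pi m^2$ term.} You need $|m|\le c\beta^{1/6}$ with $c^2/\delta=O(\eps)$, i.e.\ $c=O(\eps)$, but the volume constraint does not deliver this. In the expansion of $|S(z)|-\pi R_c^2$ from Proposition~\ref{prop:expansion} the terms $y_1,y_2$ are only $O(\eps)$ a priori (Corollary~\ref{cor:apriori}), not $O(\beta^{-2/3})$; and the linear part $C_2y_4$ contains $2\pi m+\sum_i\Theta_i\overline{B_{T_i}}$, so the constraint controls this \emph{shifted} mean, not $m$ itself. Once the random shift and the $O(\eps)$ errors are accounted for, the best one can extract---and only through an exponential-tightness argument, not as a hard bound---is $|m|\le c\beta^{1/6}$ with $c$ of order $1/\sqrt\eps$. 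With that $c$, your factor $\e^{\pi c^2\beta^{1/3}/\delta}$ contributes $O(\eps^{-2})\beta^{1/3}$ to the exponent, which diverges as $\eps\downarrow0$.

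\textbf{The Fourier term.} The claim $|D_1|,|D_1^*|=O(\beta^{-1/6})$ is incorrect. Proposition~\ref{prop:centermass} gives $\Sigma_1=\tfrac1\pi y_5+O(y_1)+O(y_2)+O(y_3)+O(|y_4|)$, and these error terms are only $O(\eps)$ a priori. Converting $y_5$ to $D_1$ via $\rho_i=(m+B_{T_i})/\sqrt{(\kappa-1)\beta}$, the centering bound $|\Sigma_1|\le C\beta^{-2/3}$ therefore yields only $|D_1|\le O(\eps\sqrt\beta)+O(\eps|m|)$, so $(D_1^2+{D_1^*}^2)/(2\delta)$ is far from $o(1)$.

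The paper avoids both issues by decomposing $Y_3^{(m)}$ differently \emph{before} applying H\"older: via the variance identity it writes $Y_3^{(m)}=\mathcal E_1^{(m)}+\mathcal E_2-\mathcal E_3+\chi$ with $\mathcal E_1^{(m)}=\tfrac1{2\pi}\bigl(\sum_i(m+\overline{B_{T_i}})\Theta_i\bigr)^2$. This is the right object, because the volume constraint controls precisely the sum inside $\mathcal E_1^{(m)}$ (not $m$), giving $\mathcal E_1^{(m)}\le O(\eps)(\beta^{1/3}+\beta Y_1+Y_2+\mathcal E_2+\chi)$ deterministically on the event; a parallel bootstrap with the centering constraint gives $\mathcal E_2\le O(\eps)(\beta^{1/3}+\beta Y_1+Y_2+\chi)$. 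Substituting back yields the pointwise inequality $Y_3^{(m)}\le O(\eps)(\beta^{1/3}+\beta Y_1+Y_2)+[1+O(\eps)]\chi$, in which no dangerous $m^2$ or $D_1^2$ remains. Only then is H\"older used (with $q=c/\sqrt\eps$) to separate $\chi$ from the $O(\eps)$ remainder, and the $m$-integral is handled last via exponential tightness with a \emph{large} constant $c$---harmless because by that stage no $\e^{\pi m^2}$ factor is present.
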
 

\begin{proof}
The representation in \eqref{kikihat} for the key integral and the asymptotics from Proposition~\ref{prop:theta-asympt} imply that it suffices to show that
\be
\label{enough1}
\begin{aligned}
&\limsup_{\beta \to \infty}\frac{1}{\beta^{1/3}} \log \Biggl( \int_{\mathbb R} \mathrm \d m\,\,\widehat 
\E\Bigl[ \e^{O(\eps) (\beta Y_1 + Y_2  + N) + \frac12 Y_3^{(m)}} \1_{\{Z^{(m)}\in 
\cV_{C\beta^{-2/3}} \cap \cD_\eps(0) \cap \cC_{C\beta^{-2/3}}(0)\}} \Biggr)\\
&\leq O(\eps) + \limsup_{\beta \to \infty}\sup_{|m|\leq c\beta^{1/6}}  \frac{1}{\beta^{1/3}}  
\log \widehat \E \Bigl[\e^{\tfrac12 [1+O(\eps)]\chi}\,\1_{\{Z^{(m)} \in \mathcal O\}}\Bigr].
\end{aligned}
\ee
The main idea is the following. The error terms $O(\eps)(Y_1+ Y_3+N)$ should be negligible, and therefore our primary concern is $\frac12 Y_3^{(m)}$ in the exponential. In order to deal with this term, we approximate
\be
Y_3^{(m)} = \sum_{i=1}^N (m+\overline{B}_{T_i})^2 \Theta_i \approx \int_0^{2\pi}(m+B_t)^2 \d t 
= 2\pi m^2 + \int_0^{2\pi} B_t^2 \d t
\ee
(recall that $\int_0^{2\pi} B_t \d t=0$). The resulting expression is  problematic because 
\be
\int_\R \d m\, \e^{\pi m^2 } =\infty, \qquad \E\Bigl[\e^{\tfrac12 \int_0^{2\pi } B_t^2 \d t}\Bigr] = \infty. 
\ee
To cure the divergence, we use the geometric constraints. Roughly speaking, we show that the volume constraint imposes that the only relevant contributions are from $|m| =O(\beta^{1/6})$. In addition, we show that the centring constraint imposes that the Fourier coefficients $D_1$ and $D_1^*$ are negligible, so that we may replace $\sum_{i=1}^N \overline{B_{T_i}}^2\Theta_i$ by $\chi=\sum_{i=1}^N \overline{B_{T_i}}^2\Theta_i - D_1^2 - {D_1^*}^2$.

The proof comes in 8 steps. 

\medskip\noindent
{\bf 1. Decomposition of $Y_3^{(m)}$.} 
As a  preliminary step we decompose $Y_3^{(m)}$ as 
\be
\label{eq:y3deco}
Y_3^{(m)} =  \mathcal E_1^{(m)} +\mathcal {E}_2   - \mathcal E_3 +  \chi
\ee 
with (recall that $\overline{B_{T_i}} = \frac12 (B_{T_i}+ B_{T_{i+1}})$
\begin{equation}
\mathcal E_1^{(m)} =  \frac{1}{2\pi} \left(\sum\limits_{i=1}^N (m+\overline{B_{T_i}}\,) \Theta_i\right)^2,\quad
\mathcal E_2 = D_1^2 + {D_1^*}^2,\quad
\mathcal E_3 = \frac{1}{2\pi} \left(\sum\limits_{i=1}^N \overline{B_{T_i}} \Theta_i\right)^2.
\end{equation} 
To check~\eqref{eq:y3deco}, we apply the variance formula 
\be 
\label{eq:steinervariance}
\sum_{i=1}^n \left( x_i - \sum_{j=1}^n x_j p_j\right)^2 p_i = \sum_{i=1}^n x_i^2 p_i 
-  \left(\sum_{j=1}^n x_j p_j\right)^2 
\ee
with $n= N$, $p_i = \frac{\Theta_i }{2\pi}$ and $x_i = m + \overline{B_{T_i}}$, respectively, $x_ i = \overline{B_{T_i}}$, to get 
\begin{equation}
\begin{aligned}
Y_3^{(m)} & = \sum_{i=1}^N (m+ \overline{B_{T_i}}\,)^2 \Theta_i 
= \sum_{i=1}^N\left( \overline{B_{T_i}} - \frac{1}{2\pi} \sum_{j=1}^N  \overline{B_{T_j}} \Theta_j  \right)^2  \Theta_i
+ \mathcal E_1^{(m)}\\
& = \sum_{i=1}^N\overline{B_{T_i}}^2 \Theta_i - \mathcal E_3 +  \mathcal E_1^{(m)}.
\end{aligned}
\end{equation}
The claim in \eqref{eq:y3deco} now follows from the relation
\begin{equation}
\label{eq:discretesquareintegral}
\sum_{i=1}^N\overline{B_{T_i}}^2 \Theta_i = \mathcal E_2 + \chi. 
\end{equation}
Note that $\mathcal E_1^{(m)}$, $\mathcal E_2$, $\mathcal E_3$ are non-negative, while $\chi$ is not necessarily so. The terms $\mathcal E_1^{(m)}$ and $\mathcal E_2$ will be taken care of in Steps 2 and 4 via the volume and centering constraints. The non-negative term $\mathcal E_3$ can simply be dropped because the target in \eqref{enough1} is an upper bound.

\medskip \noindent 
\textbf{2. Elimination of $\mathcal E_1^{(m)}$ with the help of the volume constraint.} 
Next, we exploit the volume constraint and the a priori estimates to get rid of $\mathcal E_1^{(m)}$. By Proposition~\ref{prop:expansion}, recalling the definitions in Proposition~\ref{prop:keyrep}, we have 
\begin{equation} 
\label{volYk}
\begin{aligned}
|S(Z^{(m)})| - \pi \Rc^2 & = - C_1^\eps Y_1 + \frac{1+O(\eps)}{2 \beta}Y_2 
+ \frac12 \frac{1}{(\kappa -1) \beta} Y_3^{(m)} + \Rc \sum_{i=1}^N \chi_i^{(m)}\Theta_i \\
& = - C_1^\eps Y_1 + \frac{1+O(\eps)}{2\beta}Y_2 
+ \frac12 \sum_{i=1}^N \bigl(\chi_i^{(m)} + \Rc\bigr)^2 \Theta_i - \pi \Rc^2, 
\end{aligned}
\end{equation}
where we abbreviate 
\begin{equation}
\chi_i^{(m)} = \frac{m+ \overline{B_{T_i}}}{\sqrt{(\kappa-1)\beta}}.
\end{equation}
By the triangle inequality, on the event that $||S(Z^{(m)})| - \pi \Rc^2|\leq C \beta^{-2/3}$ we have 
\begin{equation}
\label{con1}
\left|\frac12 \sum_{i=1}^N \bigl(\chi_i^{(m)} + \Rc\bigr)^2 \Theta_i 
- \pi \Rc^2\right| \leq C \beta^{-2/3} + C_1^\eps Y_1 +  \frac{1+O(\eps)}{2\beta}Y_2.
\end{equation}
An elementary computation based again on the variance formula in \eqref{eq:steinervariance} gives 
\begin{equation}
\begin{aligned}
\sum_{i=1}^N \bigl(\chi_i^{(m)} + \Rc\bigr)^2 \Theta_i 
&= 2\pi  \left(\Rc + \sum_{i=1}^N \chi_i^{(m)} \frac{\Theta_i}{2\pi}\right)^2 
+\frac{1}{(\kappa-1)\beta}  \sum_{i=1}^N \left(\overline{B_{T_i}} - \frac{1}{2\pi} 
\sum_{j=1}^N  \overline{B_{T_j}} \Theta_j  \right)^2  \Theta_i \\
& = 2\pi  \left( \Rc + \sum_{i=1}^N  \chi_i^{(m)} \frac{\Theta_i}{2\pi}\right)^2 
+\frac{1}{(\kappa-1) \beta}  \bigl(\mathcal E_2 +  \chi - \mathcal E_3\bigr). 
\end{aligned}
\end{equation}
Consequently, \eqref{con1} becomes
\be
\label{con2}
\left|\pi\left(\Rc + \sum_{i=1}^N  \chi_i^{(m)} \frac{\Theta_i}{2\pi}\right)^2 
- \pi \Rc^2 \right| \leq C \beta^{-2/3} + C_1^\eps Y_1 +  \frac{1+ O(\eps) }{2\beta}Y_2 
+ \frac{1}{2(\kappa-1)\beta}  \bigl(\mathcal E_2 +  \chi - \mathcal E_3\bigr).
\ee
From the a priori estimate in Proposition~\ref{prop:apriori} we have $\rho_i =O(\eps)$ and 
\be 
\label{eq:apriorimean}
\left| \sum_{i=1}^N  \chi_i^{(m)} \frac{\Theta_i}{2\pi}\right| = O(\eps). 
\ee
Therefore 
\be
\pi \left( \Rc + \sum_{i=1}^N \chi_i^{(m)} \frac{\Theta_i}{2\pi}\right)^ 2 - \pi \Rc^2 
= [2\pi \Rc + O(\eps)] \sum_{i=1}^N  \chi_i^{(m)} \frac{\Theta_i}{2\pi},
\ee
which together with \eqref{con2} gives
\be 
\label{eq:vocomean}
\beta^{-1/2} \sum_{i=1}^N (m + \overline{B_{T_i}})\, \Theta_i \leq  O(1) \Bigl( \beta^{-2/3} +  Y_1 + \beta^{-1} Y_2 
+ \beta^{-1}\bigl(\mathcal E_2 +  \chi - \mathcal E_3\bigr)\Bigr). 
\ee
Combine the estimates in \eqref{eq:apriorimean} and~\eqref{eq:vocomean} to obtain 
\begin{equation}
\begin{aligned}
\mathcal E_1^{(m)} &= 2\pi \beta \left( \beta^{-1/2}  \sum_{i=1}^N (m + \overline B_{T_i}) \Theta_i\right)^2 
\leq 2 \pi \beta  O(\eps) \left( \beta^{-1/2}  \sum_{i=1}^N (m + \overline B_{T_i}) \Theta_i\right) \\
&\leq O(\eps) \Bigl( \beta^{1/3} +  \beta Y_1 +Y_2 + \bigl(\mathcal E_2 +  \chi - \mathcal E_3\bigr)\Bigr). 
\end{aligned}
\end{equation}
Insert this estimate into~\eqref{eq:y3deco} and drop the term $\mathcal E_3$, to find 
\be 
\label{eq:y3deco2}
Y_3^{(m)}  \leq  O(\eps) \bigl( \beta^{1/3} +  \beta Y_1 +Y_2 \bigr)+  [1+O(\eps)] 
\bigl(\mathcal E_2 + \chi\bigr).
\ee

\medskip\noindent
{\bf 3. Estimation of $m^2$.} 
Next we estimate $m^2$, which will be needed later. Write 
\begin{equation}
m = \frac{1}{2\pi}\sum_{i=1}^N (m+\overline{B_{T_i}}\,\bigr)\, \Theta_i -\frac{1}{2\pi} 
\sum_{i=1}^N  \overline{B_{T_i}} \Theta_i 
\end{equation}
and use $(a-b)^2\leq 2(a^2+b^2)$, to estimate
\begin{equation}
m^2 \leq 2 \left( \frac{1}{2\pi}  \sum_{i=1}^N (m+\overline{B_{T_i}}\,\bigr) \Theta_i \right)^2
+ 4\pi  \left(  \sum_{i=1}^N  \overline{B_{T_i}}\,\frac{ \Theta_i }{2\pi}\right)^2.
\end{equation}
Up to a multiplicative constant, the first term is equal to $\mathcal E_1^{(m)}$, which has been estimated in Step 2. For the second term we use Cauchy-Schwarz and \eqref{eq:discretesquareintegral}. Hence
\begin{align}
m^2 & \leq\frac1\pi  \mathcal E_1^{(m)} + \frac1\pi  \sum_{i=1}^N \overline{B_{T_i}}^2 \Theta_i 
\leq \frac1\pi \bigl( \mathcal E_1^{(m)} + \mathcal E_2 + \chi\bigr) \notag \\
&\leq   O(\eps) \bigl( \beta^{1/3} +  \beta Y_1 +Y_2 \bigr) 
+ [1+O(\eps)] \bigl(  \mathcal E_2 + \chi\bigr). 
\label{mesti}
\end{align}

\medskip\noindent
{\bf 4. Elimination of $\mathcal E_2$ with the help of the centering constraint.} 
Next we exploit the centering constraint and the a priori estimates  to get rid of $\mathcal E_2= D_1^2+ {D_1^*}^2$. We estimate $D_1^2$ only, since ${D_1^*}^2$ can be treated analogously. We have 
\begin{equation}
D_1 = \frac{1}{\sqrt{\pi}}\left( \sum_{i=1}^N  \overline{(m + B_{T_i}) \cos T_i}\,\Theta_i 
-  m \sum_{i=1}^N \overline{\cos T_i}\, \Theta_i\right). 
\end{equation}
Hence
\be
D_1^2 \leq\frac{2}{\pi} \left(  \sum_{i=1}^N  \overline{(m + B_{T_i}) \cos T_i}\, \Theta_i \right)^2 
+ \frac2\pi \left( \sum_{i=1}^N \overline{\cos T_i}\, \Theta_i \right)^2 m^2.
\ee
In the first sum we use the a priori estimate $\rho_i = ([\kappa -1] \beta)^{-1/2} (m+ B_{T_i}) = O(\eps)$. In the second sum we use that $\sum_{i=1}^N \overline{\cos T_i}\,\Theta_i = O( Y_1)$ by Lemma~\ref{ordermagn1}, and we use the a priori estimate $\theta_i = O(\sqrt{\eps}\,)$ together with $\sum_{i=1}^N \theta_i = 2\pi$ to estimate $O( Y_1)= O(\eps)$. Hence 
\be
\label{eq:d1esti1}
D_1^2  \leq O(\eps)\, \beta^{1/2} \sum_{i=1}^N  \overline{(m + B_{T_i}) \cos T_i}\,\Theta_i + O(\eps^2)\, m^2. 
\ee
The term $m^2$ appearing in the right-hand side has been estimated in~\eqref{mesti}. For the first term, we exploit the centreing constraint. Define
\be
\label{y4mdef}
Y_4^{(m)} = \sum_{i=1}^N (m+ \overline{B_{T_i}}\,)\,\Theta_i,
\ee
which has already been estimated in \eqref{eq:vocomean}. By Lemma~\ref{lem:aprvolsurcen}, recalling Definition~\ref{def:yzsums} and the notation in Proposition~\ref{prop:keyrep}, we have $\cC(Z^{(m)}) = (\Sigma_1,\Sigma_2)$ with 
\be 
\label{eq:sigma1}
\Sigma_1 = \frac1\pi \frac{1}{\sqrt{(\kappa-1)\beta}}\, Y_5^{(m)},
\ee
where
\be\label{y5new}
Y_5^{(m)} = \sum_{i=1}^N \overline{(m+B_{T_i}) \cos T_i}\,\Theta_i.  
\ee
By~\eqref{eq:sigma1}, on the event that $|\Sigma_1|\leq C\beta^{-2/3}$ we have 
\be
\label{Y5control}
\beta^{-1/2} Y_5^{(m)} \leq O(\beta^{-2/3}).
\ee
Multiply both sides by $\beta$ and combine with \eqref{eq:d1esti1}, to get 
\be
\label{D12bd}
D_1^2  \leq O(\eps)\, \beta^{1/3} + O(\eps^2)\, m^2.
\ee
A similar estimate holds for ${D_1^ *}^2$. Combine \eqref{D12bd} with the bounds in \eqref{eq:apriorimean}, \eqref{eq:vocomean}, \eqref{eq:y3deco2} and~\eqref{mesti} for $Y_3^{(m)}$, $|Y_4^{(m)}|$ and $m^2$, to obtain 
\be 
\label{eq:E2esti}
\mathcal E_2 \leq O(\eps) \bigl( \beta^{1/3} + \beta Y_1 + Y_2 + \chi \bigr) 
+ O(\eps)\, \mathcal E_2. 
\ee
We subtract $O(\eps)\,\mathcal E_2$ on both sides and multiply by $[1- O(\eps)]^{-1} = 1+ O(\eps)$, to conclude that we may drop the term $O(\eps)\,\mathcal E_2$ from \eqref{eq:E2esti}. Finally, we insert the estimate thus obtained into the bound \eqref{eq:y3deco2} for $Y_3^{(m)}$, to find 
\be 
\label{eq:y3deco3}
Y_3^{(m)}  \leq  O(\eps) \bigl( \beta^{1/3} +  \beta Y_1 +Y_2 \bigr)
+  [1+O(\eps)]\,\chi.
\ee

\medskip\noindent 
{\bf 5. Only small $m$ contribute.} 
From \eqref{mesti} and \eqref{eq:E2esti} we get 
\be
\label{mesti2}
m^2 \leq M^\eps, \qquad   M^\eps =  O(\eps) \bigl( \beta^{1/3} +  \beta Y_1 +Y_2 \bigr) 
+ [1+O(\eps)]\,\chi.
\ee
We may think of $M^\eps$ as a random variable that is typically of order $\beta^{1/3}$, so that $m$ is typically of order $\beta^{1/6}$ (at most). However, we need to estimate the contribution of the event that $M^\eps$ is much larger than its typical order of magnitude. Fix $C'>0$ (to be chosen later). In what follows we abbreviate
\be 
\label{Upsdef}
\Upsilon^{\UB'}_{\beta,\eps} = \cV_{C\delta(\beta)} \cap \cC_{C\delta(\beta)}(0) \cap \cD_\eps(0),
\qquad
\Upsilon^{\LB'}_{\beta,\eps} = \cV_{C\delta(\beta)} \cap \cD_\eps(0).
\ee
By \eqref{mesti2}, we have 
\begin{align}
&\int_{\mathbb R} \mathrm d m\,  \1_{\{m^2> C'\beta^{1/3}/\eps\}}     
\widehat \E\Bigl[ \e^{O(\eps) (\beta Y_1 + Y_2  + N) 
+ \frac12 Y_3^{(m)}} \1_{\{Z^{(m)} \in \Upsilon^{\UB'}_{\beta,\eps}\}}\Bigr] \notag \\
& \qquad \leq  \widehat \E\Biggl[ \int_{\mathbb R} \mathrm d m\,  
\1_{\{C'\beta^{1/3}/\eps < m^2 \leq M^\eps \}} \e^{O(\eps) (\beta Y_1 + Y_2  + N) 
+ \frac12 Y_3^{(m)}} \1_{\{Z^{(m)} \in \Upsilon^{\UB'}_{\beta,\eps}\}}\Biggr] \notag \\
&\qquad \leq   \widehat \E\Biggl[ \int_{\mathbb R} \mathrm d m\,  
\1_{\{C'\beta^{1/3}/\eps < m^2 \leq M^\eps \}} 
\e^{O(\eps) (\beta^{1/3}+\beta Y_1 + Y_2  + N)  
+ \frac12 [1+O(\eps)]\,\chi} 
\1_{\{Z^{(m)} \in \Upsilon^{\UB'}_{\beta,\eps}\}}\Biggr]. 
\label{mesti3}
\end{align}
In the last line we first use the bound on $Y_3^{(m)}$ from \eqref{eq:y3deco3} and then drop the indicator on $m$. The exponential in the last line is independent of $m$. We bound $\mathbf 1_{\{C'\beta^{1/3}/\eps < m^2 \leq M^\eps\}} \leq \1_{\{M^\eps > C'\beta^{1/3}/\eps\}} \1_{\{m^2  \leq M^\eps\}}$ and perform the integral over $m$, to find that we can further bound \eqref{mesti3} by 
\be\label{mesti4}
\widehat \E\Bigl[\1_{\{M^\eps >  C'\beta^{1/3}/\eps\}} 
\sqrt{M^\eps}\,\e^{O(\eps) (\beta^{1/3}+\beta Y_1 + Y_2  + N)  
+ \frac12 [1+O(\eps)]\,\chi}
\1_{\{Z^{(m)} \in \Upsilon^{\UB'}_{\beta,\eps}\}}\Bigr].
\ee
We can get rid of $\sqrt{M_\eps}$ via the inequality $x \leq \e^{x-1}$, $x\in\R$, with $x = \eps M_\eps$, which yields $\sqrt{M_\eps} \leq \frac{1}{\sqrt{\eps \mathrm e}} \exp(\frac12 \eps\, M^\eps)$. Because of \eqref{eq:y3deco}, the term $\frac12 \eps M^\eps$ can be absorbed into the exponential. We can get rid of the indicator of the event $\{M^ \eps > C'\beta^{1/3}/\eps\}$ by estimating $\1_{\{M^\eps >C'\beta^{1/3}/\eps\}}\leq \exp(- C'\beta^{1/3} + \eps M^\eps)$ and again absorbing the term $\eps M^\eps$ into the exponential. Thus \eqref{mesti4} is bounded by
\be
\label{mesti5}
\frac{1}{\sqrt{\eps \mathrm e}}\,  \e^{-C' \eps \beta^{1/3}} 
\widehat \E\Bigl[\e^{O(\eps) (\beta^{1/3}+\beta Y_1 + Y_2  + N)  
+ \frac12 [1+O(\eps)]\,\chi}\Bigr].
\ee
Using H{\"o}lder's inequality and Lemmas~\ref{lem:ehatperturbed}, \ref{lem:varadhanprep} and \ref{lem:y3moments}, we get
\be \label{mesti5b}
\limsup_{\beta\to \infty} \frac{1}{\beta^{1/3}} \log 
\widehat \E\Bigl[ \e^{O(\eps) (\beta^{1/3}+\beta Y_1 + Y_2  + N)  
+ \tfrac12[1+O(\eps)]\,\chi}
\Bigr] 
\leq k(\eps)
\ee
for some $k(\eps)<\infty$. The details are similar to Steps 6--7 below and therefore are omitted. Hence, given $\eps>0$ we can make \eqref{mesti5} arbitrarily small by making $C'$ sufficiently large. Altogether we obtain the following statement of exponential tightness: For every $C''>0$ there exists a $C'=C'(\eps,C,C'')>0$ such that 
\be
\label{mesti6}
\limsup_{\beta \to \infty} \frac{1}{\beta^{1/3}} \log 
\Biggl(\int_{\mathbb R} \mathrm d m\,  \1_{\{m^2> C' \beta^{1/3}/\eps\}}     
\widehat \E\Bigl[ \e^{O(\eps) (\beta Y_1 + Y_2  + N) + \frac12 Y_3^{(m)}} 
\1_{\{Z^{(m)} \in \Upsilon^{\UB'}_{\beta,\eps}\}}\Bigr]\Biggr) \leq - C''.
\ee
Hence we need only estimate contributions coming from $|m| \leq \sqrt{C'/\eps}\,\beta^{1/6}$.

\medskip\noindent
\textbf{6. Separation of terms with the H{\"o}lder inequality.} 
By~\eqref{eq:y3deco3} and the H{\"o}lder inequality, we have for all $c>0$ and $p,q\geq 1$ with $p^{-1}+q^{-1}=1$, 
\begin{multline} 
\label{keysplit1}
 \log \widehat \E\left[ \int_{|m|\leq c\beta^{1/6}} \d m \, 
 \e^{O(\eps) (\beta Y_1 + Y_2  + N) + \frac12 Y_3^{(m)}} 
\1_{\{Z^{(m)} \in \Upsilon^{\UB'}_{\beta,\eps}\}}\right] \\
\leq  \frac1p \sup_{|m|\leq c \beta^{1/6}} \log \widehat \E\Bigl[ \e^{\frac12 p[1+O(\eps)]\chi} 
\1_{\{Z^{(m)} \in \Upsilon^{\UB'}_{\beta,\eps}\}}\Bigr] \\
+ \frac1q \log\widehat \E\Bigl[ \e^{q O(\eps) (\beta^{1/3} + \beta Y_1 + Y_2  + N)}\Bigr] 
+ O(\eps \beta^{1/3}).
\end{multline}
(We have dropped the indicator in the second term, because it will not be needed.) We will want to choose $p$ close to $1$, which makes $q$ large and potentially dangerous for the second term in~\eqref{keysplit1}. It will turn out that a good choice is 
\be 
\label{eq:pqchoice}
q= \frac{c}{\sqrt{\eps}}, \qquad c \in (0,\infty).
\ee
for which $p=1+O(\sqrt{\eps}\,)$.

\medskip\noindent 
\textbf{7. Estimation of the second term in \eqref{keysplit1}.}
Note that $Y_1$ depends on $N$ and $\Theta_i$, $1 \leq i \leq N$, alone. The tilt by $\e^{Y_0 - \beta C_1 Y_1}$ in the definition of $\widehat\P$ affects the angular point process only, so it is still true under $\widehat{\mathbb {P}}$ that the distribution of $(B_t)_{t\in [0,2\pi]}$ is a mean-centred Brownian bridge independent from $N$ and the $\Theta_i$, $1 \leq i \leq N$. Therefore, by \eqref{Ydefs} and Lemma~\ref{lem:y3moments}, for every $s$ such that $s\in (-1,1)$, 
\be
\label{compmom}
\widehat \E\bigl[\e^{ \frac{1}{2} s Y_2 } \mid N,(\Theta_i)_{i=1}^N \bigr]  
 = (1- s)^{-(N-1)/2}\quad \widehat{\mathbb P}\text{-a.s.}
\ee
Applying this identity with $s = 2q O(\eps) = O( \sqrt{\eps}\,)$ (which falls in $(-1,1)$ for $\eps$ sufficiently small), we get 
\be 
\label{eq:ey2esti}
\widehat \E\Bigl[ \e^{q O(\eps) Y_2} \mid N,(\Theta_i)_{i=1}^N\Bigr] 
= \e^{O(\eps N)} \quad \widehat{\mathbb P}\text{-a.s.}
\ee
Multiplying both sides by $\e^{qO(\eps)(\beta Y_1+N)}$ and taking expectations, we find 
\be
\frac1q \log\widehat \E\Bigl[ \e^{q O(\eps) (\beta Y_1 + Y_2  + N)}\Bigr] 
\leq \frac1q \log\widehat \E\Bigl[ \e^{q O(\eps) (\beta Y_1 + N)}\Bigr]. 
\ee
It now follows from  Lemma~\ref{lem:ehatperturbed} with $\delta=q\eps= c\sqrt{\eps}$ that 
\be\label{dom}
\limsup_{\beta\to \infty}\frac{1}{\beta^{1/3}} \frac1q 
\log\widehat \E\Bigl[ \e^{q O(\eps) (\beta Y_1 + Y_2  + N) }\Bigr] \leq O(\eps). 
\ee
Hence the second term in \eqref{keysplit1} is negligible.

\medskip\noindent 
\textbf{8. Conclusion.} 
Combine~\eqref{mesti6}--\eqref{keysplit1} and \eqref{dom} to get 
\begin{multline} 
\limsup_{\beta \to \infty}\frac{1}{\beta^{1/3}} \log \Biggl( \int_{\mathbb R} 
\mathrm \d m\,\,\widehat \E\Bigl[ \e^{O(\eps) (\beta Y_1 + Y_2  + N) 
+ \frac12 Y_3^{(m)}} \1_{\{Z^{(m)}\in \Upsilon^{\UB'}_{\beta,\eps}\}}\Bigr] \Biggr) \\
\leq  O(\eps) + \limsup_{\beta \to \infty}\frac{1}{\beta^{1/3}} \sup_{|m|\leq c \beta^{1/6}} 
\log \widehat \E\Bigl[ \e^{\frac12[1+O(\eps)]\,\chi} \1_{\{Z^{(m)}\in \mathcal O\}}\Bigr]
\end{multline} 
for suitable $c= c(\eps, C)>0$. This completes the proof of Proposition~\ref{prop:upperkey}.
\end{proof}

%%%

\subsection{Lower bound on the key integral} 
\label{sec:wdmp}

The proof of the lower bound in Theorem~\ref{thm:zoom1} builds on the following key proposition. 

\begin{proposition}[Lower bound key integral]  
\label{prop:lowerkey}
For all $C\in (0,\infty)$ sufficiently large and all $\eps>0$ sufficiently small,
\be 
\label{lbkey} 
\liminf_{\beta \to \infty}\frac{1}{\beta^{1/3}} \log \mathcal I^\mathrm{LB}(\kappa,\beta; C,\eps)
\geq 2\pi G_\kappa \tau^{*} + \liminf_{\beta \to \infty}\inf_{|m|\leq \beta^{-1/6}}  
\frac{1}{\beta^{1/3}}  \log \widehat \P \bigl(Z^{(m)} \in \mathcal O\bigr) +  O(\eps).
\ee
\end{proposition}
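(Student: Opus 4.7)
The plan is to start from the representation~\eqref{kikihat} for $\mathcal I^\mathrm{LB}(\kappa,\beta;C,\eps)$. Proposition~\ref{prop:theta-asympt} immediately yields
$$\e^{2\pi G_\kappa\beta^{1/3}}\,\EE[\e^{Y_0 - \beta C_1 Y_1}] = \e^{2\pi G_\kappa\tau_*\beta^{1/3} + o(\beta^{1/3})},$$
which supplies the leading $2\pi G_\kappa \tau_*$ on the right-hand side of~\eqref{lbkey}. It remains to lower-bound
$$\mathcal J_\beta = \int_{\R}\d m\,\widehat\EE\Big[\e^{O(\eps)(\beta Y_1 + Y_2 + N) + \frac12 Y_3^{(m)}}\,\mathbf 1_{\{Z^{(m)}\in\Upsilon^{\LB'}_{\beta,\eps}\}}\Big].$$
First I would drop the non-negative term $\e^{Y_3^{(m)}/2}\ge 1$ and restrict the $m$-integral to $|m|\le\beta^{-1/6}$. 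The resulting interval has length $2\beta^{-1/6}$, contributing only a $\log\beta$ factor that is negligible on the scale $\beta^{1/3}$.

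The heart of the argument is the introduction of a typicality event
$$\mathcal G_\beta = \{\beta Y_1 + Y_2 + N \le K\beta^{1/3}\}\cap\{\|B\|_\infty\le K\},$$
possibly augmented by a bound on the discretisation error $\sum_i\overline{B_{T_i}}\Theta_i$ coming from Lemma~\ref{lem:stochdiscr3}, where $K$ is a large constant chosen below. On $\mathcal G_\beta\cap\{|m|\le\beta^{-1/6}\}$, the trivial a priori estimate
$$|\rho_i| = \frac{|m + B_{T_i}|}{\sqrt{(\kappa-1)\beta}} = O(\beta^{-1/2})$$
places every boundary point of $Z^{(m)}$ within $O(\beta^{-1/2})$ of $\partial B_{R_c-2}(0)$. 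Combined with the a priori control $\Theta_i = O(\beta^{-1/3})$ from $\mathcal G_\beta$, this forces, via the geometry of arcs in Lemma~\ref{L:boundaryorder}, that $d_\mathrm{H}(\partial S(Z^{(m)}),\partial B_{R_c}(0))\le\eps$ for $\beta$ large whenever $Z^{(m)}\in\mathcal O$. Substituting the same estimates into the volume expansion of Proposition~\ref{prop:expansion} gives $||S(Z^{(m)})|-\pi R_c^2| = O(K\beta^{-2/3})$, so choosing $C$ large in terms of $K$ guarantees the volume constraint. Consequently,
$$\mathcal G_\beta\cap\{|m|\le\beta^{-1/6}\}\cap\{Z^{(m)}\in\mathcal O\}\subset\{Z^{(m)}\in\Upsilon^{\LB'}_{\beta,\eps}\},$$
and on $\mathcal G_\beta$ we have $\e^{O(\eps)(\beta Y_1 + Y_2 + N)}\ge\e^{-O(\eps)K\beta^{1/3}}$, whence
$$\mathcal J_\beta \ge 2\beta^{-1/6}\,\e^{-O(\eps)\beta^{1/3}}\inf_{|m|\le\beta^{-1/6}}\widehat\PP\big(\{Z^{(m)}\in\mathcal O\}\cap\mathcal G_\beta\big).$$

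The main obstacle is removing $\mathcal G_\beta$ from the probability without losing too much. I would use
$$\widehat\PP\big(\{Z^{(m)}\in\mathcal O\}\cap\mathcal G_\beta\big)\ge\widehat\PP(Z^{(m)}\in\mathcal O)-\widehat\PP(\mathcal G_\beta^c)$$
and choose $K$ large enough that $\widehat\PP(\mathcal G_\beta^c)\le\tfrac12\widehat\PP(Z^{(m)}\in\mathcal O)$ uniformly in $|m|\le\beta^{-1/6}$. The relevant tail estimates are all at hand: Lemma~\ref{lem:ehatperturbed} yields exponential decay of $\widehat\PP(\beta Y_1 + N > K\beta^{1/3})$ at a rate that grows with $K$; the conditional distribution $Y_2\mid N\sim \chi^2_{N-1}$ identified in the proof of Lemma~\ref{lem:y3moments} gives $\widehat\PP(Y_2>K\beta^{1/3})\le\e^{-cK\beta^{1/3}}$; and a standard Brownian-bridge tail bound furnishes $\PP(\|B\|_\infty>K)\le\e^{-cK^2}$. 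Since the decay rate of $\widehat\PP(\mathcal G_\beta^c)$ can be made arbitrarily large by enlarging $K$, it dominates any fixed exponential lower bound on $\widehat\PP(Z^{(m)}\in\mathcal O)$ coming from the liminf appearing on the right-hand side of~\eqref{lbkey}, and the subtraction then costs only a constant factor. Taking logarithms, dividing by $\beta^{1/3}$, and passing to the liminf in $\beta$ yields~\eqref{lbkey}, with the $\log\beta$ prefactor and the $\e^{-O(\eps)\beta^{1/3}}$ factor absorbed into the $-O(\eps)$ correction.
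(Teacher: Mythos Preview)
Your strategy --- restricting to a typicality event $\mathcal G_\beta$ on which the exponential weight is bounded pointwise and the geometric constraints become automatic --- differs from the paper's, which instead uses the reverse H\"older inequality~\eqref{keysplit} to decouple $\e^{O(\eps)(\beta Y_1+Y_2+N)}$ from the indicator and then controls the geometric constraints separately via Lemma~\ref{lem:extreme}. There is, however, a genuine gap in your subtraction step. You write $\PP(\|B\|_\infty>K)\le\e^{-cK^2}$ with $K$ a fixed constant, but this bound carries no $\beta$-dependence: for fixed $K$ it is simply a \emph{constant}, whereas $\widehat\PP(Z^{(m)}\in\mathcal O)$ is expected to decay like $\e^{-c_{**}\beta^{1/3}}$ (this is precisely condition~(C1)). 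Hence for large $\beta$ the inequality $\widehat\PP(\mathcal G_\beta^c)\le\tfrac12\widehat\PP(Z^{(m)}\in\mathcal O)$ fails no matter how large $K$ is chosen, and the subtraction $\widehat\PP(\{Z^{(m)}\in\mathcal O\}\cap\mathcal G_\beta)\ge\widehat\PP(Z^{(m)}\in\mathcal O)-\widehat\PP(\mathcal G_\beta^c)$ becomes vacuous. The other components of $\mathcal G_\beta$ do not share this defect, since their tails genuinely carry a factor $\beta^{1/3}$ in the exponent.

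The repair is to let the threshold grow with $\beta$: replacing $\{\|B\|_\infty\le K\}$ by $\{\|B\|_\infty\le K\beta^{1/6}\}$ gives $\PP(\|B\|_\infty>K\beta^{1/6})\le\e^{-cK^2\beta^{1/3}}$, whose rate is now tunable by $K$, and this threshold is still tight enough that $y_3(Z^{(m)})=O(K^2\beta^{-2/3})$, so the volume constraint $\cV'_{C\beta^{-2/3}}$ is met provided $C$ is taken large depending on $K$. With this modification (and the analogous $\beta$-dependent threshold for the discretised mean you already flagged) your argument goes through. The paper avoids the whole issue because in Lemma~\ref{lem:extreme}(a) the relevant bound is $|m+B_{T_i}|\le\tfrac12\eps\sqrt{(\kappa-1)\beta}$, which already scales with $\sqrt{\beta}$ and yields decay $\e^{-c\eps^2\beta}$, easily dominating any $\e^{-L\beta^{1/3}}$.
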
 

\begin{proof}
The representation in \eqref{kikihat} for the key integral and the asymptotics from Proposition~\ref{prop:theta-asympt} imply that it suffices to show that
\be
\label{enough2}
\begin{aligned}
&\liminf_{\beta \to \infty}\frac{1}{\beta^{1/3}} \log \Biggl( \int_{\mathbb R} \mathrm \d m\,\,\widehat 
\E\Bigl[ \e^{O(\eps) (\beta Y_1 + Y_2  + N) + \frac12 Y_3^{(m)}} \1_{\{Z^{(m)}\in 
\cV_{C\beta^{-2/3}} \cap \cD_\eps(0) \cap \cC_{C\beta^{-2/3}}(0)\}} \Biggr)\\
&\geq O(\eps) + \liminf_{\beta \to \infty}\inf_{|m|\leq \beta^{-1/6}}  
\frac{1}{\beta^{1/3}}  \log \widehat \P \bigl(Z^{(m)} \in \mathcal O\bigr).
\end{aligned}
\ee
The proof comes in 4 steps.

%%%

\paragraph{1. Separation of terms with the reverse H\"older inequality.} 
For the lower bound we can simply drop the non-negative term $Y_3^{(m)}$ and restrict the integral over $m$ to $|m|\leq \beta^{-1/6}$. Therefore it suffices to prove that 
\be
\label{enough*}
\begin{aligned}
&\liminf_{\beta \to \infty}\frac{1}{\beta^{1/3}} \log \Biggl( \int_{_{|m| \leq \beta^{-1/6}}} \mathrm \d m\,\,\widehat 
\E\Bigl[ \e^{O(\eps) (\beta Y_1 + Y_2  + N)} \1_{\{Z^{(m)}\in 
\cV_{C\beta^{-2/3}} \cap \cD_\eps(0) \cap \cC_{C\beta^{-2/3}}(0)\}} \Biggr)\\
&\geq O(\eps) + \liminf_{\beta \to \infty}\inf_{|m|\leq \beta^{-1/6}}  
\frac{1}{\beta^{1/3}}  \log \widehat \P \bigl(Z^{(m)} \in \mathcal O\bigr).
\end{aligned}
\ee
We separate the exponential from the indicator  $\Upsilon^{\LB'}_{\beta,\eps}$ (defined in \eqref{Upsdef} with the help of the reverse H\"older inequality with $p\in(1,\infty)$,
\begin{multline} 
\label{keysplit}
\log  \widehat \E\Bigl[\int_{|m| \leq \beta^{-1/6}} \d m\,  \e^{O(\eps) (\beta Y_1 + Y_2  + N) } 
\1_{\{Z^{(m)}\in \Upsilon^{\LB'}_{\beta,\eps}  \}}\Bigr] \\
\geq p \inf_{|m|\leq \beta^{-1/6}} \log \widehat{ \mathbb P}\bigl(Z^{(m)}\in \Upsilon^{\LB'}_{\beta,\eps} \bigr)
- (p-1) \sup_{|m|\leq \beta^{-1/6}} \log \widehat \E \Bigl[ \e^{-\frac{1}{p-1}O(\eps) (\beta Y_1 + Y_2 + N)} \Bigr] 
+ \log \beta^{-1/6}.
\end{multline}
We choose 
\be
p = 1+ c \sqrt{\eps}, \qquad c \in (0,\infty). 
\ee

%%%

\paragraph{2. Estimation of the second term in~\eqref{keysplit}.} 
Proceeding as in the proof of \eqref{dom}, we can again use \eqref{compmom} with $s=-2(p-1)^{-1}O(\eps) = -O(\sqrt{\eps}\,)$ to estimate, as in \eqref{eq:ey2esti}, 
\be
-(p-1)\log \widehat \E\Bigl[\e^{-\tfrac{1}{p-1}O(\eps)\,Y_2 } \mid N,(\Theta_i)_{i=1}^N\Bigr]
= (p-1)\tfrac{(N-1)}{2}\log(1-s)= O(\sqrt{\eps}\,N).
\ee
Taking expectations, we obtain 
\be
-(p-1)\log \widehat \E\Bigl[\e^{-\tfrac{1}{p-1}O(\eps)\,(\beta Y_1+Y_2+N) }\Bigr] 
= -(p-1)\log \widehat \E\Bigl[ \e^{-\tfrac{1}{p-1}O(\eps)\,(\beta Y_1+N) }\Bigr].
\ee
Applying Lemma~\ref{lem:ehatperturbed} with $\delta=\frac{1}{p-1}O(\eps) = O(\sqrt{\eps}\,)$, we conclude that
\be
\label{splitlb1}
\liminf_{\beta\to\infty} - \frac{1}{\beta^{1/3}}(p-1) \log \widehat 
\E\Bigl[\e^{-\tfrac{1}{p-1}O(\eps)\,(\beta Y_1+Y_2+N)}\Bigr] \geq O(\eps).
\ee

%%%

\paragraph{3. Estimate of the first term in~\eqref{keysplit}.}
Estimate
\begin{equation}
\label{splitlb2}
\begin{aligned}
&\widehat{\mathbb P} \bigl(Z^{(m)} \in \cO \cap \cV_{C\beta^{-2/3}}  \cap \cD_\eps(0)\bigr)
= \widehat{\mathbb P} \bigl(Z^{(m)} \in \cO\bigr)\\ 
&\quad - \left[\widehat{\mathbb P} \bigl(Z^{(m)} \in \cO,\,Z^{(m)} \notin \cD_\eps(0) \bigr)
+ \widehat{\mathbb P} \bigl(Z^{(m)} \in \cO \cap \cD_\eps(0),\,Z^{(m)} \notin \cV_{C\beta^{-2/3}}\bigr) 
\right].
\end{aligned}
\end{equation} 
We want to show that the last two probabilities are negligible. It suffices to show the following.

\begin{lemma} 
\label{lem:extreme}
For some $\eps_0>0$ small enough and uniformly on $|m| \leq \beta^{-1/6}$:
\begin{itemize}
\item[{\rm (a)}]
$\{Z^{(m)} \in \cO,\,\Theta_i \le \frac{2}{\sqrt{\Rc(\Rc-1)}} \sqrt{\eps}\,\,\,\forall\,1 \leq i \leq N,\, 
|m+B_{T_i}| \leq \tfrac12 \eps \sqrt{(\kappa-1)\beta}\,\,\,\forall\,1 \leq i \leq N\} 
\subset \{Z^{(m)} \in \cO \cap \cD_\eps(0)\}$ for every $0 < \eps \leq \eps_0$.
\item[{\rm (b)}] 
$\lim_{\beta\to\infty} \frac{1}{\beta^{1/3}} \log 
\widehat{\mathbb P} \bigl(Z^{(m)} \notin \cD_\eps(0)\bigr) = -\infty$ for every $0 < \eps \leq \eps_0$.
\item[{\rm (c)}] 
$\lim_{C\to\infty} \sup_{0<\eps \leq \eps_0} \limsup_{\beta\to\infty} \frac{1}{\beta^{1/3}} \log
\widehat{\mathbb P} \bigl(Z^{(m)} \in \cO \cap \cD_\eps(0),\,Z^{(m)} \notin \cV_{C\beta^{-2/3}}\bigr) = -\infty$.
\end{itemize}
\end{lemma}

\begin{proof}
(a) In polar coordinates, $Z_i^{(m)}=(r_i^{(m)} \cos T_i, r_i^{(m)} \sin T_i)$ with $\abs{r_i^{(m)}-(\Rc-1)}\le \frac12 \eps$. To prove that $Z^{(m)} \in \cO \cap \cD_\eps(0)$, it clearly suffices to show that, for any $i$, the boundary cusp $v_i=(x,y) \in \partial B(Z_i^{(m)})\cap  \partial B(Z_{i+1}^{(m)})$ belongs to $A_{\Rc,\eps}$. The most extremal case -- when the point $v_i$ is as close to the origin as possible -- occurs when 
$$
\abs{r_i^{(m)}}=\abs{r_{i+1}^{(m)}}=r=\Rc-1-\tfrac12\eps, \qquad \Theta_i= \frac{2}{\sqrt{\Rc(\Rc-1)}} \sqrt\eps.
$$ 
Assuming, without loss of generality, that $Z_i^{(m)}=(0,r)$ and $Z_{i+1}^{(m)}=(r\sin \Theta_i,r\cos \Theta_i)$, we find $v_i$ as the intersection of the line $\{(x,y)\colon\,x=y\tan(\tfrac12\Theta_i)\}$ (the axis of symmetry between the rays $\ell_i$ and $\ell_{i+1}$ passing through  the points $Z_i^{(m)}$ and $Z_{i+1}^{(m)}$) with the circle $\partial B((0,r))\colon\, x^2+(y-r)^2=1$. Using the shorthand 
$$
\eta=\tan^2(\tfrac12\Theta_i) = \frac{\eps}{\Rc(\Rc-1)} +O(\eps^2),
$$ 
we get $y=\frac{r + \sqrt{1- \eta (r^2-1)}}{1+\eta}$, which yields
\be
\abs{v_i}^2=x^2+y^2=(1+\tan^2(\tfrac12\Theta_i))y^2=(1+\eta)y^2=\frac{r^2+2r\sqrt{1-\eta(r^2-1)}+1-\eta (r^2-1)}{1+\eta}.
\ee
Hence, up to the order $O(\eps^2)$,
\be
\abs{v_i}^2=[r^2+2r(1-\tfrac12 \eta(r^2-1)+1-\eta (r^2-1)](1-\eta)=(r+1)^2(1-\eta r)=(\Rc-\tfrac{\eps}2)^2(1-\eta r),
\ee
which implies the claim for $\eps$ sufficiently small. Indeed,
\be
\label{E:vige}
\abs{v_i} \ge (\Rc-\tfrac{\eps}2) (1-\tfrac12\eta r)\ge (\Rc-\eps)
\ee
because
\be
\tfrac12\eta (\Rc-\tfrac{\eps}2)(\Rc-1-\tfrac{\eps}2)\le \frac12\frac{(\Rc-\tfrac{\eps}2)(\Rc-1-\tfrac{\eps}2)}{\Rc(\Rc-1)} 
\eps\le \tfrac{\eps}2.
\ee

%%%%%%%%%%%%%%%%%%%%%%%%%%%%%%%%%%%%%%%%%%%%%%%%%%
\begin{figure}[htbp]
\centering
%\begin{overpic}[width=8cm,scale=.25,percent,grid]
\begin{overpic}[width=8cm,scale=.25,percent]
{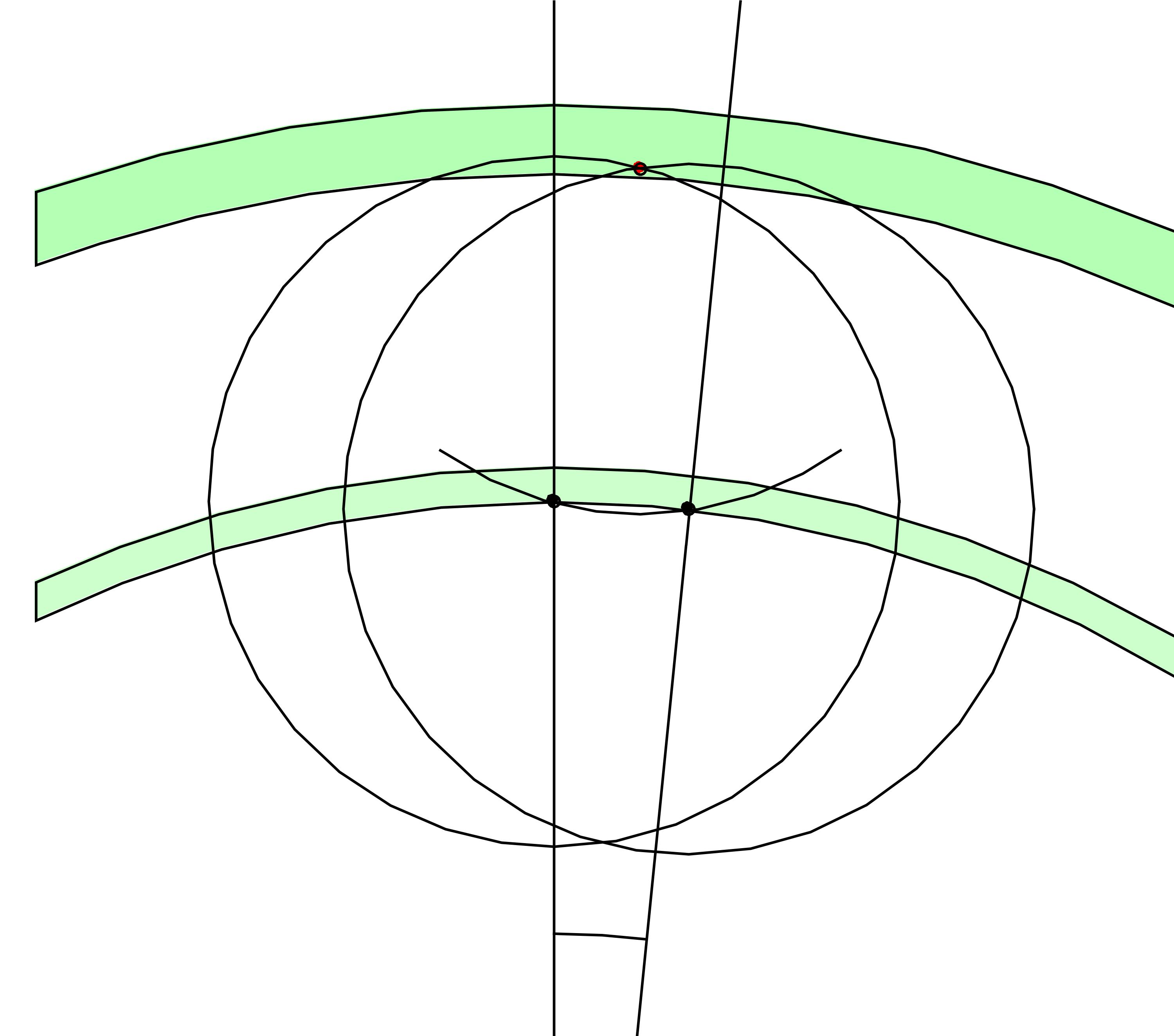}
\put(44,43) {$\scriptstyle z_{i}$}
\put(60.5,42) {$\scriptstyle z_{i+1}$}
\put(49.5,5){$\scriptstyle \Theta_i$}
\put(53.5,75) {$\scriptstyle v_{i}$}
\put(44,83) {$\scriptstyle \ell_{i}$}
\put(63.5,83) {$\scriptstyle \ell_{i+1}$}
\put(18,20) {$\scriptstyle \partial B(z_i)$}
\put(80,20) {$\scriptstyle \partial B(z_{i+1})$}
\put(34.5,51) {$\scriptstyle \partial B(v_i)$}
\end{overpic}
\caption{\small Illustration to the proof of the claim in Lemma~\ref{lem:extreme}(a).}
\label{fig:11}
\end{figure}
%%%%%%%%%%%%%%%%%%%%%%%%%%%%%%%%%%%%%%%%%%%%%%%%%%%

\medskip\noindent
(b) With the help of (a), we may estimate
\begin{equation}
\label{estsplit}
\begin{aligned}
\widehat{\mathbb P} \Bigl(Z^{(m)} \in \cO,\,Z^{(m)} \notin \cD_\eps(0)\Bigr) 
&\leq \widehat{\mathbb P} \Bigl(\exists\,1 \leq i \leq N\colon\,\Theta_i >\frac{2}{\sqrt{\Rc(\Rc-1)} \sqrt{\eps}} \Bigr)\\ 
&\qquad + \widehat{\mathbb P} \Bigl(\exists\,1 \leq i \leq N\colon\, |m+B_{T_i}| 
> \tfrac12 \eps \sqrt{(\kappa-1)\beta}\Bigr).
\end{aligned}
\end{equation}
Abbreviate $k = \frac{2}{\sqrt{\Rc(\Rc-1)}}$. Then the first term in the right-hand side of \eqref{estsplit} can be estimated as
\be
\begin{aligned}
\widehat{\mathbb P} \Bigl(\exists\,1 \leq i \leq N\colon\,\Theta_i > k \sqrt\eps \Bigr)
& \leq \frac{1}{\E[\exp( \widehat Y_0 - \widehat Y_1)]} \, \E\Biggl[ \sum_{i=1}^N 
\E\Bigl[ \e^{\widehat Y_0 - \widehat Y_1} 1_{\{\Theta_i >k \sqrt{\eps}\}}\,\Big|\, N\Bigr]\Biggr]. 
\end{aligned} 
\ee
In the conditional expectation we estimate 
\be
\widehat Y_1 = \beta C_1 \sum_{j=1}^n \Theta_j^3 \geq \tfrac12 \Theta_i^3 
+ \tfrac12 \widehat Y_1 \geq \tfrac12 k^3 \eps^{3/2} 
+ \tfrac12 \widehat Y_1 \quad \P\text{-a.s.}
\ee
With the help of the inequality $N\leq \frac{1}{\delta \e} \,\e^{N\delta}$ we deduce that, for every $\delta>0$,
\be
\begin{aligned}
\widehat{\mathbb P} \Bigl(\exists\,1 \leq i \leq N\colon\,\Theta_i > k \sqrt\eps \Bigr)
& \leq \e^{- k^{3} \eps^{3/2}\beta} \, \frac{\E[N\, \exp( \widehat Y_0 - \frac12 \widehat Y_1)]}
{\E[\exp( \widehat Y_0 - \widehat Y_1)]} \\
&\leq  \frac{1}{\delta \e}  \e^{- k^{3} \eps^{3/2}\beta} \, \frac{\E[\exp( \widehat Y_0 - \frac12 \widehat Y_1 + \delta N)]}
{\E[\exp( \widehat Y_0 - \widehat Y_1)]}.
\end{aligned} 
\ee
We already know from Proposition~\ref{prop:theta-asympt} that the denominator equals $\exp(-[1+o(1)]\,c\beta^{1/3})$ for some constant $c>0$ as $\beta\to \infty$. Arguments entirely analogous to those in the proof of Proposition~\ref{prop:theta-asympt} and Lemma~\ref{lem:ehatperturbed} show that the same holds for the numerator. It follows that
\be
\widehat{\mathbb P} \Bigl(\exists\,1 \leq i \leq N\colon\,\Theta_i > k \sqrt\eps \Bigr) 
= O \bigl( \e^{- c'\eps^{3/2}\beta}\bigr)
\ee
for some constant $c'>0$ as $\beta\to\infty$, and so we get the claim with a margin. 

As to the second term in the right-hand side of \eqref{estsplit}, since the tilting only affects the angular process and not the radial process, we have
\begin{equation}
\begin{aligned}
&\widehat{\mathbb P} \Bigl(\exists\,1 \leq i \leq N\colon\, |m+B_{T_i}| 
> \tfrac12\eps \sqrt{(\kappa-1)\beta}\Bigr)
= \sum_{n\in\N} \widehat{\mathbb E} \Bigl(1_{\{\exists\,1 \leq i \leq n\colon\, |m+B_{T_i}| 
> \tfrac12 \eps \sqrt{(\kappa-1)\beta},\,N=n\}}\Bigr)\\
&\qquad \leq \sum_{n\in\N} \sum_{i=1}^n \widehat{\mathbb E} 
\Bigl(1_{\{|m+B_{T_i}| > \tfrac12 \eps \sqrt{(\kappa-1)\beta},\,N=n\}}\Bigr)
= \sum_{n\in\N} \sum_{i=1}^n \widehat{\mathbb E} 
\Bigl(1_{\{|m+B_0| > \tfrac12 \eps \sqrt{(\kappa-1)\beta},\,N=n\}}\Bigr)\\[0.2cm]
&\qquad = \widehat{\mathbb E}[N]\, \widehat{\mathbb P} \bigl(|m+B_0| 
> \tfrac12 \eps \sqrt{(\kappa-1)\beta}\,\bigr)
= \widehat{\mathbb E}[N]\, \mathbb P \bigl(|m+B_0| > \tfrac12 \eps \sqrt{(\kappa-1)\beta}\,\bigr),
\end{aligned}
\end{equation}
where we use that the law of $(B_t)_{t \in [0,2\pi]}$ is invariant under shifts. Moreover, recalling \eqref{Bhatdef}--\eqref{Btildedef}, we have 
\begin{equation}
\begin{aligned}
&\mathbb P \bigl(|m+B_0| > \tfrac12 \eps \sqrt{(\kappa-1)\beta}\,\bigr) 
= \mathbb P \Biggl(\Big|\tfrac{1}{2\pi} \int_0^{2\pi} \d t\,\widetilde W_t\Big| 
> \tfrac12 \eps \sqrt{(\kappa-1)\beta}-|m|\Biggr)\\
&\leq \mathbb P \Biggl(\sup_{0\leq t\leq 2\pi} |W_t| 
> \tfrac23 \left[\tfrac12 \eps \sqrt{(\kappa-1)\beta}-|m|\right]\Biggr),
\end{aligned}
\end{equation}
where we use that $\frac{1}{2\pi}\int_0^{2\pi} \d t\,\widetilde W_t = \frac{1}{2\pi}\int_0^{2\pi} 
\d s \,W_s - \tfrac12 W_{2\pi}$ and
\be
\left\{\sup_{0\leq t\leq 2\pi} |W_t| \leq a\right\} \subset 
\left\{\left|\tfrac{1}{2\pi}\int_0^{2\pi} \d s \,W_s - \tfrac12 W_{2\pi} \right| \leq \tfrac32 a \right\},
\qquad a > 0.
\ee
But (see \cite[Lemma 5.2.1]{DZ})
\be
\mathbb P \Biggl(\sup_{0\leq t\leq 2\pi} |W_t| 
> \tfrac23 \left[\tfrac12 \eps \sqrt{(\kappa-1)\beta}-|m|\right]\Biggr)
\leq 4 \exp\left(-\tfrac12 \tfrac{1}{2\pi}\left[\tfrac23
\left[\tfrac12 \eps \sqrt{(\kappa-1)\beta}-|m|\right]\,\right]^2 \right),
\ee
and so we get the claim with a margin.

\medskip\noindent
(c) On the event $\{Z^{(m)} \in \cO \cap \cD_\eps(0)\}$ we can use the expansion of the volume in Proposition~\ref{prop:expansion} in the form given in~\eqref{volYk}, with $Y_4^{(m)}$ given in~\eqref{y4mdef}. We have 
\begin{equation} 
|S(Z^{(m)})| - \pi \Rc^2  = - C_1^\eps Y_1 + \frac{1+O(\eps)}{2 \beta}Y_2 
+ \frac12 \frac{1}{(\kappa -1) \beta} Y_3^{(m)} + \frac{\Rc}{\sqrt{(\kappa-1)\beta}} Y_4^{(m)}.
\end{equation}
Recall that $Y_1,Y_2,Y_3^{(m)}$ are non-negative, while $Y_4$ is not necessarily so. It follows that 
\begin{multline} 
\label{apron}
\widehat \P\bigl(Z^{(m)} \in \cO \cap \cD_\eps(0),\,Z^{(m)} \notin \cV_{C\beta^{-2/3}}\bigr)
\leq \widehat\P\Bigl( C_1^\eps Y_1 > \tfrac14 C\beta^{-2/3} \Bigr) 
+ \widehat\P\Bigl( \tfrac{1}{2\beta} [1+O(\eps)]\, Y_2 > \tfrac14 C\beta^{-2/3} \Bigr) \\
+ \widehat\P\Bigl( \tfrac12 \frac{1}{(\kappa -1) \beta}\, Y_3^{(m)} > \tfrac14 C\beta^{-2/3} \Bigr)
+ \widehat\P\Bigl( \frac{\Rc}{\sqrt{(\kappa-1)\beta}}\, |Y_4^{(m)}| > \tfrac14 C\beta^{-2/3} \Bigr).
\end{multline} 
The four probabilities on the right-hand side of~\eqref{apron} are estimated with the help of large deviations, Markov's inequality and the results from Section~\ref{sec:prep}. The first probability is bounded by
\be
\exp\bigl( - \tfrac14 s C \beta^{1/3}\bigr)\, \widehat \E\Bigl[ \e^{s C_1^\eps \beta Y_1} \Bigr],\quad s>0.
\ee
Using Lemma~\ref{lem:ehatperturbed}, we see that for $s \downarrow 0$,
\be
\limsup_{\beta \to \infty}\frac{1}{\beta^{1/3}}\log \widehat \E\Bigl[ \e^{s C_1^\eps \beta Y_1} \Bigr]< O(s)
\ee
and therefore, for some $\eps_1>0$,
\be
\lim_{C\to \infty} \sup_{0 <\eps \leq \eps_1} \limsup_{\beta \to \infty} \frac{1}{\beta^{1/3}} 
\log \widehat\P\Bigl( C_1^\eps Y_1 > \tfrac14 C\beta^{-2/3} \Bigr) = - \infty. 
\ee
The other three probabilities in~\eqref{apron} are treated in a similar way, and so it suffices to show that, for $s\in \R$ with $|s|$ small enough,
\be 
\label{err1}
\limsup_{\beta\to\infty} \frac{1}{\beta^{1/3}} 
\log \widehat{\mathbb E}\bigl[\e^{s Y_2} \bigr] < \infty, 
\,
\limsup_{\beta\to\infty} \frac{1}{\beta^{1/3}} 
\log \widehat{\mathbb E}\bigl[\e^{s Y_3^{(m)}} \bigr] <\infty,
\,
\limsup_{\beta\to\infty} \frac{1}{\beta^{1/3}} 
\log \widehat{\mathbb E}\bigl[\e^{s\beta^{1/2}Y_4^{(m)}} \bigr]<\infty.
\ee

%%%

\paragraph{First estimate in \eqref{err1}.}
Write Use \eqref{compmom} to compute
\be
\label{eq:compmom2}
\widehat{\E}\bigl(\e^{sY_2}\mid N,(\Theta_i)_{i=1}^N \bigr)
=\bigl(1-2s\bigr)^{-(N-1)/2}=\exp\Big(\tfrac{N-1}{2} \log(1-2s)^{-1}\Big)
\leq \e^{N[s + O(s^2)]} \quad \widehat{\mathbb P}\text{-a.s.}
\ee
Taking the expectation, we have
\be
\widehat{\mathbb E}\bigl(\e^{s Y_2} \bigr)
\leq \widehat{\mathbb E}\bigl(\e^{[s+O(s^2)]N} \bigr)
\ee
and therefore, using Lemma~\ref{lem:ehatperturbed}, we see that for $s \downarrow 0$,
\be
\label{Y1bd}
\limsup_{\beta\to\infty}\frac{1}{\beta^{1/3}} 
\log \widehat{\mathbb E}\bigl(\e^{s Y_2} \bigr)\leq O(s).
\ee

%%%

\paragraph{Second estimate in \eqref{err1}.} 
Note that $Y_3^{(m)} = \sum_{i=1}^N (m+\overline{B_{T_i}})^2 \Theta_i \leq \sum_{i=1}^N 2(m^2+\overline{B_{T_i}}^2)\Theta_i = 4\pi m^2 + 2Y_3^{(0)}$. The term $4\pi m^2 = O(\beta^{-1/3})$ is harmless. Write
\be
Y_3^{(0)} = E_1 +  \int_0^{2\pi} \d t\, B_{t}^2,
\qquad E_1 = \sum_{i=1}^N \overline{B_{T_i}}^2 \Theta_i -  \int_0^{2\pi} \d t\, B_{t}^2.
\ee
Estimate
\be
\widehat{\mathbb E}\bigl(\e^{sY_3^{(0)}} \bigr)
\leq \widehat{\mathbb E}\bigl(\e^{2s E_1}\bigr)^{\tfrac12} \,
\widehat{\mathbb E}\bigl(\e^{2s \int_0^{2\pi} \d t\, B_{t}^2}\bigr)^{\tfrac12}.
\ee
By Lemma~\ref{lem:stochdiscr1},
\be
\label{iotaest}
\widehat{\mathbb E}\bigl(\e^{2s E_1}\bigr) \leq 
\widehat{\mathbb E}\Bigl(\e^{|s|O\big(\sqrt{Y_1}\log(1/Y_1)\big)} \Bigr).
\ee
Since $Y_1\geq N(2\pi/N)^3$, we have $\log(1/Y_1)=O(\log N) = O(N)$. Since $Y_1 = O(\eps)$, this gives
\be
\widehat{\mathbb E}\bigl(\e^{2s E_1} \bigr)
\leq \widehat{\mathbb E}\bigl(\e^{|s| O(\sqrt{\eps} \log N)} \bigr) = e^{|s| O(\sqrt{\eps}\,)\beta^{1/3}},
\ee
where we use Lemma \ref{lem:ehatperturbed}. By Lemma~\ref{lem:varadhanprep},
\be 
\label{err3}
\widehat{\mathbb E}\bigl[\e^{2s \int_0^{2\pi} \d t\, B_t^2}\bigr] 
= \prod_{k\in\N} \Bigl( 1- \frac{4s}{k^2}\Bigr)^{-1},
\ee
which is finite for $s<\frac14$. 

%%%

\paragraph{Third estimate in \eqref{err1}.}
Note that $Y_4^{(m)} = \sum_{i=1}^N (m+\overline{B_{T_i}}) \Theta_i = 2\pi m + Y_4^{(0)}$. The term $2\pi m = O(\beta^{-1/6})$ is again harmless. We have
\be
\widehat{\mathbb E}\bigl(\e^{s \beta^{1/2}Y_4^{(0)}} \bigr)
= \widehat{\mathbb E}\left(\exp\left[s \beta^{1/2}\sum_{i=1}^N  \overline{B_{T_i}}\,\Theta_i\right]\right).
\ee
Use Lemma~\ref{lem:stochdiscr3} to bound this from above by
\be
\widehat{\mathbb E}\left(\exp\left[\tfrac{1}{32\pi} s^2\beta Y_1\right]\right).
\ee
Now use \eqref{Y1bd} to get
\be
\label{Y4bd}
\limsup_{\beta\to\infty}\frac{1}{\beta^{1/3}} \log
\widehat{\mathbb E}\bigl(\e^{s \beta^{1/2}Y_4^{(0)}} \bigr)
\leq O(s).
\ee
This completes the proof of~\eqref{err1}, hence of Lemma~\ref{lem:extreme}, and hence of the first term in~\eqref{keysplit}.
\end{proof}

%%%

\paragraph{4. Conclusion.} 
Combining~\eqref{keysplit}, \eqref{splitlb1}, \eqref{splitlb2} and Lemma~\ref{lem:extreme}, and choosing $C$ large enough, we get 
\be
\begin{aligned}
\label{lbkeyalt}
&\liminf_{\beta\to\infty} \frac{1}{\beta^{1/3}} \log \widehat \E\Bigl[\int_{|m|\leq  \beta^{-1/6}} \d m\,  
\e^{O(\eps) (\beta Y_1 + Y_2  + N)} 
\1_{\{Z^{(m)}\in \Upsilon^{\LB'}_{\beta,\eps}  \}}\Bigr] \\
&\qquad \geq \liminf_{\beta \to \infty}\inf_{|m|\leq \beta^{-1/6}}  \frac{1}{\beta^{1/3}}  
\log \widehat \P \bigl(Z^{(m)} \in \mathcal O\bigr) +  O(\eps).
\end{aligned}
\ee
This completes the proof of Proposition~\ref{prop:lowerkey}. 
\end{proof}

%%%

\subsection{Three technical conditions}
\label{conds}

Under the following conditions Conjecture \ref{thm:zoom2} is true.

\begin{enumerate} 
\item[(C1)] 
The limit 
\be \label{cond1a}
 \lim_{\beta\to \infty} \frac{1}{\beta^{1/3}} \log \widehat \P(Z^{(0)}\in \mathcal O) = - 2\pi G_\kappa p^{*} 
\ee
exists for some $p^{*} > 0$ that does not depend on $\kappa$.
\item[(C2)] 
The change from $Z^{(0)}$ to $Z^{(m)}$ does not affect (C1) when $m$ is not too large: 
\be
\lim_{\beta\to \infty} \frac{1}{\beta^{1/3}} \sup_{|m| = O(\beta^{1/6})}\Bigl| 
\log \frac{\widehat \P(Z^{(m)}\in \mathcal O)}{\widehat \P(Z^{(0)}\in \mathcal O)}\Bigr| = 0. 
\ee
\item[(C3)] 
For $\delta>0$ sufficiently small, 
\be
\label{chicond}
\limsup_{\beta\to \infty}\frac{1}{\beta^{1/3}} \log \frac{\widehat \E[\exp(\frac12 (1+\delta) \chi) 
\1_{\{ Z^{(m)}\in \mathcal O\}}]}{\widehat \P(Z^{(m)}\in \mathcal O)} =0, 
\ee
uniformly in $|m| = O(\beta^{1/6})$. 
\end{enumerate} 

\noindent
Condition (C1) comes from the fact that for each of the $N \asymp \beta^{1/3}$ boundary points there is a constraint in terms of the two neighbouring boundary points that must be satisfied in order for the corresponding unit disk to touch the boundary of the critical droplet. The constant $p^{*}$ is related to the free energy of an \emph{effective interface model}. Condition (C2) says that the constraint imposed by condition (C1) is not affected by \emph{small dilations} of the critical droplet, and implies that the free energy of the effective interface model is Lipschitz under small perturbations. Condition (C3) says that the first Fourier coefficient of the surface of the critical droplet is small. The term $\chi$ represents an energetic and entropic reward for the droplet boundary to fluctuate away from $\partial B_{\Rc}$. We require that this reward, which may be thought of as a background potential in the effective interface model, does not affect the microscopic free energy of the droplet.

\begin{remark}
{\rm Proposition~\ref{prop:upperkey}, when combined with conditions (C1) and (C3), yields 
\be
\limsup_{\beta \to \infty}\frac{1}{\beta^{1/3}} \log \mathcal I^\mathrm{UB}(\kappa,\beta; C,\eps) 
\leq  2\pi G_\kappa (\tau^{*} - p^{*}) + O(\eps)
\ee
with $p^{*}$ given in~\eqref{cond1a}. Combining this estimate with Corollary~\ref{cor:recapI}, we get
\be
\limsup_{\beta \to \infty}\frac{1}{\beta^{1/3}}  \log \Bigl( \e^{\beta\, I^*(\pi \Rc^2)}
\mu_\beta\Bigl( \cV_{C\beta^{-2/3}}\cap \cD_\eps(0) \cap 
\cC_{C\beta^{-2/3}}(0) \Bigr)\Bigr) \leq 2\pi G_\kappa (\tau^*-p^{*}) + O(\eps).
\ee
Combining this estimate with Lemma~\ref{lem:centering}, we find
\be
\limsup_{\beta \to \infty}\frac{1}{\beta^{1/3}}  \log \Bigl( \e^{\beta\, I^*(\pi \Rc^2)}	
\mu_\beta\Bigl( \cV_{C\beta^{-2/3}}\cap \cD_\eps \Bigr)\Bigr)
\leq 2\pi G_\kappa (\tau^*-p^{*}) + O(\eps).
\ee
Contributions from $V_{C\beta^{-2/3}}\cap \cD_\eps$ are bounded by Lemma~\ref{lem:ng}, and are negligible. The upper bound in Conjecture~\ref{thm:zoom2} follows after letting $\eps\downarrow 0$ with $C$ fixed. The lower bound in Conjecture~\ref{thm:zoom2} follows from Corollary~\ref{cor:recapI}, Proposition~\ref{prop:lowerkey} and conditions (C1) and (C2).} 
\end{remark} 

%%%

\subsection{Proof of the moderate deviation bounds}
\label{sec:moddevbds}

In this section we give the proof of Theorem~\ref{thm:zoom1}.

%%%

\paragraph{Upper bound.}

\begin{proof}
We estimate the right-hand side of~\eqref{ubkey} by dropping the indicator. We decompose $\chi$ into three parts
\be
\chi =E_1 +  E_2 + E_3, 
\ee
with (recall \eqref{foucoeff})
\begin{equation}
\label{newdefE}
\begin{aligned}
E_1 &=  \sum_{i=1}^N \overline{B_{T_i}}^2 \Theta_i -  \int_0^{2\pi} \d t\, B_{t}^2,\\
E_2 &= \int_0^{2\pi} \d t\, B_{t}^2- A_1 ^2 - A_1^{*2},\\
E_3 &= A_1 ^2 + A_1^{*2} - D_1^2- {D_1^*}^{2}.
\end{aligned}
\end{equation} 
We first look at conditional expectations. Note that $Y_1$ depends on $N$ and $\Theta_i$, $1 \leq i \leq n$, alone. A repeated application of the Cauchy-Schwarz inequality yields 
\begin{equation} 
\label{split2}
\begin{aligned}
&\widehat \E\Bigl[ \e^{\tfrac12 [1+O(\eps)]\chi} \mid N,(\Theta_i)_{i=1}^N\Bigr]\\
&\leq \widehat \E\Bigl[ \e^{\tfrac 3 2 [1+O(\eps)] E_1} \mid N,(\Theta_i)_{i=1}^N\Bigr]^{1/3}
\, \widehat\E\Bigl[ \e^{\tfrac 3 2  [1+O(\eps)]E_2} \mid N,(\Theta_i)_{i=1}^N\Bigr] ^{1/3}
\,\widehat\E\Bigl[\e^{\tfrac 3 2 [1+O(\eps)]E_3} \mid N,(\Theta_i)_{i=1}^N\Bigr] ^{1/3}.
\end{aligned}
\end{equation}
By Lemma~\ref{lem:stochdiscr1}, for every $s\in (-1,1)$ with $|s|<1/\sqrt{2\pi Y_1}$, 
\be\label{eq:e5esti}
\log \widehat \E\Bigl[\e^{\frac{1}{2} s\,E_1}\,\Bigr|\, N,(\Theta_i)_{i=1}^N \Bigr]
\leq |s| \sqrt{Y_1}\log(1/Y_1) \quad \widehat{\mathbb P}\text{-a.s.}
\ee
By Lemma~\ref{lem:stochdiscr2}, for every $|s|\leq 1/\sqrt{2\pi Y_1}$,
\be\label{e3}
\log \widehat \E\Bigl[\e^{\frac{1}{2} s (A_1^2 - D_1^2)} \,\Bigr|\, N,(\Theta_i)_{i=1}^N \Bigr]
\leq |s| \sqrt{Y_1}\log(1/Y_1) \quad \widehat{\mathbb P}\text{-a.s.}
\ee
A similar estimate holds for $A_1^{*2} - D_1^{*2}$. Via Cauchy-Schwarz, it follows that
\be\label{eq:e6esti}
\log \widehat \E\Bigl[\e^{\frac{1}{2} s E_3} \,\Bigr|\, N,(\Theta_i)_{i=1}^N \Bigr] 
\leq 2 |s| \sqrt{Y_1}\log(1/Y_1) \quad \widehat{\mathbb P}\text{-a.s.}
\ee
as long as $2|s|\leq 1/\sqrt{2\pi Y_1}$. We would like to apply the estimates in \eqref{eq:e5esti} and~\eqref{eq:e6esti} with $s= 3[1+O(\eps)]$. From the a priori estimates in Corollary~\ref{cor:apriori} we know that $Y_1= O(\eps)$. Hence $1/\sqrt{2\pi Y_1} \geq c'/\sqrt{\eps}$ for some $c'>0$. Thus, $|s|\leq c'/\sqrt{4\eps}$ is sufficient to ensure the condition $2|s|<1/\sqrt{2\pi Y_1}$. Therefore we see that $s= 3 [1+O(\eps)]$ satisfies the bound $2|s|\leq 1/\sqrt{2\pi Y_1}$, so that~\eqref{eq:e5esti} and~\eqref{eq:e6esti} are valid. 

In order to get rid of $Y_1$ in the right-hand side of \eqref{eq:e6esti}, we use two estimates: the a priori estimate $Y_1 = O(\eps)$ and the bound $Y_1\geq N (2\pi/N)^3$, which gives $\log(1/Y_1) = O(\log N) = O(N)$. Therefore 
\be\label{eq:e6esti2}
\log \widehat \E\Bigl[\e^{\frac{1}{2} s E_3} \,\Bigr|\, N,(\Theta_i)_{i=1}^N \Bigr] 
\leq O(\eps \log N) = O(\eps N) \quad \widehat{\mathbb P}\text{-a.s.}
\ee
A similar bound holds for $E_1$. We now estimate the term with $E_2$. The tilt by $\e^{Y_0 - \beta C_1 Y_1}$ affects the angular point process only, so it is still true under $\widehat{\mathbb {P}}$ that the distribution of $(B_t)_{t\in [0,2\pi]}$ is a mean-centred Brownian bridge independent from $N$ and the $\Theta_i$, $1 \leq i \leq N$. Since $s= 3[1+O(\eps)]<4$, by picking $\eps$ small enough, we get from Lemma~\ref{lem:varadhanprep} that 
\be\label{eq:ey22esti}
 \log \widehat \E\Bigl[ \e^{\tfrac 3 2 [1+O(\eps)]E_2}\Bigr]
= -\sum_{k=2}^\infty \log \Bigl(1- \frac{3}{k^2} [1+O(\eps)]\Bigr) = O(1).
\ee
Note that this upper bound does not grow with $\beta$. Hence, combining \eqref{eq:e5esti}, \eqref{eq:ey22esti} and \eqref{eq:e6esti2}, inserting into \eqref{split2} and taking expectations, we find
\be
\log \widehat \E\Bigl[ \e^{\tfrac12 [1+O(\eps)]\chi}\Bigr]\leq 
\log  \widehat \E\Bigl[\e^{O(\eps) N} \Bigr] \quad \widehat{\mathbb P}\text{-a.s.}
\ee
Via Lemma \ref{lem:ehatperturbed} and the observation that $Y_1 \geq 0$, we obtain that
\be
\limsup_{\beta\to\infty} \frac{1}{\beta^{1/3}} \log \widehat \E\Bigl[\e^{\tfrac12 [1+O(\eps)]\chi}\Bigr] \leq O(\eps).
\ee
The upper bound in Theorem~\ref{thm:zoom1} follows after dropping $\1_{\{Z^{(m)} \in \mathcal O\}}$ and the supremum over $|m| \leq c \beta^{1/6}$, and letting $\eps\downarrow 0$ with $C$ fixed.  
\end{proof}

%%%

\paragraph{Lower bound.}

\begin{proof}
We work directly with the surface integrals and skip the auxiliary random variables. By Corollary~\ref{cor:recapI}, for any $n\in\N$, 
\be\label{eq:lowereasy1}
\begin{aligned}
&\e^{\beta I^*(\pi \Rc^2)} \mu_\beta \Bigl(\cV_{C\beta^{-2/3}}\Bigr) \geq \e^{\beta I^*(\pi \Rc^2)}
\mu_\beta\Bigl( \cV_{C\beta^{-2/3}}\cap \cD_\eps (0)\Bigr)\\
&\quad\geq \Big[1-O(\e^{-\kappa\beta})\Bigr]\, (\kappa \beta)^n \int_{[0,2\pi]^n} 
\d \t\, \1_{\{0 \leq t_1<\cdots < t_n < 2\pi\}} \int_{\R^n} \d \r\, \e^{-\beta \cH(\z)} \,
\1_{\cV_{C\beta^{-2/3}}\cap \cD_\eps (0)}(\z),
\end{aligned}
\ee
where 
\be
\cH(\z) = \bigl(|S(\z)| - \kappa |S(\z)^-|\bigr) - \pi\bigl(\Rc^2 - \kappa (\Rc-1)^2\bigr)
\ee
with $\z = (z_1,\ldots,z_n)$ and $z_i = (r_i \cos t_i, r_i \sin t_i)$. We write $r_i = \Rc-1+\rho_i$, and restrict the integral to the domain 
\be
M_{n,\eps} = \Bigl\{ (\t,\rr)\colon\, \max_{1 \leq i \leq n} \bigl|t_i - \tfrac{2\pi}{n}(i-\tfrac12)\bigr| \leq \tfrac13 \tfrac{2\pi}{n},\, 
\max_{1 \leq i \leq n} |\rho_i| \leq \eps\Bigr\}.
\ee 

\begin{lemma} 
\label{lem:zinM}
Pick $n = \lfloor A \beta^{1/3}\rfloor$ and $\eps = B \beta^{-2/3}$ with $B \leq \tfrac{\Rc-1}{A^2}$ (where $A,B \in (0,\infty)$ are constants to be optimised over later). Then $M_{n,\eps} \subseteq \cO \cap \cD_\eps(0)$ for $\beta$ sufficiently large.
\end{lemma}

\begin{proof}
To prove that $\z \in \cO $, i.e., $\z$ is a connected outer contour, employing Proposition~\ref{prop:locality} we need to show that every triplet $(z_{i-1},z_i,z_{i+1})$ is extremal. Actually, we will show that the intersection of the ray $\ell_i$ (halfline starting at the origin and passing through $z_i$) is intersecting the circle $\partial B(z_i)$ in a point $p \in A_{\Rc,\eps}\setminus B(z_{i-1})\cup B(z_{i+1})$. Clearly, it suffices to show that $p\notin B(z_{i+1})$.

Consider the most extremal case (as illustrated in Fig.~\ref{fig:12}): $\rho_i$ attains the minimum allowed value $\rho_i = -B \beta^{-2/3}$, $\rho_{i+1}$ attains the maximum allowed value $\rho_{i+1} = B \beta^{-2/3}$, the angle $\alpha$ between the rays $\ell_i$ and $\ell_{i+1}$ is minimal, i.e., $\alpha= \frac{2\pi}{n}(1-\frac23)$, and the $y$-coordinate of $z_i$ is at least as large as the $y$-coordinate $z_{i+1}$, i.e.,
\be
\label{E:subvert}
(\Rc-1-B \beta^{-2/3}) \ge (\Rc-1+B \beta^{-2/3}) \cos \alpha
\ee
(in Fig.~\ref{fig:12} these $y$-coordinates are equal and the halfline $\ell$ is parallel to $\ell_i$). Without loss of generality, we may assume that $z_i=(0,\Rc-1-B \beta^{-2/3})$ and $z_{i+1}=(r \sin\alpha, r \cos\alpha)$, where $r=\Rc-1+B \beta^{-2/3}$. Let $\partial B(z_i)\cap \partial B(z_{i+1})=\{v, v_i\}$ with $v\in B_{\Rc-1-\eps}(0)$ and let $\ell$ be the halfline starting at $v$ and passing through $v_i$. Because $\alpha\ge \frac{2\pi}{3 n}\ge \frac{2\pi}{3 A\beta^{1/3}}$, a sufficient condition for the inequality in \eqref{E:subvert} is
\be
(\Rc-1)\frac{1-\cos\alpha}{1+\cos\alpha}\ge(\Rc-1)\frac{\alpha^2}{4}
\ge (\Rc-1)\frac14 \left(\frac{2\pi}{3 A\beta^{1/3}}\right)^2\ge B \beta^{-2/3},
\ee
which yields the sufficient condition $B \le \frac{\Rc-1}{A^2}$, as claimed.

%%%%%%%%%%%%%%%%%%%%%%%%%%%%%%%%%%%%%%%%
\begin{figure}[htbp]
%\hglue-1cm
\centering
\begin{overpic}[width=8cm,scale=.25,percent]
{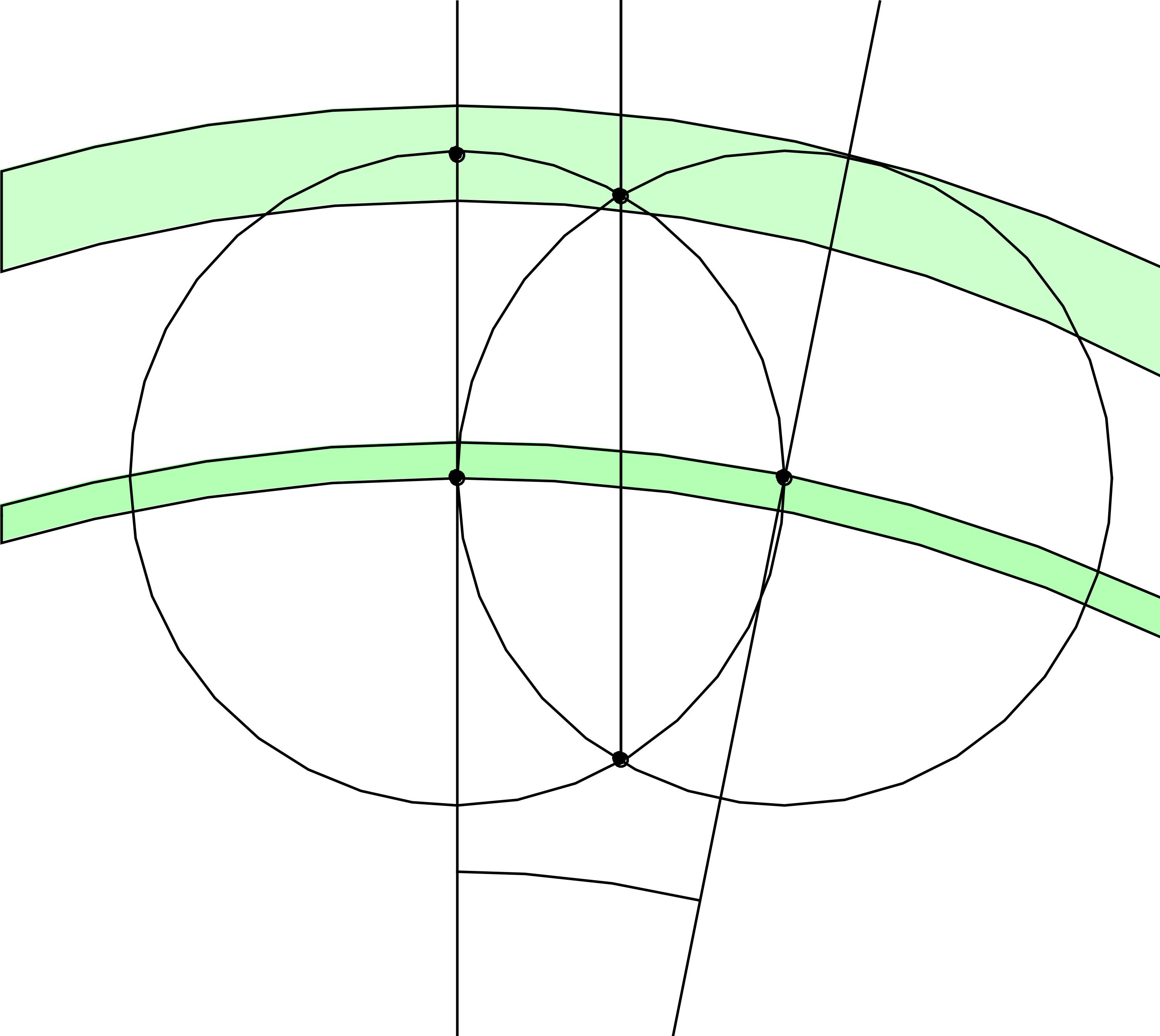}
\put(101,60) {$\scriptstyle A_{\Rc,\eps}$}
\put(101,34.5) {$\scriptstyle A_{\Rc-1,B \beta^{-2/3}}$}
\put(36,45) {$\scriptstyle z_{i}$}
\put(68.5,49) {$\scriptstyle z_{i+1}$}
\put(48.5,7){$\scriptstyle \alpha$}
\put(54,75) {$\scriptstyle v_{i}$}
\put(36,74) {$\scriptstyle p$}
\put(52.5,21) {$\scriptstyle v$}
\put(36,85) {$\scriptstyle \ell_{i}$}
\put(51,85) {$\scriptstyle \ell$}
\put(68,85) {$\scriptstyle \ell_{i+1}$}
\put(8,26) {$\scriptstyle \partial B(z_i)$}
\put(88,26) {$\scriptstyle \partial B(z_{i+1})$}
%\put(34.5,51) {$\scriptstyle \partial B(v_i)$}
\end{overpic}
\caption{\small Illustration of the proof of Lemma~\ref{lem:zinM}.}
\label{fig:12}
\end{figure}
%%%%%%%%%%%%%%%%%%%%%%%%%%%%%%%%%%%%%%%%

To prove that  $\z \in \cD_\eps(0)$, we refer to the claim in Lemma~\ref{lem:extreme}(a), where we show that if $\theta_i \le \frac{2}{\sqrt{\Rc(\Rc-1)}} \sqrt{\eps}$, then the boundary cusp $v_i$  belongs to $A_{\Rc,\eps}$. The above bound on the angular increment $\theta_i $ is clearly satisfied once $\beta$ is sufficiently large, because $\theta_i \leq \tfrac{5}{3} \tfrac{2\pi}{n} \sim \beta^{-1/3}$, as will be discussed in the next paragraph.
\end{proof}

Returning to the proof of the lower bound in Theorem~\ref{thm:zoom1}, we check that the volume constraint is satisfied as well (recal the orders of magnitude of the relevant quantities from Definition~\ref{def:yzsums}). Note that for $(t,\rho)\in M_{n,\eps}$ the angular increments are bounded as 
\be\label{eq:lowereasy2}
\tfrac{1}{3} \tfrac{2\pi}{n} \leq \theta_i \leq \tfrac{5}{3} \tfrac{2\pi}{n} 
\ee
and $\frac{2\pi}{n} = [1+o(1)]\,\frac{2\pi}{A \beta^{1/3}}$, as $\beta \to \infty$. For $\z \in M_{n,\eps}$,
\be\label{eq:lowereasy3}
y_1 (\z) \leq 2\pi \overline{K}^2\eps, \quad
y_2(\z) \leq  2\pi K^2\eps, \quad
y_3 (\z) \leq  2\pi \eps^2, \quad
\abs{y_4(\z)} \leq 2\pi \eps,
\ee
with $K = 2\sqrt{\Rc(\Rc-1)}$ and $\overline{K} = 4/\sqrt{\Rc(\Rc-1)}$. By Proposition~\ref{prop:expansion},
\be
\begin{aligned}
\cH(\z) &= C_1^\eps y_1(\z) + C_2^\eps y_2(\z) - C_2 y_3(\z),\\
S(\z)-\pi\Rc^2 &= - C_1^\eps y_1(\z) + C_2^\eps y_2(\z)  + C_3 y_3(\z) + C_4 y_4(\z).
\end{aligned}
\ee
with (recall \eqref{C123def})
\be 
C_1 = \frac1{24}\Rc^2(\Rc-1), \quad  C_2 = \frac{1}{2(\Rc-1)}, \quad C_3 = \frac12, \quad C_4=\Rc,
\ee 
where $C_k^\eps = C_k\, [1+O(\eps)]$, $k=1,2$. Hence
\be
\cH(\z) \leq [1+O(\beta^{-2/3})]\,D \beta^{-2/3}
\ee
with $D = 2\pi B (C_1\overline{K}^2 + C_2 K^2) = 2\pi B(\tfrac23 \Rc + 2\Rc) = 2\pi B\,\tfrac83\Rc$, and 
\be
\abs{S(\z)-\pi\Rc^2} \leq [1+O(\beta^{-2/3})]\,E\beta^{-2/3}
\ee
with $E = 2\pi B (C_1\overline{K}^2+C_2K^2+C_4) = 2\pi B\,\tfrac{11}{3}\Rc$. Consequently, by \eqref{eq:lowereasy1},
\be
\begin{aligned}
\e^{\beta I^*(\pi \Rc^2)} \mu_\beta \Bigl(\cV_{C\beta^{-2/3}}\Bigr)
& \geq \Big[1-O(\e^{-\kappa\beta})\Bigr]\,(\kappa \beta)^n\,\bigl(\tfrac23 \tfrac{2\pi}{n}\bigr)^n \bigl(B\beta^{-2/3}\bigr)^n\, 
\e^{- [1+O(\beta^{-2/3})]\,D \beta^{1/3}}\\
& = \bigl[1+o(1)\bigr]\, \bigl( \kappa\, 4\pi A^{-1}B\bigr)^n\,\e^{- D\beta^{1/3}},
\end{aligned}
\ee
provided $E \leq C$ to match the volume constraint. Therefore, recalling from \eqref{USkapdef} that $\Rc = \frac{\kappa}{\kappa-1}$, we get
\be
\liminf_{\beta\to \infty}\frac{1}{\beta^{1/3}} \log \Bigl( \e^{\beta I^*(\pi \Rc^2)} 
\mu_\beta \Bigl(\cV_{C\beta^{-2/3}}\Bigr) \Bigr) \geq L_\kappa(A,B)
\ee
with 
\be
L_\kappa(A,B) = - 2\pi B\,\tfrac83\tfrac{\kappa}{\kappa-1} + A \log(\kappa\, 4\pi A^{-1} B).
\ee 
This function must be maximised over $A,B \in (0,\infty)$ for fixed $\kappa \in (1,\infty)$ subject to the constraints $2A^2B \leq \frac{1}{\kappa-1}$ and $2\pi B \tfrac{11}{3} \frac{\kappa}{\kappa-1} \leq C$. For fixed $B$ the supremum over $A$ is taken at $A$ solving $0 = \frac{\partial}{\partial A} L_\kappa(A,B)$, which gives $\log(\kappa\, 4\pi A^{-1} B) = 1$, and so $\max_{A} L_\kappa(A,B) = L_\kappa(B)$ with
\be
L_\kappa(B) =  2\pi B\,(-\tfrac83\tfrac{\kappa}{\kappa-1} + \tfrac{2}{\e}\,\kappa),
\ee    
provided $B^3 \leq \frac{\e^2}{32\pi^2}\frac{1}{\kappa^2(\kappa-1)}$ to match the first constraint. For $\kappa \leq 1+\tfrac{4e}{3}$, the last factor is $\leq 0$ and the supremum over $B$ is taken at $B=0$, which gives $\max_B L_\kappa(B) = 0$. For $\kappa > 1+\tfrac{4e}{3}$ the supremum is taken at some $B>0$ depending on $C,\kappa$, which gives $\max_B L_\kappa(B) > 0$. This completes the proof of Theorem~\ref{thm:zoom1}. 
\end{proof}

%%%%%%%%%% APPENDICES %%%%%%%%%%%%%%%%%%%%%%

\appendix

%%%%%%%%%%%%%%%%%%%%%%%%%%%%%%%%%%%%%%%%

\section{Proof of two key lemmas}
\label{appA}

In this appendix we give the proof of Lemmas~\ref{lem:admissible} and \ref{L:isope}.  

\begin{proof}[Proof of Lemma~\ref{lem:admissible}.] 
We indicate the proper references to the literature. Part (\ref{Sreach}) is delicate.

\medskip\noindent
(\ref{F-+}) (Matheron~\cite[Chapter 1.5]{Ma}) Note that $x\in (F^-)^+$ is equivalent to $B(x)\cap F^-\neq \emptyset$, which is equivalent to the existence of a $z\in B(x)\cap F^-$. Hence, $x\in B(z)$ since $z\in B(x)$ and $B(z)\subset F$ since $z\in F^-$. In \cite{Ma}, sets $F$ such that $(F^-)^+=F$ are called open w.r.t.\ $B(0)$ (rather than admissible). 

\medskip\noindent
(\ref{adm-comp}) (Matheron~\cite[Chapter 1.5]{Ma}). For any $S\in\mathcal S$, we have $S=F^+$ with $F=S^-$. On the other hand, if $S=F^+$, then $((F^+)^-)^+\supset F^+$ since $(F^+)^-\supset F$ and hence $(S^-)^+\supset S$. The inclusion $((F^+)^-)^+\subset F^+$, which amounts to $(S^-)^+\subset S$, was proven in 1. 

\medskip\noindent
(\ref{contFtoF+}) For the first claim, see Matheron~\cite[Proposition 1.5.1]{Ma} and Schneider and Weil~\cite[Theorem 12.3.5]{SW}, for the second claim, see Kampf~\cite[Lemma 9]{Ka}.

\medskip\noindent
(\ref{connS-toS}) See Matheron~\cite[Chapter 1.5]{Ma}.

\medskip\noindent
(\ref{convS-toS}) See Matheron~\cite[Proposition 1.5.3]{Ma}. In \cite{Ma} , sets $F$ such that $(F^+)^-=F$ are called closed w.r.t.\ $B(0)$.

\medskip\noindent
(\ref{FtoS}) This follows from the fact that finite sets are dense in $\mathcal{F}$ in combination with the first claim in (\ref{contFtoF+}).

\medskip\noindent 
(\ref{Sreach}) Use Federer~\cite[Theorem 5.9]{Fe1}, which assures that the $\reach$ is conserved under taking limits of sets with respect to the Hausdorff metric. It therefore suffices to consider $S\in \mathcal S^{\textrm{fin}}$ with $S=h(\gamma)$, where the finite set $\gamma$ is sufficiently dense so that condition (C) is satisfied for $S=h(\gamma)$.

We will prove that, for such $\gamma$, $\reach(h(\gamma)^-)\geq 1$. First, observe that the boundaries $\partial h(\gamma)$ and $\partial h(\gamma)^-$ are unions of circular arcs (of radius 1). Given that $h(\gamma)$ satisfies condition (C), the set $h(\gamma)\setminus h(\gamma)^-$ splits into connected components, each bordered by two Jordan curves: one connected component of the boundary $\partial h(\gamma)$ and one connected component of the boundary $\partial h(\gamma)^-$. For each component of $\partial h(\gamma)$, we label the centres of the arc circles in such a way that two consecutive arcs belong to two consecutive centres, with periodic boundary condition. The associated component of $\partial h(\gamma)^-$  is a union of circular arcs passing through the centres. The centre of the arc circle connecting two consecutive centres is the point on the boundary $\partial h(\gamma)$ in the intersection of the arcs with these centres. See Fig.~\ref{fig:cleansausage} for an illustration.

Let us assume that $\reach(h(\gamma)^-)< 1$. Then there exist a point $x\in h(\gamma) \setminus h(\gamma)^-$ and two distinct points $y_1,y_2$ in a connected component $\sigma$ of the boundary $ \partial h(\gamma)^-$ such that $\dist(x,h(\gamma)^-) = \dist(x,y_1)= \dist(x,y_2)=r<1$. (To belong to two distinct components of $ \partial h(\gamma)^-$ contradicts assumption (C).) Given that $\dist(x,h(\gamma)^-)=r$, the interior $B_r(x)^0$ of the disk $B_r(x)$ does not contain any point from $h(\gamma)^-$, i.e., $B_r(x)^0 \cap h(\gamma)^-=\emptyset$. In addition, there are at most finitely many points of $h(\gamma)^-$ in $\partial B_r(x)$, all of which belong to $\partial h(\gamma)^-$. 

The admissibility of $h(\gamma)$ means that every unit disk  with a centre on $\partial h(\gamma)^-$ is fully contained in $h(\gamma)$. We will draw a contradiction with this statement from the fact that a Jordan curve $\sigma$ avoiding $B_r(x)^0$ and containing two distinct points on its boundary necessarily indents too sharply to be consistent with admissibility of $h(\gamma)$ and condition (C). To show the contradiction, consider a line $\ell$ through the point $x$ that separates the points $y_1$ and $y_2$ into opposite half planes determined by $\ell$. Without loss of generality, we may assume that the points $\{y,y'\}= \ell\cap \partial B_r(x)$ do not belong to $h(\gamma)^-$. As a result, both $B(y)$ and  $B(y')$ contain points not in $h(\gamma)$ or, since $B(y_1)\cup B(y_2) \subset h(\gamma)$, there exist points $w\in B(y)\setminus (B(y_1)\cup B(y_2))$ and $w'\in B(y')\setminus (B(y_1)\cup B(y_2))$ such that $w,w'\not\in h(\gamma)$. Now, $B(w)$ and $B(w')$ cannot contain any point from $h(\gamma)^-$, and hence the Jordan curve $\sigma$ must avoid $B_r(x)^0 \cup B(w)\cup B(w') \cup \{y,y'\}$ with $w,w'\not\in h(\gamma)$, which yields the contradiction with condition (C).

Indeed, if $\sigma$ is the outer boundary of the set $h(\gamma)^-$ with a single outer component of $\T\setminus h(\gamma)^-$, then in contradiction with condition (C) this component contains two components of $\T\setminus h(\gamma)$, since the points $w$ and $w'$ belong to different components of $\T\setminus (h(\gamma)^-\cup   B(y_1) \cup  B(y_2))$. Otherwise, the Jordan curve $\sigma$ is the inner boundary of the set $h(\gamma)^-$ and surrounds the set $B_r(x)^0\cup B(w)\cup B(w') \cup \{y,y'\}$. However, the region encircled by $\sigma$ contains two different components of $\T\setminus h(\gamma)$, one containing $w$  and the other containing $w'$.

\medskip\noindent
(\ref{noholes}) 
Without loss of generality, assume that $x=0$.
According to the definition \eqref{E:dH} of the Hausdorff metric, the condition \eqref{E:epsBonnesen} implies that $B_{R-\eps}(0)\subset S\subset B_{R+\eps}(0)$
and thus 
\be 
\label{E:S-inout}
B_{R-1-\eps}(0)\subset S^- = \{y \in S: B(y) \subset S\} \subset B_{R-1+\eps}(0).
\ee
We will prove connectedness and simple connectedness of the set $S^-$ by showing that every segment $\ell_e=\{te\colon\,t\in[0,R+\eps\}$ in the direction of any unit vector $e\in\R^2$, intersects the set $S^-$ in a closed interval:  $S^-\cap\ell_e=\{te\colon\,t\in[0,T(e)]\}$ with $T(e)\in[R-1-\eps,R-1+\eps]$. If the contrary were true, then there would be a direction $e$ and two points $y_1,y_2 \in \{te\colon\,t\in[R-1-\eps,R-1+\eps]\}\subset \ell_e$, $\abs{y_1}< \abs{y_2}$ such that $y_1\not\in S^-$ and $y_2\in S^-$. The fact that $y_2\in S^-$ implies that $B_{R-\eps}(0)\cup B(y_2) \subset S$, while $y_1\not\in S^-$ implies that $B(y_1)\cap S^{\rm{c}}\neq\emptyset$, and hence, in view of the preceding inclusion, $B(y_1)\cap (B_{R-\eps}\RK{(0)}\cup B(y_2))^{\rm{c}}\neq\emptyset$. However, this cannot happen when $B(y_1)\setminus B(y_2)\subset B_{R-\eps}(0)$. Nevertheless, this is exactly what happens when $R-1\ge \eps/(1-\eps)$. Indeed, $B(y_1)\setminus B(y_2)\subset B_{R-\eps}(0)$ once $\partial B(y_2)\cap \ell \subset B_{R-\eps}(0)$, where $\ell$ is the line through $y_2$ orthogonal to $\ell_e$. For $z\in \partial B(y_2)\cap \ell$ we have 
\be
z^2=y_2^2+1\le (R-1+\eps)^2 +1 \le (R-\eps)^2
\ee
when $\frac{R-1}{2R-1}\ge \eps$, which is equivalent with $R-1\ge \frac{\eps}{1-2\eps}$.

\medskip\noindent
(\ref{SLip}) For the first claim, see Ambrosio, Colesanti and Villa~\cite[Proposition 3]{ACV}. For the second claim, it suffices to note that for $S\in \mathcal S^{\textrm{fin}}$ the boundary $\partial S^-$ is a finite union of arcs.
\end{proof} 

\begin{proof}[Proof of Lemma~\ref{L:isope}.]
We use the claim about the stability of the Brunn-Minkowski inequality, first proven in qualitatively by Christ~\cite{Ch} and then quantitatively by Figalli and Jerrison~\cite{FJ}. Actually, for our purposes the qualitative version \cite{Ch} suffices, since we are only using it as a springboard for more accurate bounds to be considered later.

Adapted to our setting, the claim is that there exist positive function $\widetilde \xi(\delta)$ with $\lim_{\delta\to 0}\widetilde \xi(\delta)=0$, such that if 
\be
\label{E:BMstab}
\abs{S}^{1/2}\le \abs{S^-}^{1/2}+\sqrt\pi +\delta \max\big(\abs{S^-}^{1/2},\sqrt\pi\,\big),
\ee
with sufficiently small $\delta$, then there exist a compact convex set $K\subset \T$ such that 
\be
\label{E:K-S}
S^-\subset K^-, \qquad \abs{K^-\setminus S^-}< \pi R^2 \tilde\xi(\delta).
\ee
In the same vein, a (properly scaled and shifted) disk $D$ satisfies 
\be
\label{E:K-D}
D^-\subset K^-, \qquad \abs{K^-\setminus D^-}< \pi R^2 \tilde\xi(\delta).
\ee
Suppose, without loss of generality that the centre of the disk $D^-$ is at the origin and write $r$ for its radius, so that $D^-=B_{r}(0)$. 
Note that, when rescaling $B_1(0)$ corresponding to the original formulation in \cite{Ch} to $D^-$, we might need to readjust the constant $ \tilde\xi(\delta)$ taking into account that the areas of all sets considered above belong to a fixed compact interval $[\eps_0,L^2/4]$. To verify the assumption in \eqref{E:BMstab}, we rewrite \eqref{E:Bonnassumpt} as
\be
\label{E:defect}
\big(|S|-\kappa|S^-|\big) - \big(\pi R^2 - \kappa \pi (R-1)^2\big) = \kappa\bigl(\abs{S\setminus S^-} -2\pi R+\pi)\bigr)
\le  \pi\kappa\eps
\ee
and use an equivalent formulation of \eqref{E:BMstab}, namely,
\be
\label{E:eqBMstab}
\abs{S\setminus S^-}- 2\sqrt\pi\,\abs{S}^{1/2}
+ \pi\leq 2 \abs{S^-}^{1/2}\delta \max\big(\abs{S^-}^{1/2},\sqrt\pi\,\big)
+ \delta^2 \max(\abs{S^-},\pi).
\ee
Indeed, \eqref{E:defect} with $\abs{S}=\pi R^2$ implies that the LHS of \eqref{E:eqBMstab} is bounded by $\pi\eps$ which, with the choice $\delta=\sqrt\eps$ is bounded by the right-hand side of \eqref{E:eqBMstab} and thus also \eqref{E:BMstab}. 

Note that \eqref{E:K-S}, \eqref{E:K-D},  \eqref{isope} and \eqref{E:Bonnassumpt} (via \eqref{E:defect}) jointly imply the bounds 
on the volumes $\abs{S^-}$ and $\abs{D^-}$,
\be
\label{E:|S^-|}
\pi (R-1)^2 -\pi\eps \le \abs{S^-}\le \pi (R-1)^2
\ee
and
\be
\label{E:D->}
\pi\bigl((R-1)^2 -R^2 \tilde\xi(\sqrt{\eps}\,)-\eps \bigr)\le \abs{K^-} -\pi R^2 \tilde\xi(\sqrt{\eps}\,)
\le\abs{D^-}\le \abs{K^-}\le \pi\bigl((R-1)^2 +R^2 \tilde\xi(\sqrt{\eps}\,)\bigr).
\ee
Taking into account that $R-1>\eps_0$ and thus $\frac{R}{R-1}\le 1+\frac1{\eps_0}\le \frac2{\eps_0}$ once $\eps_0<1$,
we see that the above bounds imply the restrictions on the diameter $r$ of the disk $D^-$,
\be
\label{E:r}
(R-1)\Bigl(1-4\frac{\tilde\xi(\sqrt{\eps}\,)+\eps}{\eps_0^2}\Bigr)\le r\le (R-1)\Bigl(1+4\frac{\tilde\xi(\sqrt{\eps}\,)}{\eps_0^2}\Bigr).
\ee
Let $\xi(\eps)=  2 L \frac{\tilde\xi(\sqrt{\eps}\,)^{2/3}+\eps}{\eps_0^2}$. We will first show that $S\subset B_{R+\xi}(0)$. Using the  convexity of $K^-$, the condition in \eqref{E:K-D} implies that $K^-\subset (D^- + B_{\zeta}(0))$ with  $\zeta=(\frac{R}{R-1})^{1/3} R (4 \tilde\xi(\sqrt\eps))^{2/3}$. Indeed, if $x\in K^-\setminus D^-$, then the set $K^-$ contains the union of  $D^-$ and the wedge bordered by $\partial D^-$ and the tangents through $x$ to $D^-$. The  volume of this wedge is $r\sqrt{x^2-r^2}-r^2\arctan(\frac{\sqrt{x^2-r^2}}{r})$. Asymptotically, for small $x-r$, this equals $\frac23{\sqrt{2 r}}(x-r)^{3/2}$, implying, in view of the lower bound in \eqref{E:r} for sufficiently small $\eps$, 
\be
(x-r)^{3/2}\le \frac{3}{2\sqrt2}\frac{\pi R^2 \tilde\xi(\sqrt\eps)}{\sqrt{r}}
\le \frac{4 R^2 \tilde\xi(\sqrt\eps)}{\sqrt{R-1}}
\ee
and thus $x\le r + \zeta$. With the lower bound on $r$ from \eqref{E:r}, we get
\be
x \le r+ \Bigl(\frac{R}{R-1}\Bigr)^{1/3} R (4 \tilde\xi(\sqrt\eps))^{2/3}
\le R-1+ \Bigl(1+\frac1{\eps_0}\Bigr)^{1/3} R (4 \tilde\xi(\sqrt\eps))^{2/3} 
+4\frac{\tilde\xi(\sqrt{\eps}\,)+\eps}{\eps_0^2}.
\ee
Using that $R<L/3$ and $\tilde\xi(\sqrt{\eps}\,),\eps,\eps_0<1$, we get
\be
\Bigl(1+\frac1{\eps_0}\Bigr)^{1/3} R (4 \tilde\xi(\sqrt\eps))^{2/3} +4\frac{\tilde\xi(\sqrt{\eps}\,)+\eps}{\eps_0^2}
\le 2 L \frac{\tilde\xi(\sqrt{\eps}\,)^{2/3}+\eps}{\eps_0^2}\le \xi,
\ee
implying that $K^-\subset B_{R-1+\xi}(0)$ and
\be 
\label{E:Sin}
S\subset K\subset B_{R+\xi}(0).
\ee

On the other hand, we get $B_{R-\xi}(0)\subset S$ by strengthening the second claim in \eqref{E:K-S} for an admissible $S\in\mathcal S$. We can argue that $S\supset B_{r+1-s}(0)=B_{r+1}(0)\ominus B_s(0)$ with $s(4 R^2 \tilde\xi(\sqrt\eps\,)^{2/3}$. Indeed, if $x\in (\T\setminus S) \cap B_{r+1-s}(0)$, then $B(x)\cap B_{r-s}(0) \cap S^-=\emptyset$, while 
\be
\label{E:cap}
\abs{B(x)\cap K^-}
\ge\abs{B(x)\cap B_{r}(0)} \geq  \frac43 \sqrt{2}s^{3/2}.
\ee
To get the above inequality, we first note that if $\abs{x}\le  r+1-s$, then $\abs{B(x)\cap B_{r}(0)} \ge \abs{B(y)\cap B_{r}(0)} $ where $y= \frac{(r+1-s)x}{\abs{x}}$ and, in addition, the intersection of the last two disks is larger than the area of the circular segment that is cut off the disk $B(y)$ by the chord passing through the two points in $\partial B(y)\cap \partial B_{r}(0)$. The area of this circular segment is given as the area of the corresponding circular sector minus the triangle with vertices $y$ and the endpoints of the chord, $\arccos(1-s)-(1-s)\sqrt{1-(1-s)^2}= \frac43 \sqrt2 s^{3/2} + O(s^5/2)$. The inequality \eqref{E:cap} with $s=\bigl(4 R^2 \tilde\xi(\sqrt\eps\,)\bigr)^{2/3}$ is in contradiction with the second inequality in \eqref{E:K-S}, and thus such point $x$ does not exist and $S\supset B_{r+1-s}(0)$ as was claimed above.

For the claim about connectedness and simple connectedness of $S^-$ we then refer to Lemma~\ref{lem:admissible}(\ref{noholes}).
\end{proof}

%%%%%%%%% REFERENCES %%%%%%%%%%%%%%%%%%%%%%%

\end{document}